\title{On the Problem of Underranking in Group-Fair Ranking}
\date{}
\author[$\star$]{Sruthi Gorantla}
\author[$\dagger$]{Amit Deshpande}
\author[$\star$]{Anand Louis}
\affil[$\star$]{\small{Indian Institute of Science, Bangalore, India. \texttt{\{gorantlas, anandl\}@iisc.ac.in}}}
\affil[$\dagger$]{\small{Microsoft Research, Bangalore, India. \texttt{amitdesh@microsoft.com}}}
\newtheorem{theorem}{Theorem}[section]
\newtheorem*{theorem*}{Theorem}
\newtheorem*{claim*}{Claim}
\newtheorem*{proposition*}{Proposition}
\newtheorem{lemma}[theorem]{Lemma}
\newtheorem*{lemma*}{Lemma}
\newtheorem{corollary}[theorem]{Corollary}
\newtheorem*{conjecture*}{Conjecture}
\newtheorem*{fact*}{Fact}
\newtheorem*{hypothesis*}{Hypothesis}
\theoremstyle{definition}
\newtheorem{definition}[theorem]{Definition}
\renewcommand{\leq}{\leqslant}
\renewcommand{\le}{\leqslant}
\renewcommand{\geq}{\geqslant}
\newcommand{\defeq}{\stackrel{\textup{def}}{=}}
\newcommand{\paren}[1]{\left(#1 \right )}
\newcommand{\set}[1]{\left\{#1\right\}}
\newcommand{\abs}[1]{\left\lvert#1\right\rvert}
\newcommand{\ceil}[1]{\left\lceil #1 \right\rceil}
\newcommand{\floor}[1]{\left\lfloor #1 \right\rfloor}
\newcommand{\Z}{{\mathbb Z}}
\newcommand{\R}{\mathbb R}
\newcommand{\Q}{\mathbb Q}
\DeclareMathOperator*{\argmin}{argmin}
\newcommand\numberthis{\addtocounter{equation}{1}\tag{\theequation}}
\newcommand{\pak}{{\sc precision}$@K$}
\newcommand{\amin}{\alpha_{\text{min}}}
\newcommand{\amax}{\alpha_{\text{max}}}
\newcommand{\asum}{\sum_{l \in [\ell]}\alpha_l }
\newcommand{\bsum}{\sum_{l \in [\ell]}\beta_l }
\begin{document}

\maketitle

\begin{abstract}
	Search and recommendation systems, such as search engines, recruiting tools, online marketplaces, news, and social media, output ranked lists of content, products, and sometimes, people. Credit ratings, standardized tests, risk assessments output only a score, but are also used implicitly for ranking. Bias in such ranking systems, especially among the top ranks, can worsen social and economic inequalities, polarize opinions, and reinforce stereotypes. On the other hand, a bias correction for minority groups can cause more harm if perceived as favoring group-fair outcomes over meritocracy.
	
	In this paper, we formulate the problem of underranking in group-fair rankings, which was not addressed in previous work. Most group-fair ranking algorithms post-process a given ranking and output a group-fair ranking. We define underranking based on how close the group-fair rank of each item is to its original rank, and prove a lower bound on the trade-off achievable for simultaneous underranking and group fairness in ranking. We give a fair ranking algorithm that takes any given ranking and outputs another ranking with simultaneous underranking and group fairness guarantees comparable to the lower bound we prove. Our algorithm works with group fairness constraints for any number of groups. Our experimental results confirm the theoretical trade-off between underranking and group fairness, and also show that our algorithm achieves the best of both when compared to the state-of-the-art baselines.
\end{abstract}

\section{Introduction}
Search and recommendation systems have revolutionized the way we consume an overwhelming amount of data and find relevant information quickly \cite{BrinPage1998,Adomavicius2005}. 
They help us find relevant documents, news, media, people, places, products and rank them based on our interests and intent \cite{Kofler2016,Pei2019}. 
Information presented through ranked lists influences our worldview \cite{Pariser2011,Tavani2020}. 
Biased ranking of news, people, products raises ethical concerns and can potentially cause long-term economic and societal harm to demographics and businesses \cite{Noble2018}. 
Many state-of-the-art rankings that maximize utility or relevance reflect existing societal biases and are often oblivious to the societal harm they may cause by amplifying such biases. 
When these systems amplify societal biases observed in their training data, they worsen social and economic inequalities, polarize opinions, and reinforce stereotypes \cite{ONeil2016}. 
In addition to ethical concerns, there are also legal obligations to remove bias. Disparate impact laws prohibit even unintentional but biased outcomes in employment, housing, and many other areas if one group of people belonging to a protected group is adversely affected compared to another \cite{Barocas2016}. 

\paragraph{Fairness in ranking.} 
Fairness in ranking has three broad requirements: \emph{sufficient presence} of items belonging to different groups, \emph{consistent treatment} of similar individuals, and \emph{proper representation} to avoid representational harm to members of protected groups \cite{Castillo2019survey}. 
The first and the third requirements are about fairness to groups, whereas the second requirement is about fairness to individuals. 
For example, diversity alone in top ranks satisfies sufficient presence but need not provide consistent treatment and proper representation in the way the items are ranked. 
Fair ranking algorithms can be divided into two categories. 
Re-ranking algorithms that modify a given ranking of high utility to incorporate fairness constraints while trying to preserve the original utility, and learning-to-rank algorithms that incorporate fairness and utility objectives together into learning a ranker from training data. 
Re-ranking can be used to post-process the prediction as well as to pre-process the training data of any given ranker. 

Most of the fair ranking algroithms are designed to output group-fair ranking.
Group fairness in machine learning literature has focused on outcome-based or proportion-based definitions of fairness (e.g., demographic parity, equality of opportunity) \cite{Hardt2016, Barocas2019}.
Although group-fairness is a desirable goal, affirmative action to achieve group-fairness is often misinterpreted as non-meritocratic by individuals and requires a deeper understanding \cite{Crosby2004}. In this context, we argue that it is important to measure the worst-case effect of group-fair ranking on the individuals, which is not addressed by previous work.

\paragraph{Our contributions.}
To the best of our knowledge, our paper is the first to study how group-fair re-ranking affects individual ranks in the worst case. 
Previous work has looked at re-ranking or learning-to-rank with group-fairness and aggregate individual-fairness (or consistency) constraints.
Our group-fairness definition ensures sufficient presence of all groups, similar to previous work, but we give a new, natural definition called \emph{underranking} to study how re-ranking affects the merit-based ranks of individual items in the worst case. Our main contributions can be summarized as follows.
\begin{itemize}[leftmargin=*] 
	\item We define underranking based on the worst-case deviation from the true, merit-based (or color-blind) rank of any individual item (see \Cref{def:ifair}). This directly captures the loss of visibility suffered by individual items of high merit in the process to achieve high group-fairness. We prove a lower bound on the trade-off achievable between underranking and group-fairness simultaneously.
	\item We propose a re-ranking algorithm that takes a given merit-based (or color-blind) ranking and outputs another ranking with simultaneous underranking and group-fairness guarantees, comparable to the lower bound mentioned above. Our algorithm is fast, flexible, and can accommodate multiple groups, each with a different constraint on their group-wise representation.
	\item We do extensive experiments to show that our algorithm achieves the best of both underranking and group-fairness compared to the baselines on standard real-world datasets such as COMPAS recidivism and German credit risk. Moreover, our algorithm runs significantly faster than the baselines.
\end{itemize}

\paragraph{Related work.}
\label{sec:related_work}
The two most important baselines related to our work are the group-fair re-ranking algorithms \cite{ranking_with_fairness_constraints,fa_ir_a_fair_top_k_ranking_algorithm}. Fair ranking to maximize ranking utility subject to upper and lower bounds on group-wise representation in the top $k$ ranks, for all values of $k$, can be framed as an integer optimization problem \cite{ranking_with_fairness_constraints}. 
The authors propose an exact dynamic programming (DP)-based algorithm, and a greedy approximation algorithm to achieve fairness in ranking for intersectional subgroups. 
The \emph{fair top-$k$ ranking problem} gives another formulation for fair re-ranking of a given true or \emph{color-blind} ranking based on numerical quality values and a given $k$, so that the top-$k$ re-ranking maximizes certain selection and ordering utilities subject to group-wise representation constraints \cite{fa_ir_a_fair_top_k_ranking_algorithm}. 
The authors give an efficient algorithm called FA*IR to solve the fair top-$k$ ranking problem.

Fair ranking has also been studied in the learning-to-rank (LTR) setting, where the output ranking is probabilistic, so the fairness and utility guarantees are often on average instead of the worst case.
Given a query-document pair, the probability of each document being ranked at the top rank is called its \textit{exposure}. 
While the traditional ListNet framework simply minimizes a loss function based on the items' true and predicted exposure \cite{learning_to_rank_from_pairwise_approach_to_listwise_approach},
an extension of this, DELTR \cite{reducing_disparate_exposure_in_ranking}, learns fair ranking via a multi-objective optimization that maximizes utility and minimizes disparate exposure for different groups of items for group-fairness or different items for individual-fairness.
This general learning-to-rank framework facilitates optimizing multiple utility metrics while satisfying equal exposure, and Fair-PG-LTR \cite{policy_learning_for_fairness_in_ranking} learns a ranking that satisfies fairness of exposure. 
Aggregate or average-case guarantees in ranking are more suited to the applications in search and recommendation, whereas the worst-case guarantees are more suited to the applications in recruitment, school admissions, healthcare etc. where the worst-case fairness to individuals could be critical.

There is related work on defining and maximizing various group-fairness metrics over each prefix of the top $k$ ranks \cite{measuring_fairness_in_ranked_outputs}, for a given $k$, using an optimization algorithm to learn fair representations \cite{learning_fair_representations}. There are also other measures of group-fairness in ranking based on pairwise comparisons \cite{pairwise_fairness_for_ranking_and_regression, fairness_in_recommendation_ranking_through_pairwise_comparisons}. Recent work has also studied fairness-aware ranking in search and recommendations for real-world recruitment tools using fairness metrics based on skew in the top $k$ and Normalized Discounted KL-divergence (NDKL) divergence \cite{Geyik2019}. 
Fairness and ranking utility trade-offs have also been studied via counterfactually fair rankings \cite{causal_intersectionality_for_fair_ranking}.

\section{Preliminaries}
Let $M, N \in \Z^+$ and $N \leq M$. 
Then, a ranking is an assignment of $M$ ranks to $N$ items such that each rank (denoted by a number in $[M]$) is assigned to at most one item (denoted by a number in $[N]$) and each item is assigned exactly one rank.
Whenever a rank is not assigned to any item, we call it an \emph{empty rank}.
Note the whenever $N = M$, there are no empty ranks in the ranking.
We say that rank $i$ is \emph{lower} than rank $j$ if $i < j$, and rank $i$ is \emph{higher} than rank $j$ if $i > j$.
In a ranking the \emph{top $m$ ranks} refer to the ranks $(1,2,\ldots, m)$, \emph{each prefix of the top $m$ ranks} is the set $\set{(1, 2, \ldots, i) | i \in \set{1,\ldots,m}}$, \emph{every $k$ consecutive ranks in the top $m$ ranks} is the set $\set{(i+1, i+2, \ldots, i+k) | i \in \set{0,\ldots,m-k}}$, \emph{$i$th block of size $k$} is the ranks $((i-1) k+1, (i-1) k+2, \ldots, (i-1)k+k)$, and hence, \emph{top $d$ blocks of size $k$} is the set $\{((i-1) k+1, (i-1) k+2, \ldots, (i-1)k+k) | i \in \set{1,...,d}\}$.
A \emph{true ranking} is the ranking of the items based on a measure of merit of the items.
A re-ranking algorithm rearranges the items in the true ranking and outputs another ranking of the items with some desired properties.
We note that a true ranking is not always available for the real-world datasets. 
In our experiments, we use some natural substitutes for the true ranking; see \Cref{sec:experiments} for  details.
An item's \emph{true rank} is its rank in the true ranking.
The set of $N$ items is partitioned into $\ell$ groups based on the sensitive attributes of the items.
We denote each group by the subscript $l$ whenever we refer to a group $l \in[\ell]$. 
In all that follows, $\alpha_l$, $\beta_l \in [0,1]$ and $\gamma \geq 1$.
The fairness constraints are based on the representation (number of items) from each group in the ranking, and are denoted by $\boldsymbol{\alpha} = (\alpha_1, \alpha_2, \ldots, \alpha_{\ell})$ and $\boldsymbol{\beta} = (\beta_1, \beta_2, \ldots, \beta_{\ell})$, where $\alpha_l, \beta_l$ represent the fairness constraints for group $l$.
And for any $c\in\R$, $c\boldsymbol{\alpha} = (c\alpha_1, c\alpha_2, \ldots, c\alpha_{\ell})$, and similary $c\boldsymbol{\beta} = (c\beta_1, c\beta_2, \ldots, c\beta_{\ell})$.

We now formally define the notion of {\em group fairness} of a ranking of $N$ items.
\begin{definition}[Group Fairness]
	\label{def:gfair}
	A ranking is said to satisfy $\paren{\boldsymbol{\alpha}, \boldsymbol{\beta}, k}$ group fairness if every $k$ consecutive ranks have at most $\alpha_l k$ and at least $\beta_l k$ items from group $l$, for every group $l \in [\ell]$.
\end{definition}
That is, each element, ranks $(i+1, ..., i+k)$, in the set of every $k$ consecutive ranks in top $N$ ranks is such that for each group $l \in [\ell]$, at most $\alpha_l k$ and at least $\beta_l k$ of these ranks are assigned to the group $l$.
The set of top $\floor{N/k}$ blocks of size $k$ is a strict subset of the set of every $k$ consecutive ranks in top $N$ ranks. 
Therefore, any ranking that satisfies group fairness constraints for every $k$ consecutive ranks in top $N$ ranks also satisfies group fairness constraints for of the top $\floor{N/k}$  blocks of size $k$. 

Using the notion of {\em underranking}, we would like to capture how much an item has been displaced from its true rank during re-ranking for group fairness.
\begin{definition}[Underranking]
	\label{def:ifair}
	A ranking satisfies $\gamma$ underranking if the rank of each item is at most $\gamma$ times its true rank. 
\end{definition}
We remark that unless the true ranking satisfies the group fairness conditions, some items with high merit must suffer a loss of visibility during the process of re-ranking for group fairness.
That is, the output group fair ranking has underranking strictly greater than 1.
This manifests the trade-off between the group fairness and the underranking in ranking.

Closely related to underranking is the well studied notion of \pak~of ranking \cite{ir_evaluation_methods, introduction_to_information_retrieval, reducing_disparate_exposure_in_ranking}.
For a given ranking, \pak~is defined as the number of items in the top $K$ ranks of the true ranking
which are still in the top $K$ ranks after re-ranking. 
We get the following relation between underranking and \pak.
\begin{corollary}
	\label{cor:precision}
	A ranking satisfying $\gamma$ underranking also has \pak~at least $\floor{K/\gamma}$, $\forall K \in \Z^+$.
\end{corollary}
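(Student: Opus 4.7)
The plan is to directly unpack the definition of $\gamma$ underranking and identify a concrete set of items guaranteed to survive in the top $K$ after re-ranking. Specifically, I would focus on the items whose true ranks are the smallest, since these are the ones most protected by the multiplicative underranking bound.

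First, I would fix an arbitrary $K \in \Z^+$ and let $S$ be the set of items whose true rank is in $\set{1, 2, \ldots, \floor{K/\gamma}}$. Clearly $\abs{S} = \floor{K/\gamma}$ (assuming $N \geq \floor{K/\gamma}$; otherwise one works with the actual number of items, which only makes the bound easier). Every item in $S$ lies in the top $K$ ranks of the true ranking, because $\floor{K/\gamma} \leq K$ (using $\gamma \geq 1$).

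Next, I would invoke \Cref{def:ifair}: for any item in $S$ with true rank $i \leq \floor{K/\gamma}$, its rank in the re-ranked output is at most $\gamma i \leq \gamma \floor{K/\gamma} \leq K$. Hence every item of $S$ also appears among the top $K$ ranks of the re-ranked output. Since \pak{} counts precisely those items that are in the top $K$ of both the true and re-ranked output, this yields $\pak \geq \abs{S} = \floor{K/\gamma}$, as claimed.

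There is no real obstacle here; the statement is essentially a one-line consequence of the definition. The only minor point worth being careful about is the floor: one must observe $\gamma \floor{K/\gamma} \leq K$, which is immediate from $\floor{x} \leq x$. Everything else is pure bookkeeping.
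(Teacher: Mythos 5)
Your argument is correct and is essentially the paper's own proof: both take the top $\floor{K/\gamma}$ items of the true ranking and observe that $\gamma$ underranking places each of them at rank at most $\gamma\floor{K/\gamma}\leq K$, so all of them remain in the top $K$. The extra bookkeeping about the set $S$ and the floor inequality only makes explicit what the paper leaves implicit.
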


\begin{proof}
	Fix a ranking having $\gamma$ underranking. 
	By definition, the top $\floor{K/\gamma}$ items in the true ranking get displaced at most to the rank $\floor{ K/\gamma} \gamma \leq K $. 
	Hence, at least the top $\floor{K/\gamma}$ items in the true ranking are also in the top $K$ ranks in ranking with $\gamma$  underranking.
	Therefore, \pak~is at least $\floor{ K/\gamma}$.
\end{proof}

\begin{figure}
	\centering
	\includegraphics[width=0.5\textwidth]{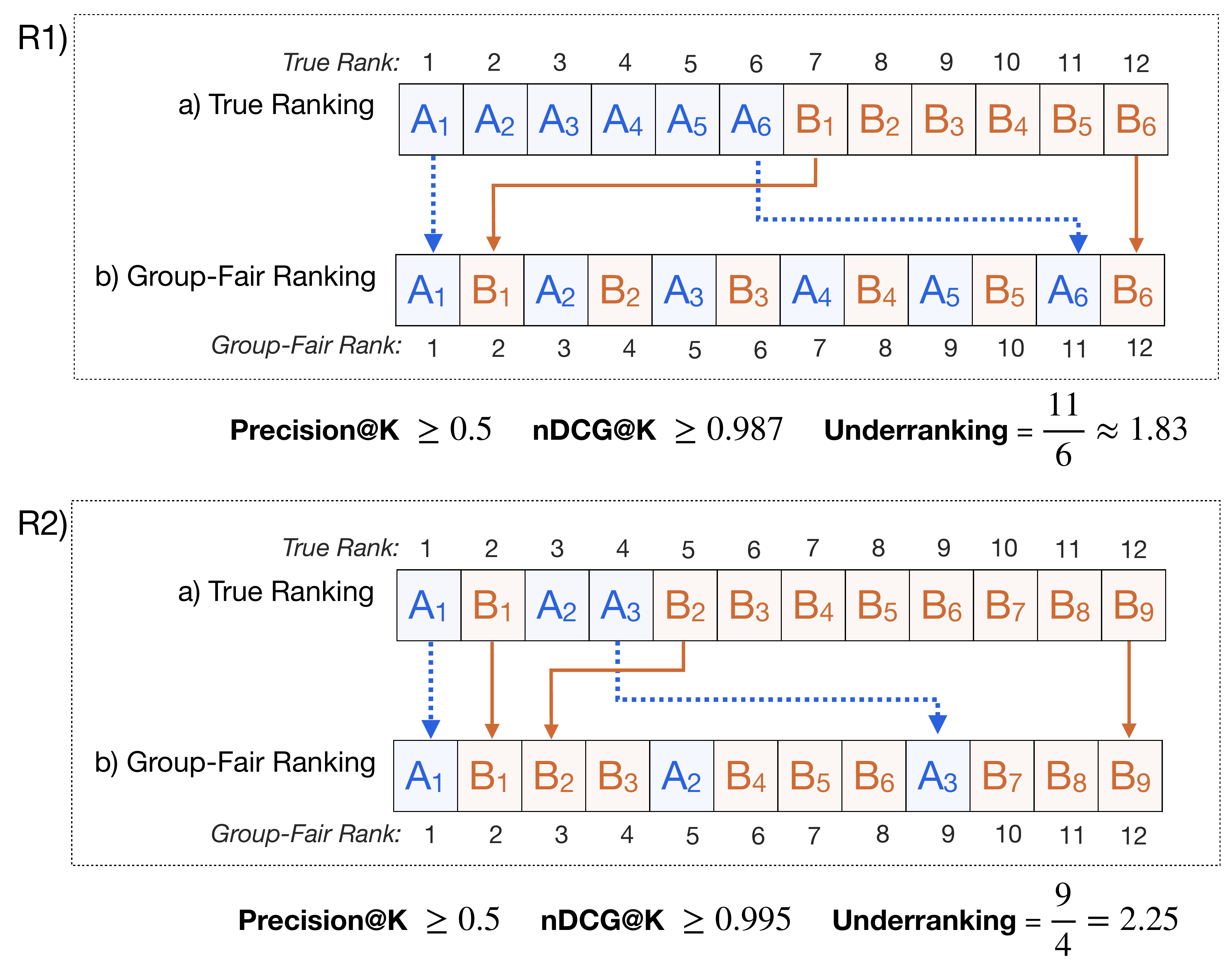}  	
	\caption{High ranking utility does not imply better (lower) underranking in group-fair rankings.}
	\label{fig:example}
\end{figure}

We note that our definition of group fairness in ranking is slightly different from previous definitions in \cite{fa_ir_a_fair_top_k_ranking_algorithm, ranking_with_fairness_constraints, Castillo2019survey}.
There, the group fairness constraints are at every prefix of the top $k$ ranking, whereas, in \Cref{def:gfair} group fairness constraints are for every $k$ consecutive ranks.
Our notion of group fairness has the desirable property that even if items from top few ranks are removed from the ranking, the remaining ranking still satisfies the group fairness constraints. 
Using this notion of group fairness, we propose \Cref{alg:ranking} that achieves simultaneous group fairness and underranking guarantees.
Such theoretical guarantees are not available for the re-ranking algorithms with prefix group fairness constraints. 
We will also see in \Cref{sec:experiments} that the algorithm proposed in this paper achieves better representation of the protected groups in the prefixes of the top $k$ ranking as well.

We also note here that, even though low (better) underranking in the top $K$ ranks implies high \pak, the converse need not be true.
Consider two pairs of a true ranking and its corresponding group-fair ranking shown in \Cref{fig:example}, R1(a), R1(b) and R2(a), R2(b) with items from groups A and B.
Both R1(b) and R2(b) satisfy proportional representation (50\% of each group in R1 and 25\% A's and 75\% B's in R2) in every prefix of the ranking (ignoring the rounding errors), as well as in every $k$ consecutive ranks, for a reasonable choice of $k$, which is $4$ or $8$.
If we assume that in both R1(a) and R2(a), the merit of the item ranked at rank $i$ is $1 - 0.01*(i-1)$, then the nDCG (see \Cref{sec:experiments} for the formula), which is also a ranking utility metric like \pak, is more for R2 than R1.
We also note that, in both R1 and R2, the \pak~in any prefix of the top 12 ranks is at least $0.5$.
From these two examples we observe that even though the utility of the group-fair ranking of R2 is greater than or equal to that of R1, the underranking in R2 is higher (worse) than that of R1.
Hence, high ranking utility may not imply better underranking.
This is also observed in our experiments on the real-world datasets.

\section{Trade-off between Underranking and Group Fairness}
\label{sec:algorithm_and_results}

Our first main result is a lower bound on the underranking when satisfying group fairness in blocks of size $k$. 

\begin{theorem}[Lower bound]
	\label{thm:igtradeoff}
	Fix $\ell \in \Z^+$. 
	For each group $ l \in [\ell]$, fix $\alpha_l, \beta_l \in (0, 1] \cap \Q $ such that $\alpha_l \geq \beta_l$, $\sum_{l \in [\ell]} \alpha_l\geq 1$, and $\sum_{l \in [\ell]} \beta_l\leq 1$.
	Fix $k \in \Z^+$. 
	For every $n_0 \in \Z^+$, there exists an $n$ such that $n \geq n_0$, and there exists a true ranking of the $N = \ell n$ items grouped into $\ell$ groups of $n$ items each, such that the following holds. 
	Any ranking satisfying $\gamma$ underranking (w.r.t. the true ranking) and 
	$\paren{\boldsymbol{\alpha}, \boldsymbol{\beta}, k}$ group fairness in the top $\frac{\gamma n}{k}$ blocks of size $k$ must have $\gamma \geq \frac{1}{\min\{\amin, 1-\sum_{l \neq l_*}\beta_l\}}$, where  $\amin = \min_l \alpha_l$ and $l_* = \argmin_l \beta_l$.
\end{theorem}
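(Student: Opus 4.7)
My plan is to construct a single adversarial true ranking that forces a large underranking whenever the re-ranking satisfies the stated group fairness in its top $\gamma n/k$ blocks. The idea is to concentrate all items of one ``bottleneck'' group at the very top of the true ranking; $\gamma$-underranking then forces all of them into the top $\gamma n$ ranks of the re-ranking, while group fairness caps how many of them can appear there. A simple counting inequality then forces $\gamma$ to be large.

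For each $g\in[\ell]$, let $c_g \defeq \min\{\alpha_g,\ 1-\sum_{l\neq g}\beta_l\}$. This is the tightest per-block fraction of slots that $(\boldsymbol{\alpha},\boldsymbol{\beta},k)$-fairness can assign to group $g$: the first term is the explicit upper cap $\alpha_g k$, while the second comes from the fact that each other group $l$ must occupy at least $\beta_l k$ of the $k$ slots. I pick any $g^\star\in\argmin_g c_g$ and let the true ranking $T$ place the $n$ items of group $g^\star$ at ranks $1,\ldots,n$ and the remaining $(\ell-1)n$ items at ranks $n+1,\ldots,\ell n$ in an arbitrary order. I choose $n\geq n_0$ divisible by $k$ and by a common denominator of the rationals $\{\alpha_l,\beta_l\}$, so that $\alpha_l k,\beta_l k$ are integers and $\gamma n/k$ is integral at the extremal value of $\gamma$.

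Now, given any ranking $\pi$ satisfying $\gamma$-underranking with respect to $T$ and $(\boldsymbol{\alpha},\boldsymbol{\beta},k)$-group fairness in its top $\gamma n/k$ blocks, the item at true rank $n$ (which lies in $g^\star$) has $\pi$-rank at most $\gamma n$ by underranking; since every item of $g^\star$ has true rank in $[1,n]$, all $n$ such items lie in the top $\gamma n$ ranks, i.e., within the top $\gamma n/k$ blocks. Group fairness caps each of those blocks at $c_{g^\star} k$ items of $g^\star$, so the number of $g^\star$-items in the top $\gamma n$ ranks is at most $(\gamma n/k)\cdot c_{g^\star} k = \gamma n\, c_{g^\star}$. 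Hence $n\leq \gamma n\, c_{g^\star}$, which rearranges to $\gamma\geq 1/c_{g^\star}$.

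To recover the exact form in the theorem I upper bound $c_{g^\star}$ by evaluating $c_g$ at two convenient witnesses: taking $g\in\argmin_l \alpha_l$ gives $c_{g^\star}\leq\amin$, and taking $g=l_*$ gives $c_{g^\star}\leq 1-\sum_{l\neq l_*}\beta_l$, so $c_{g^\star}\leq\min\{\amin,\ 1-\sum_{l\neq l_*}\beta_l\}$ and hence $\gamma\geq 1/\min\{\amin,\ 1-\sum_{l\neq l_*}\beta_l\}$. The only routine obstacle I anticipate is the integrality bookkeeping around $\gamma n/k$ and $\alpha_l k, \beta_l k$, which I absorb into the freedom the theorem grants in choosing $n\geq n_0$; the genuinely conceptual step is packaging both possible bottlenecks into the single quantity $c_g$ so that one adversarial true ranking simultaneously certifies both terms of the outer $\min$, which is what makes a single construction suffice.
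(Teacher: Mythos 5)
Your proposal is correct and essentially identical to the paper's proof: both place all $n$ items of the bottleneck group $\hat{l}=\argmin_l \min\{\alpha_l,\,1-\sum_{j\neq l}\beta_j\}$ at the top of the true ranking, use $\gamma$-underranking to force them into the top $\gamma n$ ranks, and count that the per-block cap $\min\{\alpha_{\hat{l}},\,1-\sum_{l\neq\hat{l}}\beta_l\}\cdot k$ limits them to $\gamma n\cdot c_{\hat{l}}$, yielding $\gamma\geq 1/c_{\hat{l}}$. The only cosmetic difference is that you bound $c_{g^\star}\leq\min\{\amin,\,1-\sum_{l\neq l_*}\beta_l\}$ via two witnesses whereas the paper observes the two quantities are equal, and your integrality bookkeeping for $\gamma n/k$ (choosing $n$ as a suitable multiple of $k$) matches the paper's choice of $n$ as a multiple of $bk$ with $\gamma=a/b$.
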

\begin{proof}
	Let $\gamma = a/b$ where $a,b \in \Z^{+}$. 
	Set $n$ to be any integer multiple of $bk$ such that $n \geq n_0$.
	Let $\hat{l} = \argmin_{l \in [\ell]}  \min\{\alpha_l, 1-\sum_{j \neq l}\beta_j\}$.
	Consider a true ranking where all the $n$ items from group $\hat{l}$ are placed in top $n$ ranks followed by the items from all the other groups. 
	Observe that by our choice of parameters, $\gamma n$ is an integer. 
	Let $\gamma n = ck$ for some $c \in \Z^+$.
	Now, consider any ranking of these items satisfying $\gamma$ underranking in the top $\gamma n$ ranks and $\paren{\boldsymbol{\alpha}, \boldsymbol{\beta}, k}$ group fairness in the top $\frac{\gamma n}{ k}$ blocks of size $k$. 
	By the definition of underranking, we get that the top $\gamma n$ ranks must contain all the $n$ items from group $\hat{l}$. 
	Since the ranking satisfies all the upper bound constraints, any of the top $c$ blocks must have at most $\alpha_{\hat{l}} k$ items from group $\hat{l}$.
	Similarly, since the ranking also satisfies all the lower bounds, any of the top $c$ blocks must have at least $\beta_l k$ items from group $l$, $\forall l \in [\ell]$. 
	Hence, there could be at most $\paren{1-\sum_{l \neq \hat{l}}\beta_l}\cdot k$ items from group $\hat{l}$. 
	This implies that, the top $ck$ ranks have at most $\paren{\min\set{\alpha_{\hat{l}}, 1-\sum_{l \neq \hat{l}}\beta_l}} \cdot ck$ items from group $\hat{l}$. 
	Therefore, the top $ck = \gamma n$ ranks contain at most $\paren{\min\set{\alpha_{\hat{l}}, 1-\sum_{l \neq \hat{l}}\beta_l}} \cdot \gamma n$ items from group $\hat{l}$.
	If $ \gamma < 1/\min\set{\alpha_{\hat{l}, 1-\sum_{l \neq \hat{l}}\beta_l}}$, then top $\gamma n$ ranks contain strictly less than $n$ elements from group $\hat{l}$, which is a contradiction.
	Therefore, we must have $\gamma \geq\frac{1}{\min_{l \in [\ell]}\min\{\alpha_l, 1-\sum_{j \neq l}\beta_j\}} = \frac{1}{\min\{\amin, 1-\sum_{l \neq l_*}\beta_l\}}$, where  $\amin = \min_l \alpha_l$ and $l_* = \argmin_l \beta_l$.
\end{proof}

Our next main result is a fair ranking algorithm that takes a true ranking and outputs another ranking with underranking and group fairness guarantees in any $k$ consecutive ranks. 

\begin{theorem}[Trade-off 1]
	\label{thm:igfair}
	Given a true ranking of $N$ items grouped into $\ell$ disjoint groups, with each group having at least $n$ items, and fairness parameters $k \in \Z^+$ and $\alpha_l, \beta_l, \forall l \in [\ell]$, where $\alpha_l, \beta_l$ define the upper and lower group fairness constraints for the group $l$ respectively such that $0 \le \beta_l \le \alpha_l \le 1$, $\sum_{l\in[\ell]}^{} \alpha_l > 1$, $\sum_{l\in[\ell]}^{} \beta_l < 1 $. 
	Let $\amin := \min_l \alpha_l$, $\amax := \max_l \alpha_l$, and $l_* = \argmin_l \beta_l$. \\
	Let $\epsilon := \frac{2}{k}\cdot\max\bigg\{\paren{1+\frac{\ell}{ \asum - 1}}, \paren{1+\frac{\ell}{ 1 - \bsum}}, \max_{l \in [\ell]} \paren{1+\frac{2}{ \alpha_l - \beta_l}} \bigg\}$. \\
	Then there exists a polynomial time algorithm to compute a ranking satisfying both of the following simultaneously,
	\begin{enumerate}
		\item $\frac{1}{\min\set{\amin - \frac{1}{\floor{\epsilon k/2}}, \paren{1-\sum_{l \neq l_*} \beta_l} - \frac{\ell - 1}{\floor{\epsilon k/2}}}}$ underranking,
		\item $\paren{(1+\epsilon)\boldsymbol{\alpha}, (1-\epsilon)\boldsymbol{\beta}, k}$ group fairness in the top $\floor{\frac{n}{\amax}} - \floor{\epsilon k/2 }$ ranks.
	\end{enumerate}
\end{theorem}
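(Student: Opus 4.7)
The plan is to construct a greedy block-based re-ranking algorithm whose guarantees match the lower bound of \Cref{thm:igtradeoff} up to rounding corrections. Set $k' := \floor{\epsilon k/2}$ and carve the output ranking into consecutive blocks of size $k'$. Filling one block at a time, the algorithm scans the true ranking top-down and places the next unplaced item into the current block, subject to a per-block upper bound of $\floor{\alpha_l k'}$ items from each group $l$, while reserving $\ceil{\beta_l k'}$ slots of every block for each group to meet its lower bound. Items skipped in one block because of a saturated upper bound are reconsidered for the next block.

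The first step is to check feasibility of the greedy rule: that $k' = \floor{\epsilon k/2}$ is large enough to ensure (i) $\sum_l \floor{\alpha_l k'} \geq k'$, (ii) $\sum_l \ceil{\beta_l k'} \leq k'$, and (iii) $\floor{\alpha_l k'} \geq \ceil{\beta_l k'}$ for every $l$. These correspond one-to-one with the three arguments of the $\max$ defining $\epsilon$, where the $+\ell$, $+\ell$, and $+2$ absorb integer rounding (e.g. $\sum_l \floor{\alpha_l k'} \geq k'\asum - \ell \geq k'$ iff $k'(\asum - 1) \geq \ell$, iff $\epsilon k/2 \geq 1 + \ell/(\asum - 1)$). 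A marginal argument then shows that at every step within a block some group still has upper-bound capacity while its lower-bound reservation remains reachable, so the greedy fill never gets stuck before some group is nearly exhausted.

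Given feasibility, the group-fairness guarantee follows by averaging: any window of $k$ consecutive ranks meets at most $\ceil{k/k'}+1 \leq k/k' + 2$ consecutive blocks, each of which contributes at most $\alpha_l k'$ and at least $\beta_l k'$ items from group $l$, so the window has at most $\alpha_l k(1 + 2k'/k) \leq (1+\epsilon)\alpha_l k$ and at least $(1-\epsilon)\beta_l k$ items from group $l$. The range $\floor{n/\amax} - \floor{\epsilon k/2}$ comes from the fact that the group achieving $\amax$ is consumed at rate at most $\amax k'$ per block, so the algorithm safely produces at least $\floor{n/(\amax k')}\cdot k' \geq \floor{n/\amax} - \floor{\epsilon k/2}$ ranks before any group is exhausted.

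For underranking, fix an item at true rank $r$ in group $l$. Each block before the one containing this item holds at most $M_l := \min\set{\floor{\alpha_l k'},\ k' - \sum_{j \neq l}\ceil{\beta_j k'}}$ items of group $l$, since $\ceil{\beta_j k'}$ slots per block are reserved for each other group $j$. Greedy placement in true-rank order therefore puts the item into block at most $\ceil{r/M_l}$, and a direct calculation within that block shows the output rank is at most $rk'/M_l$, giving underranking at most $k'/M_l$. Applying the elementary inequalities $\floor{\alpha_l k'}/k' \geq \alpha_l - 1/k'$ and $(k' - \sum_{j \neq l}\ceil{\beta_j k'})/k' \geq (1 - \sum_{j \neq l}\beta_j) - (\ell-1)/k'$ converts this into $k'/M_l \leq 1/\min\set{\alpha_l - 1/k',\ (1-\sum_{j\neq l}\beta_j) - (\ell-1)/k'}$, and maximising over $l$ replaces $\alpha_l$ by $\amin$ and the inner sum by its value at $l = l_*$, reproducing the theorem's bound. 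The main obstacle I anticipate is making the per-step feasibility argument close cleanly through all three integer slacks; once that is in place, the block structure makes everything else essentially automatic.
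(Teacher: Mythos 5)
Your proposal is correct in substance but follows a genuinely different route from the paper's. The paper's \Cref{alg:ranking} works in two phases: it first \emph{dilates} the true ranking, deterministically moving the item at true rank $j$ to rank $(\ceil{j/b}-1)\floor{\epsilon k/2} + (j-(\ceil{j/b}-1)b)$ with the group-oblivious worst-case capacity $b=\min\set{\floor{\amin\floor{\epsilon k/2}}, \floor{\epsilon k/2}-\sum_{l\neq l_*}\ceil{\beta_l\floor{\epsilon k/2}}}$, so that the underranking bound (\Cref{lem:ifair}) is locked in by this one formula and all later steps only move items to lower ranks; it then greedily fills the empty slots prioritizing unmet lower bounds, and finally compacts the tail. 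You instead run a single forward greedy that fills blocks of size $\floor{\epsilon k/2}$ with per-block caps $\floor{\alpha_l k'}$ and reservations $\ceil{\beta_l k'}$, and derive underranking from a counting argument rather than from an explicit dilation. The shared skeleton is identical — your three feasibility conditions are exactly \Cref{lem:blockresults} and match the three terms in $\epsilon$, your window-averaging is \Cref{lem:gfair_bounds}, and your $\amax$-consumption-rate argument for the range is \Cref{lem:nonemptyranks} — but your underranking analysis is per-group (capacity $M_l$ for the item's own group, maximized over $l$ only at the end), which is slightly finer than the paper's uniform $b$, at the cost of a more delicate argument. What each buys: the paper's dilation makes the underranking bound trivial and pushes all the work into showing the fill step respects fairness; your greedy makes fairness nearly automatic per block but puts the burden on the delay analysis.

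Two places in your plan need more care than your sketch provides. First, the sentence ``each block before the one containing this item holds at most $M_l$ items of group $l$'' is not the invariant that bounds the delay; what you actually need is that every block the item is passed over for contains at least $M_l$ \emph{better-ranked} items (either $\floor{\alpha_l k'}$ items of group $l$, or $k'-\sum_{j\neq l}\ceil{\beta_j k'}$ items of arbitrary groups occupying all slots available to group $l$), and that blocks completed before the item is first scanned are entirely filled with better-ranked items; charging the at most $r-1$ better-ranked items against these blocks and the item's within-block position then yields final rank at most $rk'/M_l$, since $k'\geq M_l$. The calculation closes, but it is the crux, not a footnote. Second, the theorem's underranking guarantee is over all $N$ items while group fairness is only claimed in the top $\floor{n/\amax}-\floor{\epsilon k/2}$ ranks, so you must say what the greedy does once some group is exhausted and its reserved slots can no longer be filled (e.g.\ drop the exhausted group's reservations, or compact remaining items upward as the paper's last step does); otherwise empty reserved slots could push later items below the claimed bound.
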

We note that $\epsilon$ need not be smaller than $1$.

We also obtain slightly stronger guarantees if we only need group fairness in blocks of size $k$ instead of group fairness guarantees for any $k$ consecutive ranks. 
That is, in each of the top $\floor{n/(\amax k)}$ blocks of size $k$, the output ranking has to be such that, for each group $l \in [\ell]$, at least $\beta_l k$ and at most $\alpha_l k$ ranks are assigned to items from group $l$. 
\begin{theorem}[Trade-off 2]
	\label{thm:igblockfair}
	Given a true ranking of $N$ items grouped into $\ell$ disjoint groups, with each group having at least $n$ items, and fairness parameters $k \in \Z^+$, and $\boldsymbol{\alpha}, \boldsymbol{\beta}$, where $\alpha_l, \beta_l$ define the upper and lower group fairness constraints for the group $l$ respectively such that $0 < \beta_l \le \alpha_l \le 1$, $\sum_{l\in[\ell]}^{} \alpha_l > 1 $ and $\sum_{l\in[\ell]}^{} \beta_l < 1 $, if the fairness parameters are also such that $\alpha_l k \in \Z^+$ and $\beta_l k \in \Z^+, \forall l \in [\ell]$, then there exists a polynomial time algorithm to compute a ranking satisfying,
	\begin{enumerate}
		\item $\frac{1}{\min\set{\amin, 1 - \sum_{l \neq l_*} \beta_l}} $ underranking,
		\item $\paren{\boldsymbol{\alpha}, \boldsymbol{\beta}, k}$ group fairness in each of the top $\floor{\frac{n}{\amax k}}$ blocks of size $k$,
	\end{enumerate}
	where  $\amin = \min_l \alpha_l$, $\amax = \max_l \alpha_l$, and $l_* = \argmin_l \beta_l$.
\end{theorem}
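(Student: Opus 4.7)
The plan is to design a block-by-block greedy re-ranking and verify both guarantees structurally. For each block $b=1,\ldots,\floor{n/(\amax k)}$, the algorithm first performs a \emph{pre-fill}: for every group $l$, it selects the $\beta_l k$ highest-merit items from $l$ that have not yet been placed. It then performs an upper-bounded \emph{greedy fill}, scanning the remaining unplaced items in increasing true-rank order and admitting each one whose group still has fewer than $\alpha_l k$ representatives in the current block, until all $k$ slots are filled. Finally, the $k$ selected items are assigned to ranks $(b-1)k+1,\ldots,bk$ in increasing order of true rank, and items not placed in these blocks are appended in true-rank order.

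Group fairness in each block is immediate from the construction: the pre-fill enforces the lower bound $\beta_l k$ and the greedy step enforces the upper bound $\alpha_l k$. Feasibility of the fill step follows from $\sum_l\beta_l<1$ (so free slots remain) together with $\sum_l(\alpha_l-\beta_l)k\ge(1-\sum_l\beta_l)k$ via $\sum_l\alpha_l\ge1$, and the integrality $\alpha_l k,\beta_l k\in\Z^+$; the hypothesis that every group has at least $n$ items guarantees no group is exhausted during the $\floor{n/(\amax k)}$ blocks, and each block is processed in polynomial time.

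The core task is the underranking bound $\gamma=1/\mu$ with $\mu=\min\{\amin,\,1-\sum_{l\neq l_*}\beta_l\}$. I would prove, by induction on $b$, the invariant that every item with true rank at most $b\mu k$ is placed by the end of block $b$. The key inductive step is to show that block $b$ can absorb the at most $\mu k$ new items with true rank in $((b-1)\mu k,\,b\mu k]$. Let $a_l$ be the number of those items in group $l$, so $\sum_l a_l\le\mu k$; block $b$ must reserve at least $\max(a_l,\beta_l k)$ slots for group $l$ (pre-fill slots are reserved regardless, and on top of that the $a_l$ new items must fit). A convexity argument on the simplex $\{a\ge0,\,\sum_l a_l\le\mu k\}$ reduces the bound $\sum_l\max(a_l,\beta_l k)\le k$ to the case in which all mass lies on a single group $l_0$, where $\sum_l\max(a_l,\beta_l k)=\mu k+\sum_{l\neq l_0}\beta_l k\le k$ follows from $\mu\le 1-\sum_{l\neq l_0}\beta_l$, itself a consequence of the defining inequality $\mu\le 1-\sum_{l\neq l_*}\beta_l$ together with $\beta_{l_*}\le\beta_{l_0}$. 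The $\alpha$-cap is slack because $\mu\le\amin\le\alpha_l$ forces $a_l\le\alpha_l k$. Given the invariant, an item $x$ of true rank $r$ lies in block $b=\lceil r/(\mu k)\rceil$, and the within-block sort by true rank puts it at position at most $r-(b-1)\mu k$, giving output rank at most $(b-1)k+(r-(b-1)\mu k)=r+(b-1)k(1-\mu)<r+(r/\mu)(1-\mu)=r/\mu=\gamma r$, as required.

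The main obstacle is the inequality $\sum_l\max(a_l,\beta_l k)\le k$ that underlies the inductive step --- it is the one place where the two constraint types (the per-group $\alpha$-caps and the reserved $\beta$-slots for other groups) interact, and it is exactly here that the bound $\mu=\min\{\amin,1-\sum_{l\neq l_*}\beta_l\}$ matching the lower bound of \Cref{thm:igtradeoff} comes out tight. The remaining technicalities are arguing that the greedy fill, processed in merit order, actually places these items rather than merely leaving capacity for them, and using the within-block sort by true rank to avoid losing an additive $k$ in the final underranking estimate.
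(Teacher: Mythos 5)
Your proposal is correct in substance but takes a genuinely different route from the paper. The paper proves \Cref{thm:igblockfair} by re-running \Cref{alg:ranking} with $\epsilon=2$, so that the internal block size $\floor{\epsilon k/2}$ equals $k$: the move-down step places each batch of $b=\min\set{\amin k,\;k-\sum_{l\neq l_*}\beta_l k}=\mu k$ consecutive true ranks at the start of a block of size $k$, which already gives the underranking bound $1/\mu$ (items only move to lower ranks afterwards), and the fairness claims follow from \Cref{lem:blocks_ubounds}, \Cref{lem:nonemptyranks} and \Cref{lem:blocks_lbounds}, with the floors and ceilings disappearing because $\alpha_l k,\beta_l k\in\Z^+$. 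You instead construct the ranking block by block (lower-bound pre-fill, then a cap-limited fill in merit order, then a within-block sort by true rank), get block fairness by construction, and derive underranking from the invariant that every item of true rank at most $b\mu k$ is placed by the end of block $b$; the inductive step is exactly your inequality $\sum_l\max(a_l,\beta_l k)\le k$, and your convexity/vertex argument for it is sound (minor remark: at a vertex with $\mu<\beta_{l_0}$ the value is just $\bsum k<k$, so the bound holds in either case; also note $\mu k\in\Z^+$ by the integrality hypothesis, so the interval $((b-1)\mu k, b\mu k]$ indeed contributes at most $\mu k$ items, and $a_l\le\mu k\le\alpha_l k$ keeps the caps slack). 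Your route buys a self-contained proof of the block-fair case that does not pass through the general-$\epsilon$ machinery and makes the matching with the lower bound of \Cref{thm:igtradeoff} very transparent; the paper's route buys a single algorithm and one set of lemmas serving both \Cref{thm:igfair} and \Cref{thm:igblockfair}, and an underranking bound that holds for every item with no case analysis, since it is fixed at the move-down step.

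One case you should add to make the argument complete: your final rank computation covers only items absorbed into the $d:=\floor{n/(\amax k)}$ blocks, i.e.\ those with true rank $r\le d\mu k$; an item with $r>d\mu k$ may end up in the appended tail, to which your invariant-based bound does not directly speak. The same calculation closes it: the $d$ blocks are full and, by the invariant, contain all $d\mu k$ items of lowest true rank, so such an item's position in the tail is at most $r-d\mu k$ and its final rank is at most $dk+r-d\mu k<r+\paren{r/\mu}(1-\mu)=r/\mu$. With that, and the availability bookkeeping you sketch (which does go through: at the start of block $b\le d$ each group has used at most $(b-1)\amax k\le n-\amax k$ items, so at least $\amax k\ge\alpha_l k\ge\beta_l k$ remain, and $\sum_l\alpha_l k>k$ lets every block be filled), the proof is complete.
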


\begin{algorithm}[t]
	\SetAlgoLined
	\KwIn{A true ranking of the $N$ items and parameters $\alpha_l, \beta_l, \forall l \in [\ell]$, and $k$ satisfying the conditions in \Cref{thm:igfair}.}
	
	Set $\epsilon, \amin, l_*$ as in \Cref{thm:igfair}, set $B := \floor{\frac{\epsilon k}{2}}$
	
	Set $b := \min\left\{\floor{\amin B}, B - \sum_{l \neq l_*} \ceil{\beta_lB } \right\}$
	
	Set $M := \ceil{NB/b}$
	
	\For{$i := \ceil{N/b}$ down to $1$} 
	{	
		\For{$j := 1$ to  $\min\{b, N - (i-1)b\}$}
		{
			Move item at rank $(i-1)b + j$ to rank $(i-1)B + j$ 
			
		}
		
	}\label{step:move-down}
	\For{each rank $j$ in $1$ to $M$}
	{ 
		
		\If{rank $j$ is empty}
		{ 
			
			Set $i := \ceil{j/B}$ 
			
			\For{$j' := j+1$ to $M$ such that rank $j'$ is not empty}
			{
				
				Set $l := $ group the item ranked at $j'$ belongs to
				
				\If{ the lower bound for group $l$ in the block $i$ is not satisfied $\lor$ (lower bounds of all the groups are satisfied $\land$ upper bound for group $l$ would not be violated)     }
				{	\label{step:condition}
					
					Move the item at rank $j'$ to rank $j$
					
					Break the loop
					
				}
				
			}
			
		}
	}\label{step:move-up}
	\For{$j := 1$ to $N$}
	{
		\If{rank $j$ is empty}
		{
			Move to rank $j$, the first item at rank higher than $j$\;
		}
	} \label{step:fill-tail}
	Output final ranking from rank $1$ to rank $N$\;
	\caption{ALG}
	\label{alg:ranking}
\end{algorithm}

\paragraph{Overview of the Algorithm~\ref{alg:ranking} and proof outline.}
We defer all the proofs to the supplementary material.
Here we present an overview of \Cref{alg:ranking}. 
We use Algorithm \ref{alg:ranking} to prove Theorem \ref{thm:igfair}. 
We invoke Algorithm \ref{alg:ranking} with a different value of $\epsilon$ to prove 
Theorem \ref{thm:igblockfair}.
Let $\epsilon, b$, and $B$ be as set in the algorithm.
The $i$th \emph{block} of size $\floor{\epsilon k/2}$ refers to the ranks $(i-1) \floor{\epsilon k/2} + 1$ to $i \floor{\epsilon k/2}$.
We are given a true ranking of $N$ items. Hence we first assume that the length of the true ranking is $M = \ceil{NB/b}$ such that the ranks $N+1$ to $M$ are empty in the beginning of the algorithm.
We first move the items to a rank higher than their true ranks in a fashion such that at the end of Step \ref{step:move-down} the underranking of this intermediate ranking consisting of $M$ ranks is bounded (see \Cref{lem:ifair}).
By our choice of parameters, this also guarantees that none of the blocks have more than $\min\big\{\floor{\amin \floor{\epsilon k/2}}, \floor{\epsilon k/2} - \sum_{l \neq l_*} \ceil{\beta_l\floor{\epsilon k/2}}\big\}\cdot \floor{\epsilon k/2}$ items from any group.  
Next, we greedily fill up the empty ranks starting from the rank $1$ while ensuring that the group fairness is not violated, until there are items available from each group.
We use the fact that there are at least $n$ items from each group to show that if there is any empty rank in the top $\floor{n/\amax} - \floor{\epsilon k/2} $ ranks, then there will be at least one higher ranked item available from each group which can be assigned to the empty rank without violating the condition in Step \ref{step:condition}.
Therefore, top $\floor{n/\amax} - \floor{\epsilon k/2} $ ranks will be unchanged after Step \ref{step:move-up}.

Then we fill the remaining empty ranks till $N$ while ensuring that the underranking does not get worse.
It is easy to show that after Step \ref{step:fill-tail}, each of the top $\floor{\frac{n}{\floor{\amax  \floor{\epsilon k/2} } } } $ blocks have at most $\floor{\alpha_l \floor{\epsilon k/2}}$ items and at least $\ceil{\beta_l \floor{\epsilon k/2}}$ items from group $l$ for each $l$.
Finally we output the top $N$ ranks.
Observe that any $k$ consecutive ranks must include some number of blocks completely, and will intersect at most 
two blocks partially. 
Therefore, in the worst case, the number of items from a group $l$ in any $k$ consecutive ranks of the top $\floor{n/\amax} - \floor{\epsilon k/2}$ ranks will be at most $\alpha _l k + 2 \alpha_l \floor{\epsilon k /2} \leq \alpha_l(1 + \epsilon)k$, and at least $\beta _l k - 2 \beta_l \floor{\epsilon k /2} \geq \beta_l(1 - \epsilon)k$.
This gives us our group fairness guarantee in the top  $\floor{n/\amax} - \floor{\epsilon k/2}$ ranks.

\subsection{Proof of \Cref{thm:igfair} and \Cref{thm:igblockfair}}

Before proving the underranking and group fairness guarantess of \Cref{alg:ranking}, we prove the following useful lemma,
\begin{lemma}
	\label{lem:blockresults}
	In a block of size $\floor{\epsilon k/2}$, the following always hold,
	\begin{enumerate}
		\item If the block contains any empty ranks, then there is at least one group that can be assinged to this rank without violating the upper bound constraint.
		\item If $\forall l\in[\ell]$ there are $\ceil{\beta_l \floor{\epsilon k/2}}$ items available from the group $l$, then we can always assign $\ceil{\beta_l \floor{\epsilon k/2}}$ ranks to group $l$, $\forall l \in [\ell]$, such that all the lower bound constraints in the block are satisfied.
		\item For each group $l \in [\ell]$, the upper bound on the number of ranks to be assigned to the group is always greater than the lower bound.
	\end{enumerate}
\end{lemma}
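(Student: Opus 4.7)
My plan is to first convert the definition of $\epsilon$ into three lower bounds on the block size $B := \floor{\epsilon k/2}$, one aligned with each part of the lemma. Since $\epsilon k/2$ equals the maximum in its definition, it dominates each of the three listed terms; using $\floor{x} \ge x - 1$ to absorb the additive $1$ in each bound, this yields the integer inequalities $B \ge \ell/(\asum - 1)$, $B \ge \ell/(1 - \bsum)$, and $(\alpha_l - \beta_l)\, B \ge 2$ for every $l \in [\ell]$. These three inequalities carry the rest of the proof.

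For Part 1, I will argue by contrapositive. Suppose that within a block no group can absorb another item without violating its upper bound $\floor{\alpha_l B}$. Because the algorithm maintains the upper bound as an invariant on each group's count in each block, the hypothesis forces every group $l$ to hold exactly $\floor{\alpha_l B}$ items in the block. The total number of filled ranks in the block is then $\sum_l \floor{\alpha_l B} \ge \asum \cdot B - \ell \ge B$, where the last step uses the first bound on $B$. Hence the block has no empty rank, contradicting the hypothesis.

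For Part 2, I will argue by a total-demand count. Using $\ceil{x} \le x + 1$ and the second bound on $B$, I get $\sum_l \ceil{\beta_l B} \le \bsum \cdot B + \ell \le B$. The total number of required items across all groups therefore fits inside the $B$ ranks of the block, so assigning $\ceil{\beta_l B}$ of these ranks to items from group $l$ for each $l$ is feasible and meets every lower bound simultaneously.

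For Part 3, I need the strict gap $\floor{\alpha_l B} > \ceil{\beta_l B}$, equivalently $\floor{\alpha_l B} \ge \ceil{\beta_l B} + 1$ between integers. From $(\alpha_l - \beta_l) B \ge 2$ I have $\alpha_l B \ge \beta_l B + 2$, and by monotonicity of $\floor{\cdot}$ together with $\floor{y + 2} = \floor{y} + 2$, this gives $\floor{\alpha_l B} \ge \floor{\beta_l B} + 2 \ge \ceil{\beta_l B} + 1$, using $\ceil{x} \le \floor{x} + 1$ in the final step. The only delicate point in the whole argument is exactly this chain: the slack of $2$ in the third bound is precisely what pays for one unit of rounding on each side while still preserving a gap of $1$ between the two integer quantities. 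Everything else is short counting against the three bounds on $B$.
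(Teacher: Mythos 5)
Your proof is correct and takes essentially the same route as the paper: both extract from the definition of $\epsilon$ the three lower bounds on the block size $\floor{\epsilon k/2}$ and finish with floor/ceiling rounding, your contrapositive in Part 1 being the same counting argument as the paper's direct derivation of $\sum_{l\in[\ell]} \floor{\alpha_l \floor{\epsilon k/2}} > \floor{\epsilon k/2}$. The only cosmetic difference is that the paper keeps its inequalities strict (e.g., $\sum_{l\in[\ell]} \ceil{\beta_l \floor{\epsilon k/2}} < \floor{\epsilon k/2}$), whereas your non-strict versions still suffice for the lemma as stated.
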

\begin{proof}
	By our choice of parameters, we have,
	\begin{align*}
	\epsilon  \geq \frac{2}{k}\cdot \max\Bigg\{\floor{1+\frac{\ell}{ \asum - 1}}, \floor{1+\frac{\ell}{ 1 - \bsum}}, \max_{l \in [\ell]} \floor{1+\frac{2}{ \alpha_l - \beta_l}} \Bigg\} \geq \frac{2}{k}.
	\end{align*}
	Therefore, the size of each block $\floor{\epsilon k/2}$ is at least 1.
	Moreover,
	\begin{align*}
	&\epsilon \geq \frac{2}{k}\paren{1+\frac{\ell}{\asum - 1}} \\
	\implies &\paren{\epsilon k/2 - 1} \geq \frac{\ell}{\asum - 1}\\
	\implies &\floor{\epsilon k/2}  > \frac{\ell}{\asum - 1}\\
	\implies &\sum_{l \in [\ell]}\paren{\alpha_l \floor{\epsilon k/2} - 1} > \floor{\epsilon k/2}\\
	\implies &\sum_{l \in [\ell]}\floor{\alpha_l  \floor{\epsilon k/2} } > \floor{\epsilon k/2}. 
	\end{align*}
	Therefore, if the block has an empty rank, there is at least one group that can be assinged to this empty rank without violating the upper bound constraints. Hence, the statement 1 is true. We also have,
	\begin{align*}
	&\epsilon \geq \frac{2}{k}\paren{1+\frac{\ell}{1 - \bsum }} \\
	\implies &\epsilon k/2 - 1 \geq   \frac{\ell}{1 - \bsum }
	\implies \floor{\epsilon k/2} > \frac{\ell}{1 - \bsum }\\
	\implies &\floor{\epsilon k/2} \paren{1-\bsum} > \ell  \numberthis \label{eq:st2}\\
	\implies &\sum_{l\in[\ell]} \paren{\beta_l \floor{\epsilon k/2}+1} < \floor{\epsilon k/2}
	\implies \sum_{l \in [\ell]} \ceil{\beta_l \floor{\epsilon k/2}} < \floor{\epsilon k/2}.
	\end{align*}
	Hence, if $\forall l \in [\ell]$, there are $\ceil{\beta_l \floor{\epsilon k/2}}$ items available from the group $l$, we can assign $\ceil{\beta_l \floor{\epsilon k/2}}$ ranks in the block to group $l$ since the sum of the minimum number of ranks needed to be assigned to satisfy the lower bound constraints is strictly less than the number of empty ranks in the block. Therefore statement 2 is also true. Finally, for all $l \in [\ell]$,
	\begin{align*}
	&\epsilon \geq \frac{2}{k}\paren{1+\frac{2}{\alpha_l - \beta_l}} \\
	\implies &\epsilon k/2 - 1 \geq \frac{2}{\alpha_l - \beta_l} \\
	\implies &\floor{\epsilon k/2} >  \frac{2}{\alpha_l - \beta_l} \\
	\implies &\beta_l \floor{\epsilon k/2} + 1 < \alpha_l \floor{\epsilon k/2} -1 \numberthis \label{eq:st3}\\
	\implies &\beta_l \floor{\epsilon k/2} + 1 < \floor{\alpha_l \floor{\epsilon k/2} }\\
	\implies &\ceil{\beta_l \floor{\epsilon k/2} } < \floor{\alpha_l \floor{\epsilon k/2} }.
	\end{align*} 
	Hence, the statement 3 is true.
\end{proof}

Throughout the results shown below, let $\amax = \max_l \alpha_l$, $\amin = \min_l \alpha_l$, $l_* = \argmin_l \beta_l$,
\begin{lemma}
	\label{lem:ifair}
	The underranking of the ranking output by \Cref{alg:ranking} is
	\begin{align*}
	\gamma = \frac{1}{\min\set{\amin - \frac{1}{\floor{\epsilon k/2}}, \paren{1-\sum_{l \neq l_*} \beta_l} - \frac{\ell - 1}{\floor{\epsilon k/2}}}}
	.\end{align*}
\end{lemma}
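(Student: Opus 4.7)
The plan is to track underranking through the three phases of \Cref{alg:ranking} and observe that only the move-down phase (Step~\ref{step:move-down}) can push an item to a higher-numbered rank; the move-up phase (Step~\ref{step:move-up}) and the fill-tail phase (Step~\ref{step:fill-tail}) consist exclusively of moves that take an item at some rank $j' > j$ and place it at an empty rank $j$, and items that are not moved retain their position. Hence it suffices to bound the underranking of the intermediate ranking produced at the end of Step~\ref{step:move-down} and then rewrite that bound in the form stated in the lemma.

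First I would analyze Step~\ref{step:move-down}. An item with true rank $r = (i-1)b + j$, where $1 \leq j \leq b$, is sent to new rank $(i-1)B + j$, so the ratio of its new rank to its true rank is $\frac{(i-1)B + j}{(i-1)b + j}$. A one-line rearrangement reduces the inequality $\frac{(i-1)B+j}{(i-1)b+j} \leq \frac{B}{b}$ to the trivial $b \leq B$, which holds by construction. Hence every item has underranking at most $B/b$ at the end of Step~\ref{step:move-down}. Since Steps~\ref{step:move-up} and \ref{step:fill-tail} never increase the rank of any item, this bound is preserved in the output ranking.

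The remaining task is to identify $B/b$ with the expression in the statement. Using $\floor{\amin B} \geq \amin B - 1$ gives $\frac{B}{\floor{\amin B}} \leq \frac{1}{\amin - 1/B}$, and summing $\ceil{\beta_l B} \leq \beta_l B + 1$ over $l \neq l_*$ gives $\frac{B}{B - \sum_{l \neq l_*}\ceil{\beta_l B}} \leq \frac{1}{(1 - \sum_{l \neq l_*}\beta_l) - (\ell-1)/B}$. Because $b = \min\Set{\floor{\amin B},\ B - \sum_{l \neq l_*}\ceil{\beta_l B}}$, the quantity $B/b$ equals the larger of these two ratios, and substituting $B = \floor{\epsilon k/2}$ recovers exactly the $\gamma$ claimed in the lemma.

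The main obstacle is really conceptual rather than computational: one must be careful that neither Step~\ref{step:move-up} nor Step~\ref{step:fill-tail} can ever worsen underranking, even via the condition in Step~\ref{step:condition} that sometimes skips candidate items. This is immediate once one observes that a skipped item simply remains at its intermediate rank, while every move actually executed in these two phases strictly decreases the rank of the moved item. After that observation, the analysis above is elementary.
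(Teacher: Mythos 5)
Your proof is correct and follows essentially the same route as the paper's: bound each item's rank at the end of Step~\ref{step:move-down} by $B/b$ times its true rank (the paper does this by splitting into two cases according to which term attains the minimum in $b$), note that the later steps only move items to lower ranks, and convert $B/b$ into the stated expression via $\floor{\amin B}\ge \amin B-1$ and $\ceil{\beta_l B}\le \beta_l B+1$. The only detail the paper adds that you leave implicit is the verification, from the choice of $\epsilon$, that the denominators $\amin-\frac{1}{\floor{\epsilon k/2}}$ and $\paren{1-\sum_{l\neq l_*}\beta_l}-\frac{\ell-1}{\floor{\epsilon k/2}}$ are positive, which your inequality rewritings require.
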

\begin{proof}
	Fix an item having true rank $j \in [N]$.
	
	\textbf{Case 1:} $\floor{\amin \floor{\epsilon k/2}} \leq \floor{\epsilon k/2} - \sum_{l \neq l_*} \ceil{\beta_l\floor{\epsilon k/2}}$.
	
	Let $\eta = \floor{\amin \floor{\epsilon k/2}} $. 
	At the end of step~\ref{step:move-down}, its rank is
	\begin{align*}
	&\paren{\ceil{\frac{j}{\eta } } -1} \floor{\frac{\epsilon k}{2}} + \paren{j - \paren{\ceil{\frac{j}{\eta}} -1}\eta}  \\
	&\leq \frac{j}{\eta } \paren{\floor{\frac{\epsilon k}{2}} - \eta} + j 
	= j \frac{\floor{\frac{\epsilon k}{2}}}{\eta }\\
	&= j \frac{\floor{\epsilon k/2}}{\floor{\amin \floor{\epsilon k/2}}} \numberthis \label{eq:underrankingcase1}\\
	&< \frac{j}{\frac{\amin \floor{\epsilon k/2}-1}{\floor{\epsilon k/2}}} 
	= \frac{j}{\amin - \frac{1}{\floor{\epsilon k/2}}}
	.\end{align*}
	From \Cref{eq:st3} we have,
	\begin{align*}
	& \alpha_l \floor{\epsilon k/2} -1 > \beta_l \floor{\epsilon k/2} + 1\\
	\implies & \alpha_l - \frac{1}{ \floor{\epsilon k/2}} > \beta_l  + \frac{1}{\floor{\epsilon k/2}} > 0.
	\end{align*}
	Since this holds true even for the group corresponding to $\amin$, the underranking is positive.
	
	\textbf{Case 2:} $\floor{\amin \floor{\epsilon k/2}}> \floor{\epsilon k/2} - \sum_{l \neq l_*} \ceil{\beta_l\floor{\epsilon k/2}}$.
	
	Let $\eta = \floor{\epsilon k/2} - \sum_{l \neq l_*} \ceil{\beta_l\floor{\epsilon k/2}}$. 
	At the end of step~\ref{step:move-down}, similar to case 1, its rank is
	\begin{align*}
	&\paren{\ceil{\frac{j}{\eta } } -1} \floor{\frac{\epsilon k}{2}} + \paren{j - \paren{\ceil{\frac{j}{\eta}} -1}\eta}  \\
	&\leq \frac{j}{\eta } \paren{\floor{\frac{\epsilon k}{2}} - \eta} + j 
	= j \frac{\floor{\frac{\epsilon k}{2}}}{\eta }\\
	&= j \frac{\floor{\frac{\epsilon k}{2}}}{\floor{\frac{\epsilon k}{2}}- \sum_{l \neq l_*}^{} \ceil{\beta_l\floor{\frac{\epsilon k}{2}} } } \numberthis \label{eq:underrankingcase2}\\
	&\leq j \frac{\floor{\frac{\epsilon k}{2}}}{\floor{\frac{\epsilon k}{2}}- \sum_{l \neq l_*}^{} \paren{\beta_l\floor{\frac{\epsilon k}{2}}  + 1 } }\\
	&= \frac{j}{\paren{1- \sum_{l \neq l_*}^{} \beta_l} - \frac{\ell - 1}{\floor{\epsilon k/2}} }
	.\end{align*}
	From \Cref{eq:st2} we have,
	\begin{align*}
	&\floor{\epsilon k/2} \paren{1-\bsum} > \ell \\
	\implies & \paren{1-\bsum} - \frac{\ell - 1}{\floor{\epsilon k/2}} > \frac{1}{\floor{\epsilon k/2}}\\
	\implies & \paren{1-\sum_{l \neq l_*}^{} \beta_l} - \frac{\ell - 1}{\floor{\epsilon k/2}} > \beta_{l_*} + \frac{1}{\floor{\epsilon k/2}} > 0.
	\end{align*}
	Therefore, the underranking in this case is also positive.
\end{proof}

\begin{lemma}
	\label{lem:nonemptyranks}
	At the end of step~\ref{step:move-up},
	none of the top $\floor{\frac{n}{\amax}}  - \floor{\epsilon k/2} $ ranks will be empty.
\end{lemma}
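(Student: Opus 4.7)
The plan is to argue by contradiction. Let $B := \floor{\epsilon k / 2}$, and suppose some rank $j$ with $j \leq \floor{n/\amax} - B$ is empty after Step~\ref{step:move-up}. Take $j$ to be the smallest such empty rank and let $i := \ceil{j/B}$ be the block containing it. Because Step~\ref{step:move-up} processes empty ranks top-down, at the moment $j$ is processed the ranks $1,\ldots, j-1$ are already filled; the goal is to derive a contradiction by showing $j$ must lie beyond rank $\floor{n/\amax} - B$.

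First I would establish the invariant that throughout Step~\ref{step:move-up}, every block $i'$ contains at most $\floor{\alpha_l B}$ items of each group $l \in [\ell]$. This holds immediately after Step~\ref{step:move-down}, since each block then holds at most $b \leq \floor{\amin B} \leq \floor{\alpha_l B}$ items in total, and it is preserved whenever an item of group $l$ is moved into a block $i'$: the condition in Step~\ref{step:condition} either fires via the first disjunct (the lower bound for $l$ in $i'$ is unmet, so the post-move count is at most $\ceil{\beta_l B} < \floor{\alpha_l B}$ by \Cref{lem:blockresults}(3)), or it is explicitly guarded by ``$l$'s upper bound would not be violated.''

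Next I would analyse why rank $j$ remains empty. Negating the condition in Step~\ref{step:condition} for an item of group $l$ at a rank $> j$ gives: ``$l$'s lower bound in block $i$ is met, and either some lower bound is unmet or $l$'s upper bound in block $i$ would be violated.'' This splits into two cases. In Case A, all lower bounds in block $i$ are currently met; then every group $l$ represented at a rank $> j$ must have its upper bound about to be violated in block $i$, but \Cref{lem:blockresults}(1) supplies a group $l^*$ whose upper bound would not be violated, so no item of $l^*$ sits at any rank $> j$. In Case B, some group $l_0$ has an unmet lower bound in block $i$; then any item of $l_0$ at a rank $> j$ would have triggered the first disjunct and been moved, so no item of $l_0$ sits at any rank $> j$. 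Either way, there is a group $l' \in \{l^*, l_0\}$ with no items at any rank $> j$.

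To close, I would run a block-wise count. Since rank $j$ is empty and $l'$ has no items at rank $> j$, all the (at least $n$) items of group $l'$ lie in ranks $1,\ldots, j-1$, which are contained in the first $i$ blocks. The invariant caps the $l'$-count in each such block at $\floor{\alpha_{l'} B} \leq \amax B$, giving $n \leq i \cdot \amax B$ and hence $i \geq n/(\amax B)$. This forces $j \geq (i-1)B + 1 > n/\amax - B \geq \floor{n/\amax} - B$, contradicting $j \leq \floor{n/\amax} - B$. The main obstacle will be pinning down the logical negation of the compound condition in Step~\ref{step:condition} and carefully verifying the block-wise upper-bound invariant throughout Step~\ref{step:move-up}; once those are in place, the finish is a direct counting argument.
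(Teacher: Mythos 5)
Your proof is correct and takes essentially the same route as the paper's: a rank $j$ can only stay empty if either every group present at higher ranks has its upper bound in block $\ceil{j/\floor{\epsilon k/2}}$ saturated (ruled out via \Cref{lem:blockresults}) or some group has no items left at ranks above $j$, and the latter is excluded by counting at most $\floor{\amax\floor{\epsilon k/2}}$ items per group per block against the at least $n$ items each group has. Your contradiction formulation just makes explicit the negation of the condition in Step~\ref{step:condition} and the per-block upper-bound invariant, which the paper records separately as \Cref{lem:blocks_ubounds}.
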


\begin{proof}
	Consider step~\ref{step:move-up} of the algorithm.
	A rank $j$ will be left unassigned if either 
	(i) fairness constraints for each group are satisfied and assigning it to any item will only violate the fairness constraints, or 
	(ii) there is no item ranked higher than $j$ that can be assigned the rank $j$ and still satisfies the fairness constraints.
	
	Let $i$ be the block the rank $j$ belongs to, $i = \ceil{j/\floor{\epsilon k/2}}$.
	From statement 2 in \Cref{lem:blockresults} we know that, if the ranks in block $i$ are assigned such that $\forall l\in[\ell]$, $\ceil{\beta_l \floor{\epsilon k/2}}$ ranks are assigned to group $l$, there could still be empty ranks in the block $i$. Also from statement 3 in \Cref{lem:blockresults} we know that for each group $l$, the number of ranks to be assigned to in order to satisfy the upper bound constraints is always greater than the number of ranks to be assigned to in order to satisfy the lower bound constraints. Therefore we can still add items to the block $i$ until the upper bound constraints are not violated. Now, from statement 1 of \Cref{lem:blockresults} we also know that if there are empty ranks in the block, then there is at least one group which can be assinged to the empty rank and not violate the upper bound constraints. Hence, case (i) can not happen.
	
	We know that $\floor{\amax\floor{\epsilon k/2}} \geq \floor{\alpha_l \floor{\epsilon k/2}}$ by definition and $\floor{\amax\floor{\epsilon k/2}} \geq \ceil{\beta_l \floor{\epsilon k/2}}$ by \Cref{lem:blockresults} for any $l$.
	Since there are at least $n$ items from each group, we have that as long as $i$ satisfies $i \floor{\amax\floor{\epsilon k/2}} \leq n $, there will be at least one item available from each group to move into an empty rank in the top $i$ blocks without violating the fairness constraints for any of the top $i$ blocks. Thus, the top $i$ blocks will be filled at the end of step~\ref{step:move-up}. Therefore, the number of ranks filled is at least 
	
	\begin{align*}
	i \floor{\epsilon k/2} &= \floor{\frac{n}{\floor{\amax  \floor{\epsilon k/2} } } } \floor{\epsilon k/2} \\
	&> \paren{\frac{n}{\floor{\amax  \floor{\epsilon k/2 } } } - 1} \floor{\epsilon k/2} \\
	& \geq \paren{\frac{n}{\amax  \floor{\epsilon k/2} }-1} \floor{\epsilon k/2}  \\
	&=  \frac{n}{\amax } - \floor{\epsilon k/2} \geq \floor{\frac{n}{\amax}} -  \floor{\epsilon k/2} .
	\end{align*}	
	
	Therefore, case (ii) will not happen for the top $\floor{\frac{n}{\amax}}  - \floor{\epsilon k/2}$ ranks.
	Thus, at the end of step~\ref{step:move-up},
	none of the top $\floor{\frac{n}{\amax}}  - \floor{\epsilon k/2} $ ranks will be empty.
\end{proof}

\begin{lemma}
	\label{lem:blocks_ubounds}
	At the end of step~\ref{step:move-up}, in each of the top $M/\floor{\epsilon k/2}$ blocks of size $\floor{\epsilon k/2}$, for each group $l \in [\ell]$ we have that the number of items from that group is at most $\floor{\alpha_l \floor{\epsilon k/2}}$, where $M$ is the length of the intermediate ranking as in \Cref{alg:ranking}.
\end{lemma}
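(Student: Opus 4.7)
My plan is to set up an invariant that says ``at every point during \Cref{alg:ranking}, every block of size $B := \floor{\epsilon k/2}$ contains at most $\floor{\alpha_l B}$ items of group $l$, for each $l \in [\ell]$,'' and then show this invariant is established after step~\ref{step:move-down} and preserved by every individual move performed in step~\ref{step:move-up}. Concluding the lemma is then immediate.

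\emph{Base case (after step~\ref{step:move-down}).} By construction, step~\ref{step:move-down} moves items from true ranks $(i-1)b+1,\dots,ib$ into the first $b$ positions of the $i$th block of size $B$, leaving the last $B-b$ positions of that block empty. Thus every block contains at most $b$ items in total, and hence at most $b$ items of any fixed group $l$. Since $b \leq \floor{\amin B} \leq \floor{\alpha_l B}$ for every $l$, the invariant holds.

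\emph{Inductive step (one iteration of step~\ref{step:move-up}).} Step~\ref{step:move-up} only performs moves of the form ``move the item at rank $j'$ to an empty rank $j < j'$''. Let $i=\ceil{j/B}$ and $i'=\ceil{j'/B}$, and let $l$ be the group of the moved item. If $i'=i$, block $i$'s multiset of group memberships is unchanged. If $i'>i$, then block $i'$ loses one item (so its counts can only decrease, preserving its upper bounds), and block $i$ gains one item of group $l$. For the latter, the algorithm's guard in Step~\ref{step:condition} gives two cases. If ``lower bound for $l$ in block $i$ is not satisfied'', then the pre-move count of group $l$ in block $i$ is at most $\ceil{\beta_l B}-1$, so the post-move count is at most $\ceil{\beta_l B}$, which is strictly less than $\floor{\alpha_l B}$ by \Cref{lem:blockresults}(3); hence the invariant is preserved. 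Otherwise, the guard explicitly requires that adding the item does not violate the upper bound for group $l$ in block $i$, which again preserves the invariant. All other blocks and all other groups in block $i$ are untouched by this move.

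Taking the conjunction of these two pieces, the invariant holds at the end of step~\ref{step:move-up}, which is exactly the statement of the lemma. The main thing to be careful about is the first branch of the guard, where upper-bound feasibility is not checked explicitly; the whole argument there rests on the separation $\ceil{\beta_l B}<\floor{\alpha_l B}$ supplied by \Cref{lem:blockresults}(3), so that having room to meet the lower bound automatically leaves room under the upper bound. The rest is bookkeeping on which block gains or loses items during a move.
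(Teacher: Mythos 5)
Your proof is correct and follows essentially the same route as the paper's: bound the per-block counts after step~\ref{step:move-down} via $b \leq \floor{\amin \floor{\epsilon k/2}}$, then argue that the guard in Step~\ref{step:condition} keeps every block's group-$l$ count at most $\floor{\alpha_l \floor{\epsilon k/2}}$ during step~\ref{step:move-up}. The only difference is that you make explicit, via the invariant and the use of \Cref{lem:blockresults} (statement 3) to handle the lower-bound branch of the guard, a detail the paper's proof leaves implicit in the sentence ``Step~\ref{step:condition} ensures \dots''.
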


\begin{proof}
	For any block $i$, we observe that at the end of step~\ref{step:move-down}, block $i$ of size $\floor{\epsilon k/2}$ 
	has at most \\
	$\min\{\floor{\amin \floor{\epsilon k/2}}, \floor{\epsilon k/2} - \sum_{l \neq l_*} \ceil{\beta_l\floor{\epsilon k/2}}\}$ non-empty ranks and therefore 
	has at most $\floor{\amin \floor{\epsilon k/2}}$ items from any particular group.
	Step~\ref{step:condition} ensures that when the algorithm terminates, each block 
	has at most $\floor{\alpha_l \floor{\epsilon k/2}}$ items from group $l$ for all the groups.
	Since the length of the ranking is $M$, and each block is of size $\floor{\epsilon k/2}$, the statement follows. 
\end{proof}

\begin{lemma}
	\label{lem:blocks_lbounds}
	At the end of step~\ref{step:move-up}, each of the top $\floor{\frac{n}{\floor{\amax  \floor{\epsilon k/2} } } } $ blocks have at least $\ceil{\beta_l \floor{\epsilon k/2}}$ items from group $l$, forall $l \in [\ell]$.
\end{lemma}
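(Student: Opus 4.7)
The plan is to argue by contradiction. Let $B = \floor{\epsilon k/2}$, fix a block $i \in \{1, \ldots, \floor{n/\floor{\amax B}}\}$ and a group $l$, and suppose that after step \ref{step:move-up} block $i$ contains strictly fewer than $\ceil{\beta_l B}$ items of group $l$. By \Cref{lem:nonemptyranks} every rank of block $i$ ends up non-empty, so during block $i$'s own processing the algorithm performs exactly $B - s$ successful moves into block $i$ from later blocks, where $s \le b$ is the number of items sitting in block $i$ at the instant the algorithm begins processing it (the $b$ items placed there by step \ref{step:move-down} minus any that were pulled into earlier blocks during step \ref{step:move-up}). Write $c_{l'}$ for the count of group $l'$ in block $i$ at this instant, so $\sum_{l'} c_{l'} = s$, and $\Delta_{l'}$ for the number of group-$l'$ items brought in from later blocks during block $i$'s own processing.

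During block $i$'s processing, block $i$'s counts are monotone non-decreasing (only items from strictly higher ranks are moved in), so the hypothesis forces group $l$'s lower bound in block $i$ to remain unsatisfied for the entire processing. Consequently, at each move into block $i$ the clause ``lower bounds of all the groups are satisfied'' in step \ref{step:condition} is false and the second disjunct never fires; every move must be justified by the first disjunct and hence must be of an item whose own group currently has an unmet lower bound in block $i$. This yields $\Delta_{l'} \le R_{l'} := \max\{0, \ceil{\beta_{l'} B} - c_{l'}\}$ for every $l'$, since once $c_{l'}$ reaches $\ceil{\beta_{l'} B}$ neither disjunct can hold for a further item of that group. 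For the distinguished group $l$ the hypothesis gives the strict bound $\Delta_l \le R_l - 1$, and summing produces $B - s = \sum_{l'} \Delta_{l'} \le \sum_{l'} R_{l'} - 1$.

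The main obstacle is to complement this with the matching upper bound $\sum_{l'} R_{l'} \le B - s$, which would yield the desired contradiction $B - s \le B - s - 1$. I plan to establish it by noting that the function $(c_{l'})_{l'} \mapsto \sum_{l'} \max\{0, \ceil{\beta_{l'} B} - c_{l'}\}$ is convex, so its maximum over the simplex $\{c : c_{l'} \ge 0,\ \sum_{l'} c_{l'} = s\}$ is attained at a vertex $c = s \cdot e_{l_0}$ for some index $l_0$. A two-case split then finishes: if $s \ge \ceil{\beta_{l_0} B}$ the vertex value equals $\sum_{l' \ne l_0} \ceil{\beta_{l'} B}$, which is bounded by $\sum_{l' \ne l_*} \ceil{\beta_{l'} B}$ because $\beta_{l_*} = \min_{l'} \beta_{l'}$, and hence by $B - b \le B - s$ by the definition of $b$ in \Cref{alg:ranking}; if instead $s < \ceil{\beta_{l_0} B}$ the vertex value equals $\sum_{l'} \ceil{\beta_{l'} B} - s$, which is strictly less than $B - s$ by the strict inequality $\sum_{l'} \ceil{\beta_{l'} B} < B$ proved inside \Cref{lem:blockresults}. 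The subtle point is that this matching bound is tight precisely because $b$ was defined as the minimum of the two quantities in step 2 of \Cref{alg:ranking}, so both arms of that minimum are genuinely needed in the case analysis.
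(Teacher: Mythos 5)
Your proof is correct, but it goes by a genuinely different route than the paper's. The paper argues forward and directly: it notes that after step~\ref{step:move-down} each block has exactly $\min\bigl\{\floor{\amin\floor{\epsilon k/2}},\ \floor{\epsilon k/2}-\sum_{l\neq l_*}\ceil{\beta_l\floor{\epsilon k/2}}\bigr\}$ occupied ranks, checks that the number of empty ranks in the block is at least the total lower-bound deficiency $\sum_l\bigl(\ceil{\beta_l\floor{\epsilon k/2}}-n_l\bigr)$, and then appeals to the priority that step~\ref{step:condition} gives to unmet lower bounds together with \Cref{lem:nonemptyranks} (the block ends up full) to conclude the bounds are met. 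You instead argue by contradiction with an explicit move-counting invariant: since the distinguished group's lower bound stays unmet throughout the block's processing, the second disjunct of step~\ref{step:condition} never fires, so every item pulled in from a later block is charged to some residual deficit, giving $B-s\le\sum_{l'}R_{l'}-1$; you then match this with $\sum_{l'}R_{l'}\le B-s$ via a convexity/vertex argument over all possible compositions $c$ of the $s\le b$ items present when the block's processing begins, using $b\le B-\sum_{l\neq l_*}\ceil{\beta_l B}$ and the inequality $\sum_l\ceil{\beta_l B}<B$ from the proof of \Cref{lem:blockresults}. Both arguments ultimately rest on the same two numerical facts, but yours is more careful about the dynamics — in particular it correctly accounts for items being pulled out of block $i$ by earlier blocks before block $i$ is processed (the paper's counting is done right after step~\ref{step:move-down} and treats this only implicitly), and it handles an arbitrary group composition of the surviving items via the worst-case vertex bound rather than the paper's exact count. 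The cost is extra machinery (the contradiction setup and the convexity step, which could also be replaced by a short case analysis); the benefit is a tighter, fully explicit justification of the informal sentence ``step~\ref{step:condition} ensures that as long as the items are available, we always satisfy these lower bounds.'' Two small wording points: the number of \emph{moves} into empty ranks of block $i$ can exceed $B-s$ because of within-block shifts — what you actually use, correctly, is that the net inflow from later blocks is $B-s$; and your closing remark that ``both arms of the minimum'' defining $b$ are needed is not quite right — your case analysis only uses the arm $B-\sum_{l\neq l_*}\ceil{\beta_l B}$ together with the strict inequality from \Cref{lem:blockresults}; the $\floor{\amin B}$ arm is what drives the upper-bound lemma, not this one. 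Neither point affects correctness.
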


\begin{proof}
	
	For any block $i$, we observe that at the end of step~\ref{step:move-down}, block $i$ of size $\floor{\epsilon k/2}$ 
	has exactly\\
	$\min\{\floor{\amin \floor{\epsilon k/2}}, \floor{\epsilon k/2} - \sum_{l \neq l_*} \ceil{\beta_l\floor{\epsilon k/2}}\}$ ranks non empty.
	Let $n_l$ be the number of items from group $l$ assigned ranks in block $i$ after step~\ref{step:move-down}. 
	We know that $\floor{\amin \floor{\epsilon k/2}} \geq \ceil{\beta_{l_*} \floor{\epsilon k/2}}$ and from \Cref{lem:blockresults} we also have that $\floor{\epsilon k/2} - \sum_{l \neq l_*} \ceil{\beta_l\floor{\epsilon k/2}} \geq \ceil{\beta_{l_*}\floor{\epsilon k/2}}$.
	Therefore, 
	\begin{align*}
	\sum_l n_l &= \min\{\floor{\amin \floor{\epsilon k/2}}, \floor{\epsilon k/2} - \sum_{l \neq l_*} \ceil{\beta_l\floor{\epsilon k/2}}\} \\
	&\geq  \ceil{\beta_{l_*}\floor{\epsilon k/2}}
	.\end{align*}
	
	To satisfy all the lower bounds, there have to be at least $\sum_l \paren{\ceil{\beta_l \floor{\epsilon k/2}} - n_l}$ empty ranks in block $i$.
	
	The number of empty ranks in block $i$ after step~\ref{step:move-down} are,
	\begin{align*}
	\floor{\epsilon k/2} - &\min\{\floor{\amin \floor{\epsilon k/2} }, \floor{\epsilon k/2} - \sum_{l \neq l_*} \ceil{\beta_l\floor{\epsilon k/2}} \} \\
	&\geq \floor{\epsilon k/2} - \paren{ \floor{\epsilon k/2} - \sum_{l \neq l_*} \ceil{\beta_l\floor{\epsilon k/2}}}\\
	&= \sum_{l \neq l_*} \ceil{\beta_l\floor{\epsilon k/2}}\\
	&= \sum_{l} \ceil{\beta_l\floor{\epsilon k/2}} - \ceil{\beta_{l_*}\floor{\epsilon k/2}} \\
	&\geq  \sum_{l} \ceil{\beta_l\floor{\epsilon k/2}}  - \sum_l n_l\\
	&= \sum_{l} \paren{ \ceil{\beta_l\floor{\epsilon k/2}} - n_l}
	.\end{align*}

	Therefore, after Step~\ref{step:move-down}, there are enough empty ranks left in the block to satisfy the lower bounds of all the groups.
	
	Step~\ref{step:condition} ensures that as long as the items are available, we always satisfy these lower bounds. Since from \Cref{lem:nonemptyranks} the top $\floor{\frac{n}{\floor{\amax  \floor{\epsilon k/2} } } } $ blocks have no empty ranks, there are at least $\ceil{\beta_l \floor{\epsilon k/2}}$ items from group $l$ in each of these blocks.
\end{proof}

\begin{lemma}
	\label{lem:gfair_bounds}
	The ranking output by \Cref{alg:ranking} satisfies $\paren{(1+\epsilon)\boldsymbol{\alpha}, (1-\epsilon)\boldsymbol{\beta}, k}$ group fairness in every $k$ consecutive ranks in the top $\floor{\frac{n}{\amax}}  - \floor{\epsilon k/2} $ ranks.
\end{lemma}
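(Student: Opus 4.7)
The plan is to reduce the guarantee for an arbitrary window of $k$ consecutive ranks to the per-block guarantees already established in Lemma~\ref{lem:blocks_ubounds} and Lemma~\ref{lem:blocks_lbounds}. Throughout, set $B := \lfloor \epsilon k/2 \rfloor$ so that $2B \le \epsilon k$. First, I would fix any window $W$ of $k$ consecutive ranks lying entirely inside the top $\lfloor n/\amax \rfloor - B$ ranks, and decompose it into some number $c$ of completely-contained blocks of size $B$ together with at most two partial blocks at its two endpoints. Since each partial block contributes at most $B-1$ ranks to $W$, the identity $k = cB + (\text{ranks in partial blocks})$ yields $k/B - 2 < c \le k/B$.

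Second, I would check the range condition needed to invoke the lower bound lemma. Since the last rank of $W$ is at most $\lfloor n/\amax \rfloor - B$, the highest-indexed block touched by $W$ has index at most $\lceil (\lfloor n/\amax \rfloor - B)/B \rceil \le \lfloor \lfloor n/\amax\rfloor / B\rfloor \le \lfloor n/(\amax B)\rfloor$, which is exactly the range over which Lemma~\ref{lem:blocks_lbounds} gives the per-block lower bound $\lceil \beta_l B\rceil$. Lemma~\ref{lem:blocks_ubounds} needs no such restriction since its upper bound $\lfloor \alpha_l B\rfloor$ holds for every block. I would also note that step~\ref{step:fill-tail} leaves the top $\lfloor n/\amax \rfloor - B$ ranks unchanged, because by Lemma~\ref{lem:nonemptyranks} they are already full when step~\ref{step:move-up} ends, so the per-block bounds carry through to the output.

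Third, the two estimates follow by simple counting. For the upper bound, every block intersected by $W$ contributes at most $\lfloor \alpha_l B \rfloor \le \alpha_l B$ items from group $l$, so the count in $W$ is at most $(c+2)\alpha_l B \le (k/B + 2)\alpha_l B = \alpha_l k + 2 \alpha_l B \le (1+\epsilon)\alpha_l k$. For the lower bound I would ignore the partial blocks and use Lemma~\ref{lem:blocks_lbounds} on the $c$ complete blocks: the count in $W$ is at least $c \beta_l B > (k/B - 2)\beta_l B = \beta_l k - 2\beta_l B \ge (1-\epsilon)\beta_l k$ (trivially true when $\epsilon \ge 1$).

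The main obstacle I anticipate is purely bookkeeping: making sure the boundary correction terms from the at most two partial blocks match cleanly with the slack $(1\pm \epsilon)$, and verifying that the allowed window $\lfloor n/\amax\rfloor - B$ is indeed tight enough that every block $W$ touches still falls in the range of Lemma~\ref{lem:blocks_lbounds}. Once the $-B$ cushion is in place, both bounds reduce to the elementary inequality $2B \le \epsilon k$ together with the per-block lemmas.
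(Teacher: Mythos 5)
Your proposal is correct and follows essentially the same route as the paper's proof: decompose the $k$-rank window into fully contained blocks of size $\floor{\epsilon k/2}$ plus at most two partial boundary blocks, invoke \Cref{lem:nonemptyranks} to argue the top $\floor{n/\amax}-\floor{\epsilon k/2}$ ranks are unchanged after step~\ref{step:move-up}, and apply \Cref{lem:blocks_ubounds,lem:blocks_lbounds} with the elementary bound $2\floor{\epsilon k/2}\le \epsilon k$. Your explicit check that every block touched by the window lies within the range $\floor{n/\floor{\amax\floor{\epsilon k/2}}}$ of \Cref{lem:blocks_lbounds} is a detail the paper leaves implicit, but it does not change the argument.
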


\begin{proof}
	\Cref{lem:nonemptyranks} shows that none of the top $\floor{\frac{n}{\amax}}  - \floor{\epsilon k/2} $  ranks will
	be empty at the end of step~\ref{step:move-up}; therefore, these ranks will remain unchanged in the steps after
	step~\ref{step:fill-tail}.
	
	Consider any $k$ consecutive ranks $j, \ldots, j+k-1$. Let $i_1 \defeq \ceil{j/\floor{\epsilon k/2}}$ and $i_2 \defeq \ceil{(j+k-1)/\floor{\epsilon k/2}}$. By construction, the blocks $i_1 + 1, \ldots, i_2 - 1$ are fully contained in the ranks $\set{j,j+1, \ldots, j+k-1}$. For any $l \in [\ell]$, the number of items from group $l$ in ranks $j$ to $j+k-1$ is at most the number of items from group $l$ in blocks $i_1$ to $i_2$. Using \Cref{lem:blocks_ubounds} we get that this is at most
	\begin{gather*}
	\floor{\alpha_l k }+ 2 \floor{\alpha_l\floor{\epsilon k/2}} \leq \alpha_l(1+ \epsilon) k. 
	\end{gather*}
	We note that this bound also holds for cases when $i_2 = i_1 + 1$ or $i_2 = i_1$.
	
	Let $m$ be the number of blocks fully contained in these $k$ ranks. Then $k = m\floor{\epsilon k/2} + (x+y)$ where $0 \leq x,y < \floor{\epsilon k/2}$. For any $l \in [\ell]$, the number of items from group $l$ in $k$ consecutive ranks is at least the number of items from group $l$ in $m$ blocks. Using \Cref{lem:blocks_lbounds} we get that this is at least
	\begin{gather*}
	m\ceil{\beta_l \floor{\epsilon k/2}}
	= \paren{\frac{k-(x+y)}{\floor{\epsilon k/2}} }\ceil{\beta_l \floor{\epsilon k/2}}\\
	\geq \paren{\frac{k-(x+y)}{\floor{\epsilon k/2}} }\beta_l \floor{\epsilon k/2}
	= \beta_l k - (x+y)\beta_l \\
	> \beta_l k - 2\beta_l \floor{\epsilon k/2} 
	> \beta_l (1-\epsilon)k.
	\end{gather*}
\end{proof}

\begin{proof}[Proof of \Cref{thm:igfair}]
	Follows from the choice of $\epsilon$ and from \Cref{lem:ifair}, \Cref{lem:nonemptyranks}, \Cref{lem:gfair_bounds}.
\end{proof}

\begin{proof}[Proof of \Cref{thm:igblockfair}]
	We use \Cref{alg:ranking} with $\epsilon := 2$.
	Now, the $i$th ``block'' is of size $\floor{\frac{\epsilon k}{2}} = k$. 
	
	Fix an item $j \in [N]$ in the true ranking.
	If $\amin k \leq k - \sum_{l \neq l_*} \beta_l k$, from \Cref{eq:underrankingcase1} we have that its final rank will be at most 
	\[ \frac{j\floor{\epsilon k/2}}{\floor{\amin\floor{\epsilon k/2}}} = \frac{j}{\amin} . \]
	Here, the equality follows from our choice of $\epsilon = 2$.
	
	Otherwise, from \Cref{eq:underrankingcase2} we have that its final rank will be at most
	\begin{align*}
	&j \frac{\floor{\frac{\epsilon k}{2}}}{\floor{\frac{\epsilon k}{2}}- \sum_{l \neq l_*}^{} \ceil{\beta_l\floor{\frac{\epsilon k}{2}} } } \\
	&= \frac{jk}{k - \sum_{l \neq l_*} \ceil{\beta_l k}} & (\text{substituting } \epsilon = 2)\\
	&= \frac{j}{1 - \sum_{l \neq l_*} \frac{\ceil{\beta_l k}}{k}} = \frac{j}{1-\sum_{l \neq l_*}\beta_l}.
	\end{align*}
	
	Hence, the ranking output by \Cref{alg:ranking} with $\epsilon=2$ satisfies $\frac{1}{\min\set{\amin,1 - \sum_{l \neq l_*} \beta_l} }$ underranking.

	\Cref{lem:blocks_ubounds} shows that at the end of step~\ref{step:fill-tail}, each block has at 
	most $\floor{\alpha_l k}$ items from group $l$.
	Then we have, 
	\[
	\sum_{l \in [\ell]} \floor{\alpha_l k} = k \paren{\asum} \geq k.
	\]
	Therefore, if a block contains  $\floor{\alpha _l k}$ items from group $l \in [\ell]$, it can not have any empty ranks. 
	Consequently, as long as $i\floor{\amax k}$ items fom each group are available, blocks 1 to $i$ will not contain empty ranks and also have required number of items to satisfy the lower bounds since $\alpha_l k \geq \beta_l k$.
	Since there are at least $n$ items from each group, no rank in the top $i$ blocks will be empty, where $i$ satisfies $i \floor{\amax k} \leq n$.
	Hence, for $i\in \Z^+$ such that $i \leq \frac{n}{\floor{\amax k}} = \frac{n}{\amax k}$, blocks $1$ to $i$ each of size $k$ contain at most $\amax k$ items from each group.
	That is, the top $\floor{\frac{n}{\amax k}}$ blocks each of size $k$ satisfy $\paren{\boldsymbol{\alpha}, \boldsymbol{\beta}, k}$  group fairness.
\end{proof}

\section{Experimental Validation}
\label{sec:experiments}

In this section, we give empirical observations about three broad questions 
\begin{enumerate}
	\item Is there a trade-off between underranking and group fairness in the real-world datasets? 
	\item How effective is underranking in choosing a group-fair ranking?
	\item Does ALG achieve best trade-off between group fairness and underranking?
\end{enumerate}

\subsection{Datasets} 
We experiment on two real-world datasets.
(1) \emph{German Credit Risk} dataset consists of credit risk scoring of 1000 adult German residents \cite{german_credit} along with their demographic information such as personal status, gender, age, etc. as well as financial status such as credit history, property, housing, job etc. 
Schufa scores of these individuals is used to get a global ranking on the dataset similar to \cite{fa_ir_a_fair_top_k_ranking_algorithm}. 
\cite{Castillo2019survey} observed that Schufa scoring is biased against young adults.
Hence, we divide the dataset into protected and non-protected groups based on age.
We consider two such cases (i) \textit{age} $< 25$ as protected group, and (ii) \textit{age} $< 35$ as protected group similar to \cite{fa_ir_a_fair_top_k_ranking_algorithm}.
(2) \emph{COMPAS\footnote{Correctional Offender Management Profiling for Alternative Sanctions} recidivism} dataset consists of violent recidivism assessment of nearly 7000 criminal defendants based on a questionnaire.~\cite{compas_propublica} have analysed this tool and pointed out the biases in the recidivism score against African Americans and females.
We consider ranking based on the recidivism score (individual with the least negative recidivism score is ranked at the top) with (i) \textit{gender} (=female)\footnote{Non-binary genders were not annotated in any of the datasets used in this paper.} and (ii) \textit{race} (=African American) as protected groups similar to \cite{fa_ir_a_fair_top_k_ranking_algorithm}. 
We use the processed subsets of German credit risk and COMPAS recidivism datasets\footnote{https://github.com/DataResponsibly/FairRank/tree/master/datasets}.
The implementation of the algorithm proposed in this paper is also made public\footnote{\href{https://github.com/sruthigorantla/Underranking_and_group_fairness}{Implementation of ALG}}.

\subsection{Baselines}
The baselines considered in this paper are described below,
\begin{enumerate}
	\item \textbf{\cite{ranking_with_fairness_constraints}'s DP algorithm:} In \cite{ranking_with_fairness_constraints}, $W_{ij}$ represents the utility of the item $i$ placed at rank $j$.
	Since we only have the scores (or relavance) of the item when placed at the top rank, we construct $W_{ij}$ using positional discounting as described in the appendix of \cite{ranking_with_fairness_constraints}.
	We first sort the items in the decreasing order of their scores, which gives the true ranking. 
	Wlog, let this ordering also represent the indices of the items, i.e., the item with highest score is indexed as item $1$. 
	Let $y_i$ be the score of item $i$. 
	Then,
	\begin{align*}
	W_{ij}  = \frac{y_i}{\log_2(j+1)}.
	\end{align*}
	Then $W$ satisfies the Monge and monotonicity properties required by the DP algorithm in \cite{ranking_with_fairness_constraints}.
	Let there be $\ell$ groups the items can belong to. 
	Let $P_l$ contain the indices of the items that belong to group $l$, for each $l \in [\ell]$.
	Since we have $N$ items, let $x \in \set{0,1}^{N \times N}$ be a ranking (or assignment) whose $j$-th column contains a one in the $i$-th position if item $i$ is assigned to rank $j$.
	Note that each rank can be assigned to exactly one item and each item is assigned exactly one rank.
	Then, the fairness constraints at every prefix $k' \in [k]$ of the top $k$ ranking are in the form of the following cardinality constraints,
	\begin{align*}
	L_{l, k'} \le \sum_{1 \le k' \le k}\sum_{i\in P_l} x_{ik'} \le U_{l, k'}.
	\end{align*}
	The set of rankings that satisfy the above fairness constraints is represented with $\mathcal{B}$.
	Then the fair ranking problem posed as the following integer program,
	\begin{align*}
	\max_{x \in \mathcal{B}} \sum_{j \in [N]}\sum_{i \in [N]} x_{ij}W_{ij}.
	\end{align*}
	The DP algorithm solves the above integer program exactly.
	In the expeirments with only one protected group represented by $l = 1$ and one non-protected group represented by $l = 2$, we use the lower bounds on the representation of the protected group, $L_{1, k'} = \ceil{pk'}, \forall k' \in [k]$ such that every prefix of the top $k$ ranks has minimum $p$ proportion of the items from the protected group. All other constraints are removed, i.e., $U_{1, k'} = k', U_{2, k'} = k', L_{2, k'} = 0$. In case of experiments with upper bound on the protected group (\Cref{fig:german_25_rev} to \Cref{fig:compas_race_rev_block}), we use $U_{1, k'} = \floor{pk'}, U_{2, k'} = k', L_{1, k'} = 0, L_{2, k'} = 0$, since we want a maximum proportion of $p$ of the protected group in each prefix of the top $k$ ranks. In our experiments, we show the trade-offs between representation and underranking by varying the parameter $p$. We use $p = p_l^* + \delta$ where $p_l^*$ is the proportion of group $l$ in the dataset and $\delta \in \R$.
	\item \textbf{\cite{fa_ir_a_fair_top_k_ranking_algorithm}}: This greedy algorithm solves top-$k$ ranking problem such that the proportion of the protected group stays significantly above the given minimum $p$, in every prefix $k' \in [k]$. Since, in our experiments with just one protected and one non-protected group, we want all the algorithms to achieve a minimum representation of $p^* + \delta$ of the protected group with true representation $p^*$ and small number $\delta \in \R$, we run FA*IR by choosing the parameter $p = p^* + \delta$. Note that FA*IR can not handle the upper bound constraints on the representation of the groups. Hence in \Cref{fig:german_25_rev} to \Cref{fig:compas_race_rev_block} we do not consider comparison with FA*IR.
\end{enumerate}
In ALG and both the baselines, we choose $k = 100$.

\begin{figure*}[t]
	\centering
	\begin{subfigure}[b]{\linewidth}
		\centering
		\includegraphics[scale=0.2]{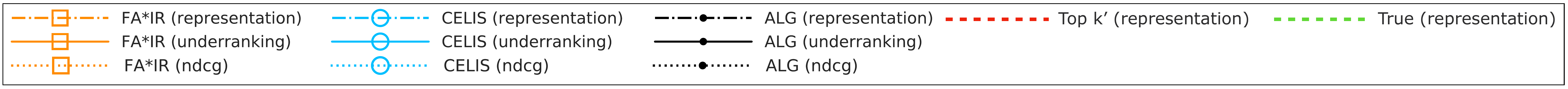} 
	\end{subfigure}
	
	\begin{subfigure}[b]{0.33\linewidth}
		\centering
		\includegraphics[scale=0.149]{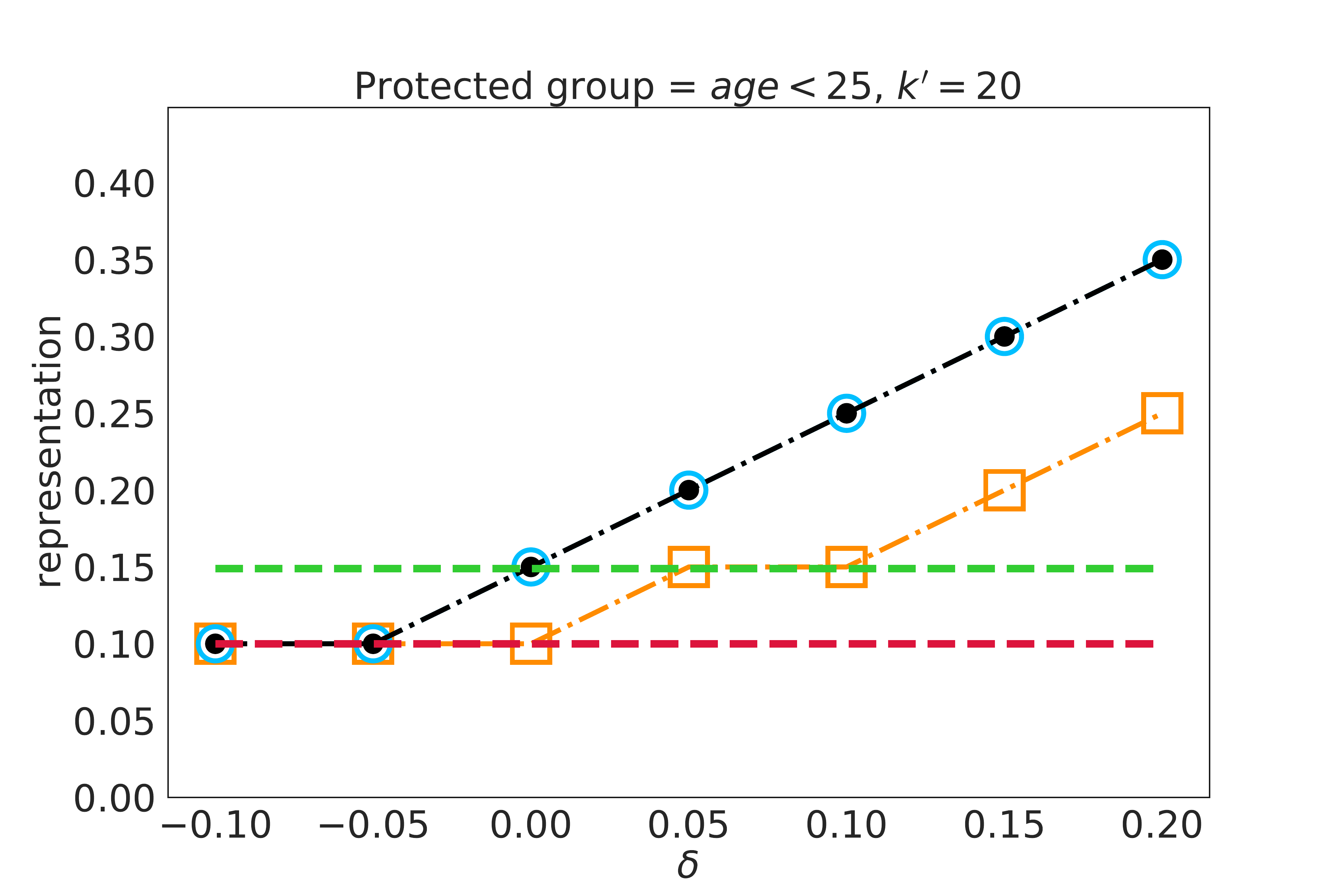} 
		\caption{Representation at top $20$ ranks}
		\label{fig:german_25_a}
	\end{subfigure}
	\begin{subfigure}[b]{0.33\linewidth}
		\centering
		\includegraphics[scale=0.149]{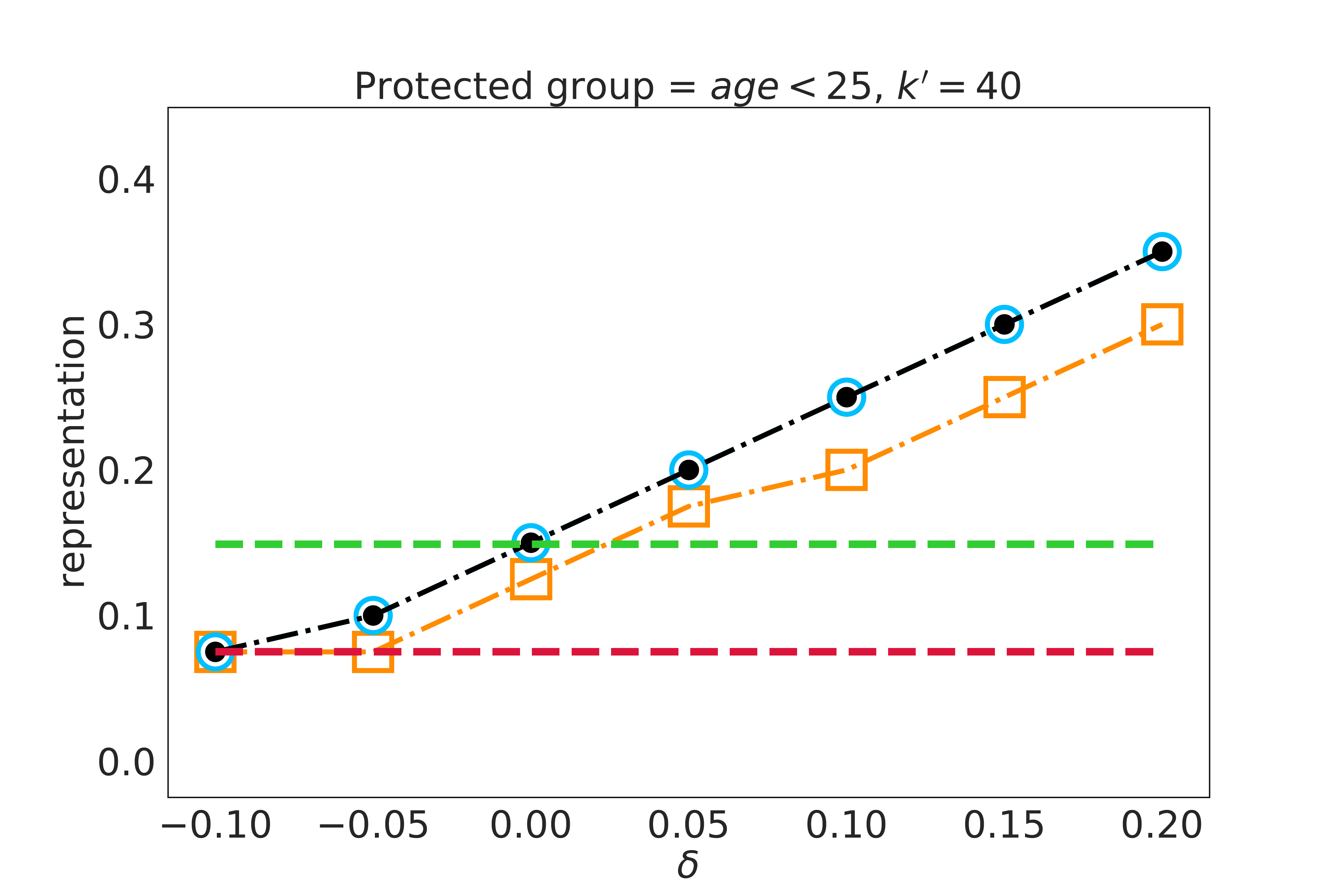} 
		\caption{Representation at top $40$ ranks}
		\label{fig:german_25_b}
	\end{subfigure}
	\begin{subfigure}[b]{0.33\linewidth}
		\centering
		\includegraphics[scale=0.149]{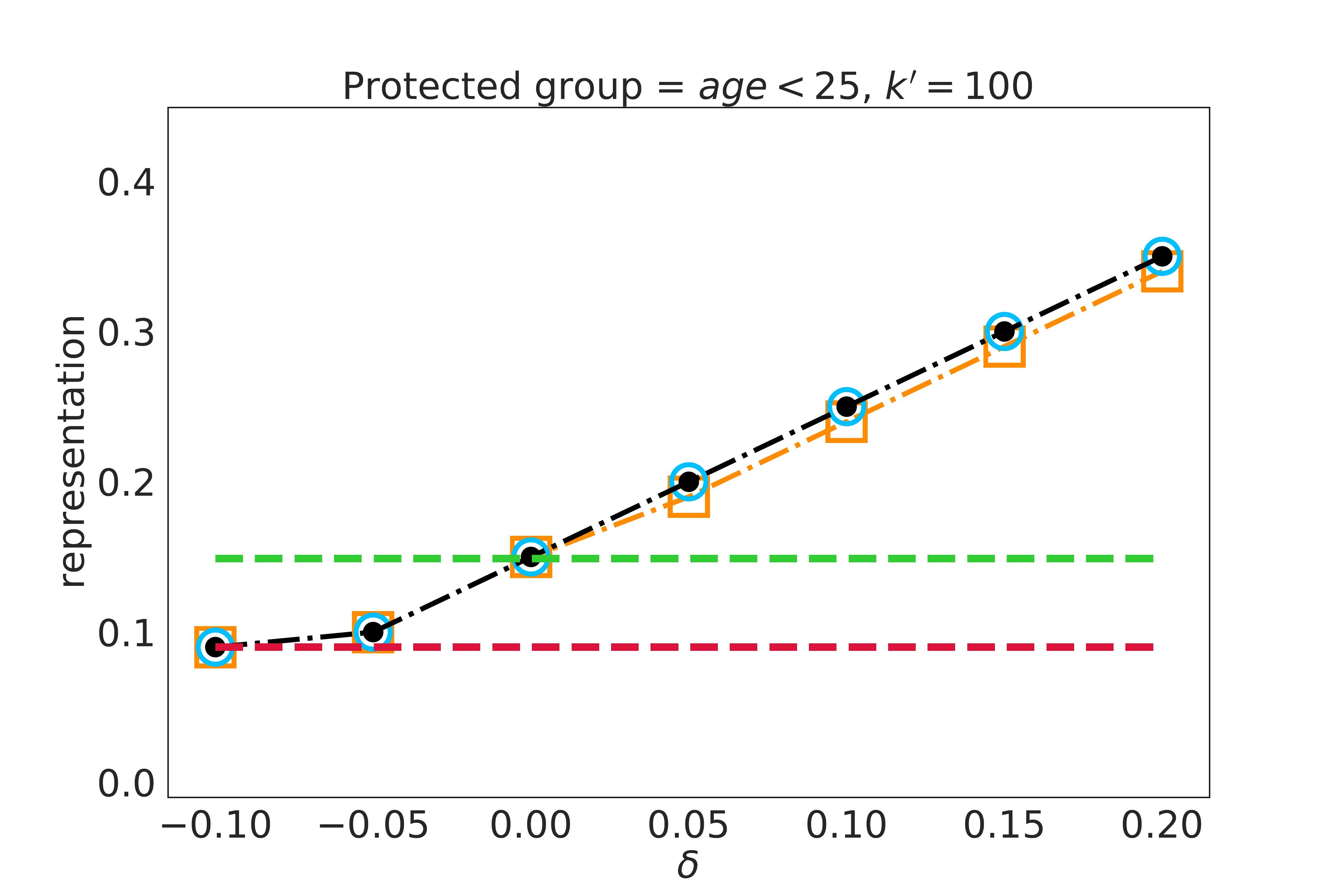} 
		\caption{Representation at top $100$ ranks}
		\label{fig:german_25_c}
	\end{subfigure}
	
	\begin{subfigure}[b]{0.33\linewidth}
		\centering
		\includegraphics[scale=0.149]{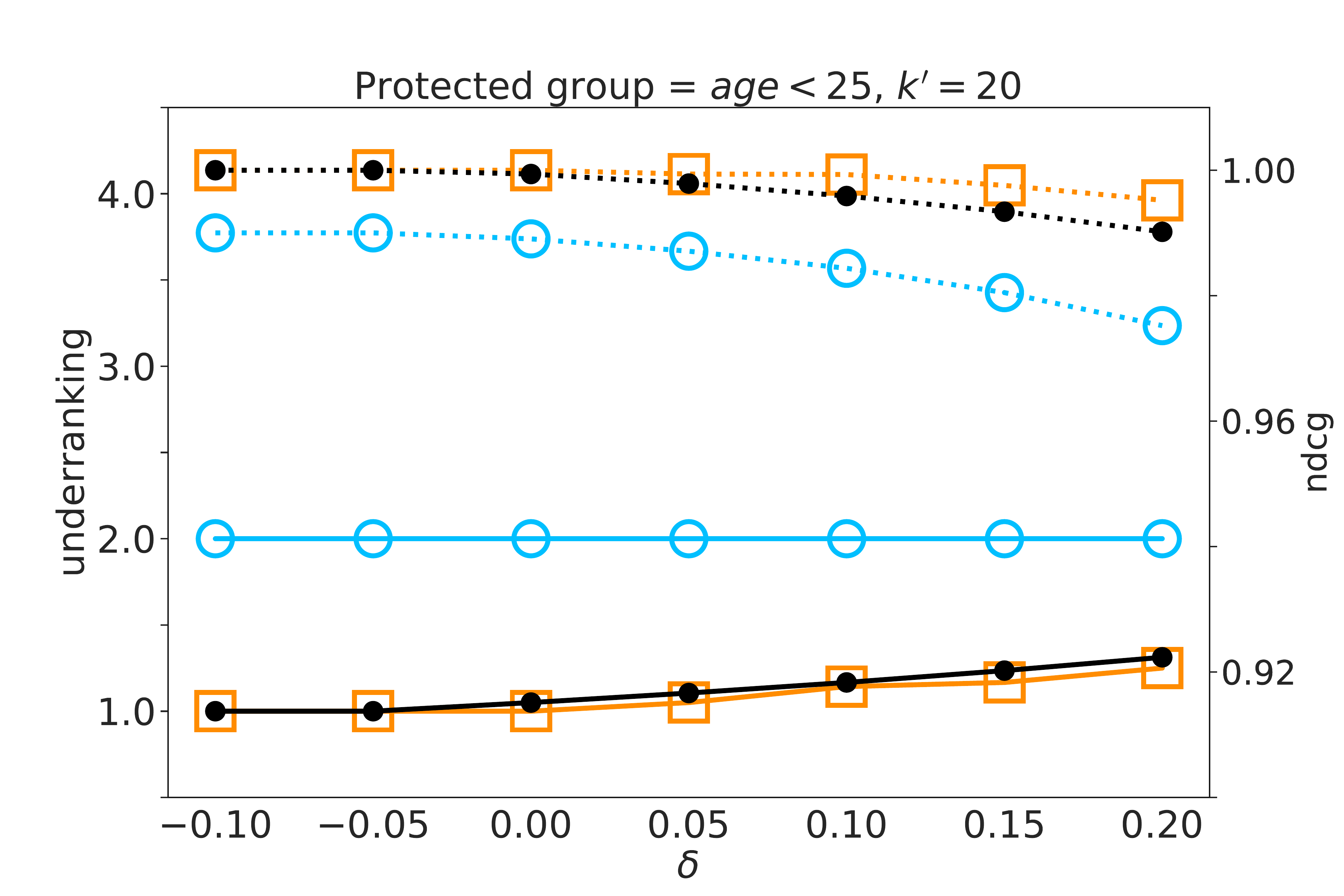} 
		\caption{Underranking, nDCG at top $20$ ranks}
		\label{fig:german_25_d}
	\end{subfigure}
	\begin{subfigure}[b]{0.33\linewidth}
		\centering
		\includegraphics[scale=0.149]{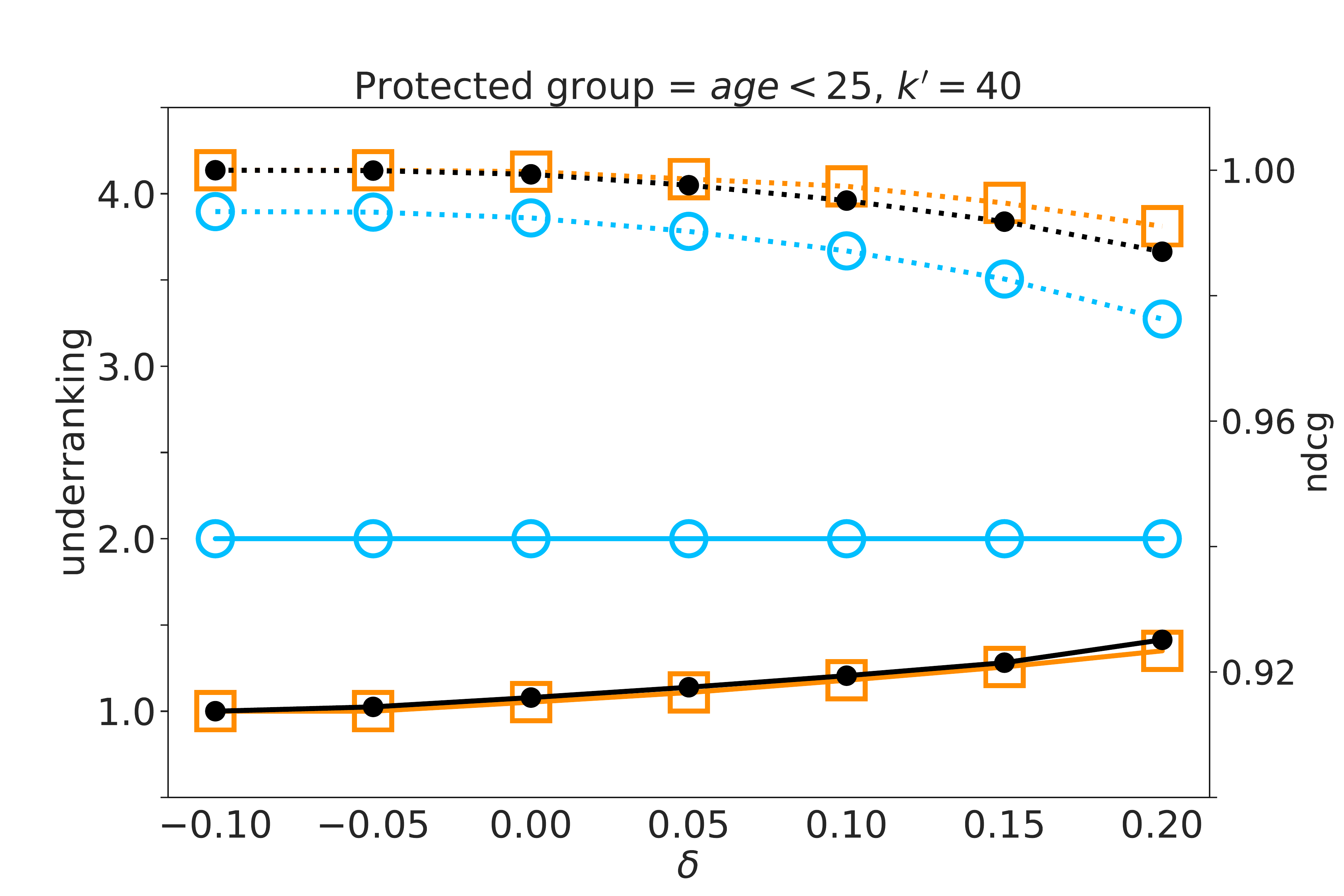} 
		\caption{Underranking, nDCG at top $40$ ranks}
		\label{fig:german_25_e}
	\end{subfigure}
	\begin{subfigure}[b]{0.33\linewidth}
		\centering
		\includegraphics[scale=0.149]{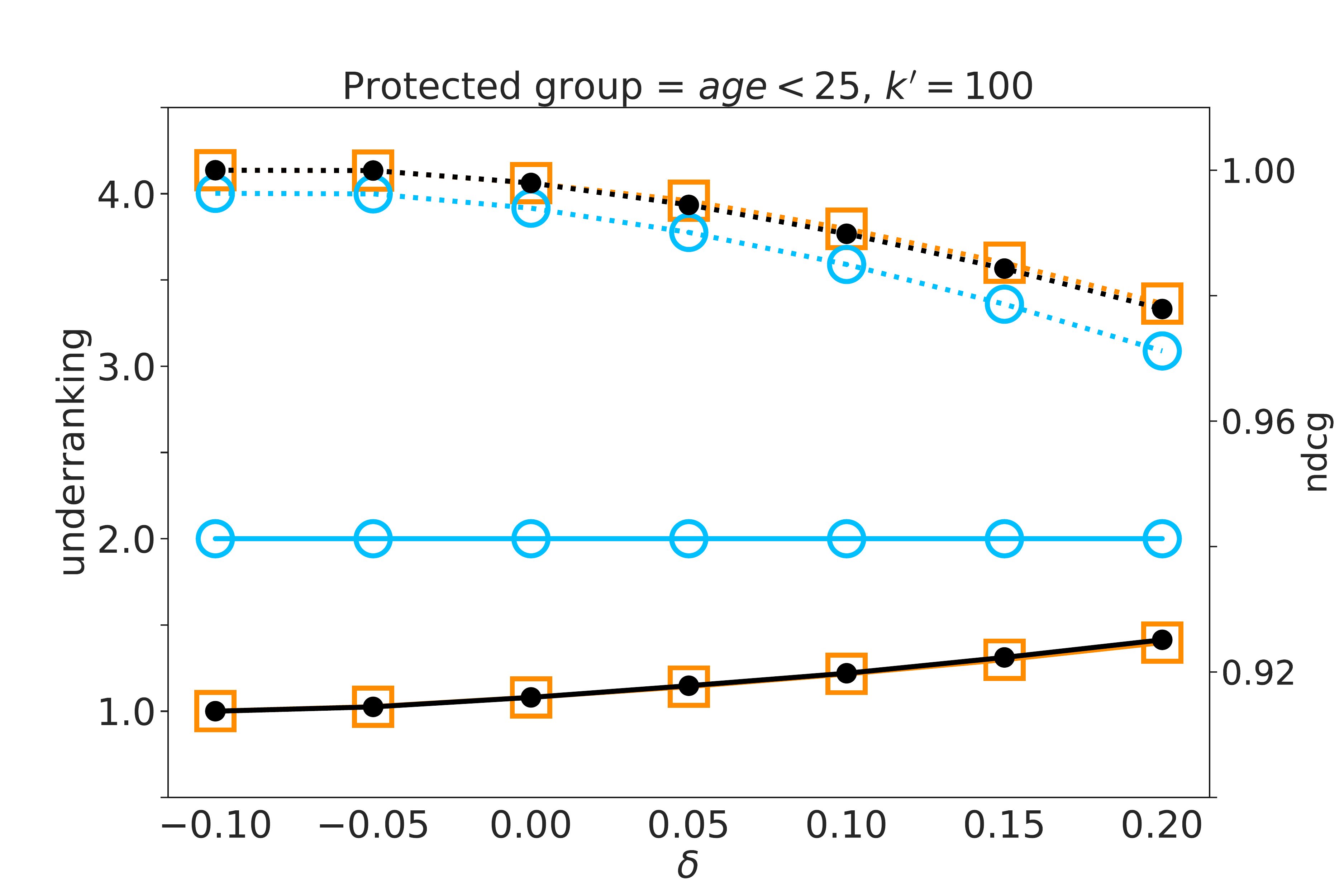} 
		\caption{Underranking, nDCG at top $100$ ranks}
		\label{fig:german_25_f}
	\end{subfigure}
	\caption{Results on the German Credit Risk dataset with \textit{age}$<25$ as the protected group.}
	\label{fig:german_25}
\end{figure*}

\begin{figure}[t]
	
	\begin{subfigure}[b]{\linewidth}
		\centering
		\includegraphics[scale=0.2]{results/legend.pdf} 
	\end{subfigure}
	
	\begin{subfigure}[b]{0.33\linewidth}
		\centering
		\includegraphics[scale=0.149]{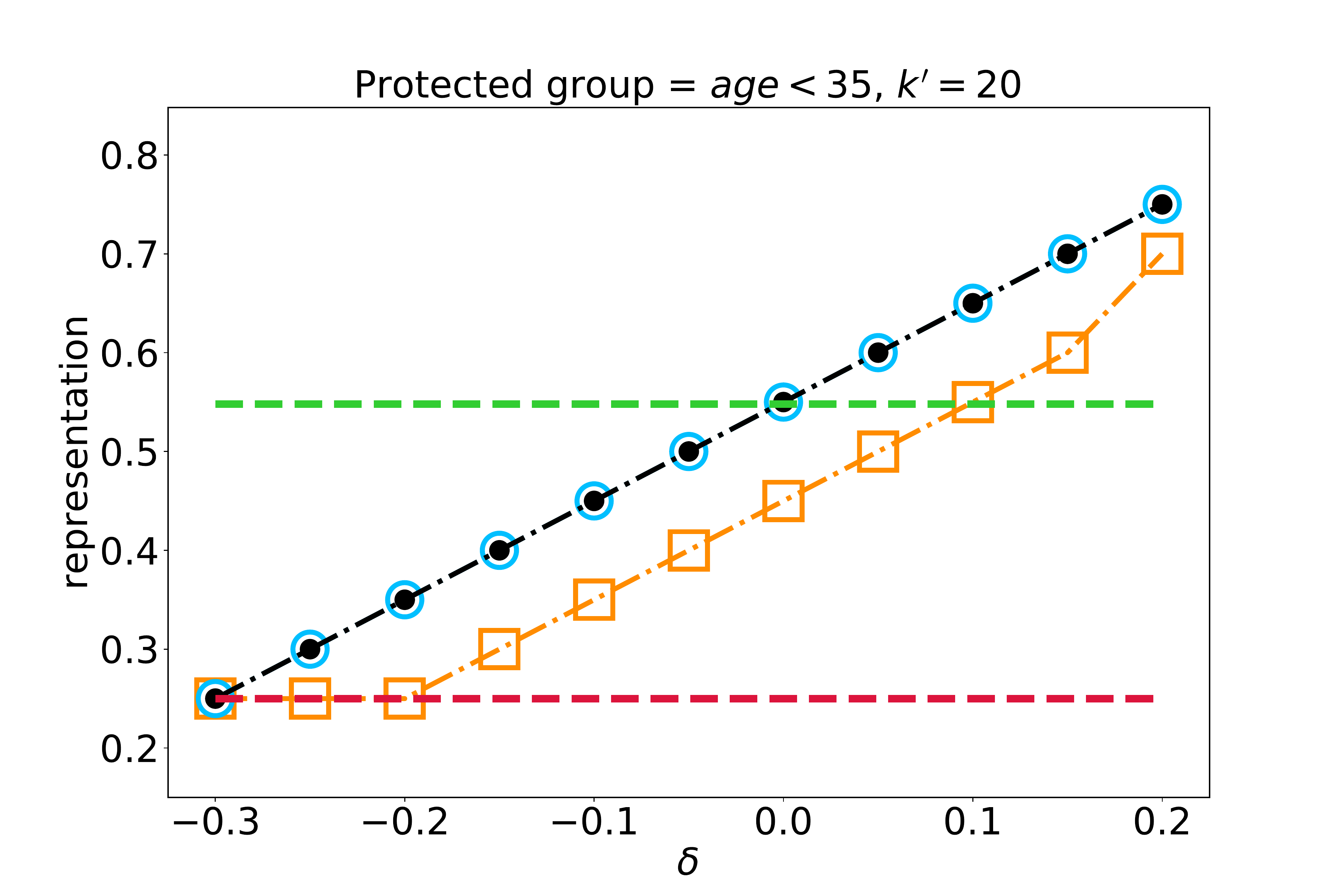} 
		\caption{Representation at top $20$ ranks.}
		\label{fig:german_35_a}
	\end{subfigure}
	\begin{subfigure}[b]{0.33\linewidth}
		\centering
		\includegraphics[scale=0.149]{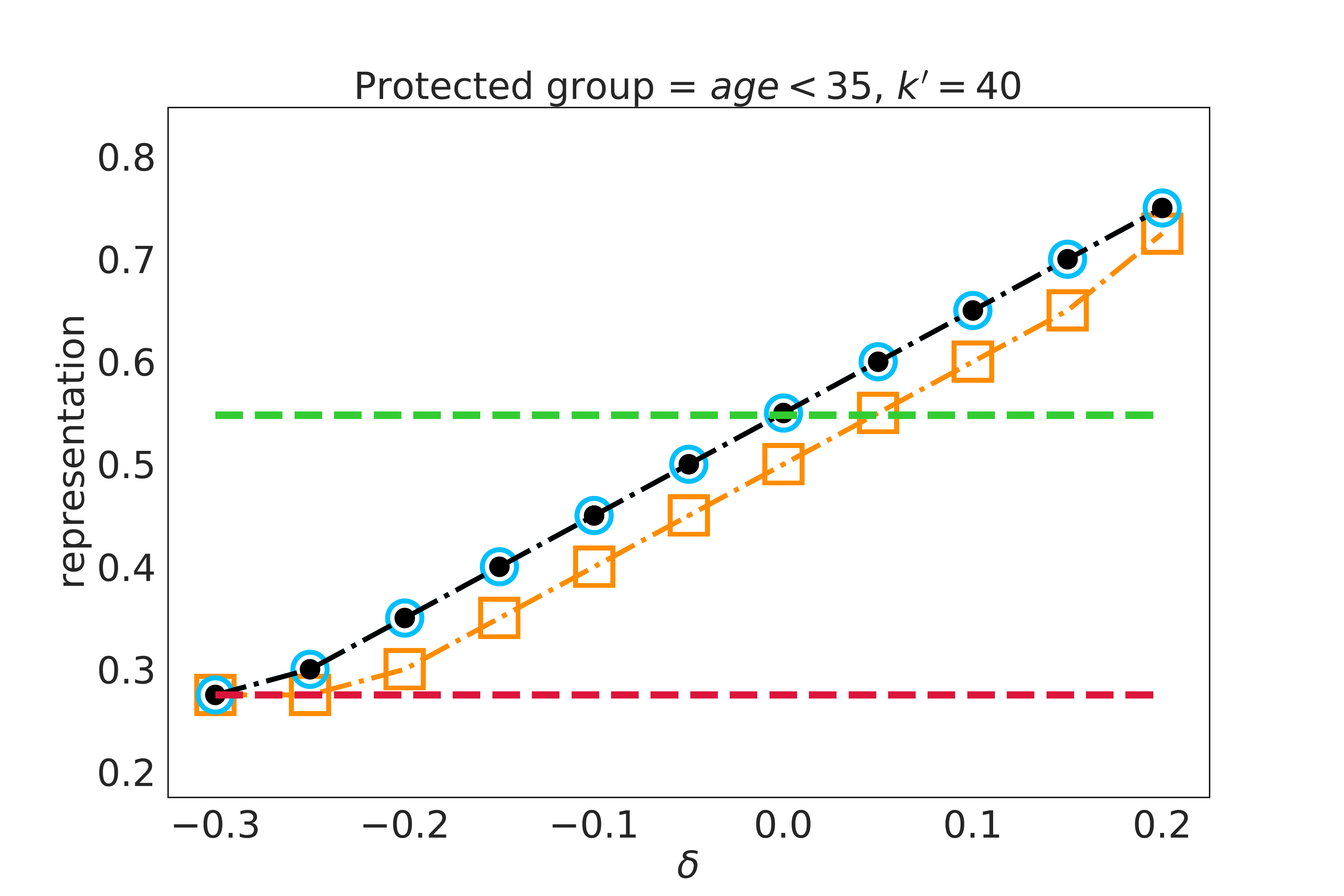} 
		\caption{Representation at top $40$ ranks.}
		\label{fig:german_35_b}
	\end{subfigure}
	\begin{subfigure}[b]{0.33\linewidth}
		\centering
		\includegraphics[scale=0.149]{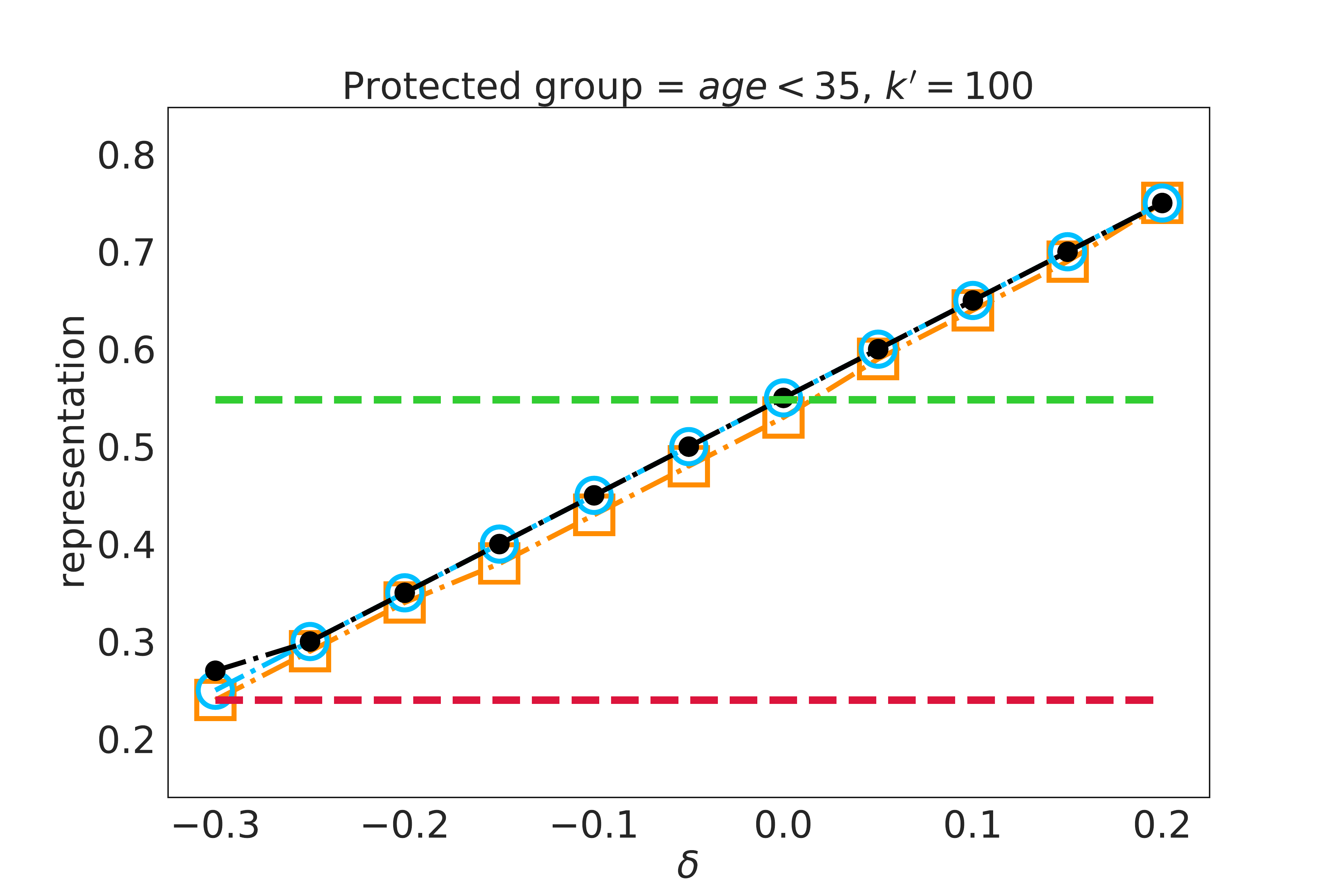} 
		\caption{Representation at top $100$ ranks.}
		\label{fig:german_35_c}
	\end{subfigure}
	\vspace*{-1mm}
	\begin{subfigure}[b]{0.33\linewidth}
		\centering
		\includegraphics[scale=0.149]{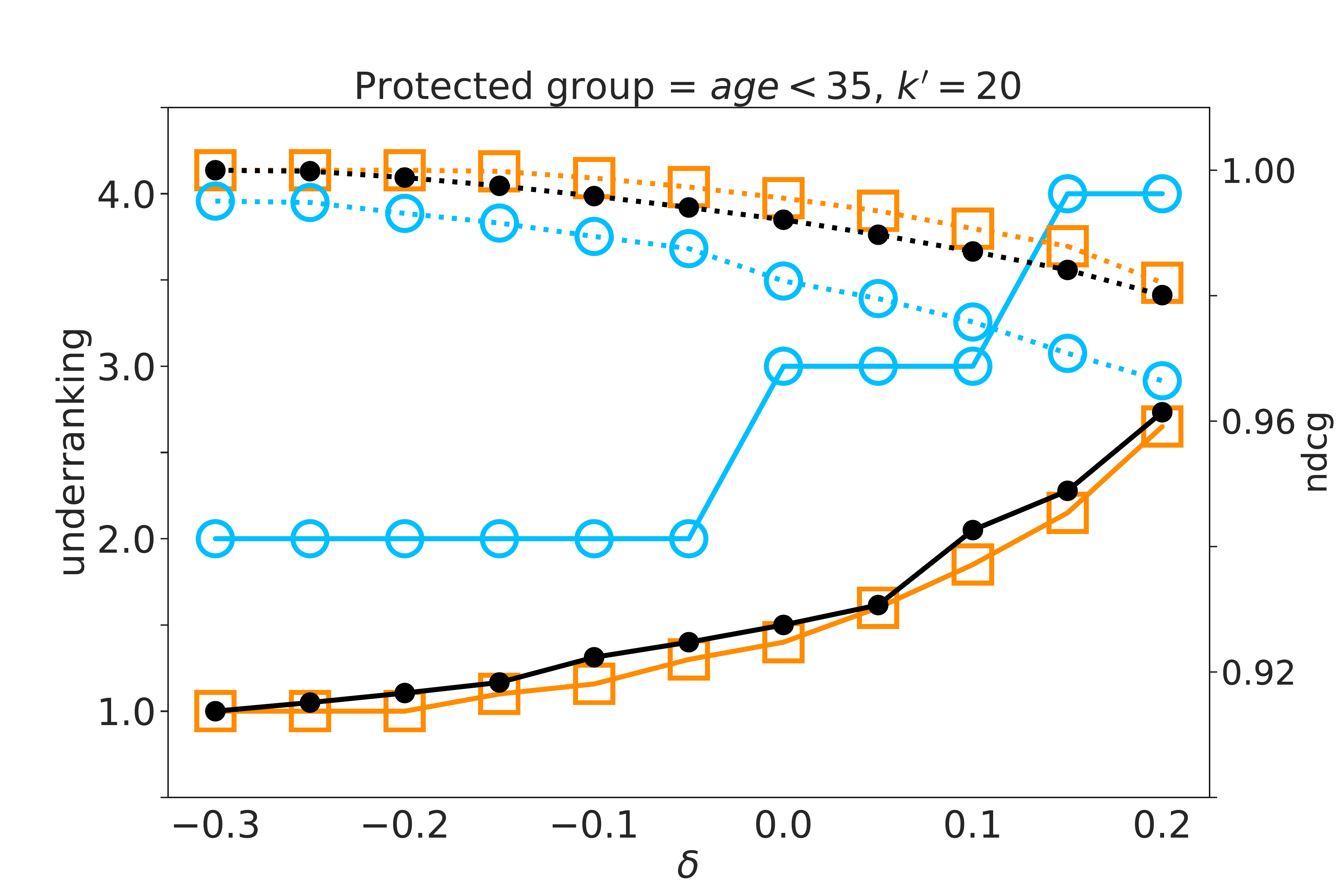} 
		\caption{Underranking, nDCG at top $20$ ranks.}
		\label{fig:german_35_d}
	\end{subfigure}
	\begin{subfigure}[b]{0.33\linewidth}
		\centering
		\includegraphics[scale=0.149]{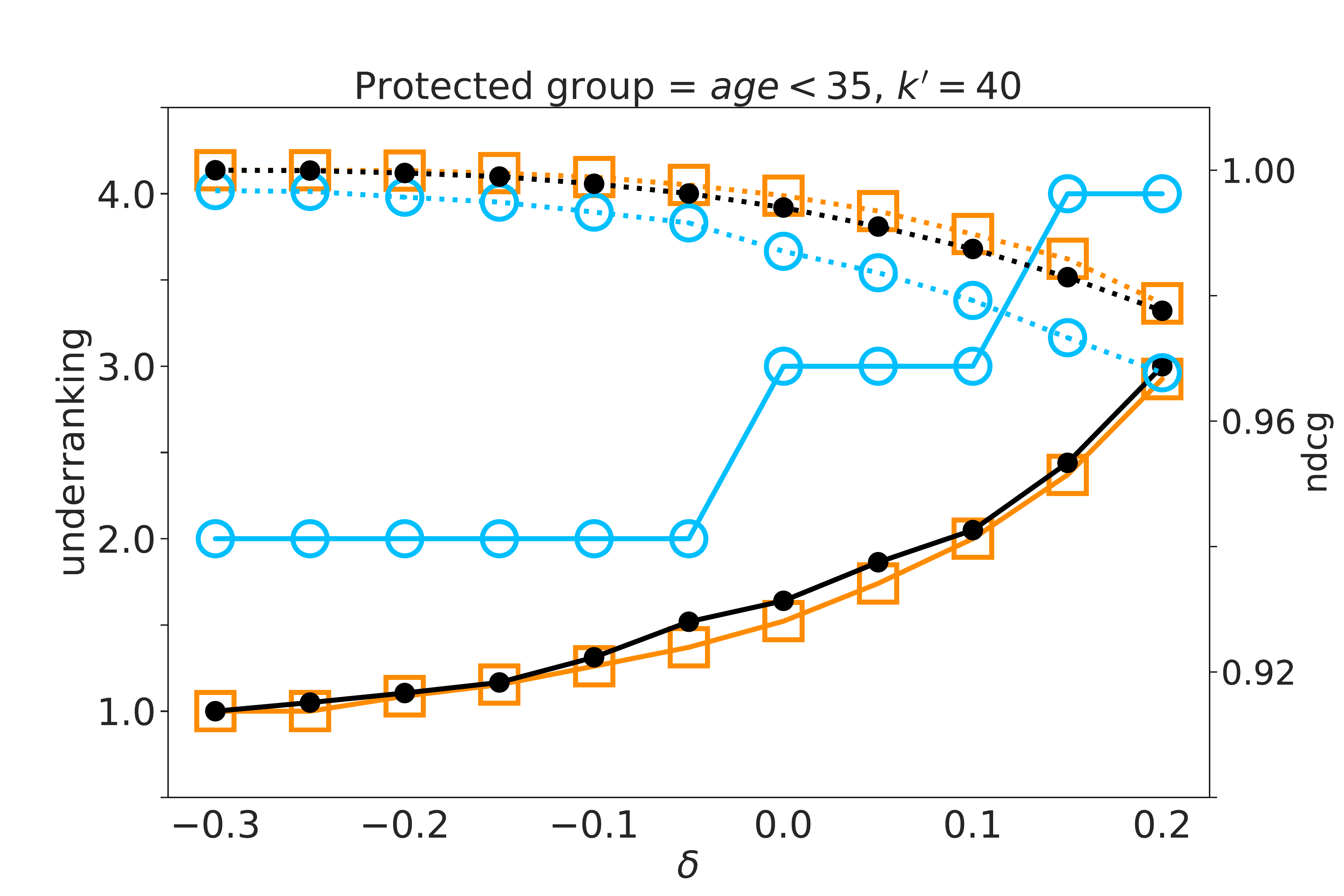} 
		\caption{Underranking, nDCG at top $40$ ranks.}
		\label{fig:german_35_e}
	\end{subfigure}
	\begin{subfigure}[b]{0.33\linewidth}
		\centering
		\includegraphics[scale=0.149]{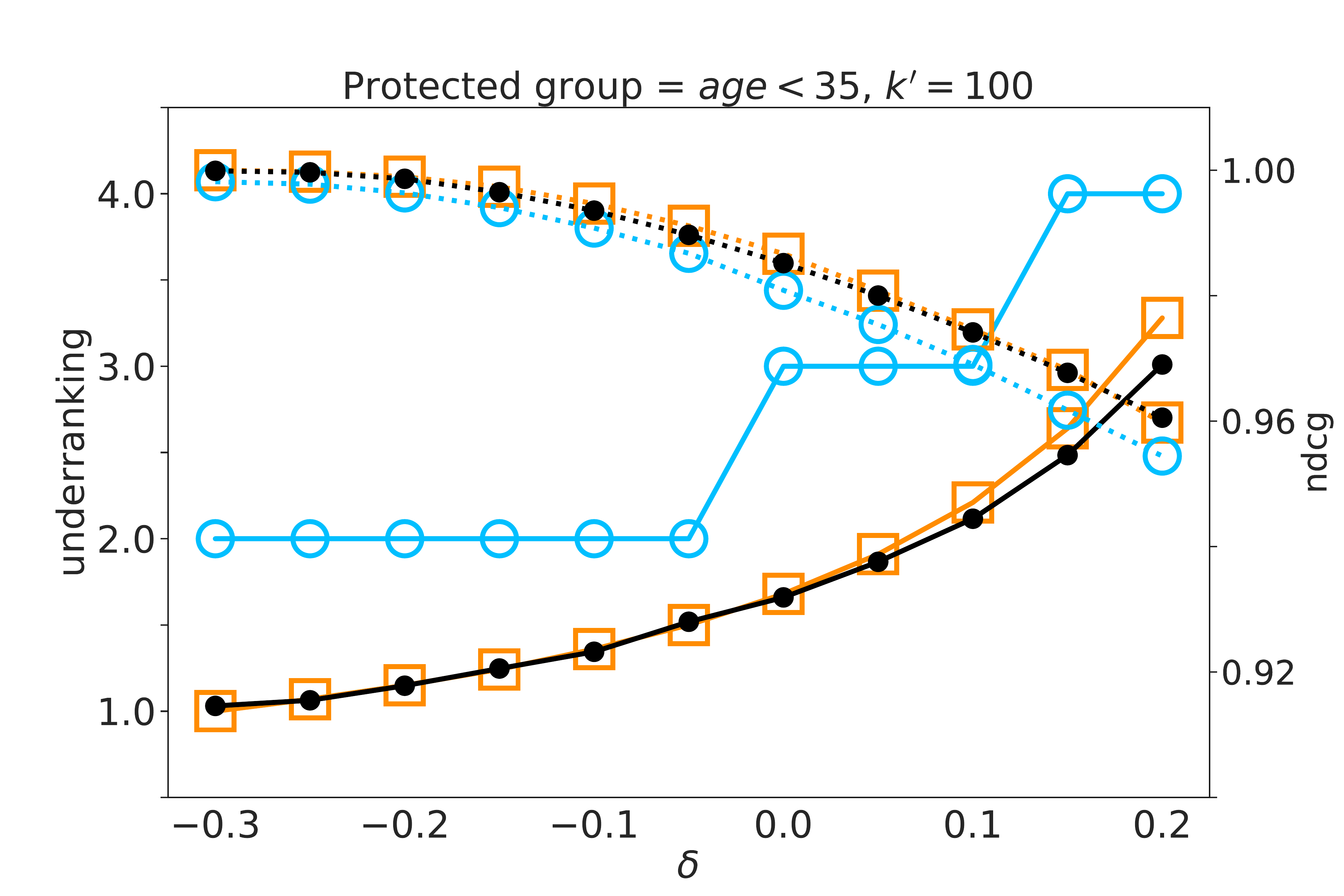} 
		\caption{Underranking, nDCG at top $100$ ranks.}
		\label{fig:german_35_f}
	\end{subfigure}
	\caption{Results on the German Credit Risk dataset with \textit{age}$<35$ as the protected group.}
	\label{fig:german_35}
\end{figure}

\begin{figure*}[t]
	
	\begin{subfigure}[b]{\linewidth}
		\centering
		\includegraphics[scale=0.2]{results/legend.pdf} 
	\end{subfigure}
	
	\begin{subfigure}[b]{0.33\linewidth}
		\centering
		\includegraphics[scale=0.149]{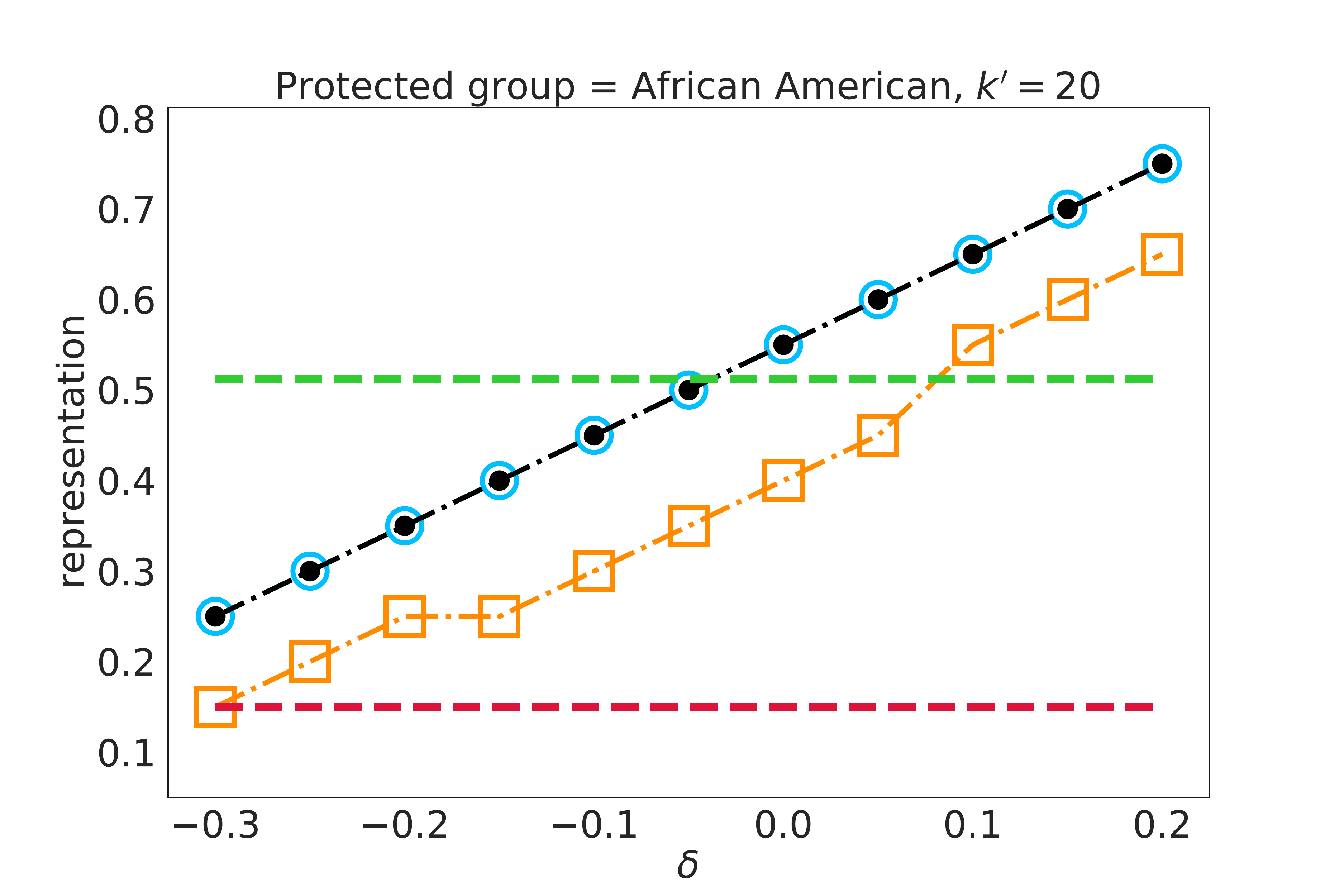} 
		\caption{Representation at top $20$ ranks.}
		\label{fig:compas_race_a}
	\end{subfigure}
	\begin{subfigure}[b]{0.33\linewidth}
		\centering
		\includegraphics[scale=0.149]{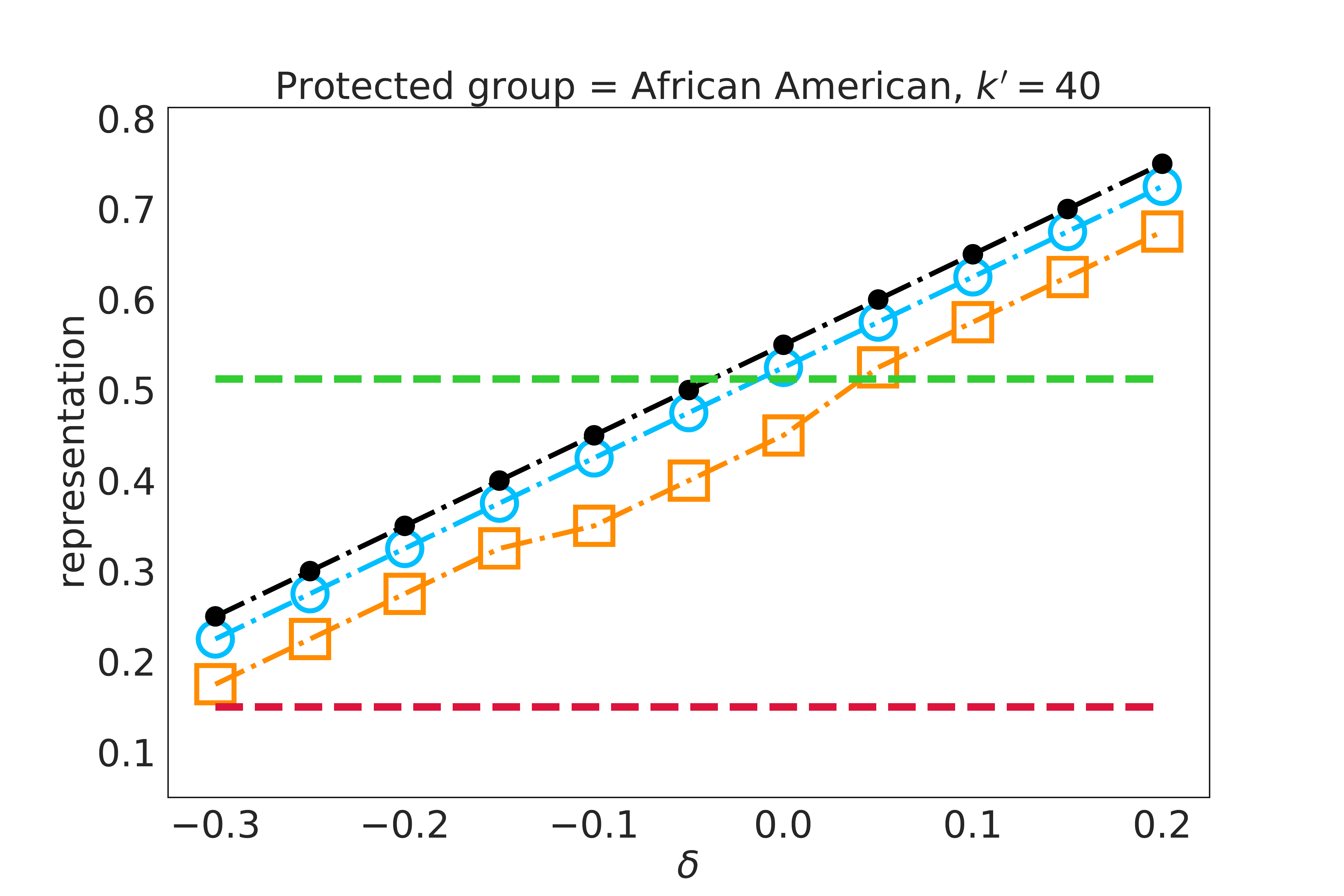} 
		\caption{Representation at top $40$ ranks.}
		\label{fig:compas_race_b}
	\end{subfigure}
	\begin{subfigure}[b]{0.33\linewidth}
		\centering
		\includegraphics[scale=0.149]{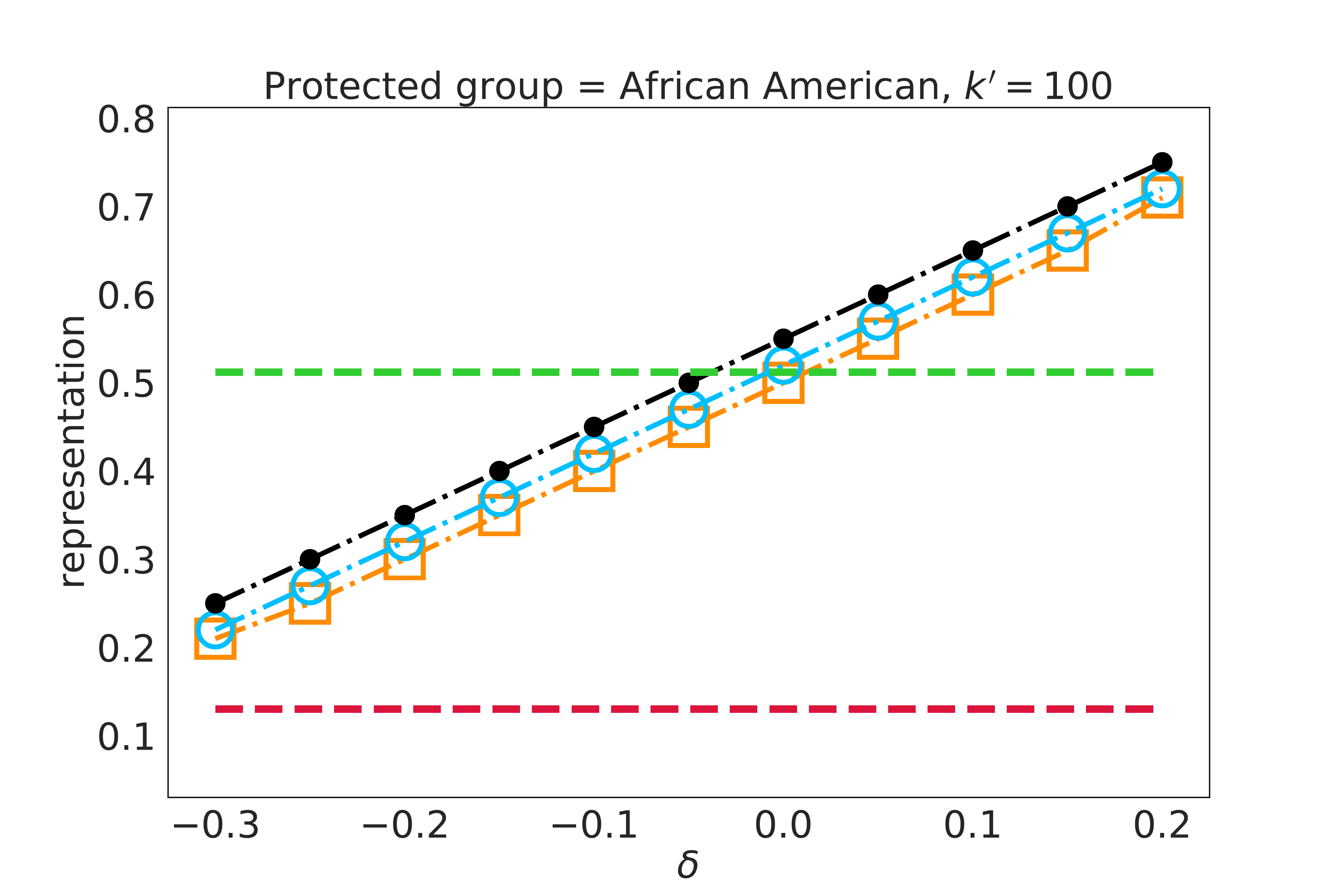} 
		\caption{Representation at top $100$ ranks.}
		\label{fig:compas_race_c}
	\end{subfigure}
	
	\begin{subfigure}[b]{0.33\linewidth}
		\centering
		\includegraphics[scale=0.149]{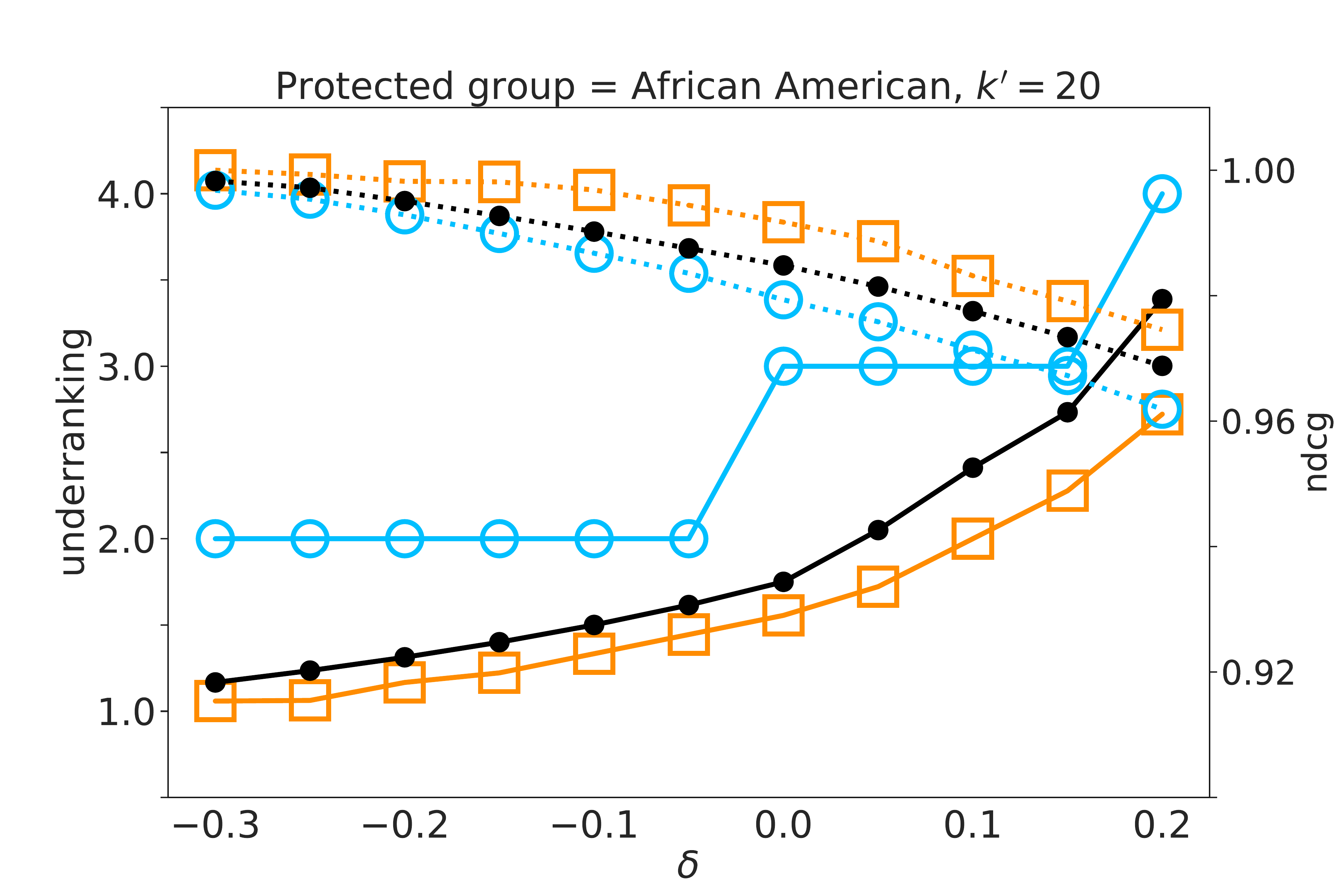} 
		\caption{Underranking, nDCG at top $20$ ranks.}
		\label{fig:compas_race_d}
	\end{subfigure}
	\begin{subfigure}[b]{0.33\linewidth}
		\centering
		\includegraphics[scale=0.149]{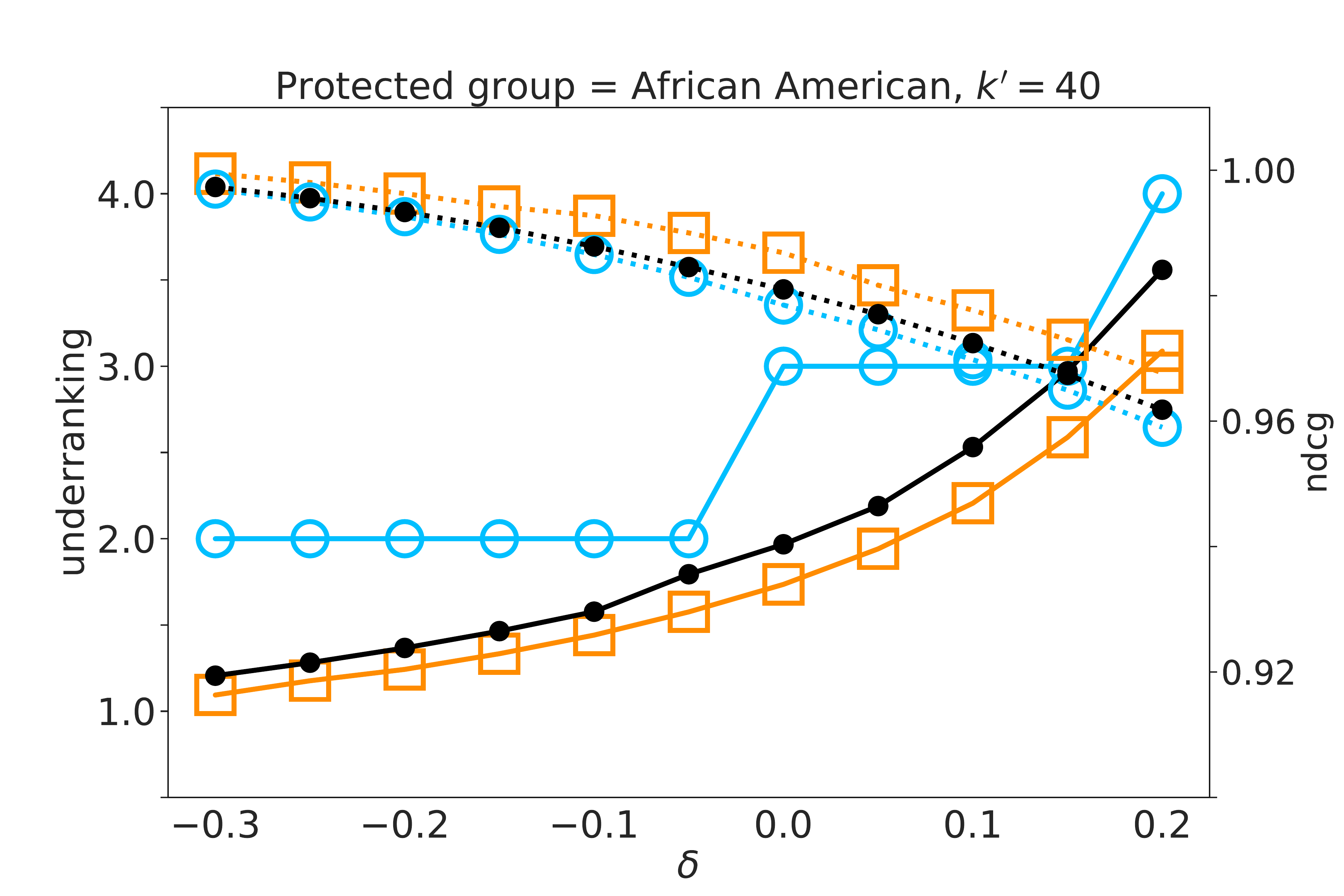} 
		\caption{Underranking, nDCG at top $40$ ranks.}
		\label{fig:compas_race_e}
	\end{subfigure}
	\begin{subfigure}[b]{0.33\linewidth}
		\centering
		\includegraphics[scale=0.149]{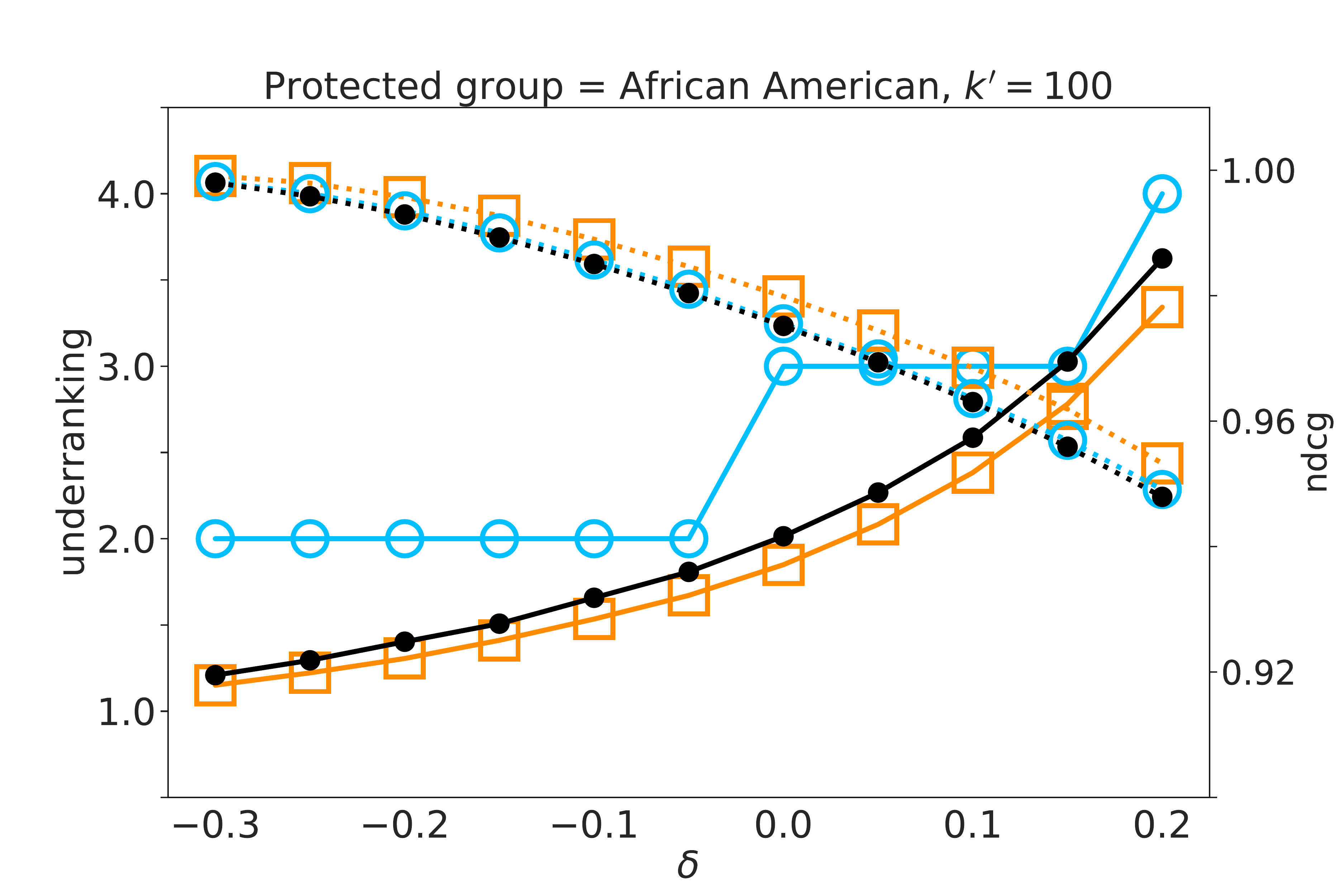} 
		\caption{Underranking, nDCG at top $100$ ranks.}
		\label{fig:compas_race_f}
	\end{subfigure}
	\caption{Results on the COMPAS Recidivism dataset with \textit{African American} as the protected group.}
	\label{fig:compas_race}
\end{figure*}

\subsection{Experimental Setup}

\textbf{Fairness constraints. }
Representation of a group in a ranking is measured by its proportion in the ranking.
FA*IR can only work with one protected and one non-protected group, and can only handle minimum representation requirements in each prefix of the top $k$ ranks. 
Hence, in all the results shown in Figures \ref{fig:german_25} to \ref{fig:compas_gender} there is only one protected group and the algorithms are run only with lower bound constraints on representation of the protected group as follows. 
Let $l=1$ and $l=2$ correspond to protected and non-protected group respectively.
Let $p_l^*$ be the representation of the group $l$ in the entire dataset.
Sufficient representation of a group need not necessarily mean there has to be exactly $p_l^*$ fraction of items in the top $k$ ranks from group $l$.
Hence, we run experiments by varying this sufficient representation requirement using a control parameter $\delta$.
FA*IR is run with $p = p_1^* + \delta$,  the DP algorithm from \cite{ranking_with_fairness_constraints} is run with the fairness constraints, $\forall k' \in [k], L_{1, k'} = \ceil{(p_1^* + \delta) k'}, L_{2,k'} = 0, U_{1, k'} = k', U_{2, k'} = k'$.
ALG is also run with group fairness constraints $\paren{\boldsymbol{\alpha} = (1, 1), \boldsymbol{\beta} = (p_1^* + \delta, 0), k=100}$, and the parameter $\epsilon = 0.4$.
In \Cref{fig:german_age}, we run ALG with fairness constraints $\paren{\boldsymbol{\alpha} = (p_1^* + \delta, p_2^* + \delta, p_3^* + \delta), \boldsymbol{\beta} = (p_1^* - \delta, p_2^* - \delta, p_3^* - \delta), k=100}$, and $\epsilon = 0.4$.

\textbf{Evaluation metrics.}
In all the datasets, the true ranking is generated based on the decreasing order of the score (or relevance) of the item.
For an algorithm run with $k = 100$, we evaluate its group fairness, underranking and ranking utility -- normalised discounted cumulative gain (nDCG) -- at top $k' = 20, 40, 100$ ranks since we are comparing with the baselines that have group fairness constraints in every prefix of the top $k$ ranking. 

\begin{enumerate}
	\item Let $G_1$ represent the ranks assigned to items from the protected group. Then,
	\begin{align*}
	\text{repersentation in top }k' = \abs{\set{i\in G_1, i \le k'}}/k'.
	\end{align*}
	\item For an item ranked at $j \leq k'$ in true ranking, let $r_j$ be its rank in the group-fair ranking. Then,
	\begin{align*}
	\text{underranking for top $k'$ ranks} = \max_{j \in [k'], r_j }r_j/j. 
	\end{align*} 
	\item Let $y_i$ be the score of the item at rank $i$ in true ranking and $\hat{y}_i$ be the score of the item at rank $i$ in the group-fair ranking.
	Then,
	\begin{align*}
	\text{nDCG}_{k'} = \paren{\sum_{i=1}^{k'}\frac{2^{\hat{y}_i}}{\log_2(i+1)}} / \paren{\sum_{i=1}^{k'}\frac{2^{y_i}}{\log_2(i+1)}}.
	\end{align*}
\end{enumerate}

\begin{figure*}[t]
	\begin{subfigure}[b]{\linewidth}
		\centering
		\includegraphics[scale=0.2]{results/legend.pdf} 
	\end{subfigure}
	
	\begin{subfigure}[b]{0.33\linewidth}
		\centering
		\includegraphics[scale=0.149]{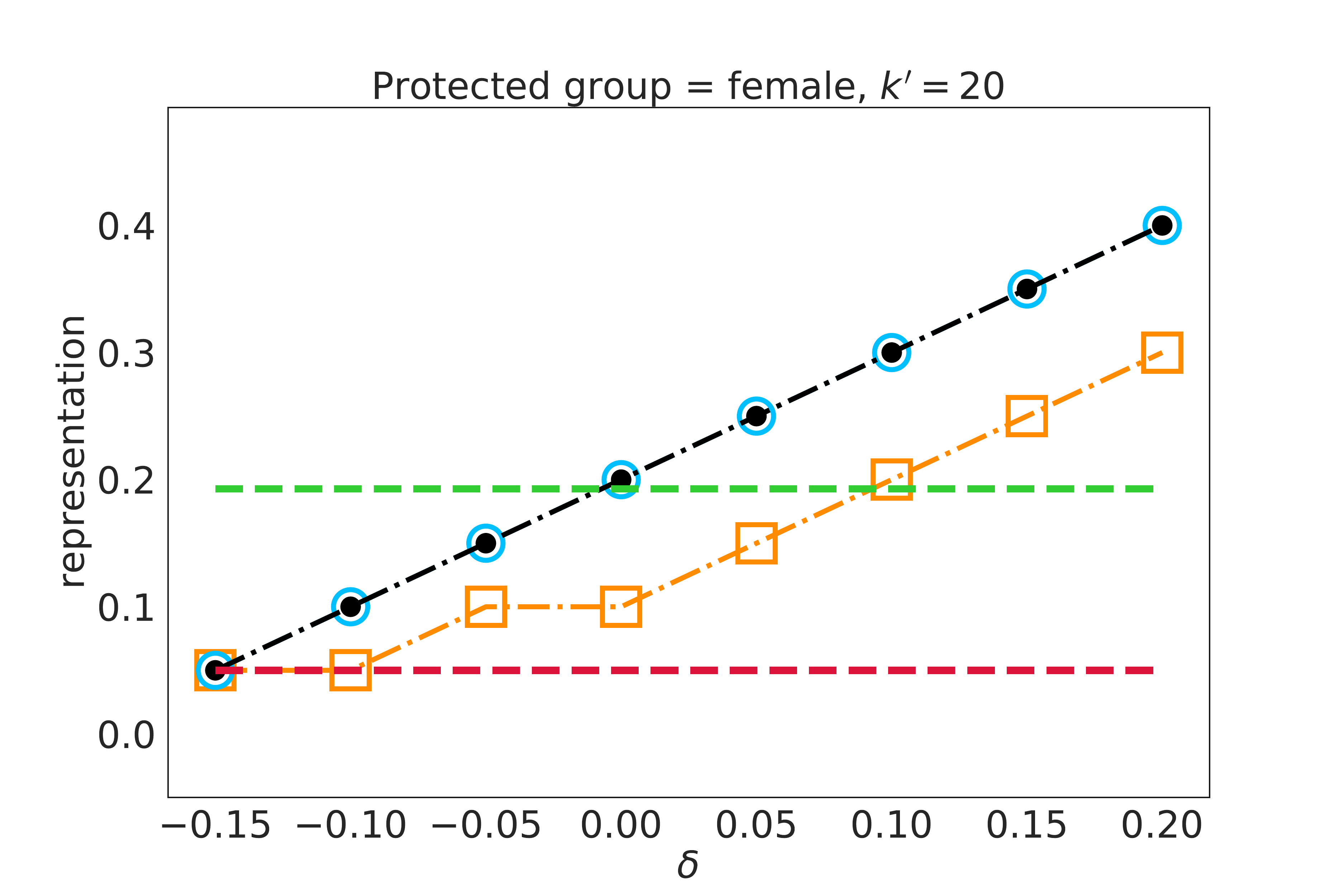} 
		\caption{Representation at top $20$ ranks.}
		\label{fig:compas_gender_a}
	\end{subfigure}
	\begin{subfigure}[b]{0.33\linewidth}
		\centering
		\includegraphics[scale=0.149]{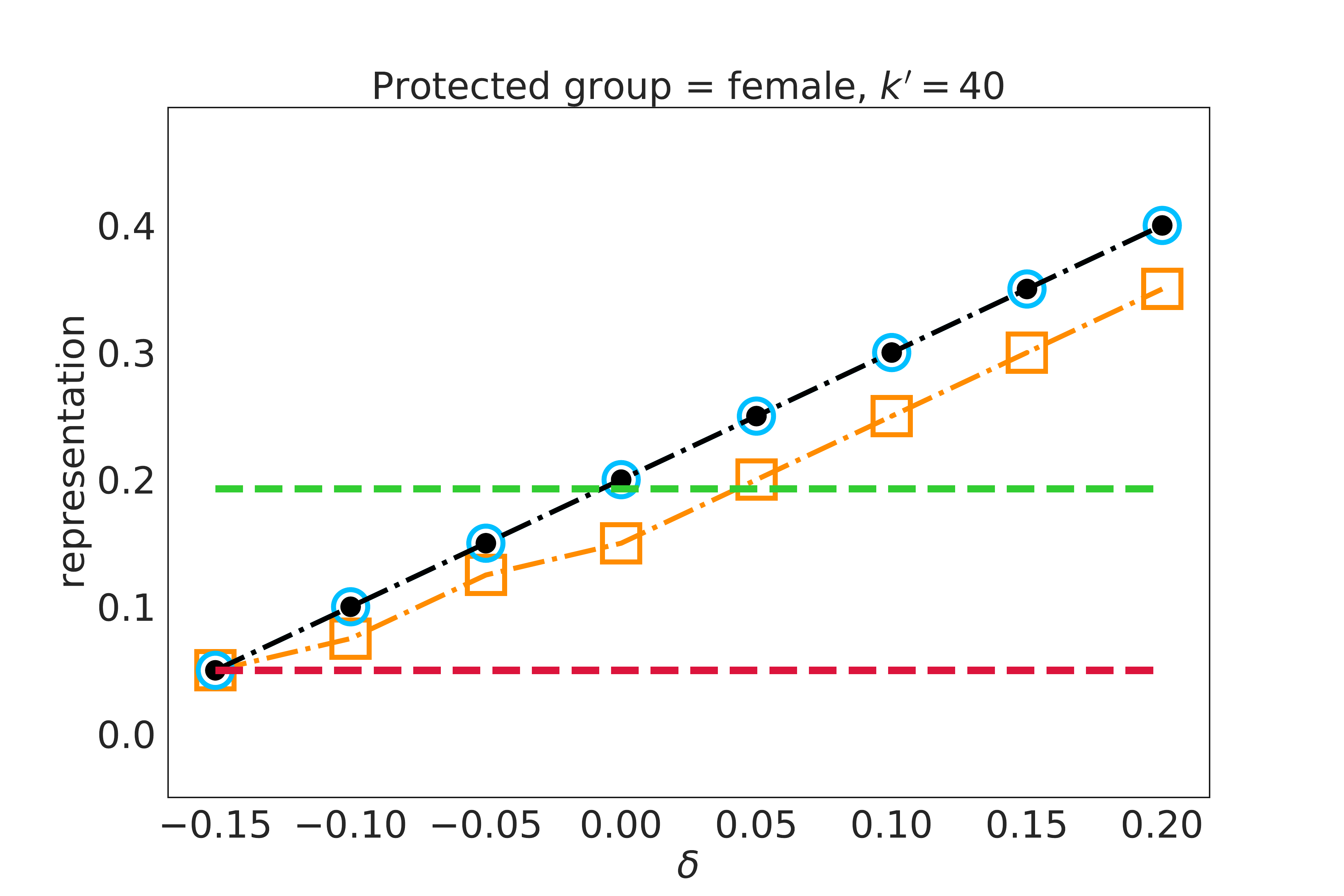} 
		\caption{Representation at top $40$ ranks.}
		\label{fig:compas_gender_b}
	\end{subfigure}
	\begin{subfigure}[b]{0.33\linewidth}
		\centering
		\includegraphics[scale=0.149]{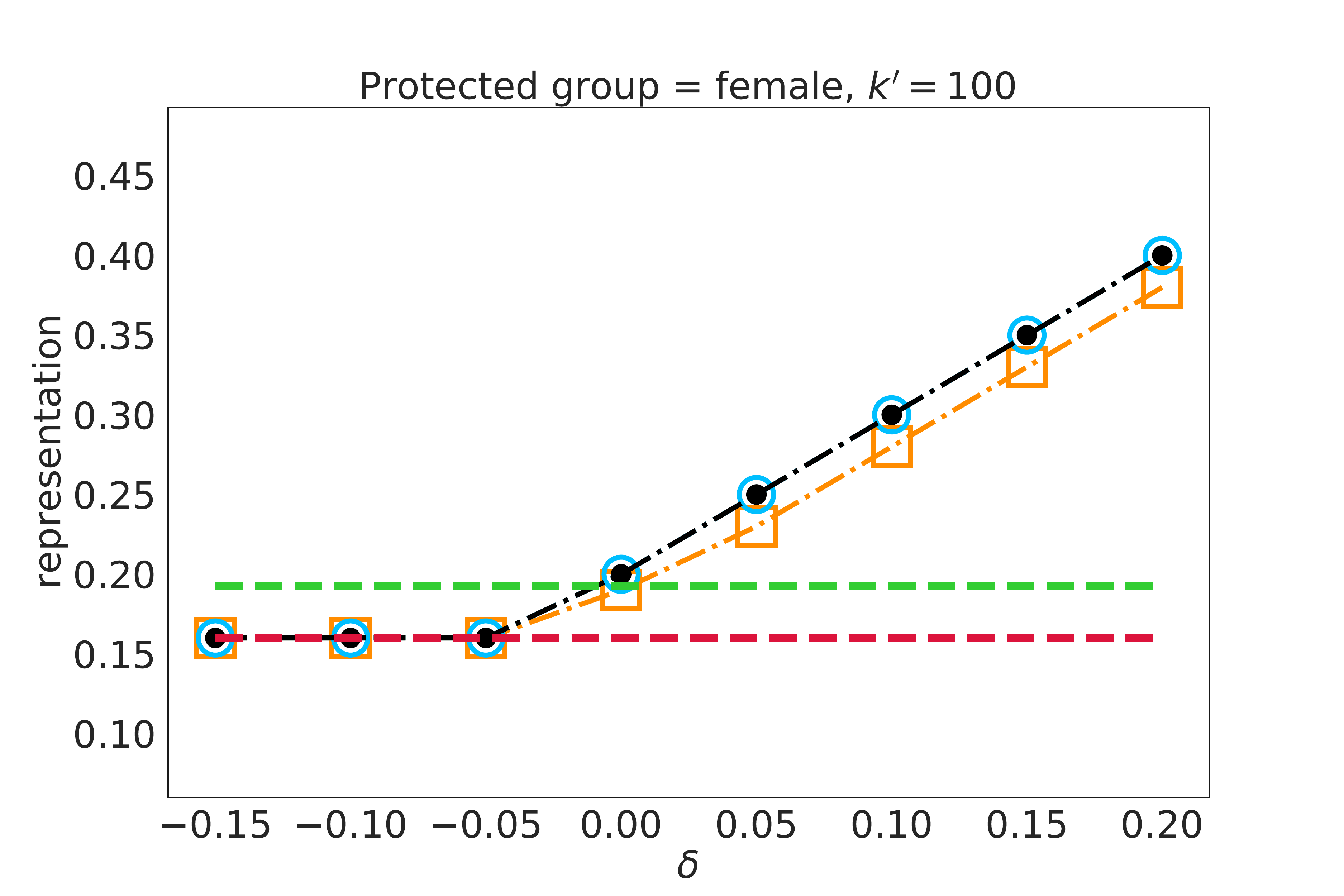} 
		\caption{Representation at top $100$ ranks.}
		\label{fig:compas_gender_c}
	\end{subfigure}
	
	\begin{subfigure}[b]{0.33\linewidth}
		\centering
		\includegraphics[scale=0.149]{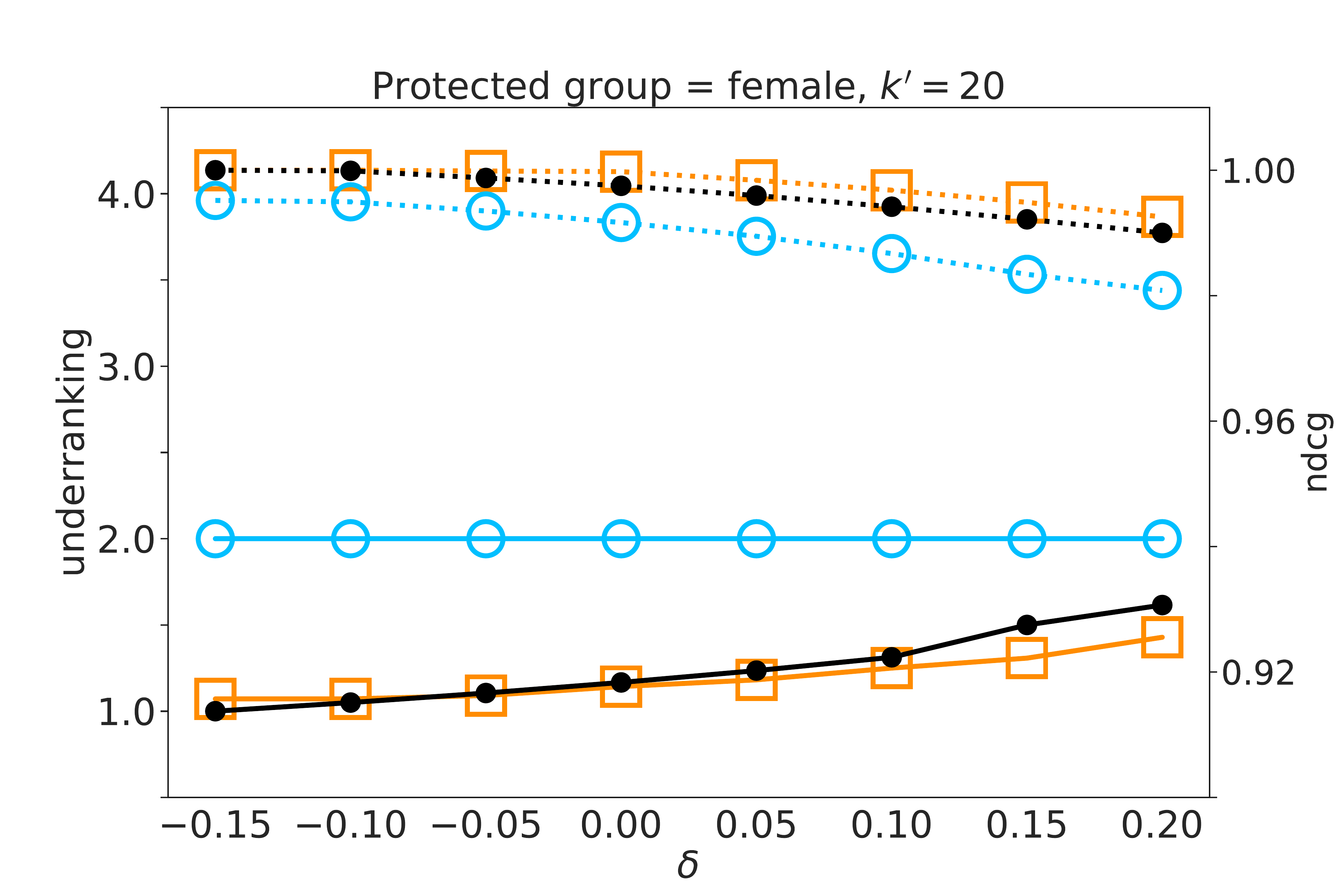} 
		\caption{Underranking, nDCG at top $20$ ranks.}
		\label{fig:compas_gender_d}
	\end{subfigure}
	\begin{subfigure}[b]{0.33\linewidth}
		\centering
		\includegraphics[scale=0.149]{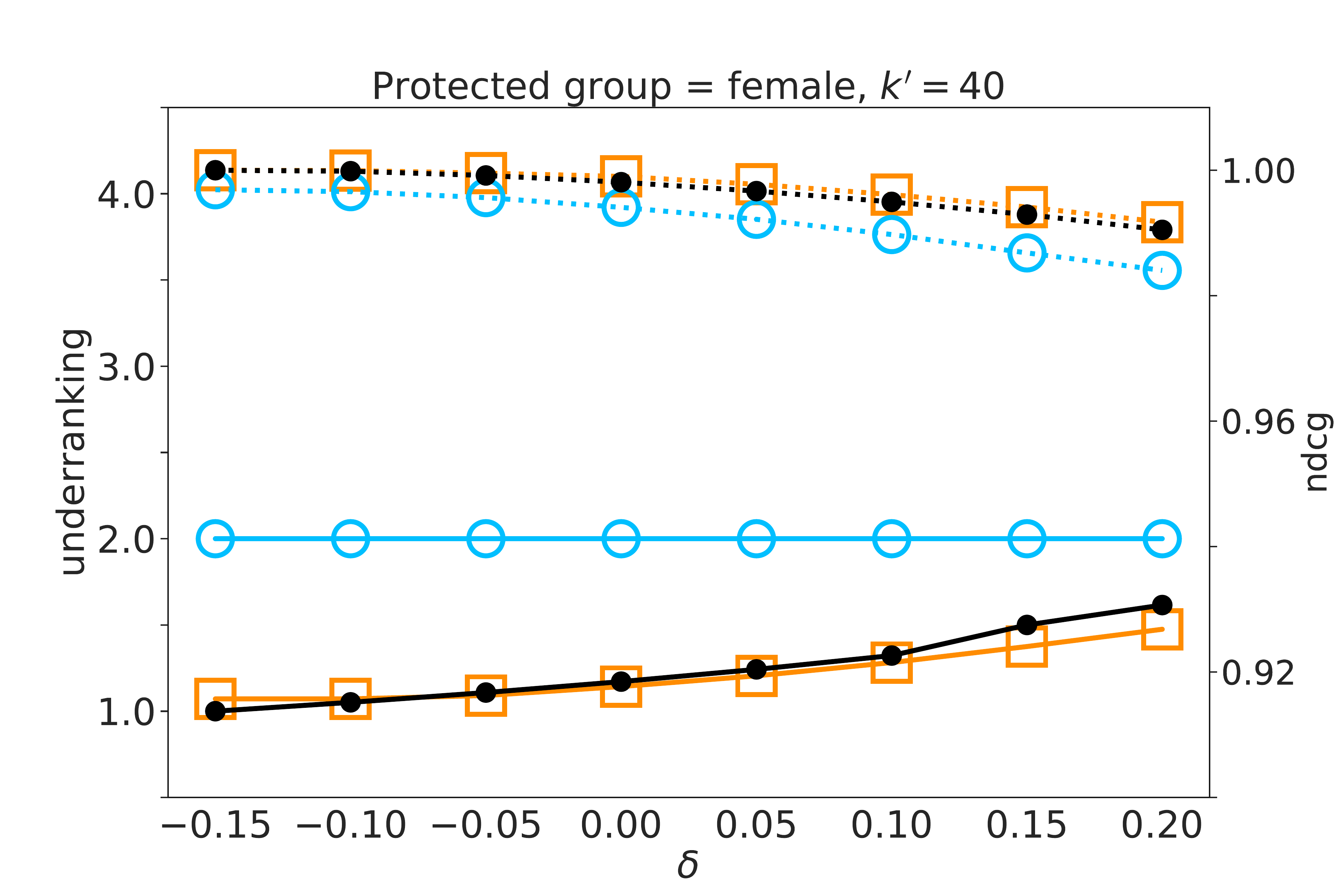} 
		\caption{Underranking, nDCG at top $40$ ranks.}
		\label{fig:compas_gender_e}
	\end{subfigure}
	\begin{subfigure}[b]{0.33\linewidth}
		\centering
		\includegraphics[scale=0.149]{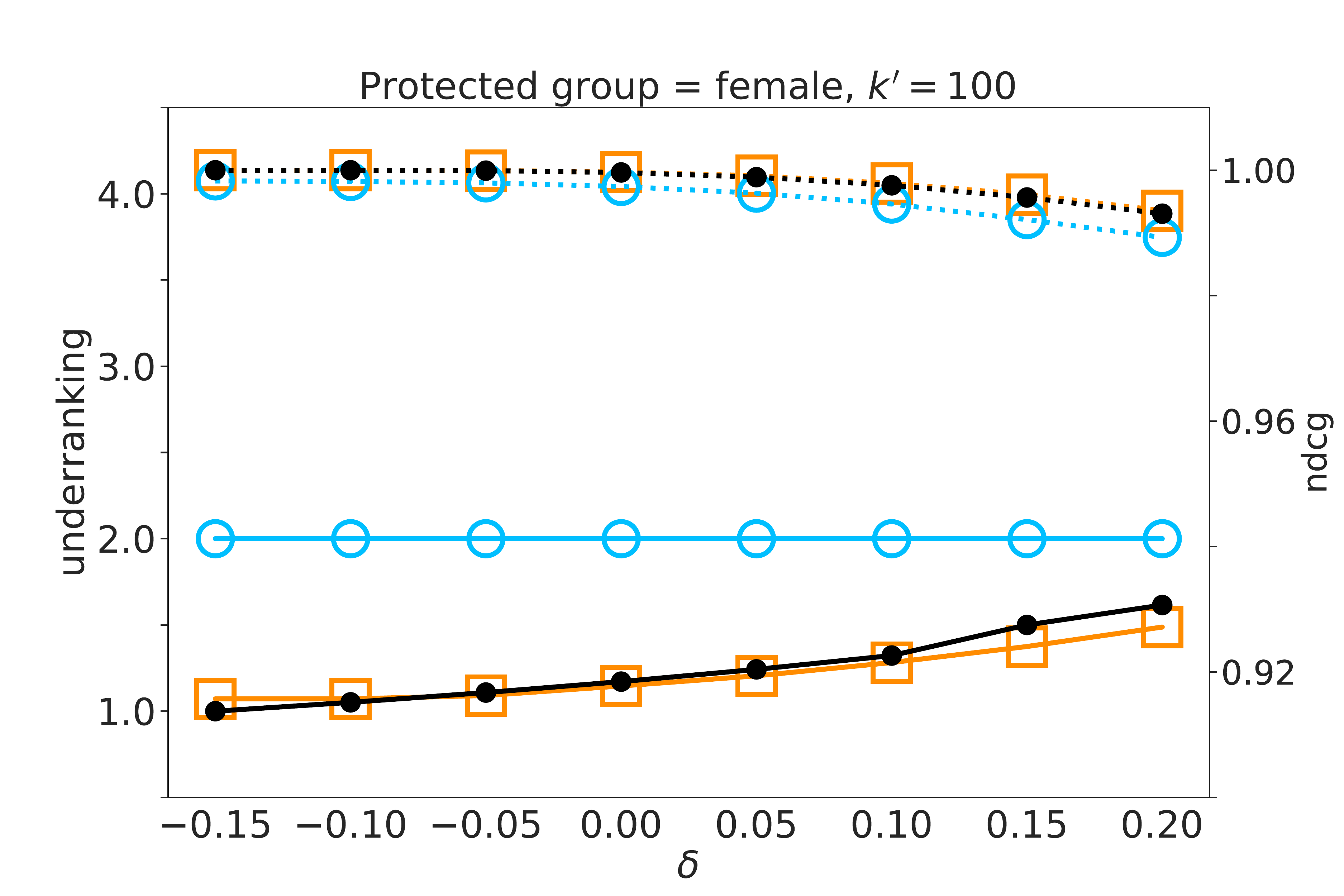} 
		\caption{Underranking, nDCG at top $100$ ranks.}
		\label{fig:compas_gender_f}
	\end{subfigure}
	\caption{Results on the COMPAS Recidivism dataset with \textit{Female} as the protected group.}
	\label{fig:compas_gender}
\end{figure*}

\textbf{Reading the plots.} For every combination of a dataset and a protected group, we show a \emph{pair} of plots.
Consider \Cref{fig:german_25_a}, \ref{fig:german_25_d}. Y-axis in \Cref{fig:german_25_a} shows the representation of the protected group \textit{age }$<25$ in the top $20$ ranks for each run of the algorithm with fairness constraints controlled by $\delta$ on X-axis. 
Here, the dashed green line shows the proportion of \textit{age }$<25$ in the dataset, whereas the dashed red line shows their proportion in the top $20$ ranks of the true ranking. 
These two lines serve as guidelines to understand the behavior of various algorithms.
\Cref{fig:german_25_d} shows corresponding underranking and nDCG in the top $20$ ranks.

\subsection{Experimental Observations}
\paragraph{Trade-off between underranking and group fairness.}
In the COMPAS dataset the female candidates are underrepresented in any of the top $k' \in [k]$ ranks (see dashed red lines  in Figures \ref{fig:compas_gender_a} to \ref{fig:compas_gender_c}) compared to their true representation in the dataset, $p^* = 0.19$ (dashed green lines).
By varying $\delta$ (on X-axis), we run the experiments with the minimum female representation constraint, $p^* + \delta$. 
Now, comparing \Cref{fig:compas_gender_a} with \Cref{fig:compas_gender_d} for varying $\delta$, we show the trade-off between group fairness and underranking.
As the value of $\delta$ increases from $-0.15$ to $0.2$, the underranking gets worse since more number of male candidates with lower true ranks have to be moved to higher ranks in order to accomodate for the required female repersentation in the top $k$ ranks.
Similarly, even though African Americans have very high representation, $p^* = 0.55$ (see green line in \Cref{fig:compas_race_a}), in the true ranking, their representation in the top $k'$ ranks is again significantly less (see red lines in Figures \ref{fig:compas_race_a} to \ref{fig:compas_race_c}).
Even in this case, we observe a trade-off between group fairness and underranking in the top $k'$ ranks.
These trends are also observed at any of the top $k' = 20, 40, 100$ ranks in the German Credit dataset (see \Cref{fig:german_25}).
\Cref{fig:german_35} shows the evaluation of ALG and the baselines on the German credit risk dataset with \textit{age }$< 35$ as the protected group.
The candidates with \textit{age }$< 35$ are underrepresented in the top $k'$ ranks (dashed red lines), even though they have very high represention (dashed green lines) in the whole dataset.
Hence, we observe a trade-off between representation of the protected group and underranking with all the algorithms.

We also partition the candidates in the German Credit dataset into three groups based on \textit{age}, and enforce the constraints that each group $l \in \set{\textit{age} < 25, \textit{age }  25-35, \textit{age} > 35}$, with corresponding $p_l^*$, should have minimum and maximum represention requirements $p_l^* - \delta$ and $p_l^* + \delta$ respectively (see \Cref{fig:german_age}).
Even in this case, the underranking gets worse with increase in the lower bound representation requirements because of the underrepresentation of the protected groups, hence confirming the trade-off even for more than two groups.

In \Cref{fig:german_age_gender}, we show results of ALG with the group fairness constraints for the groups formed by the intersection of two groups \textit{age} and \textit{gender}. 
We partition the candidates into six groups and their true representations are as show in \Cref{tab:true_german_age_gender}.
ALG is run with the constraints $\alpha_l = p_l^* + \delta$ and $\beta_l = p_l^* - \delta$, where $l$ represents the group.
ALG achieves proportional representation while the underranking decreases as $\delta$ increases, hence confirming the trade-off.
Despite being designed to satisfy group fairness constraints for every $k$ consecutive ranks, ALG achieves representation in prefixes of the ranking similar to that of the baselines.
These experimental results show evidence of the trade-off between group fairness and underranking in the real-world datasets.

\paragraph{Underranking for comparing different group-fair rankings.}
In previous work, only the trade-off between fairness and utility of the ranking such as nDCG has been studied. 
However, as established in \Cref{fig:example} and is evident from our experimental results, high nDCG does not imply anything for the underranking of a group-fair ranking.
For example, in \Cref{fig:compas_race}, Celis et al.'s DP algorithm achieves almost same nDCG and group fairness as ALG but suffers badly in terms of underranking.
Hence, underranking allows us to break ties when aggregate ranking utility and group fairness are same for any two group-fair rankings.

\paragraph{ALG achieves best trade-off between group fairness and underranking.}
In all the results in Figures \ref{fig:german_25} to \ref{fig:compas_gender}, Celis et al.'s DP algorithm achieves worse underranking and same representation as ALG, and FA*IR achieves worse representation and same underranking as ALG.
We note that even though in \Cref{fig:compas_race_d} to \ref{fig:compas_race_f}, FA*IR seems to achieve better underranking than ALG, it has significantly below the minimum protected group representation in the top $k'$ ranks.
Hence, we posit that ALG achieves the best trade-off between underranking and representation of the protected group in the real-world datasets.

\begin{figure}[H]
	\centering
	\includegraphics[width=0.85\textwidth]{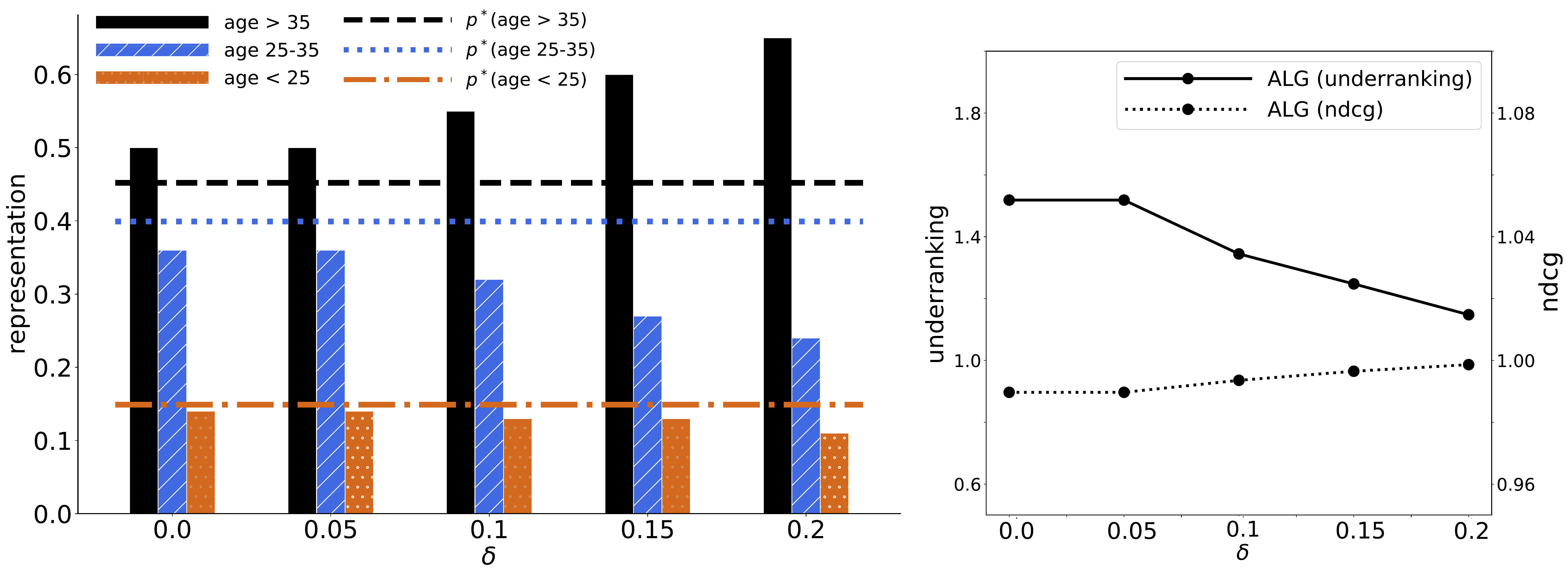}  	
	\caption{Results of ALG on the German Credit Risk dataset with three groups based on \textit{age}. We run ALG with fairness constraints $\paren{\boldsymbol{\alpha} = (p_1^* + \delta, p_2^* + \delta, p_3^* + \delta), \boldsymbol{\beta} = (p_1^* - \delta, p_2^* - \delta, p_3^* - \delta), k=100}$, where $l = 1,2,3$ correspond to the groups \textit{age }$>35$, \textit{age }$25 - 35$, and \textit{age }$<25$ respectively. The bar plot (left) shows the representation achieved by ALG at top $100$ ranks. The line $p^*$ for each group shows its representation in the dataset. Corresponding underranking, nDCG shown on the right. }
	\label{fig:german_age}
\end{figure}

\begin{figure}[H]
	\begin{subfigure}[b]{0.55\linewidth}
		\centering
		\includegraphics[scale=0.25]{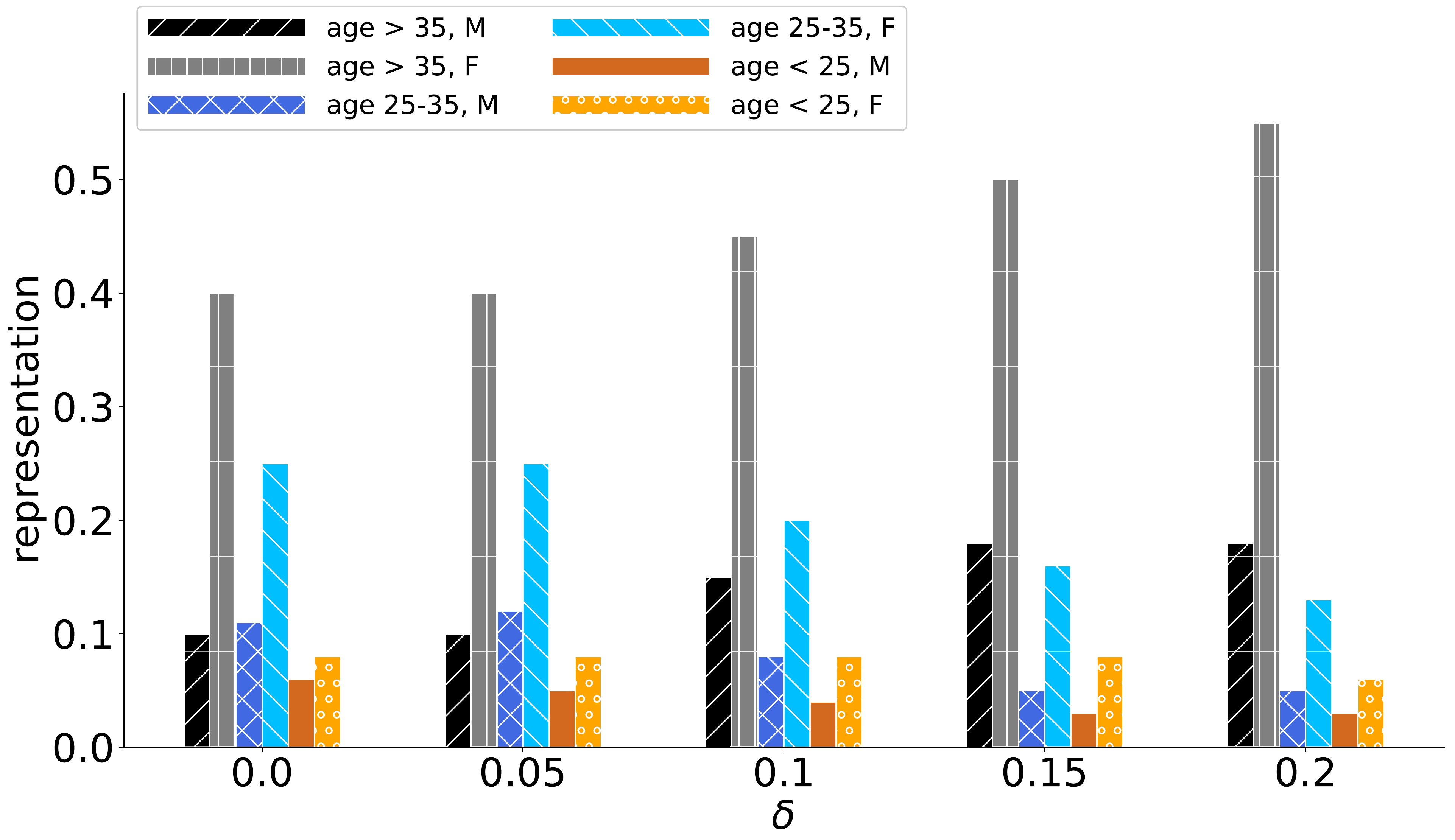} 
		\caption{}
	\end{subfigure}
	\begin{subfigure}[b]{0.37\linewidth}
		\centering
		\includegraphics[width=\textwidth]{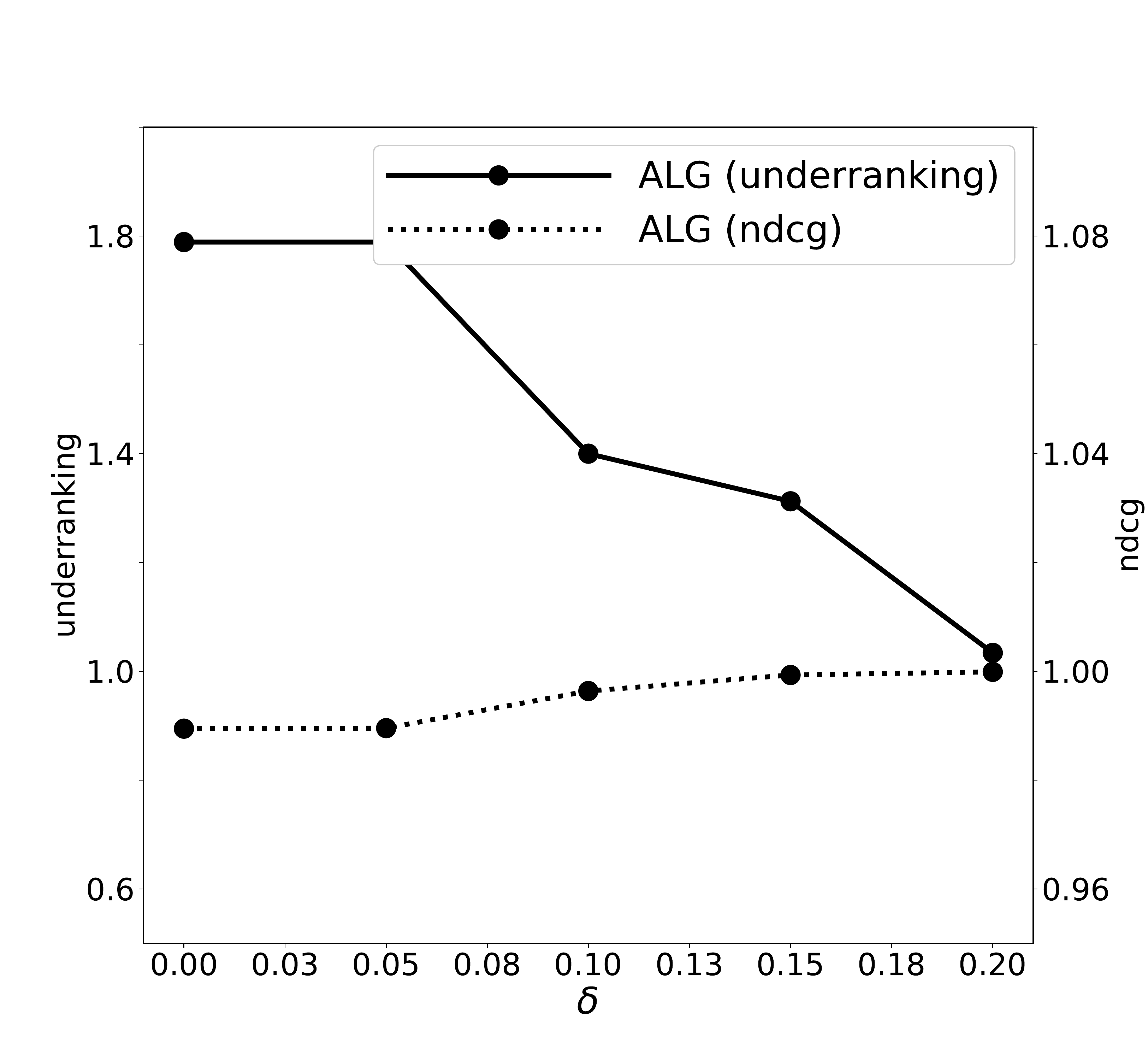} 
		\caption{}
	\end{subfigure}
	
	\caption{Results on the German Credit Risk dataset with six groups based on \textit{age, gender}.}
	\label{fig:german_age_gender}
	
\end{figure}

\begin{table}[H]
	\begin{center}
		\begin{tabular}{|c|c|c|c|c|c|c|}
			\hline
			\textbf{Group}&age $>35$, M   & age $>35$, F&age $25$ to $35$, M & age $25$ to $35$, F &  age $<25$, M &	age $<25$, F \\
			\hline
			$p^*$  &0.10 & 0.36 & 0.13 & 0.27 &  0.08 & 0.06 \\
			\hline
		\end{tabular}
		\caption{True representation of the groups in the dataset.}
		\label{tab:true_german_age_gender}
	\end{center}
\end{table}

\paragraph{ALG runs significantly faster than the baselines.}
\begin{table}[H]
	
	\centering
	
	\resizebox{0.6\textwidth}{!}{  
		\begin{tabular}{|lrrrr|}
			
			\hline
			& $k = 100$ &   $k = 300$ &  $k = 500$ &  $k = 1000$ \\
			\hline
			Celis et al.'s DP algorithm & 11.0 & 100.0 & 186.0 & 301.0\\
			FA*IR & 3.0 & 3.3 & 3.3 & 3.5 \\
			ALG & \textbf{1.1} & \textbf{1.6} & \textbf{1.8} & \textbf{1.8} \\
			\hline
			
		\end{tabular}
	}
	\caption{Average wall clock running time (in seconds) of five runs of the algorithms on the German Credit Risk dataset with \textit{age} $< 25$ as the protected group ($n = 1000$, $p^* = 0.15$), $\ell = 2$. For these experiments we choose, $\delta = 0$. ALG is run for $((1,1), (0.15, 0), k)$ group fairness, with $\epsilon = 0.4$. FA*IR is run with $p=0.15$ and Celis et al. with $L_{ \textit{age} < 25,k'} = \ceil{0.15\cdot k'}, \forall k' \in [k]$.}
	\label{tab:time}
	
\end{table}
Table \ref{tab:time} shows the average running time of each algorithm.
The experiments were run on a Dual Intel Xeon 4110 processor consisting of 16 cores (32 threads), with a clock speed of 2.1 GHz and DRAM of 128GB.
Although both Celis et al.'s DP algorithm and ALG can handle both upper and lower bound constraints on the representation on any number of groups, ALG has much faster average running times. ALG also runs faster than the greedy algorithm FA*IR.

\subsection{Evaluation at Consecutive Ranks}
Figures \ref{fig:german_25_block} to \ref{fig:compas_gender_block} show evaluation of the algorithms in consecutive ranks.
Plot (a) shows the evaluation at top $1$ to $20$ ranks, plot (b) shows the evaluation at top $21$ to $60$ ranks, and plot (c) shows the evaluation at top $61$ to $100$ ranks.
The plots (d), (e), (f) however still show underranking and nDCG in the top $20, 60$, and $100$ ranks respectively.
ALG achieves good trade-offs between underranking and representation in all different consecutive ranks compared to the baselines.
These experimental results show evidence of the trade-off between group fairness and underranking in the real-world datasets.

\begin{figure}[H]
	\begin{subfigure}[b]{\linewidth}
		\centering
		\includegraphics[scale=0.2]{results/legend.pdf} 
	\end{subfigure}
	
	\begin{subfigure}[b]{0.33\linewidth}
		\centering
		\includegraphics[scale=0.149]{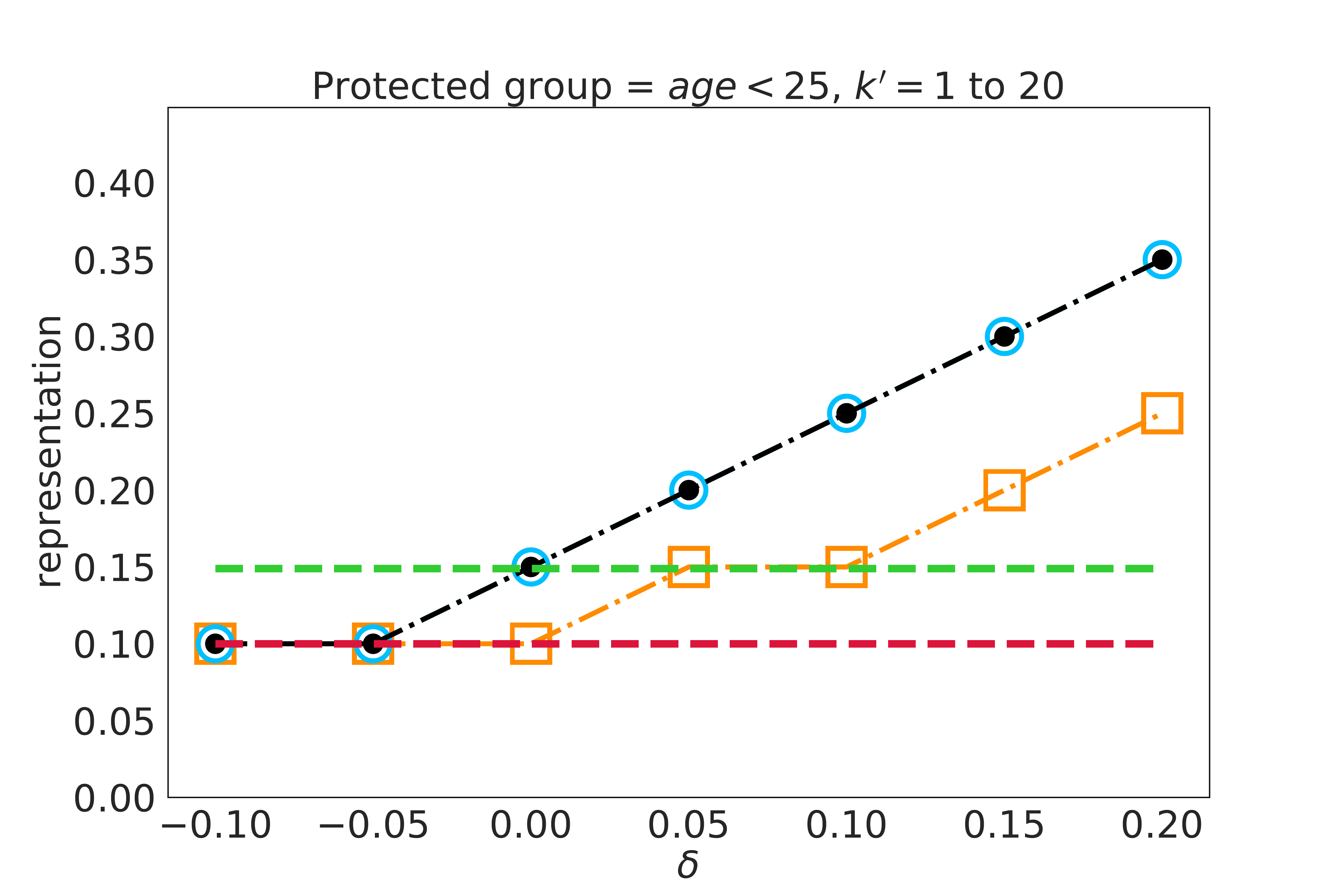} 
		\caption{Representation at top $20$ ranks.}
	\end{subfigure}
	\begin{subfigure}[b]{0.33\linewidth}
		\centering
		\includegraphics[scale=0.149]{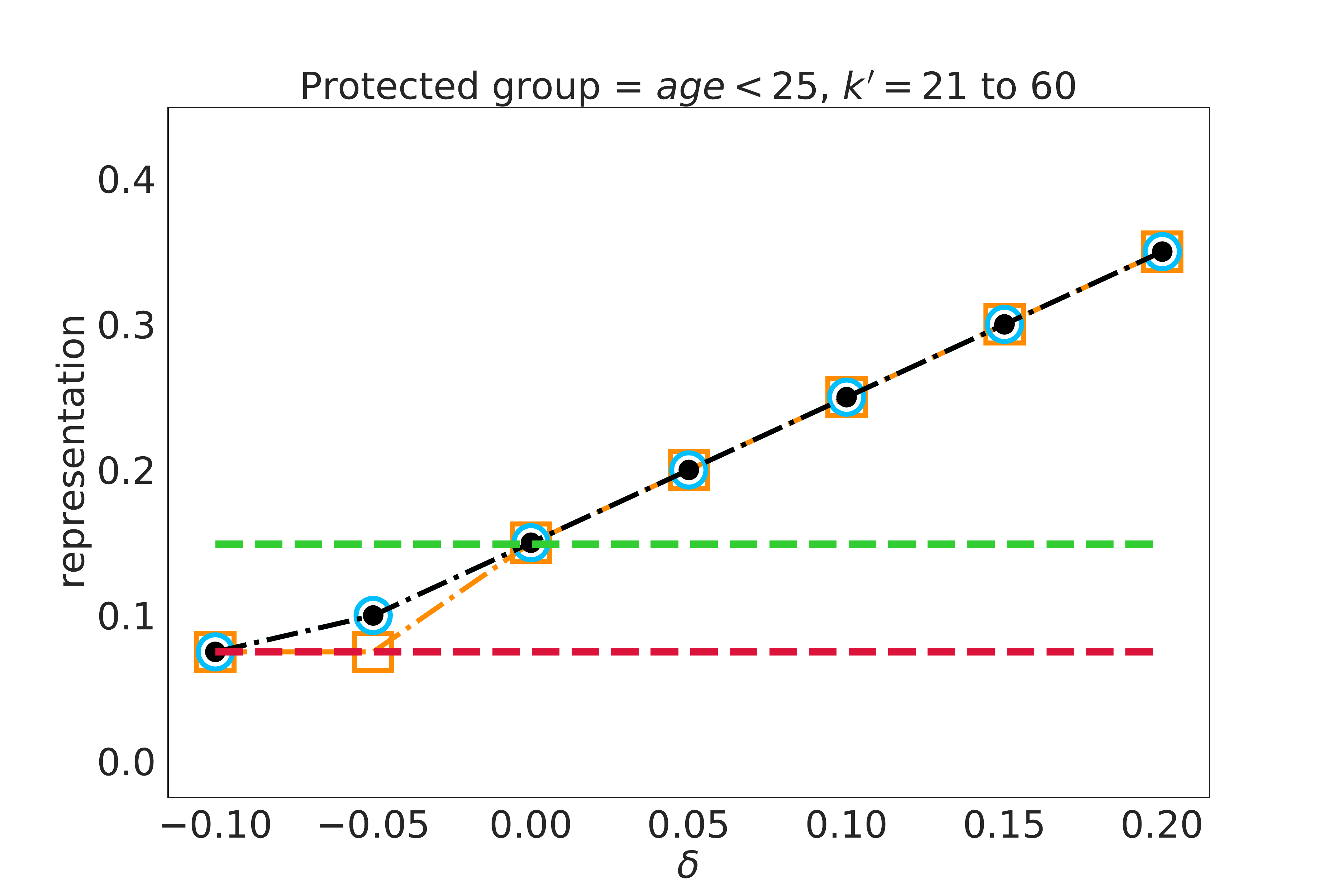} 
		\caption{Representation at top $40$ ranks.}
	\end{subfigure}
	\begin{subfigure}[b]{0.33\linewidth}
		\centering
		\includegraphics[scale=0.149]{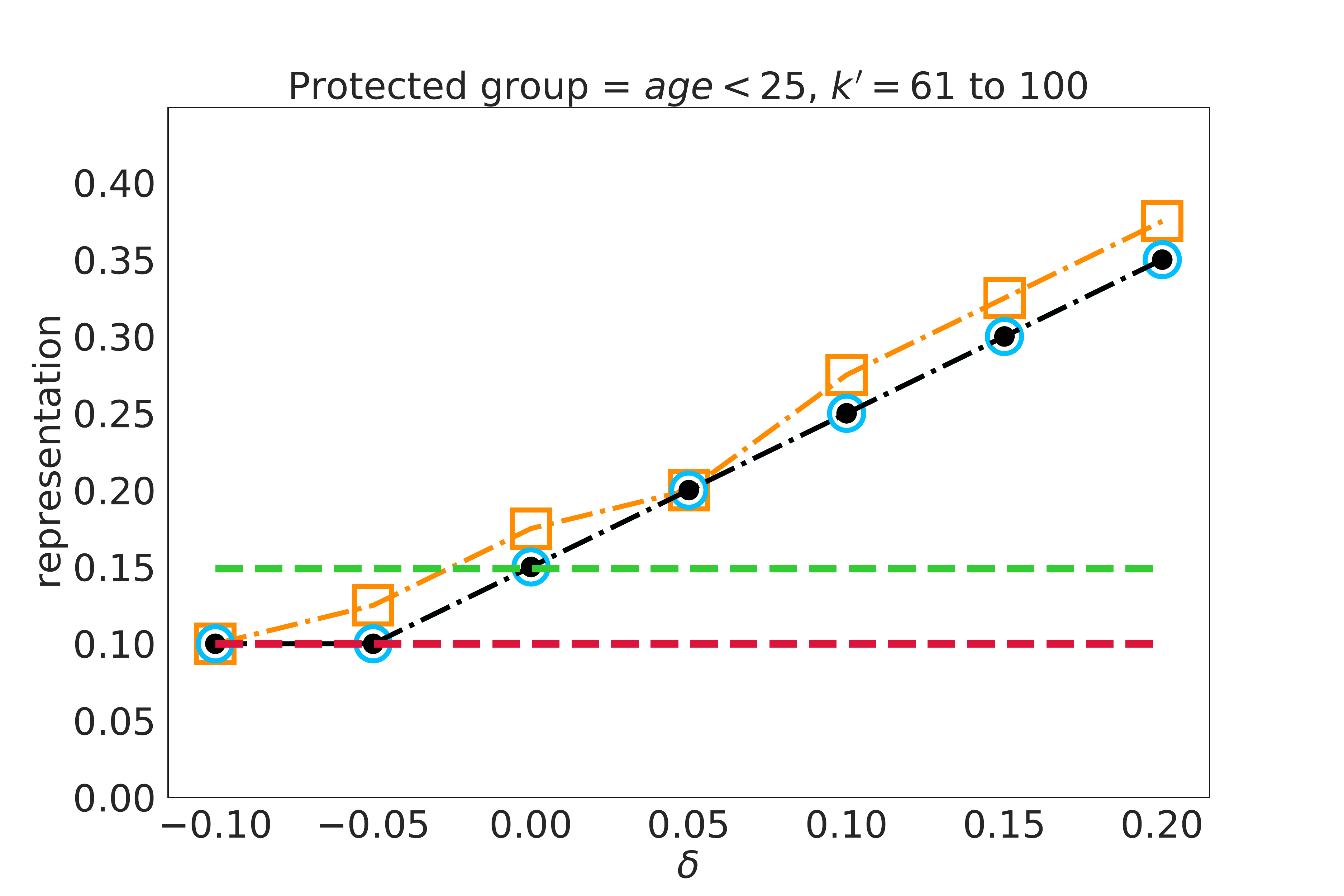} 
		\caption{Representation at top $100$ ranks.}
	\end{subfigure}
	
	\begin{subfigure}[b]{0.33\linewidth}
		\centering
		\includegraphics[scale=0.149]{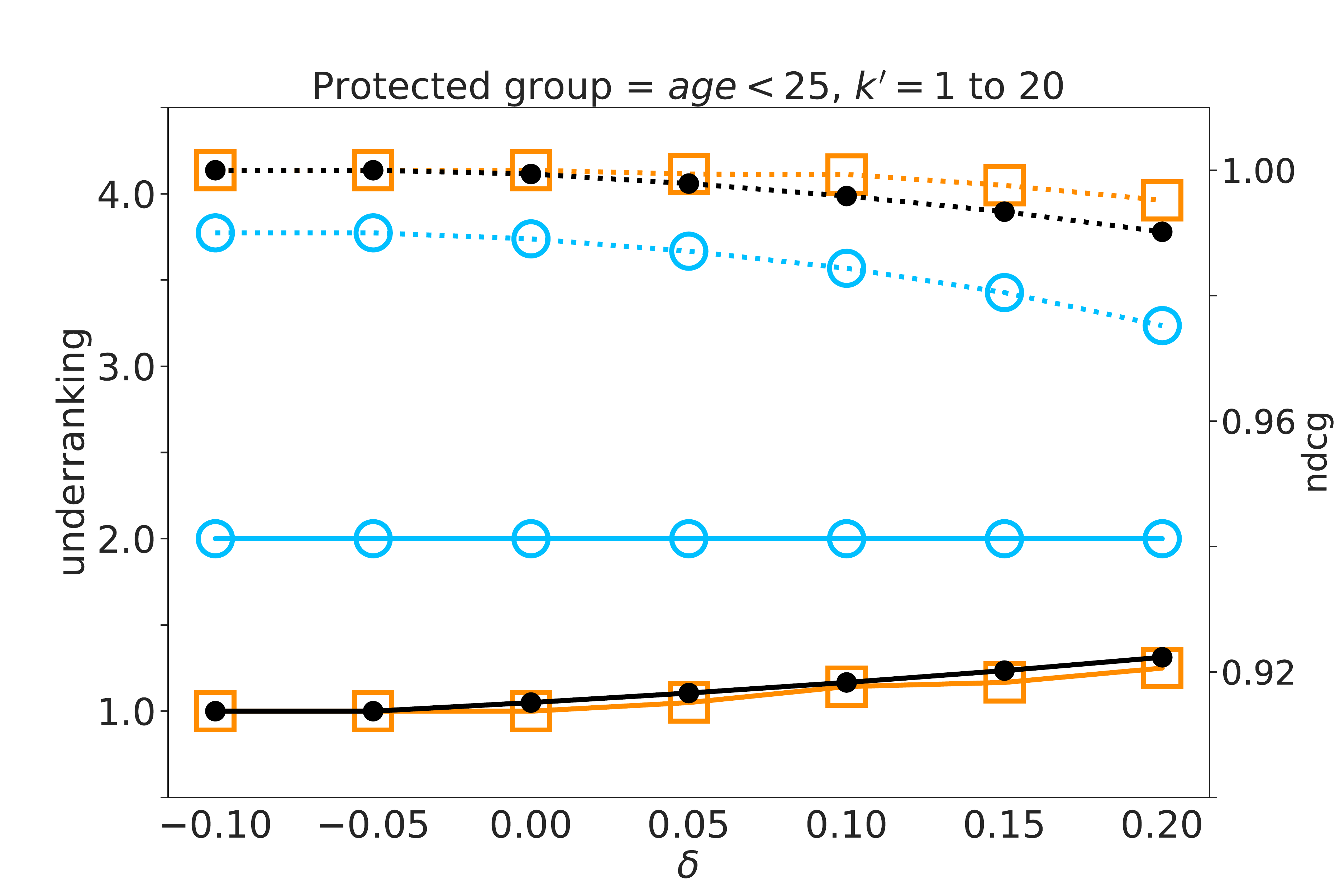} 
		\caption{Underranking, nDCG at top $20$ ranks.}
	\end{subfigure}
	\begin{subfigure}[b]{0.33\linewidth}
		\centering
		\includegraphics[scale=0.149]{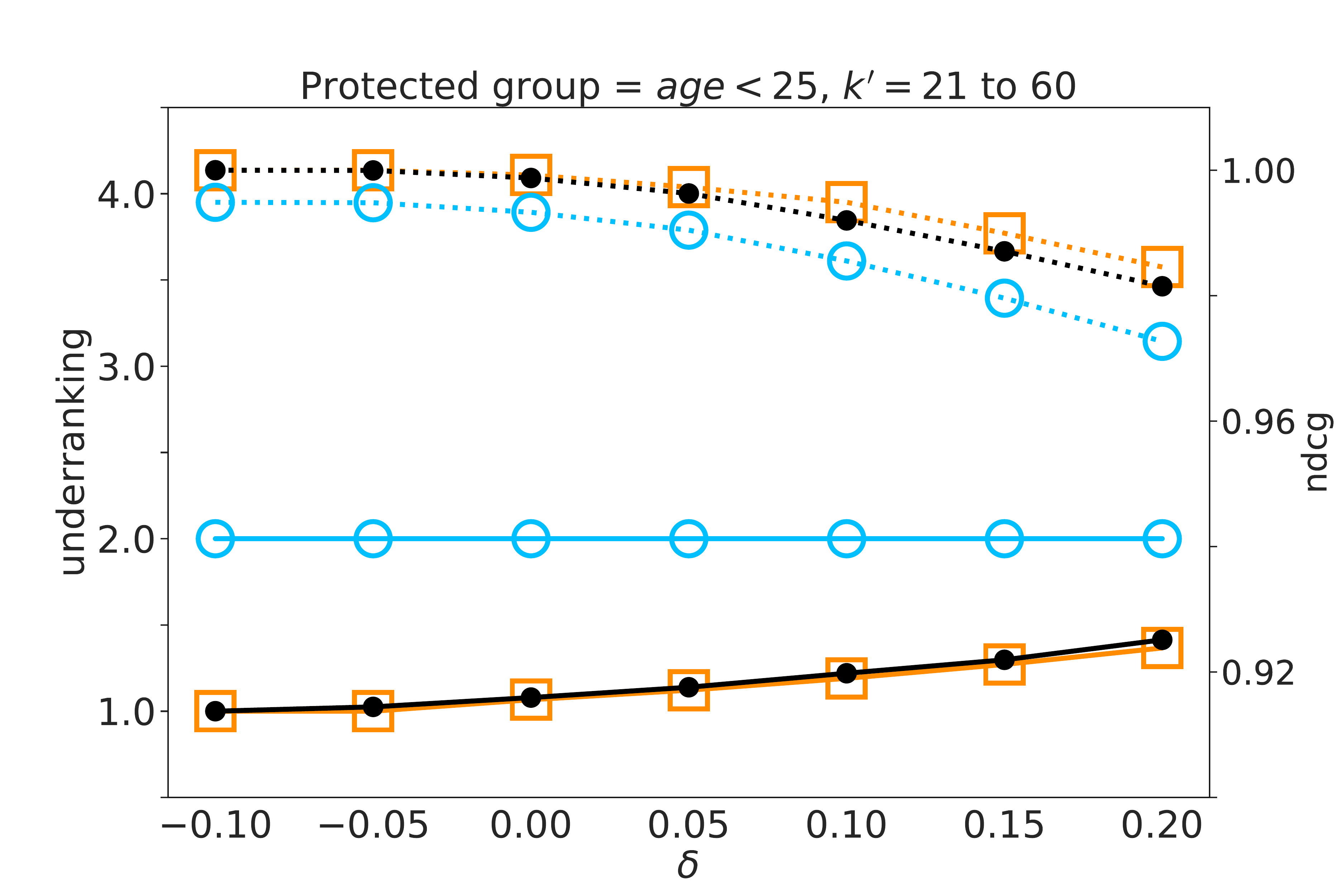} 
		\caption{Underranking, nDCG at top $40$ ranks.}
	\end{subfigure}
	\begin{subfigure}[b]{0.33\linewidth}
		\centering
		\includegraphics[scale=0.149]{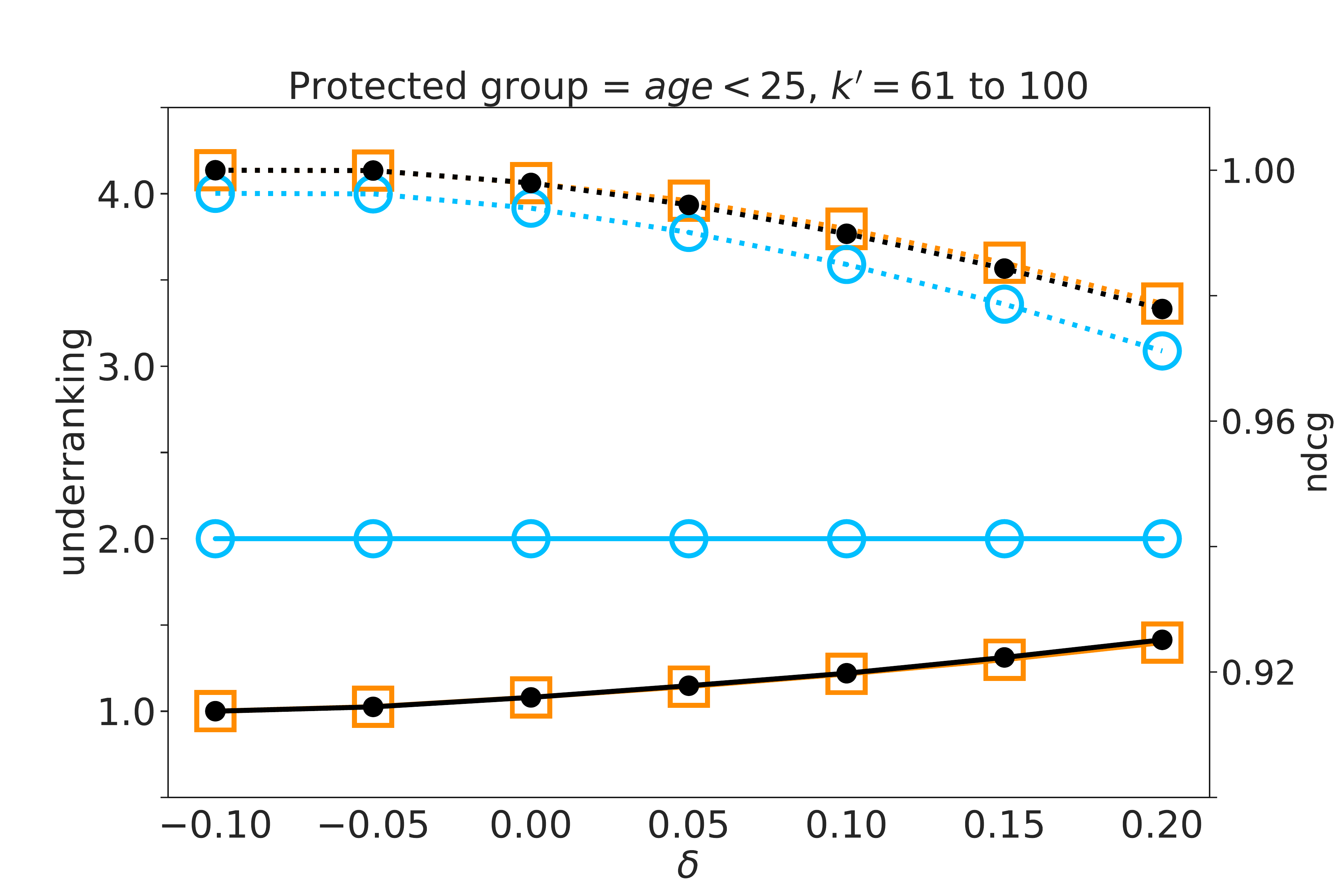} 
		\caption{Underranking, nDCG at top $100$ ranks.}
	\end{subfigure}
	\caption{Results on the German Credit Risk dataset with \textit{age}$<25$ as the protected group.}
	\label{fig:german_25_block}
\end{figure}

\begin{figure}[H]
	\begin{subfigure}[b]{\linewidth}
		\centering
		\includegraphics[scale=0.2]{results/legend.pdf} 
	\end{subfigure}
	
	\begin{subfigure}[b]{0.33\linewidth}
		\centering
		\includegraphics[scale=0.149]{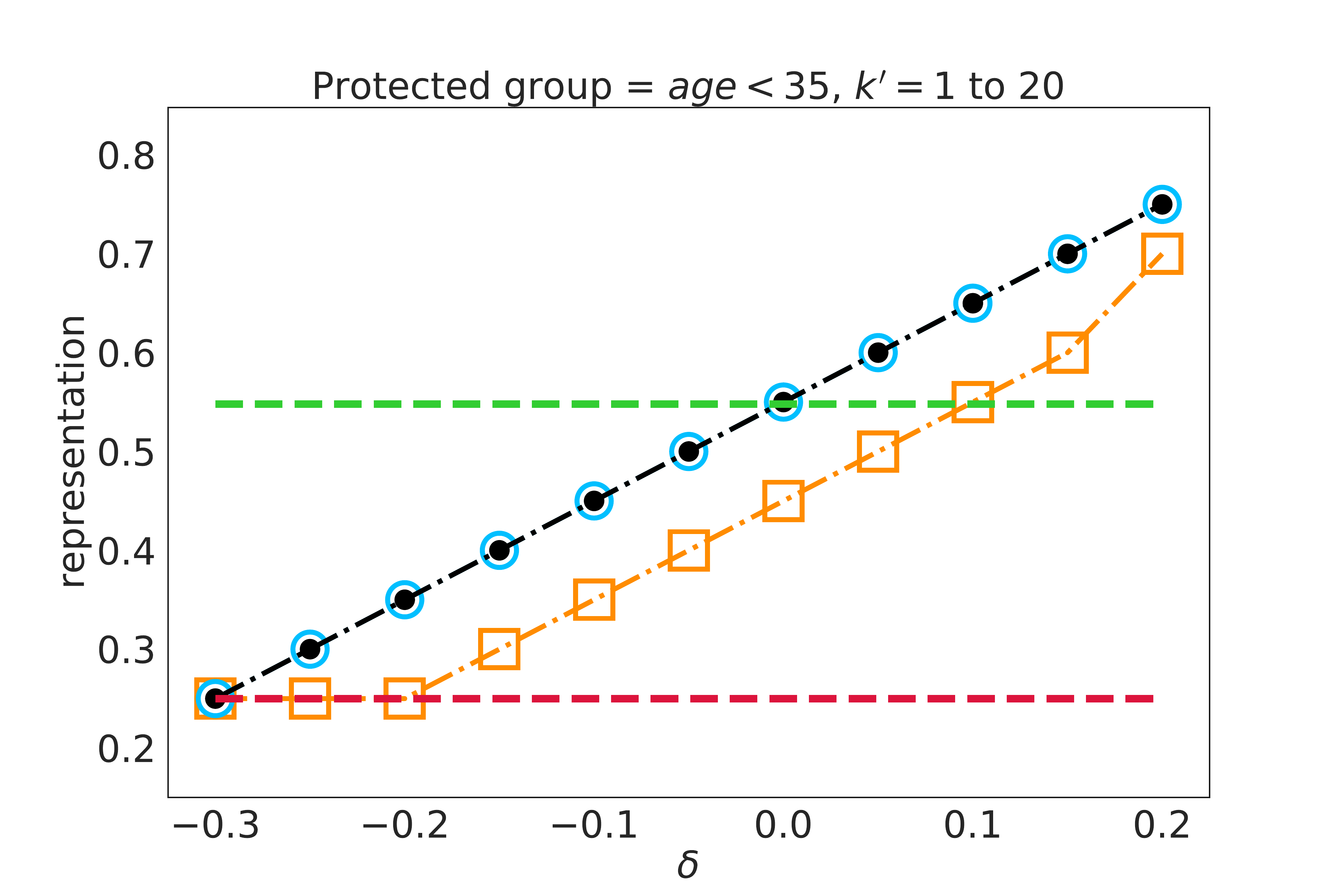} 
		\caption{Representation at ranks $1$ to $20$.}
	\end{subfigure}
	\begin{subfigure}[b]{0.33\linewidth}
		\centering
		\includegraphics[scale=0.149]{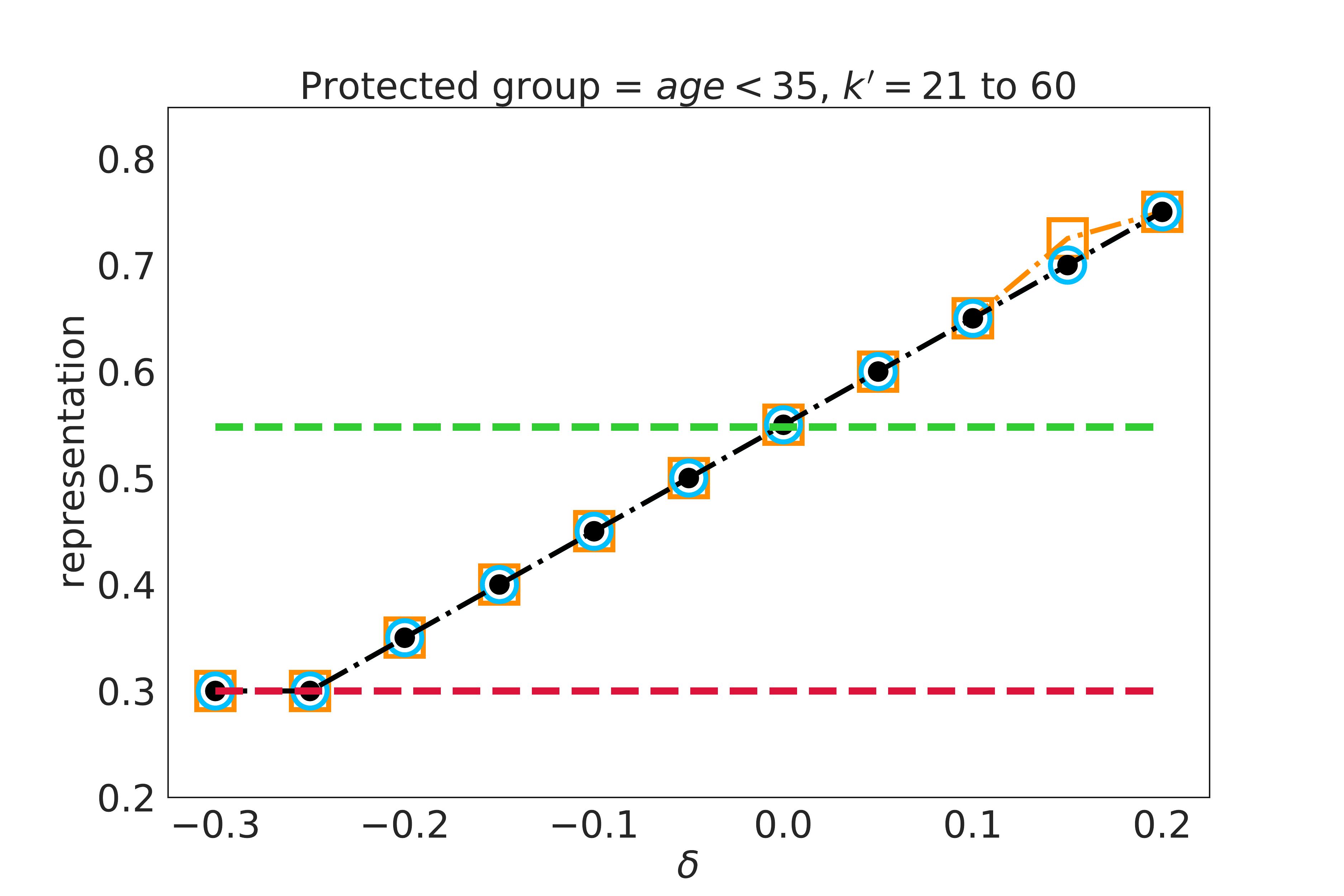} 
		\caption{Representation at ranks $21$ to $60$.}
	\end{subfigure}
	\begin{subfigure}[b]{0.33\linewidth}
		\centering
		\includegraphics[scale=0.149]{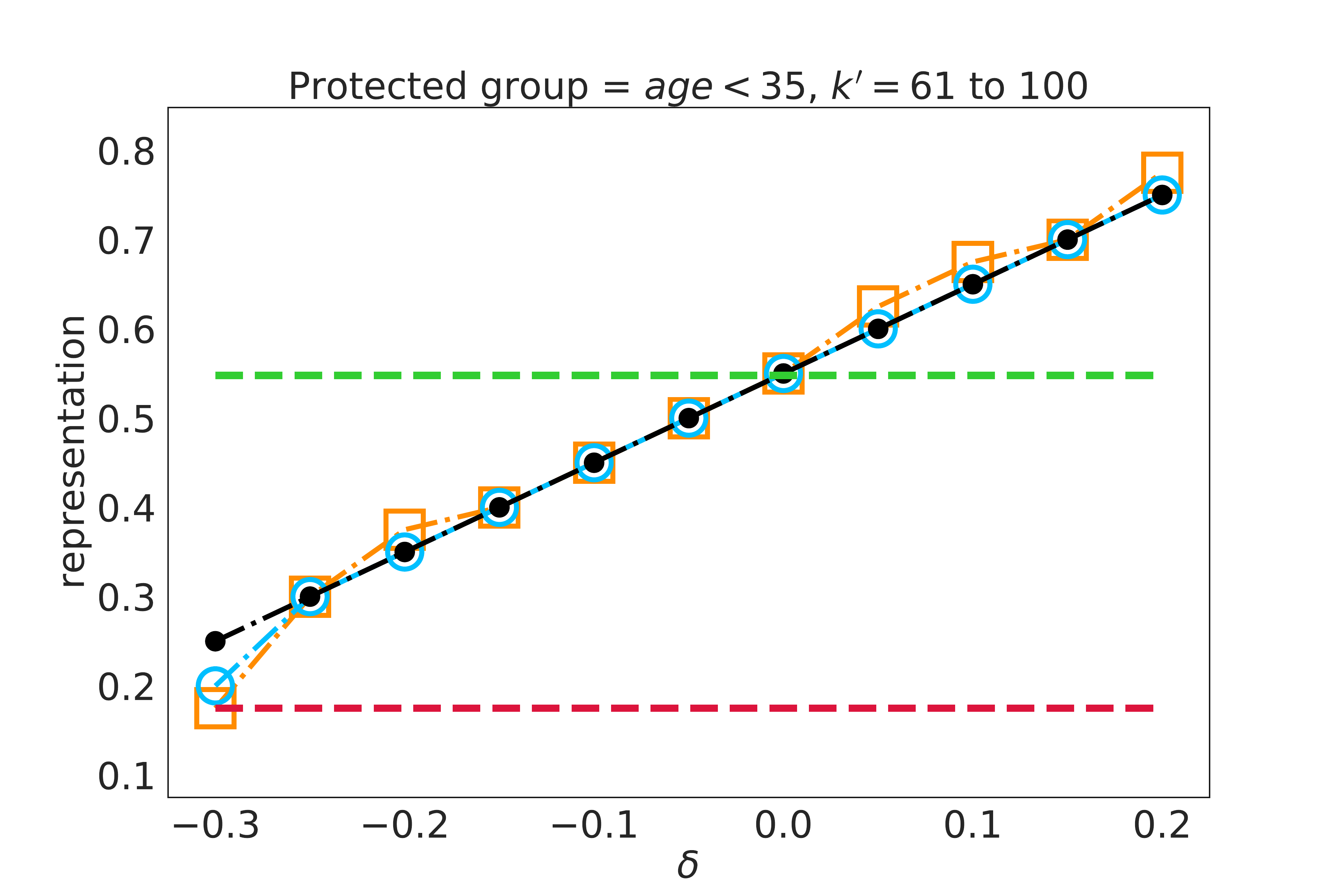} 
		\caption{Representation at ranks $61$ to $100$.}
	\end{subfigure}
	
	\begin{subfigure}[b]{0.33\linewidth}
		\centering
		\includegraphics[scale=0.149]{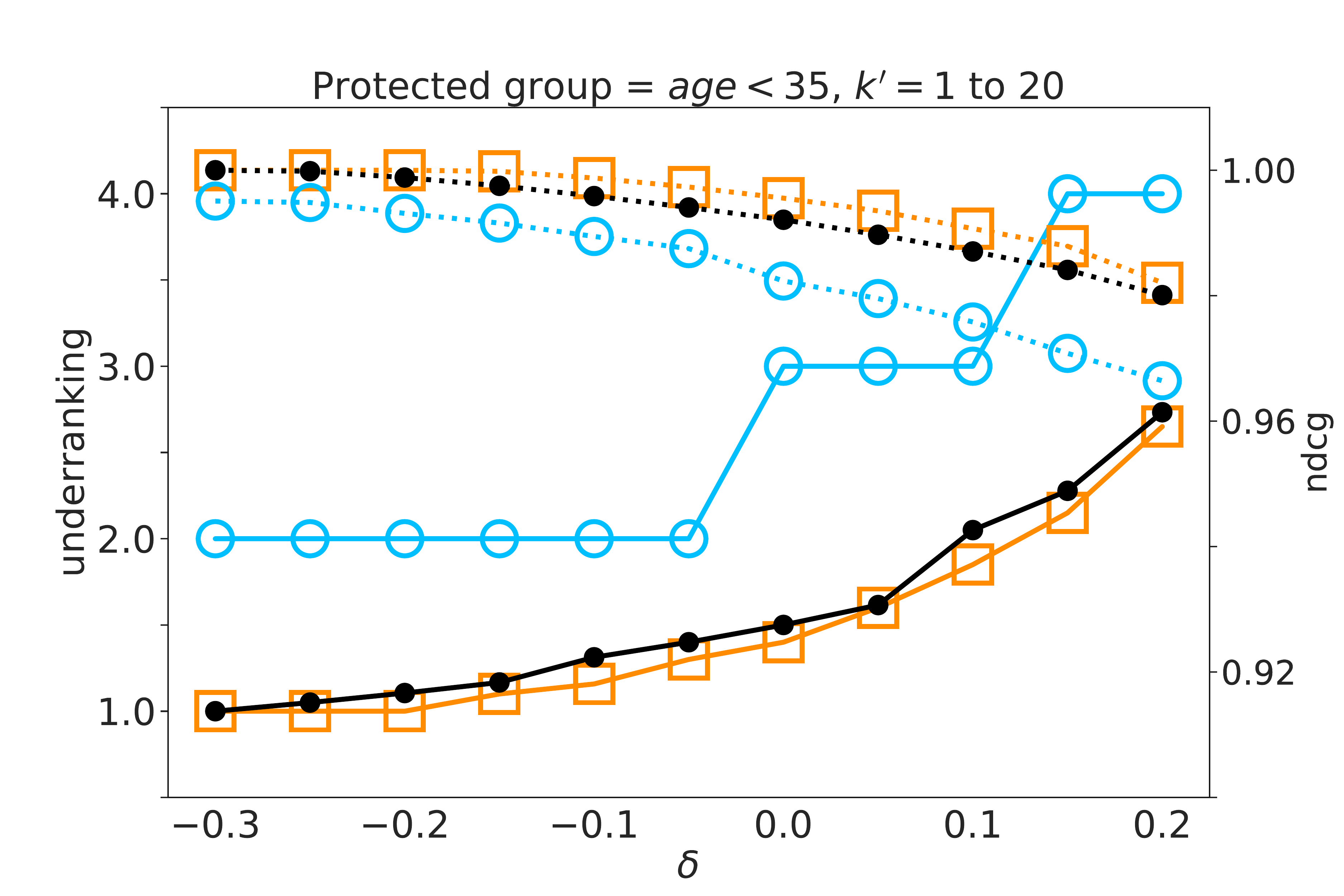} 
		\caption{Underranking, nDCG at top $20$ ranks.}
	\end{subfigure}
	\begin{subfigure}[b]{0.33\linewidth}
		\centering
		\includegraphics[scale=0.149]{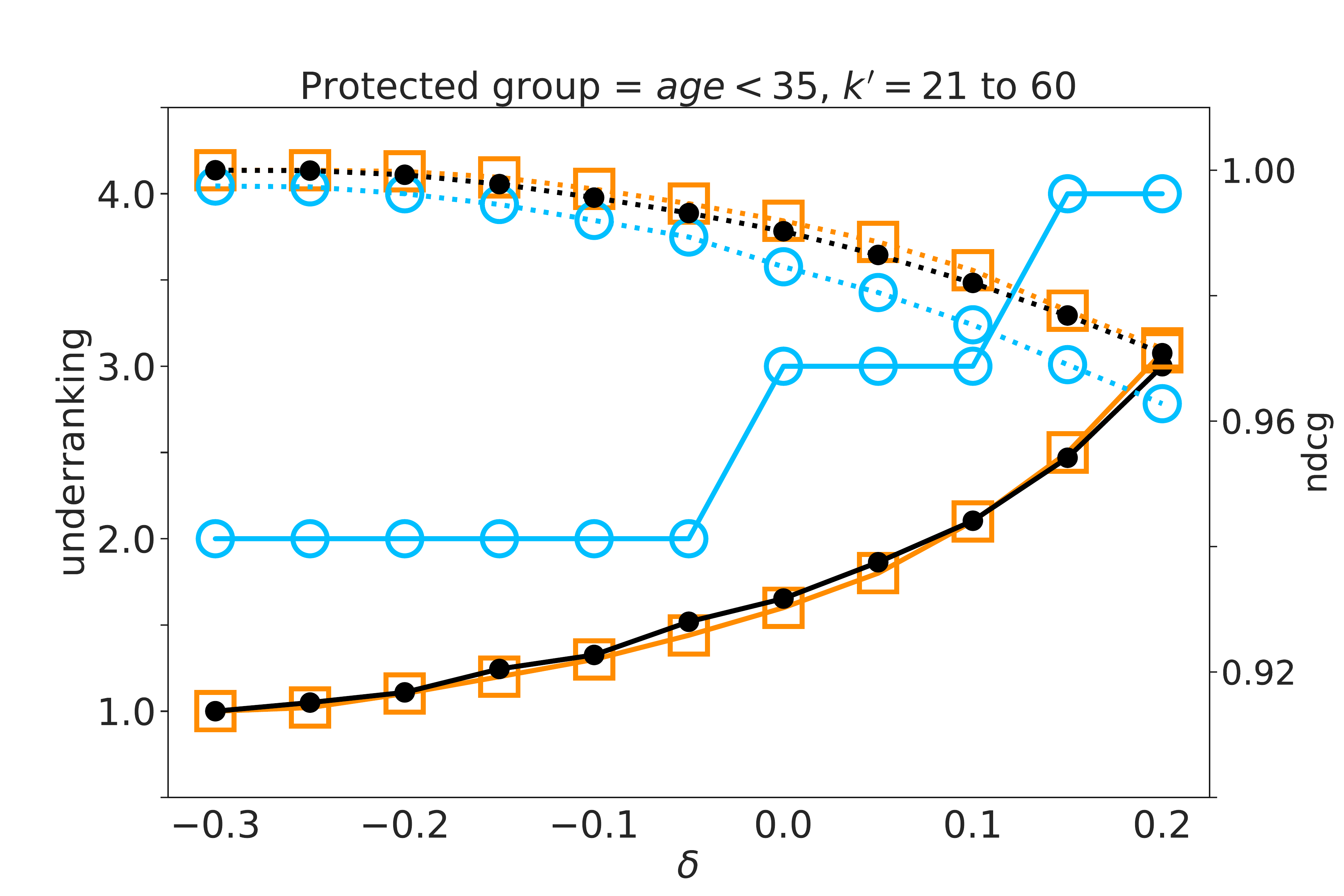} 
		\caption{Underranking, nDCG at top $60$ ranks.}
	\end{subfigure}
	\begin{subfigure}[b]{0.33\linewidth}
		\centering
		\includegraphics[scale=0.149]{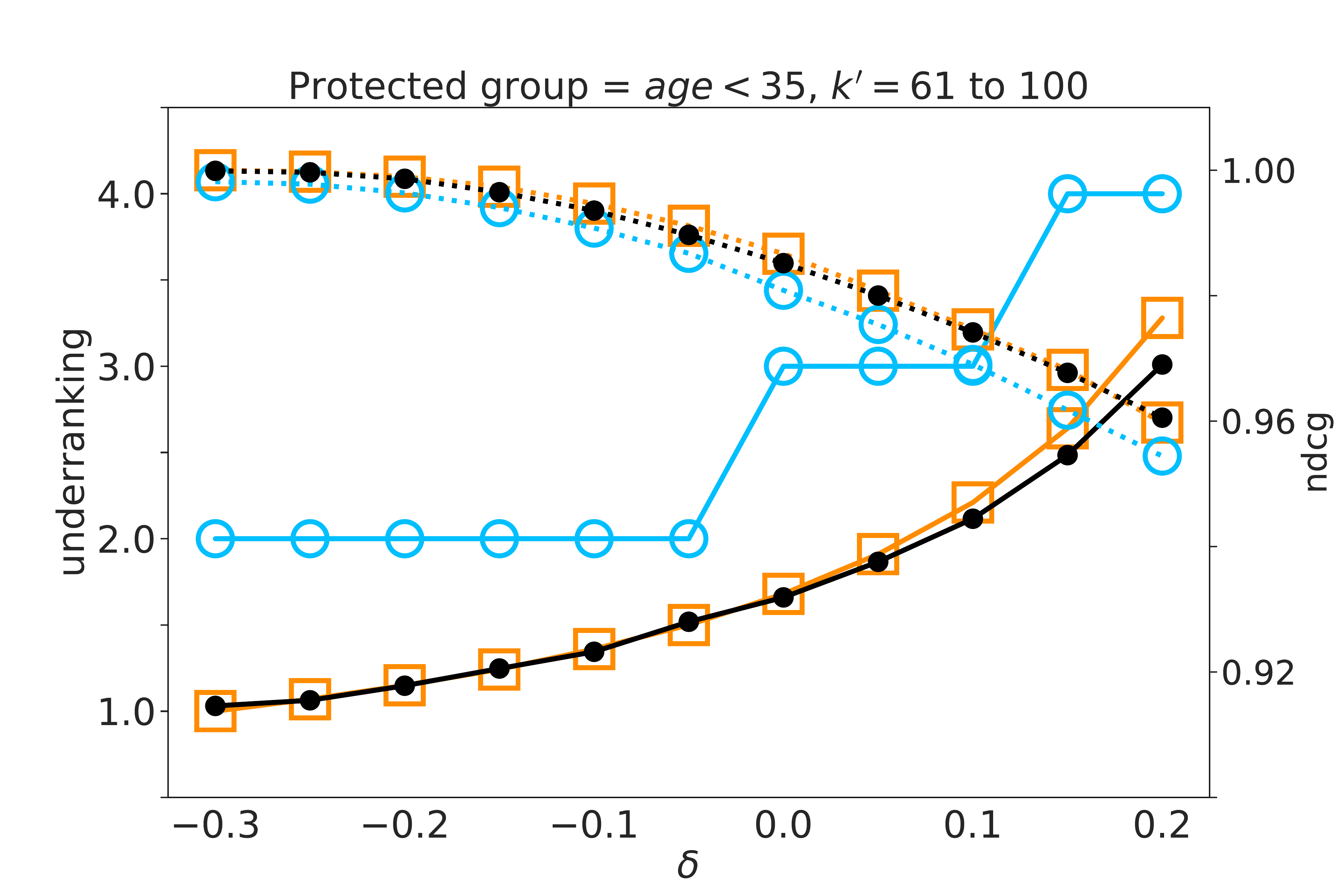} 
		\caption{Underranking, nDCG at top $100$ ranks.}
	\end{subfigure}
	\caption{Results on the German Credit Risk dataset with \textit{age}$<35$ as the protected group.}
	\label{fig:german_35_block}
\end{figure}

\begin{figure}[H]
	\begin{subfigure}[b]{\linewidth}
		\centering
		\includegraphics[scale=0.2]{results/legend.pdf} 
	\end{subfigure}
	
	\begin{subfigure}[b]{0.33\linewidth}
		\centering
		\includegraphics[scale=0.149]{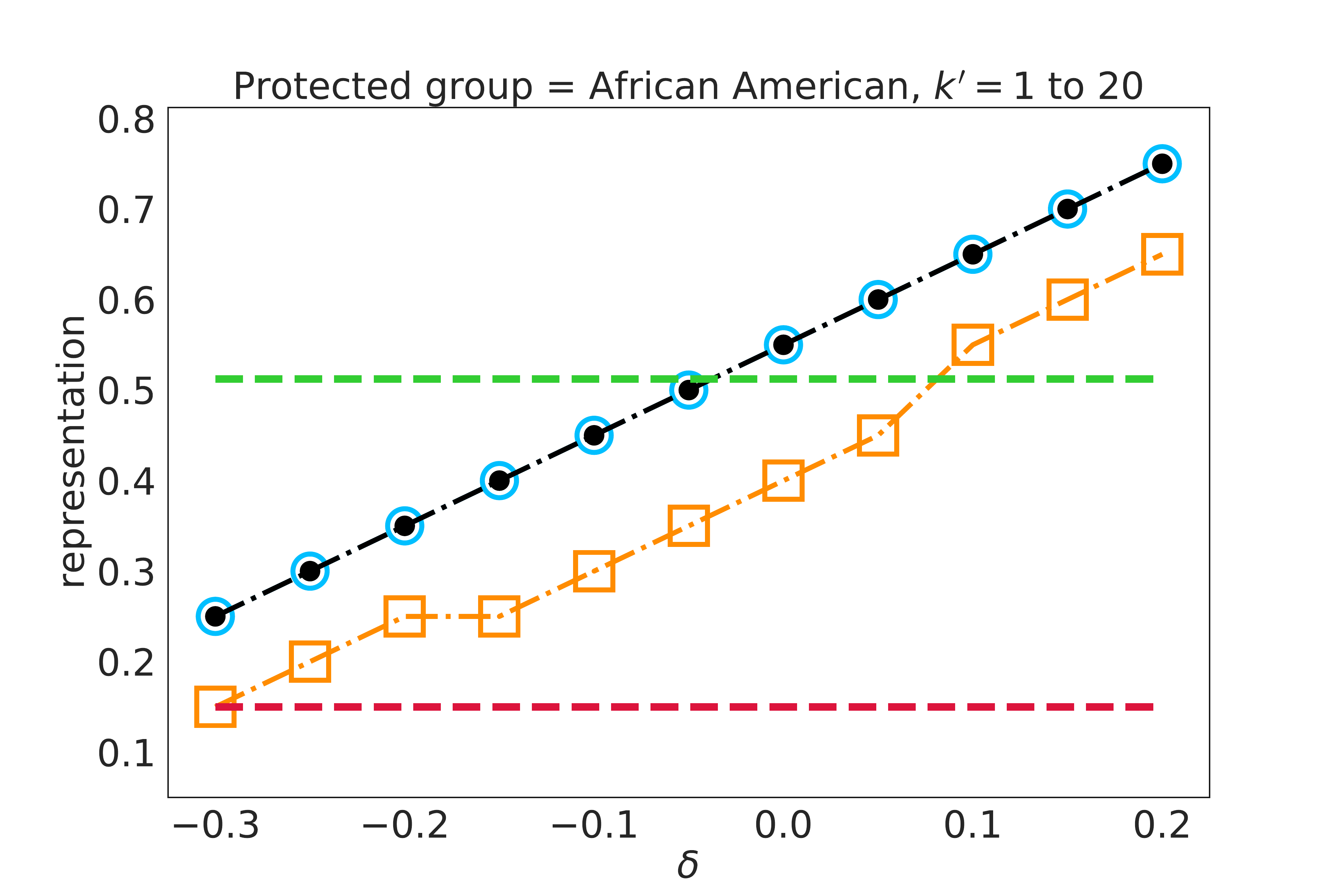} 
		\caption{Representation at ranks $1$ to $20$.}
	\end{subfigure}
	\begin{subfigure}[b]{0.33\linewidth}
		\centering
		\includegraphics[scale=0.149]{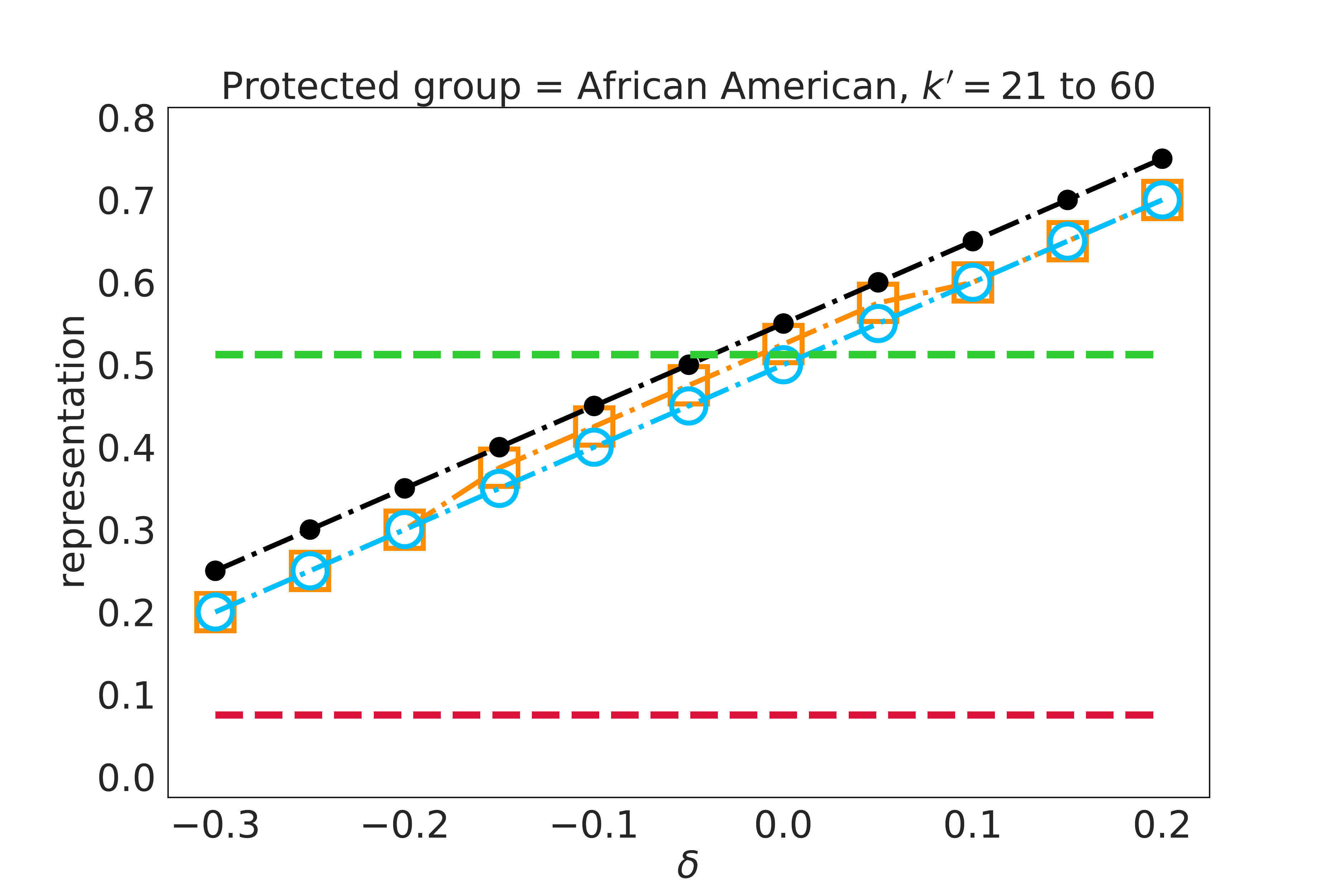} 
		\caption{Representation at ranks $21$ to $60$.}
	\end{subfigure}
	\begin{subfigure}[b]{0.33\linewidth}
		\centering
		\includegraphics[scale=0.149]{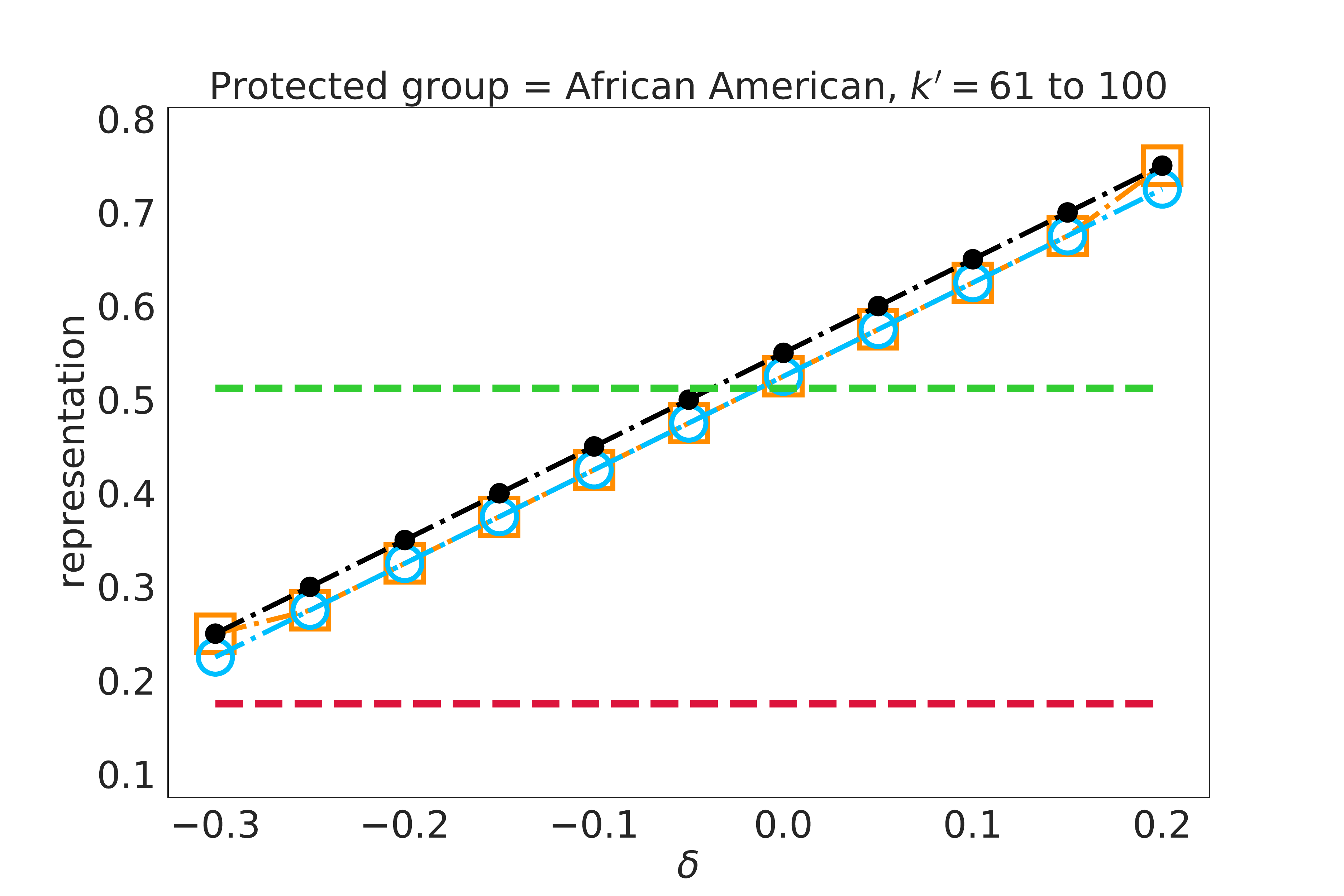} 
		\caption{Representation at ranks $61$ to $100$.}
	\end{subfigure}
	
	\begin{subfigure}[b]{0.33\linewidth}
		\centering
		\includegraphics[scale=0.149]{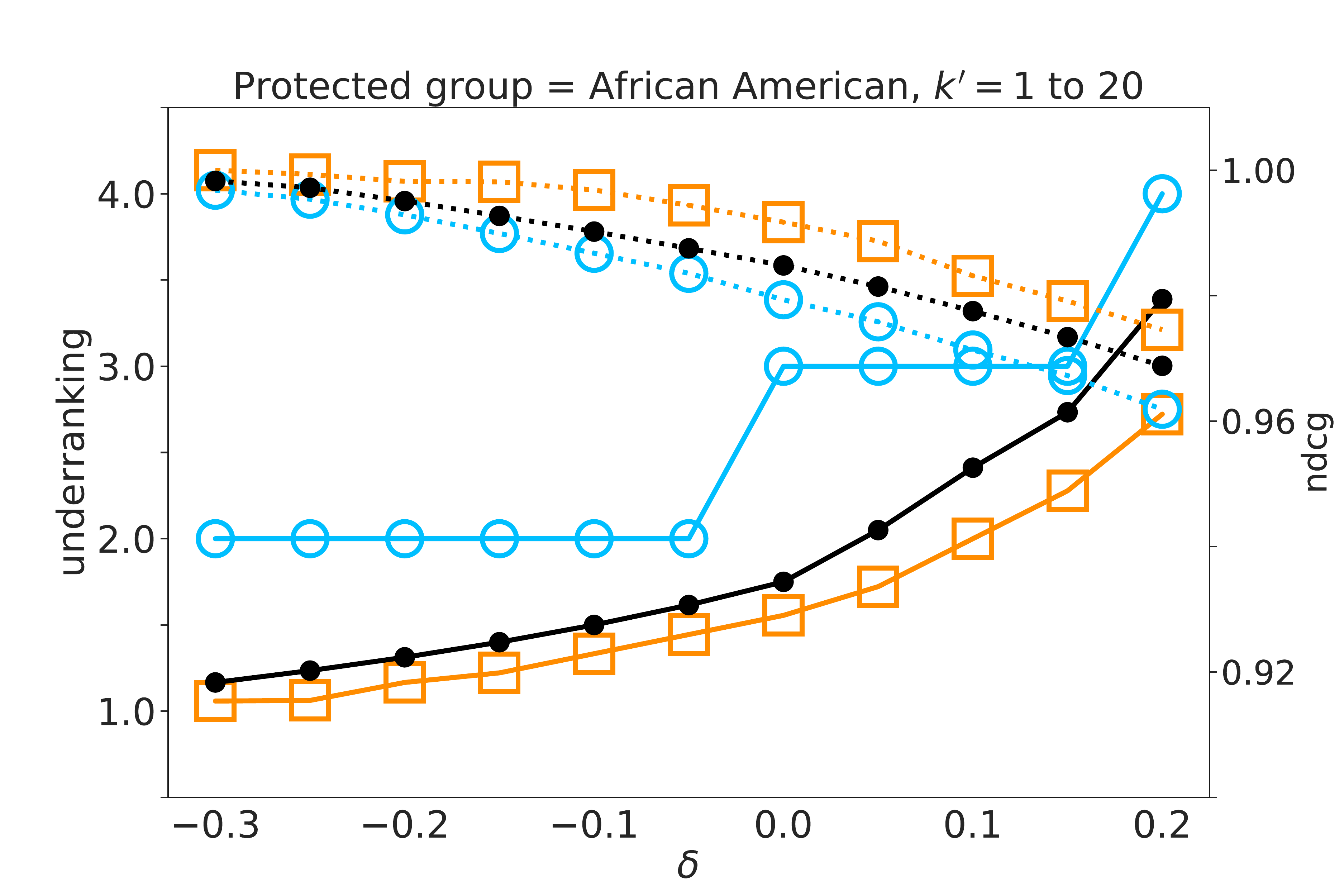} 
		\caption{Underranking, nDCG at top $20$ ranks.}
	\end{subfigure}
	\begin{subfigure}[b]{0.33\linewidth}
		\centering
		\includegraphics[scale=0.149]{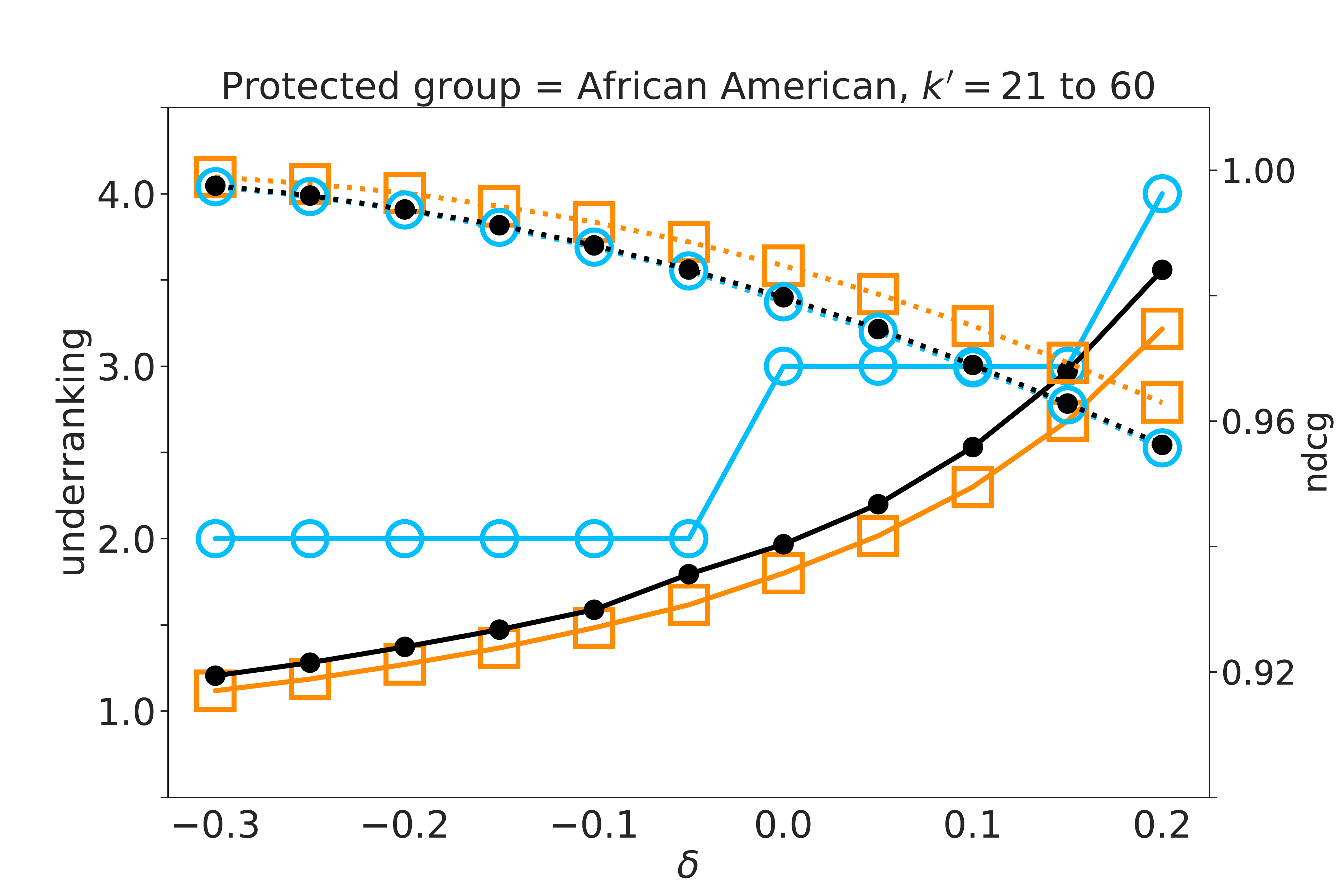} 
		\caption{Underranking, nDCG at top $60$ ranks.}
	\end{subfigure}
	\begin{subfigure}[b]{0.33\linewidth}
		\centering
		\includegraphics[scale=0.149]{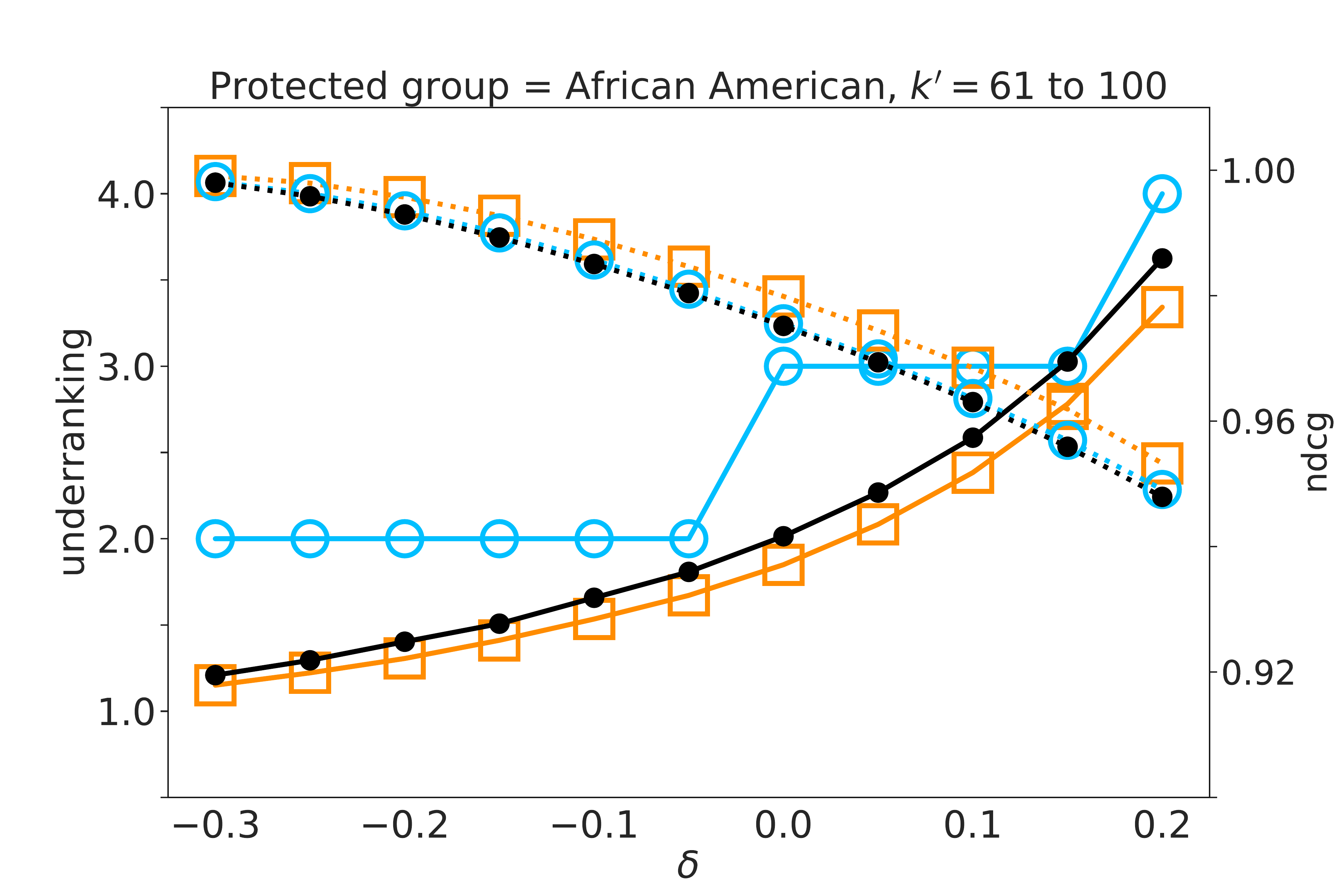} 
		\caption{Underranking, nDCG at top $100$ ranks.}
	\end{subfigure}
	\caption{Results on the COMPAS Recidivism dataset with \textit{African American} as the protected group.}
	\label{fig:compas_race_block}
\end{figure}

\begin{figure}[H]
	\begin{subfigure}[b]{\linewidth}
		\centering
		\includegraphics[scale=0.2]{results/legend.pdf} 
	\end{subfigure}
	
	\begin{subfigure}[b]{0.33\linewidth}
		\centering
		\includegraphics[scale=0.149]{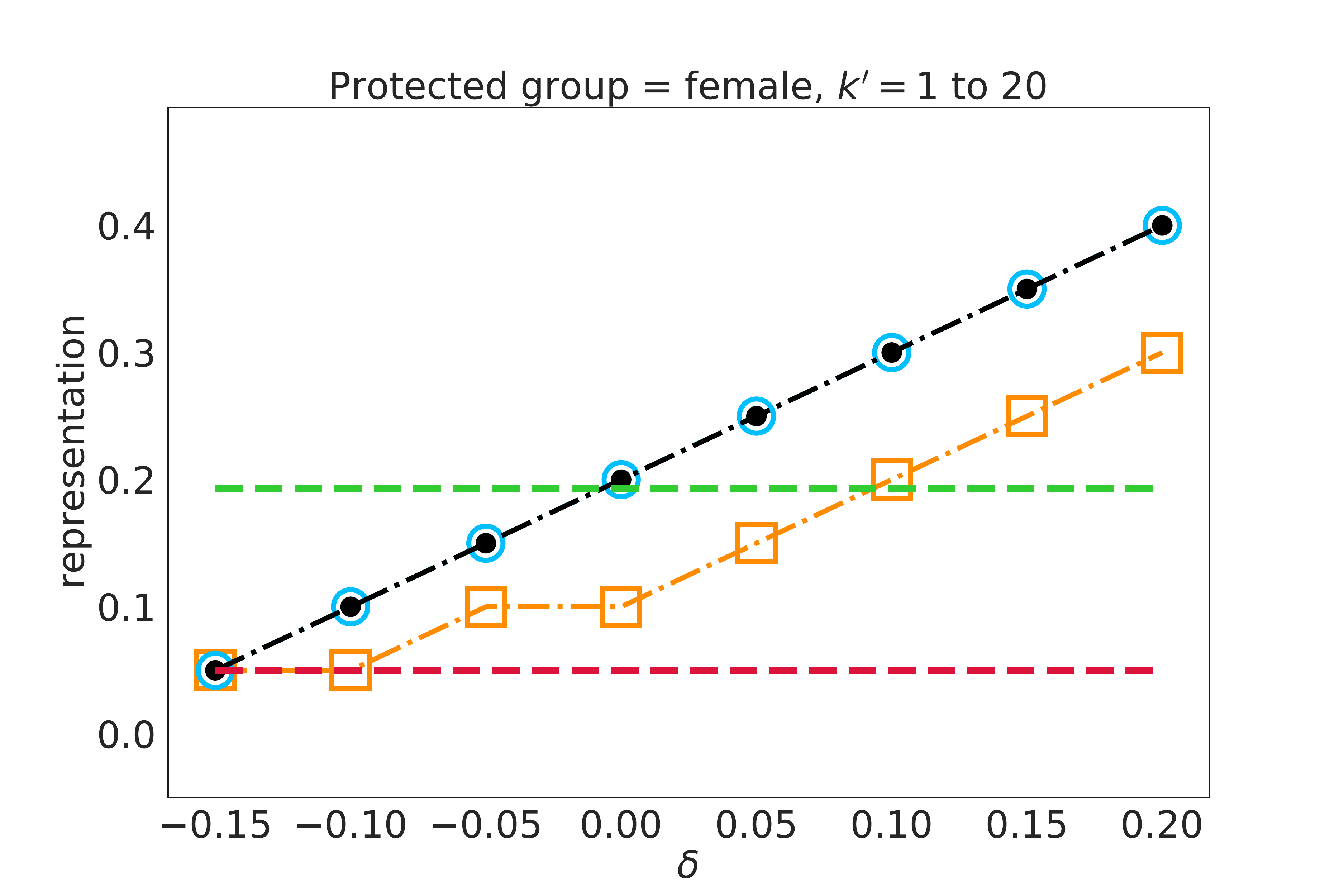} 
		\caption{Representation at ranks $1$ to $20$.}
	\end{subfigure}
	\begin{subfigure}[b]{0.33\linewidth}
		\centering
		\includegraphics[scale=0.149]{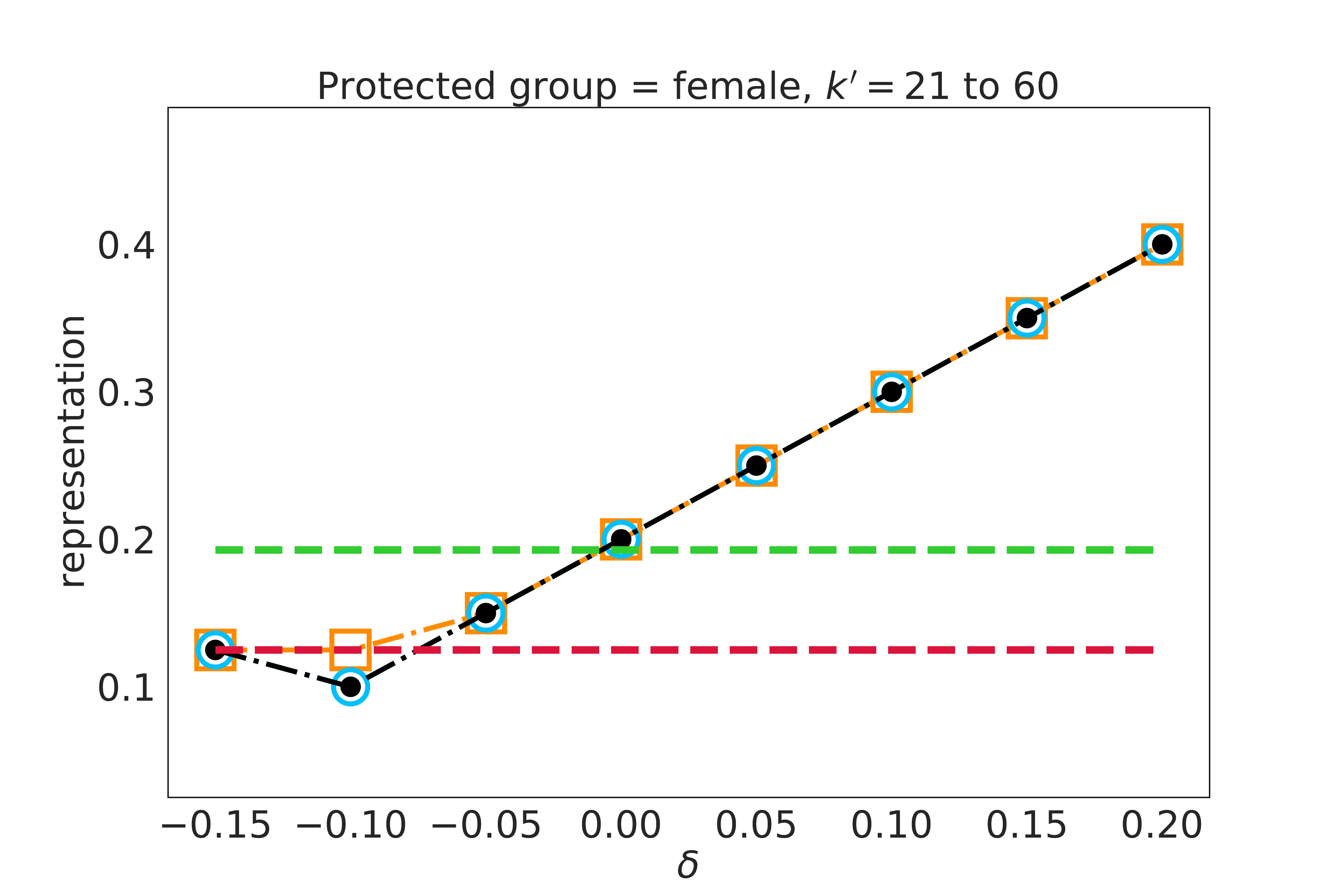} 
		\caption{Representation at ranks $21$ to $60$.}
	\end{subfigure}
	\begin{subfigure}[b]{0.33\linewidth}
		\centering
		\includegraphics[scale=0.149]{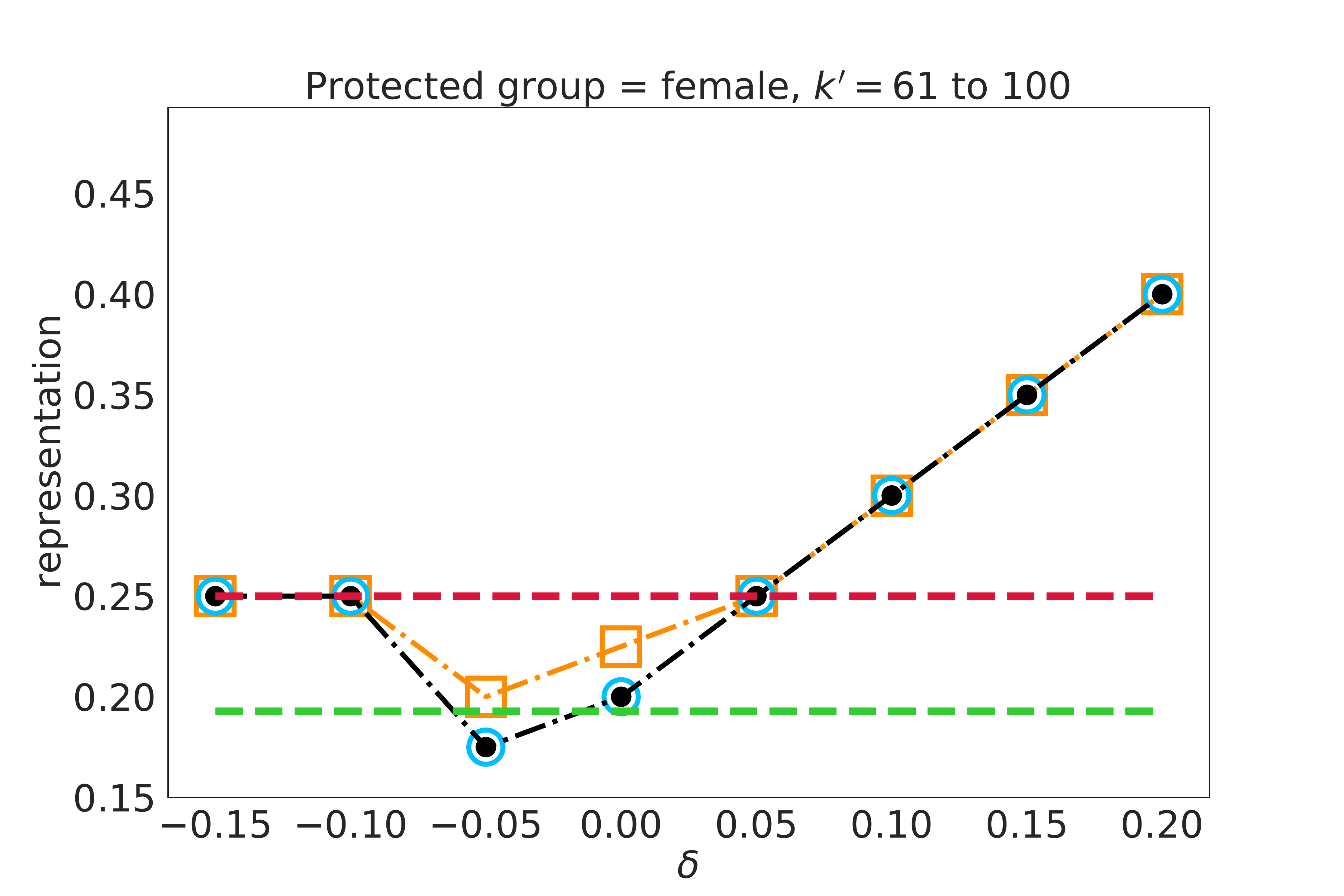} 
		\caption{Representation at ranks $61$ to $100$.}
	\end{subfigure}
	
	\begin{subfigure}[b]{0.33\linewidth}
		\centering
		\includegraphics[scale=0.149]{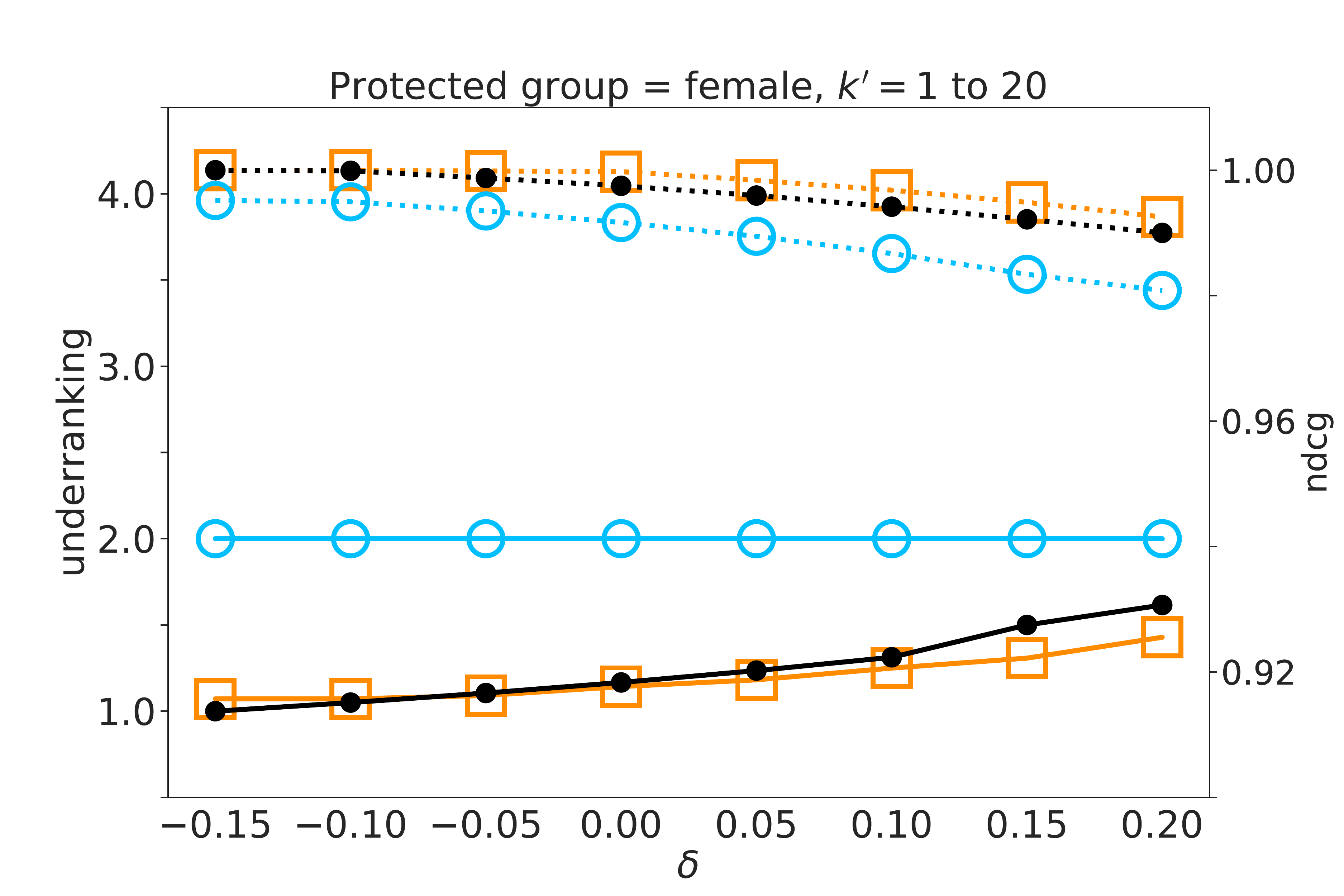} 
		\caption{Underranking, nDCG at top $20$ ranks.}
	\end{subfigure}
	\begin{subfigure}[b]{0.33\linewidth}
		\centering
		\includegraphics[scale=0.149]{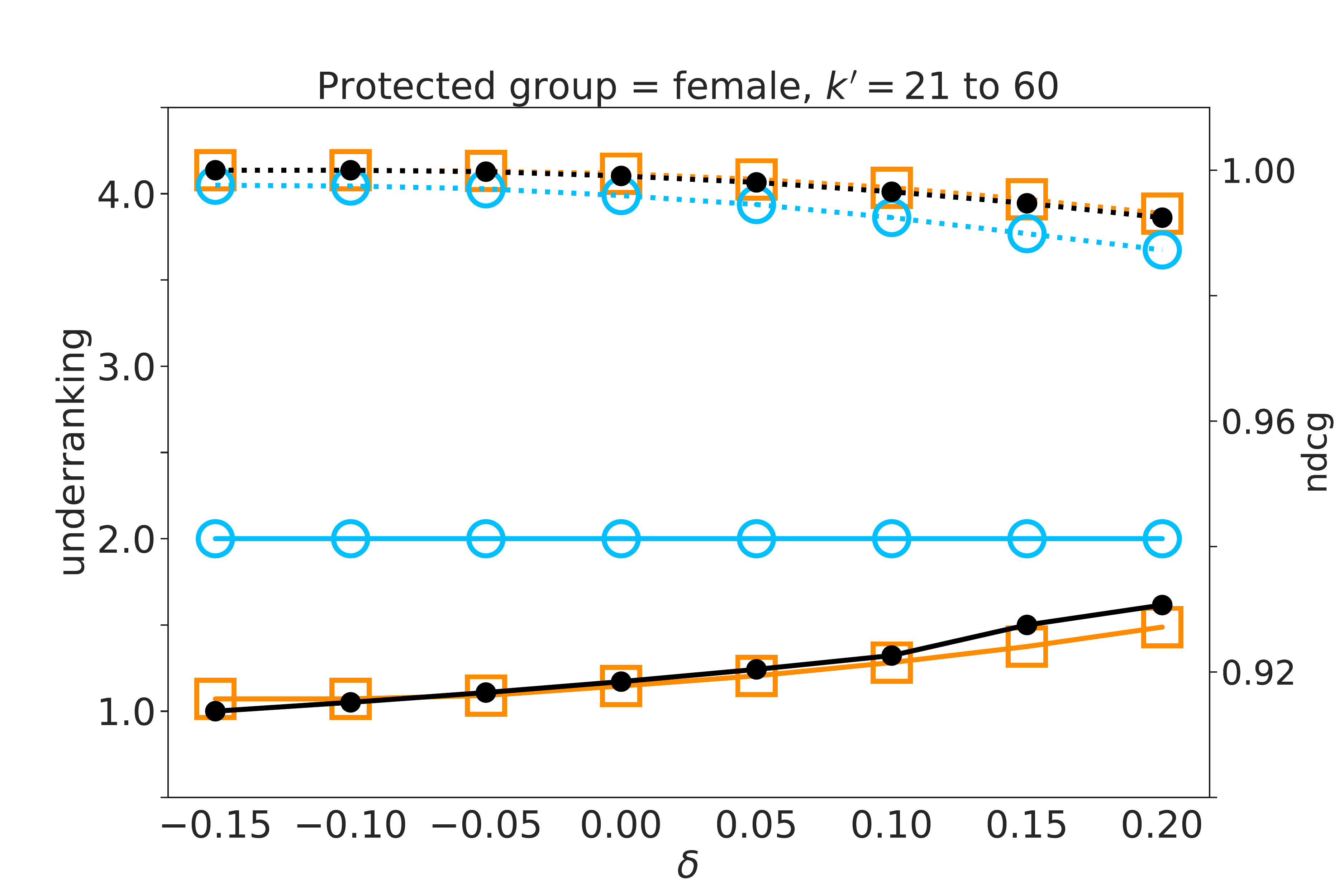} 
		\caption{Underranking, nDCG at top $60$ ranks.}
	\end{subfigure}
	\begin{subfigure}[b]{0.33\linewidth}
		\centering
		\includegraphics[scale=0.149]{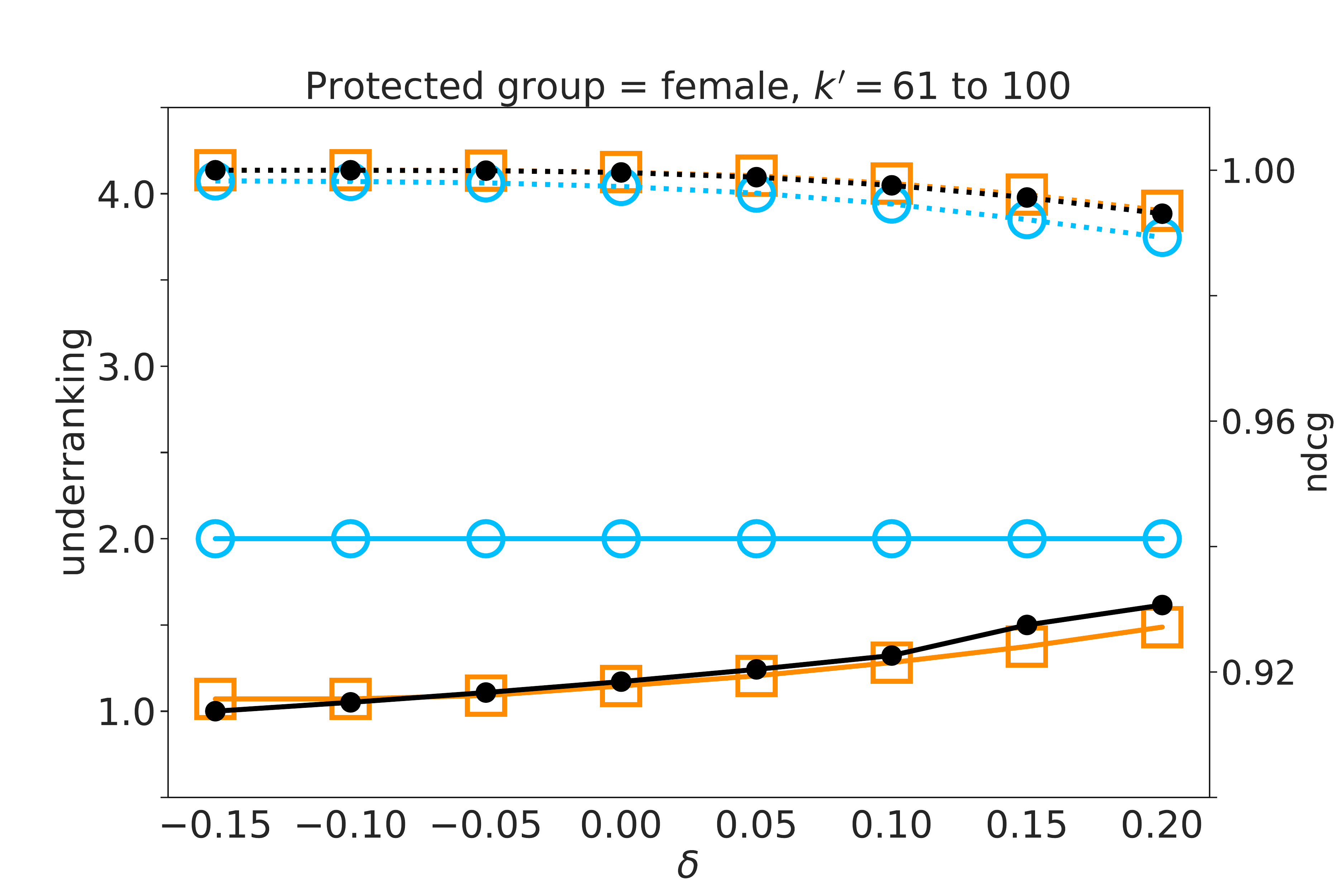} 
		\caption{Underranking, nDCG at top $100$ ranks.}
	\end{subfigure}
	\caption{Results on the COMPAS Recidivism dataset with \textit{Female} as the protected group.}
	\label{fig:compas_gender_block}
\end{figure}

\subsection{Reverse Score based Ranking as True Ranking}

In \Cref{fig:german_25_rev} to \Cref{fig:compas_race_rev}, the true ranking is based on the negative score (or relavance) of the items.
In the German Credit dataset, we use negative Schufa score for ranking, i.e., the individual with highest negative Schufa score will be ranked at the top and so on.
When using the negative scores to obtain the true ranking, we observe that the individuals from the protected group are overrepresented in the top few ranks. 
In the plots (a) - (c) in these figures, the dahsed red line (representation of the group in the top $k'$ ranks) is significantly above the dashed green line (representation of the group in the entire dataset).
In this case, we can achieve proportional representation by placing upper bound constraints on the representation of the candidates from the protected group.
We run Celis et al.'s DP algorithm with the constraints $\forall k' \in [k], L_{1, k'} = 0, L_{2, k'} = 0, U_{1, k'} = \floor{(p_1^* + \delta)\cdot k'}, U_{2, k'} = k'$ and $k = 100$, where subscript $1$ represents protected group and subscript $2$ represents non-protected group. 
We run ALG with group fairness requirements $\paren{\boldsymbol{\alpha} = (p_1^* + \delta, 1), \boldsymbol{\beta} = (0, 0), k=100}$
In this COMPAS dataset, we use recidivism risk score to obtain true ranking as opposed to the negative recidivism score in \Cref{fig:compas_race} and \Cref{fig:compas_gender}.
We observe here also that the protected groups \textit{African American} and \textit{female} is overrepresented in the top $k'$ ranks.
Hence we use upper bound constraints on these groups and all other constraints are removed.

We again observe a trade-off between group fairness and underranking and notice that in all the plots, ALG achieves better underranking than Celis et al.'s DP algorithm and also achieves very good group fairness.

Figures \ref{fig:german_25_rev_block} to \ref{fig:compas_race_rev_block} are the evaluation of the algorithms for consecutive ranks.
We again observe trade-off between representation and underranking.
ALG achieves good representation in the consecutive ranks while achieving better underranking than the baselines.

\begin{figure}[H]
	\begin{subfigure}[b]{\linewidth}
		\centering
		\includegraphics[scale=0.2]{results/legend.pdf} 
	\end{subfigure}
	
	\begin{subfigure}[b]{0.33\linewidth}
		\centering
		\includegraphics[scale=0.149]{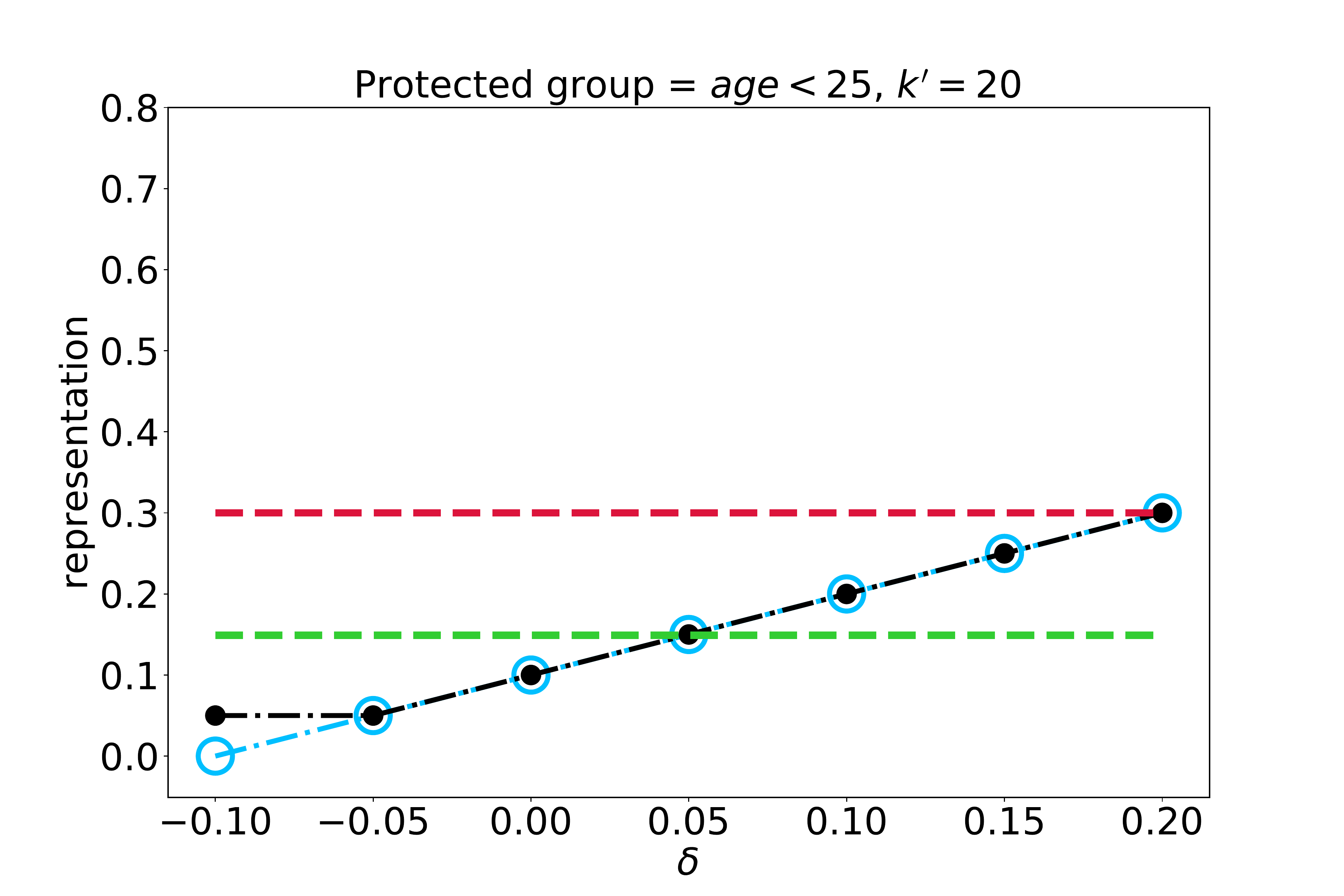} 
		\caption{Representation at top $20$ ranks.}
	\end{subfigure}
	\begin{subfigure}[b]{0.33\linewidth}
		\centering
		\includegraphics[scale=0.149]{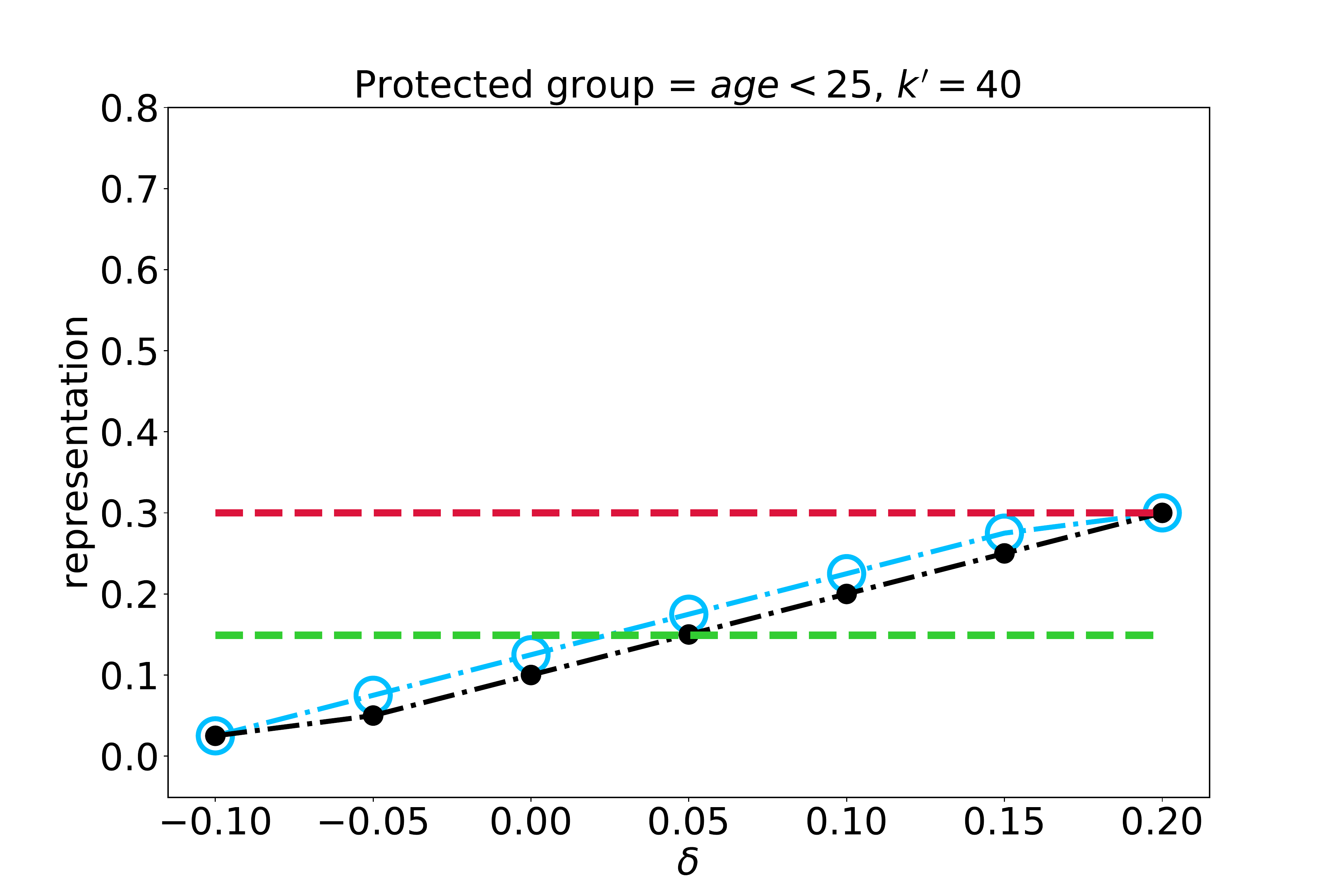} 
		\caption{Representation at top $40$ ranks.}
	\end{subfigure}
	\begin{subfigure}[b]{0.33\linewidth}
		\centering
		\includegraphics[scale=0.149]{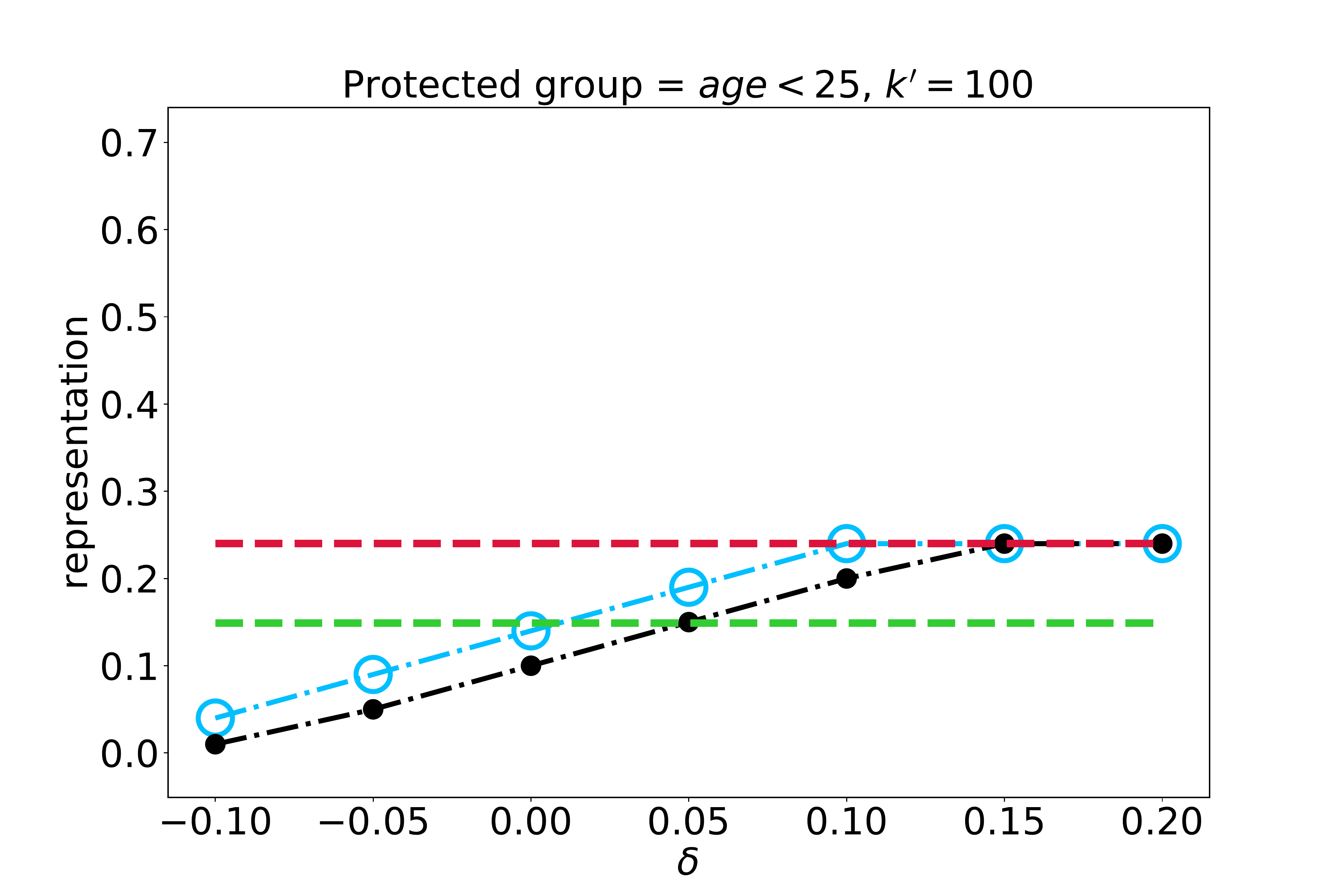} 
		\caption{Representation at top $100$ ranks.}
	\end{subfigure}
	
	\begin{subfigure}[b]{0.33\linewidth}
		\centering
		\includegraphics[scale=0.149]{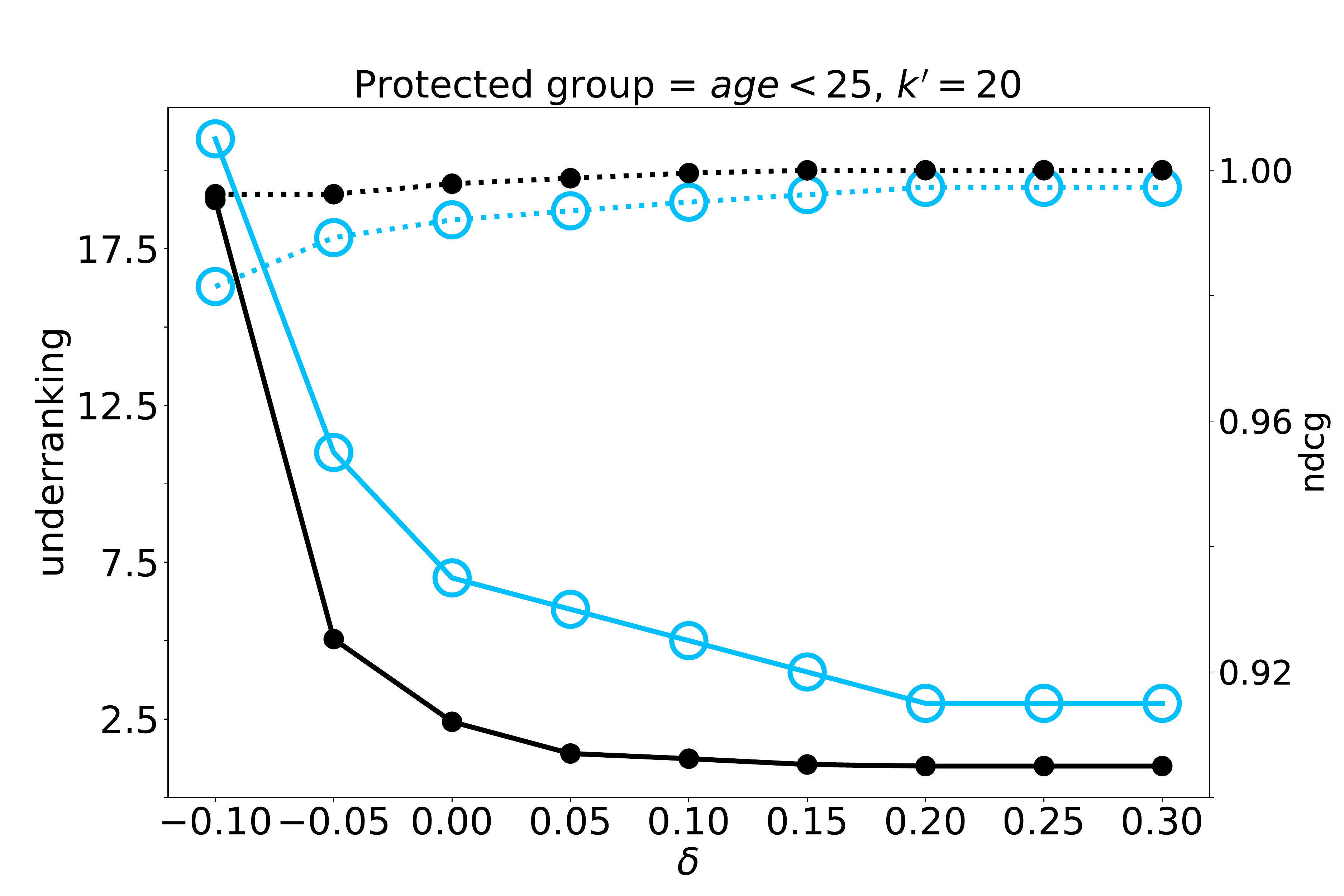} 
		\caption{Underranking, nDCG at top $20$ ranks.}
	\end{subfigure}
	\begin{subfigure}[b]{0.33\linewidth}
		\centering
		\includegraphics[scale=0.149]{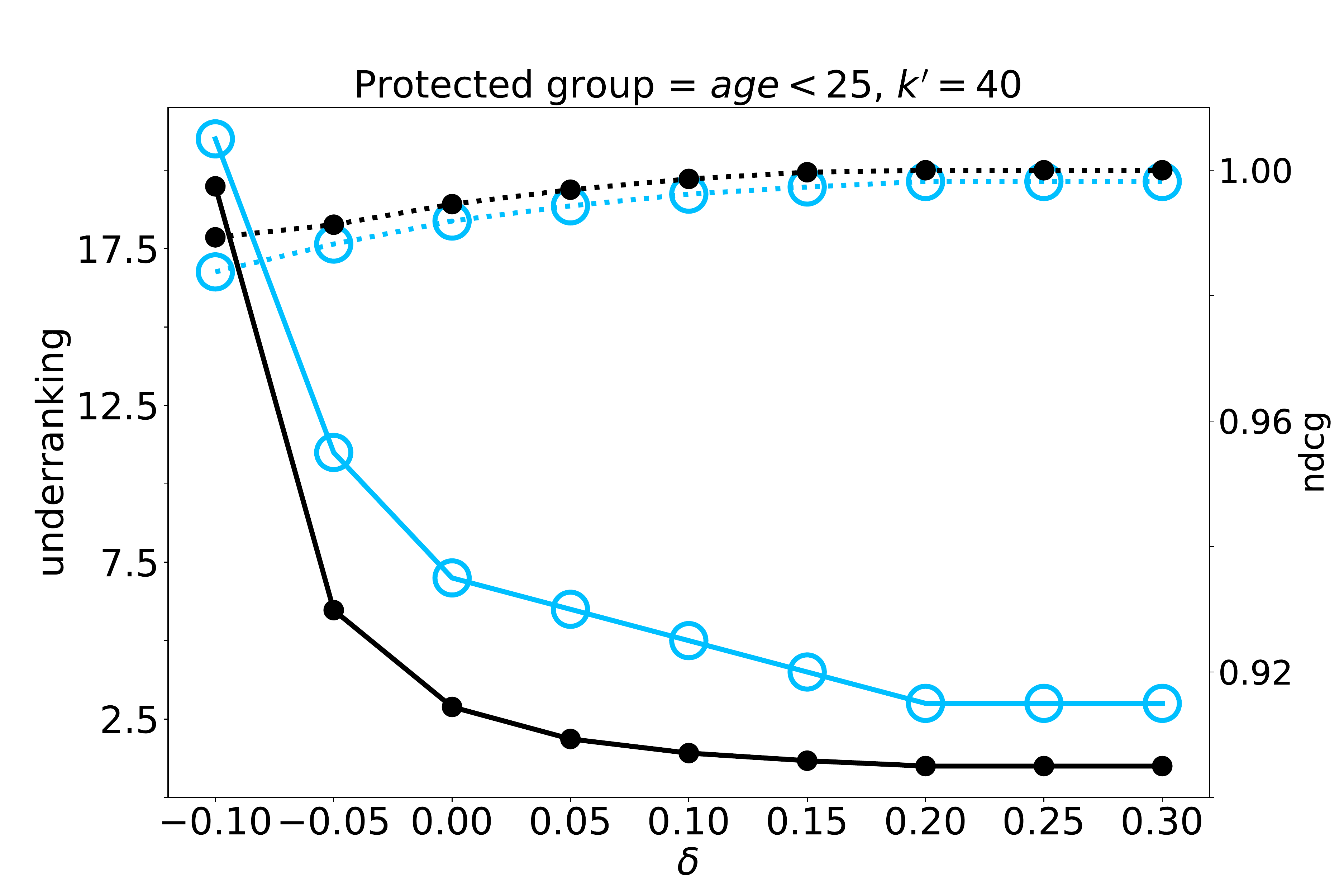} 
		\caption{Underranking, nDCG at top $40$ ranks.}
	\end{subfigure}
	\begin{subfigure}[b]{0.33\linewidth}
		\centering
		\includegraphics[scale=0.149]{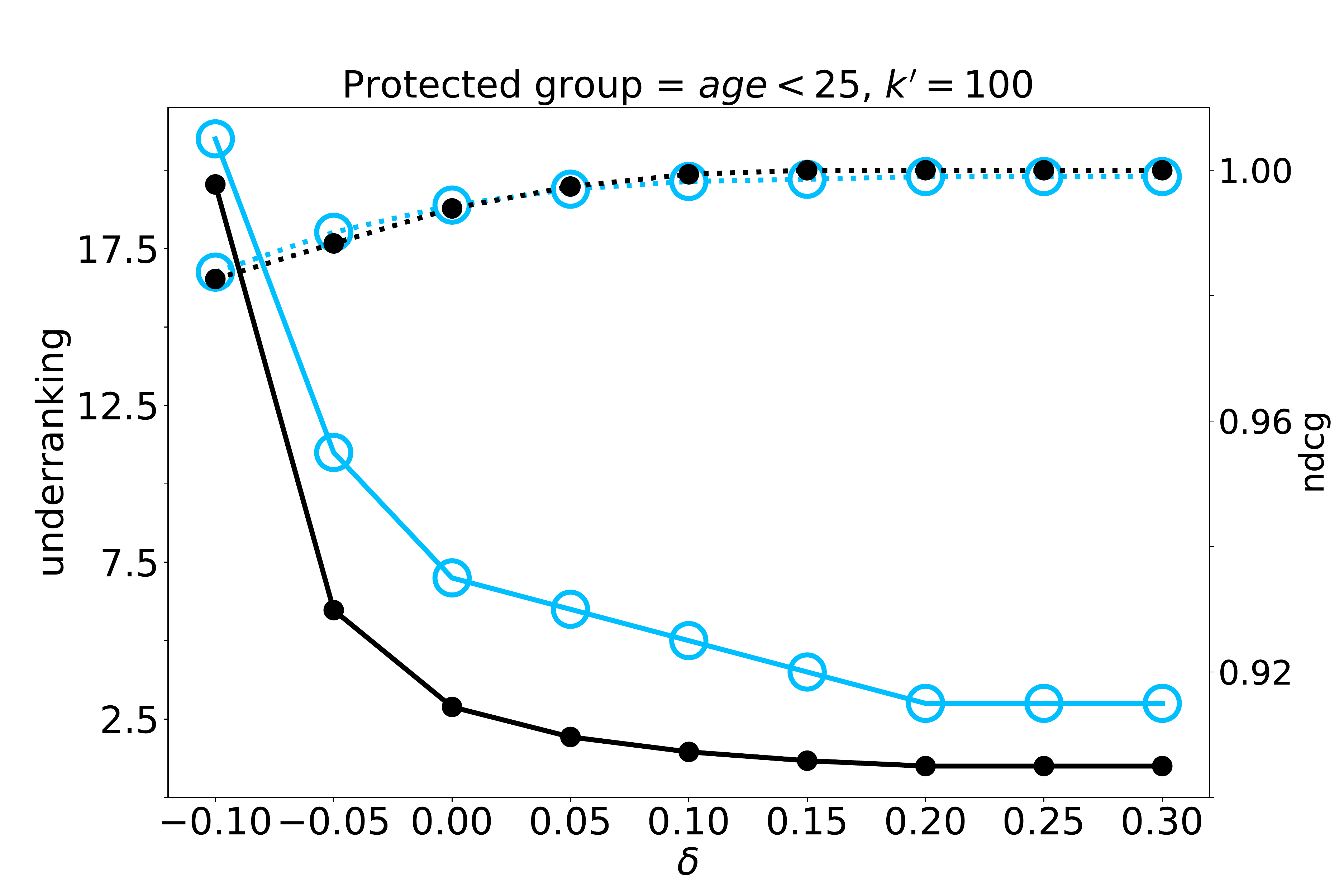} 
		\caption{Underranking, nDCG at top $100$ ranks.}
	\end{subfigure}
	\caption{Results on the German Credit Risk dataset with \textit{age}$<25$ as the protected group.}
	\label{fig:german_25_rev}
\end{figure}

\begin{figure}[H]
	\begin{subfigure}[b]{\linewidth}
		\centering
		\includegraphics[scale=0.2]{results/legend.pdf} 
	\end{subfigure}
	
	\begin{subfigure}[b]{0.33\linewidth}
		\centering
		\includegraphics[scale=0.149]{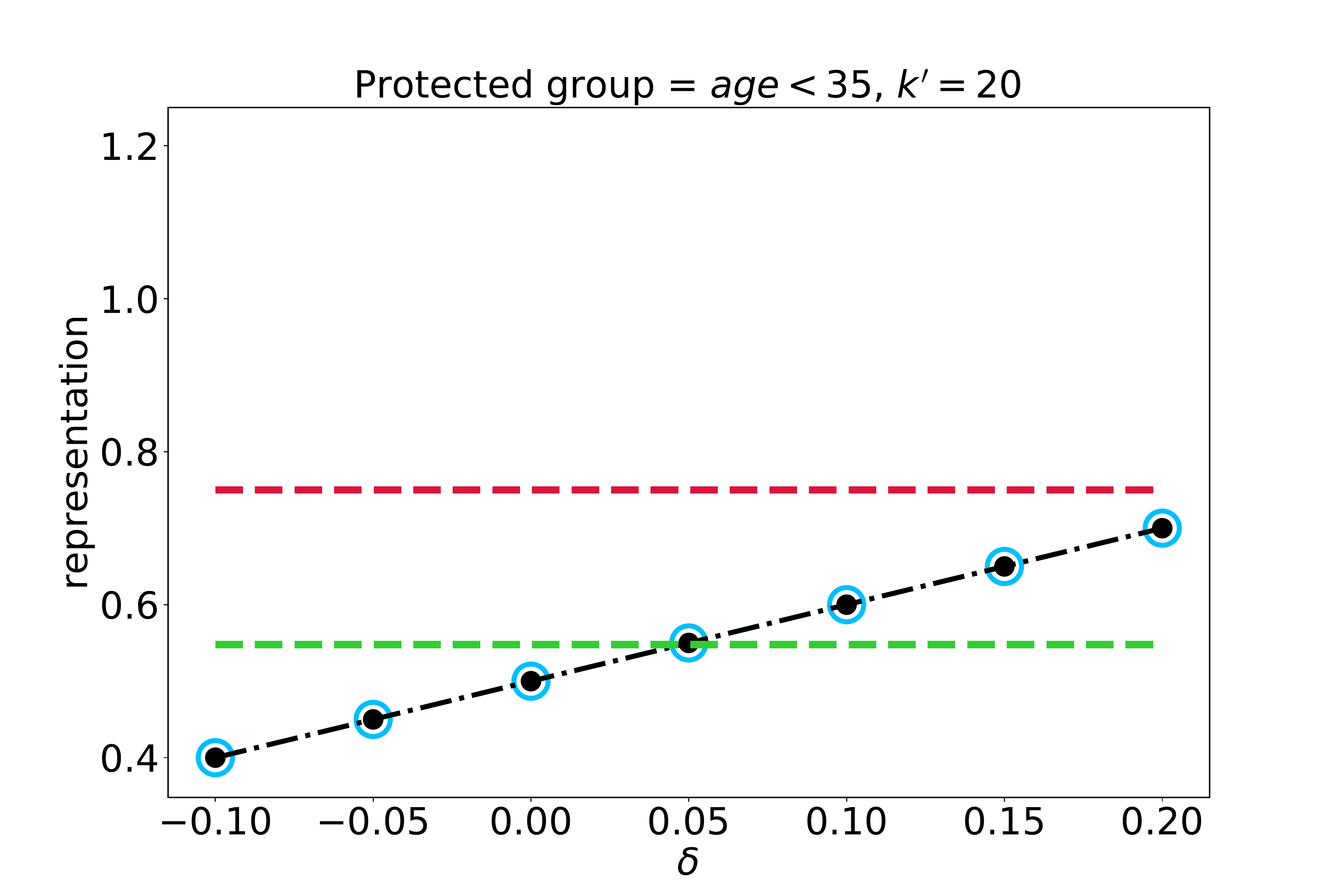} 
		\caption{Representation at top $20$ ranks.}
	\end{subfigure}
	\begin{subfigure}[b]{0.33\linewidth}
		\centering
		\includegraphics[scale=0.149]{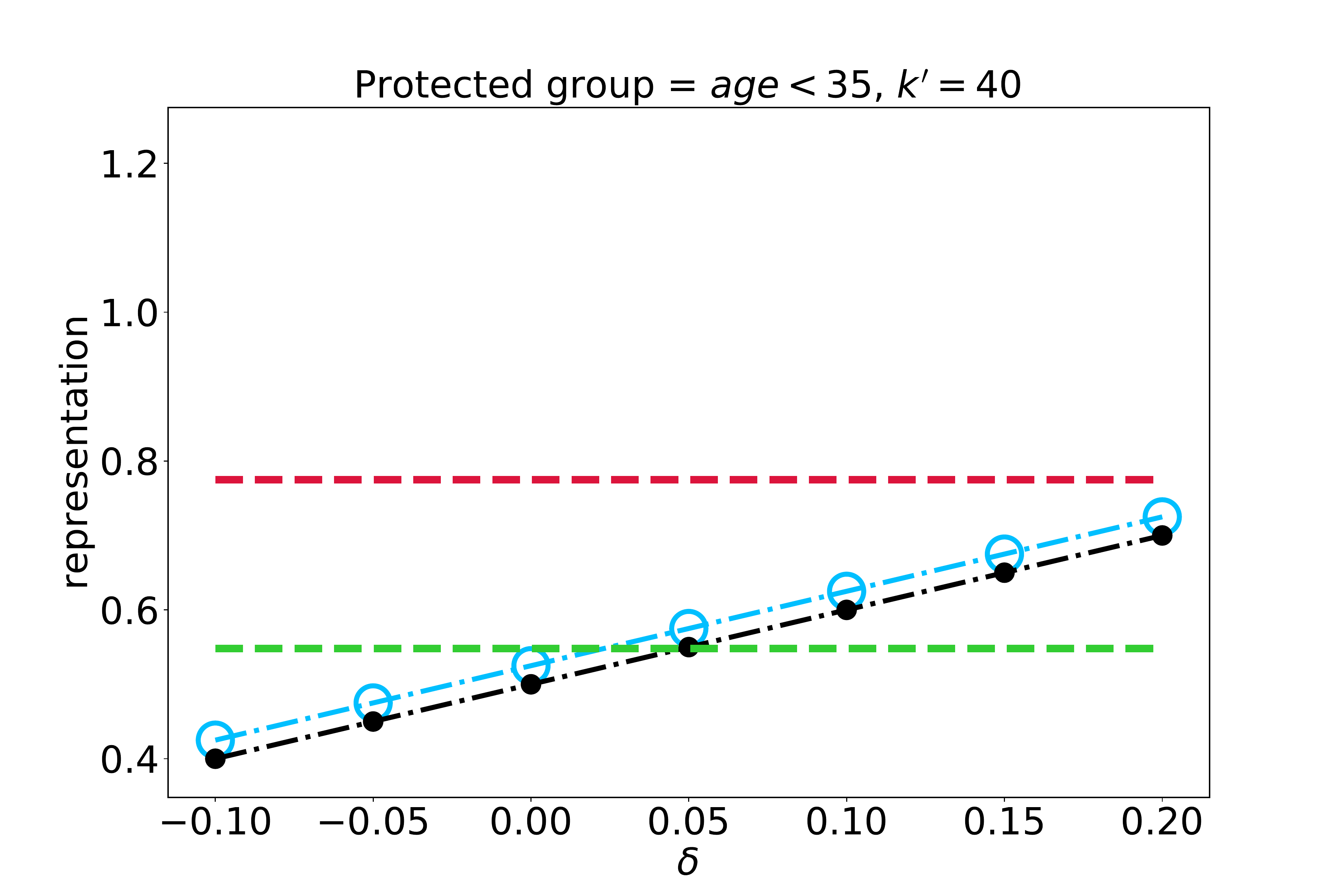} 
		\caption{Representation at top $40$ ranks.}
	\end{subfigure}
	\begin{subfigure}[b]{0.33\linewidth}
		\centering
		\includegraphics[scale=0.149]{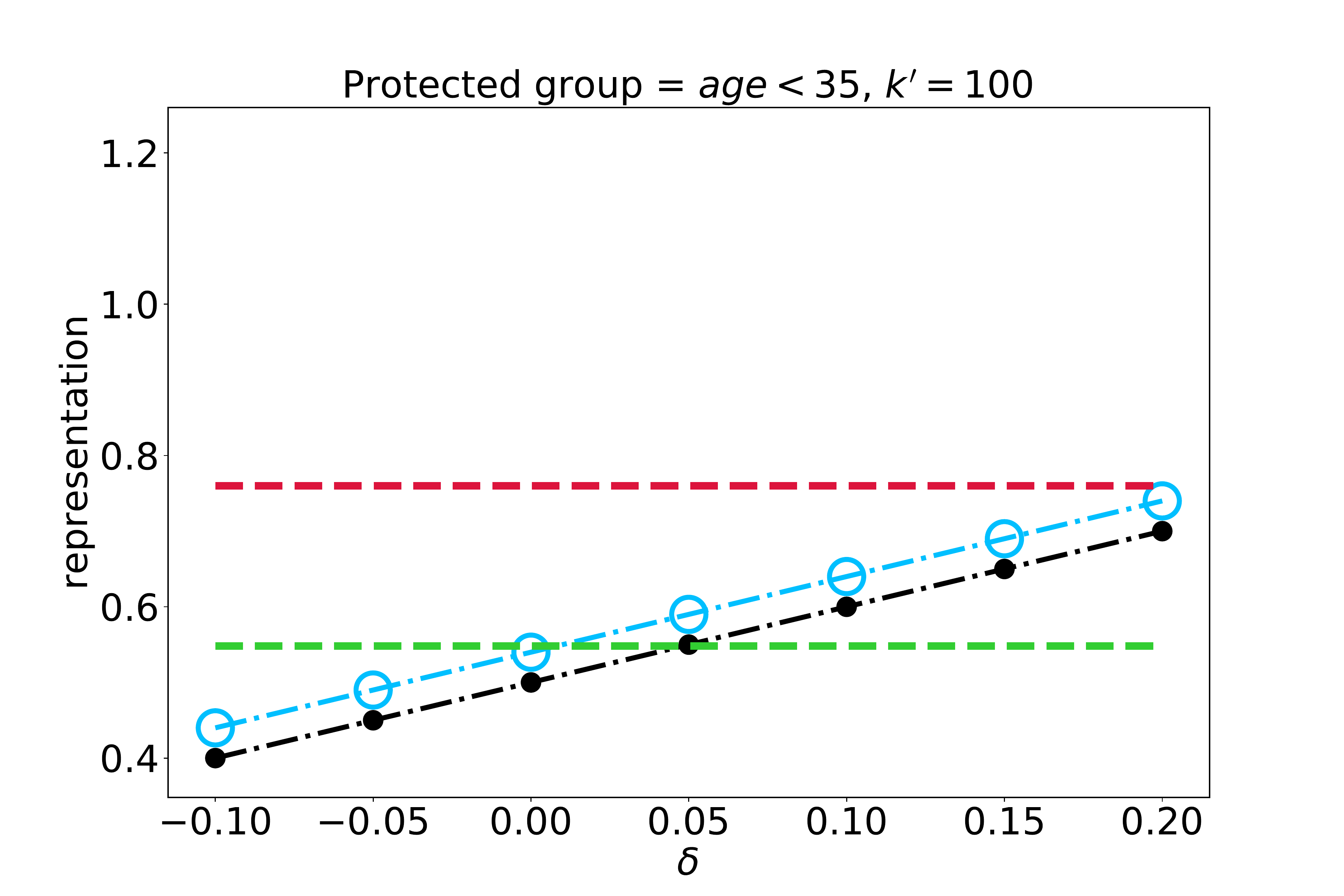} 
		\caption{Representation at top $100$ ranks.}
	\end{subfigure}
	
	\begin{subfigure}[b]{0.33\linewidth}
		\centering
		\includegraphics[scale=0.149]{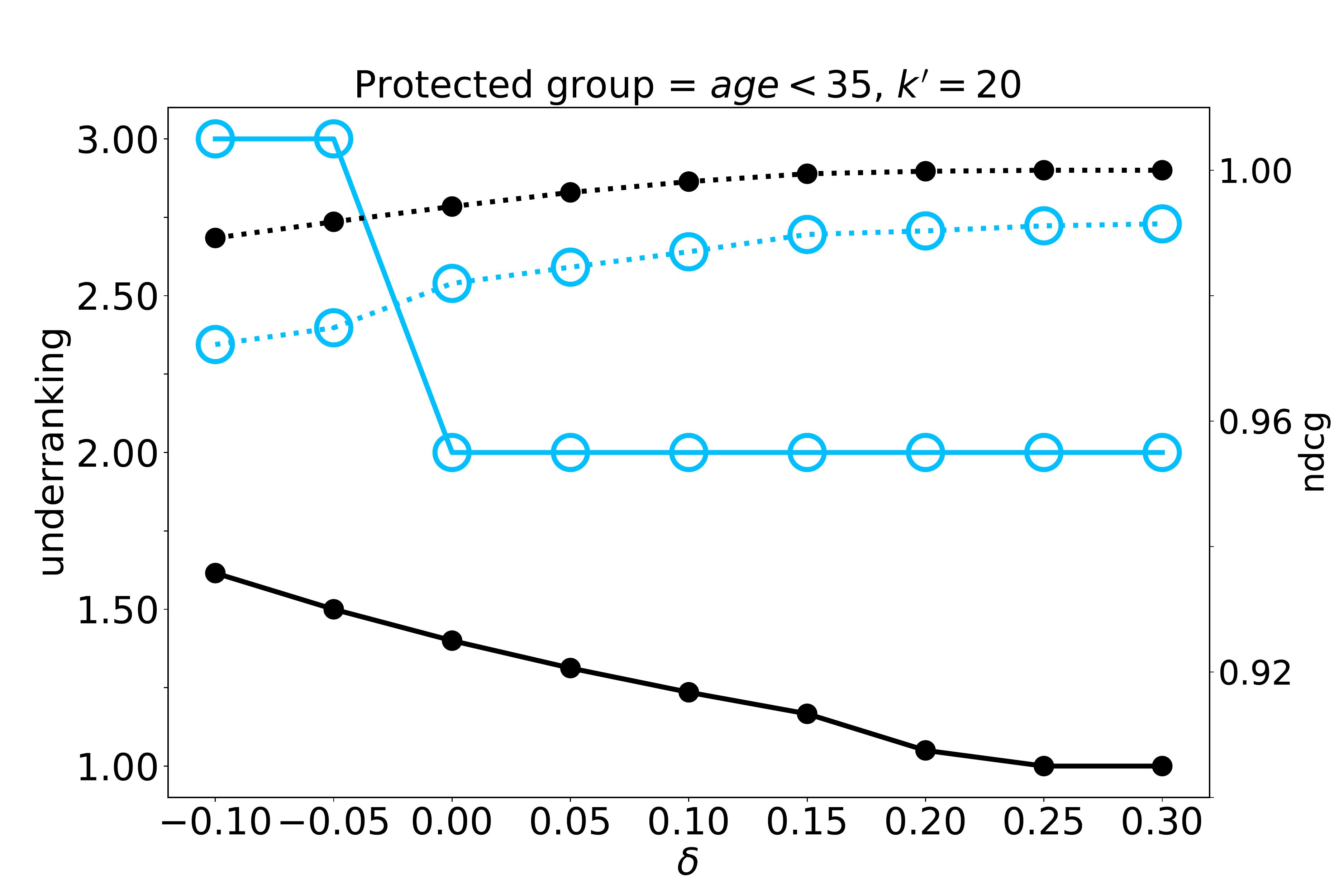} 
		\caption{Underranking, nDCG at top $20$ ranks.}
	\end{subfigure}
	\begin{subfigure}[b]{0.33\linewidth}
		\centering
		\includegraphics[scale=0.149]{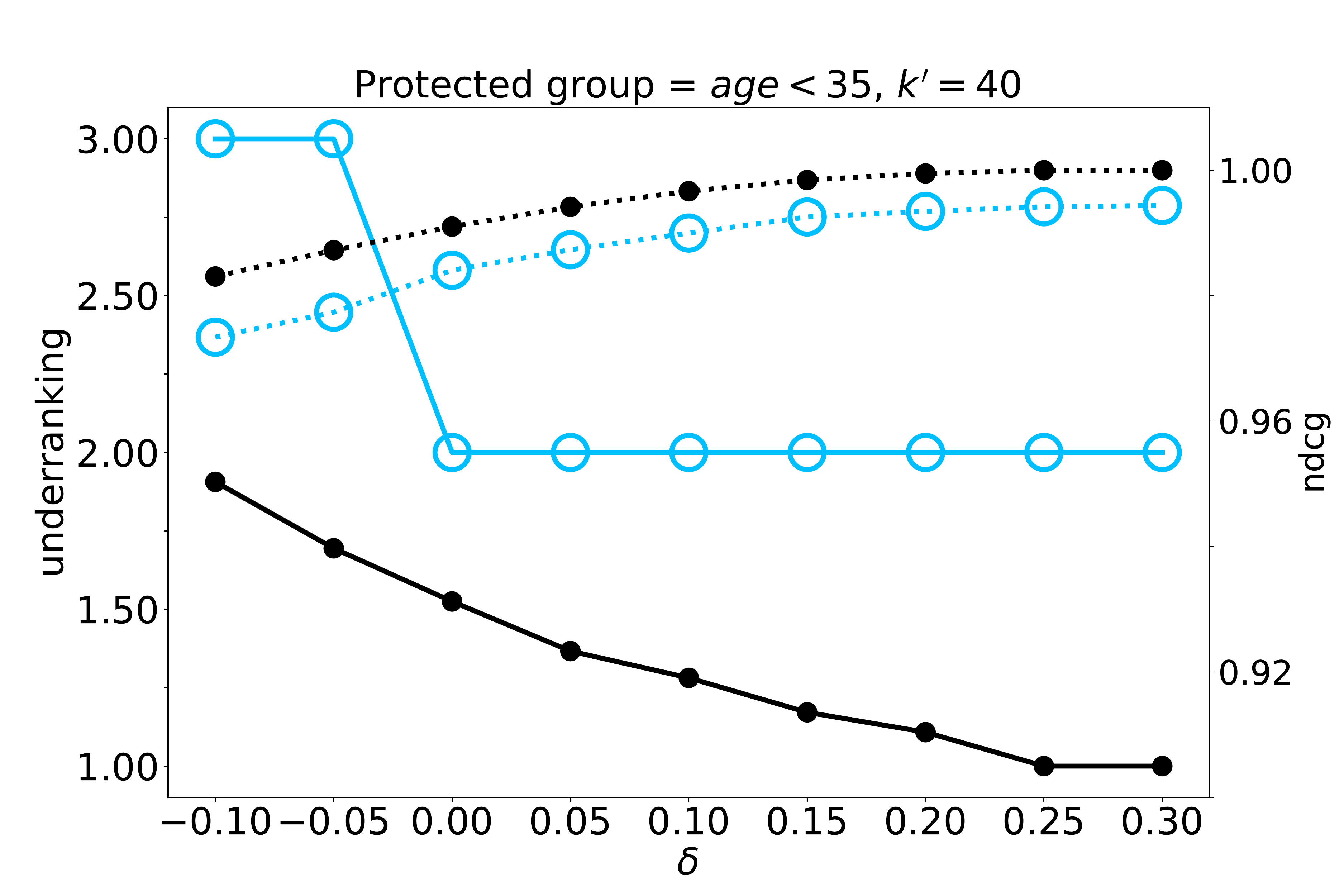} 
		\caption{Underranking, nDCG at top $40$ ranks.}
	\end{subfigure}
	\begin{subfigure}[b]{0.33\linewidth}
		\centering
		\includegraphics[scale=0.149]{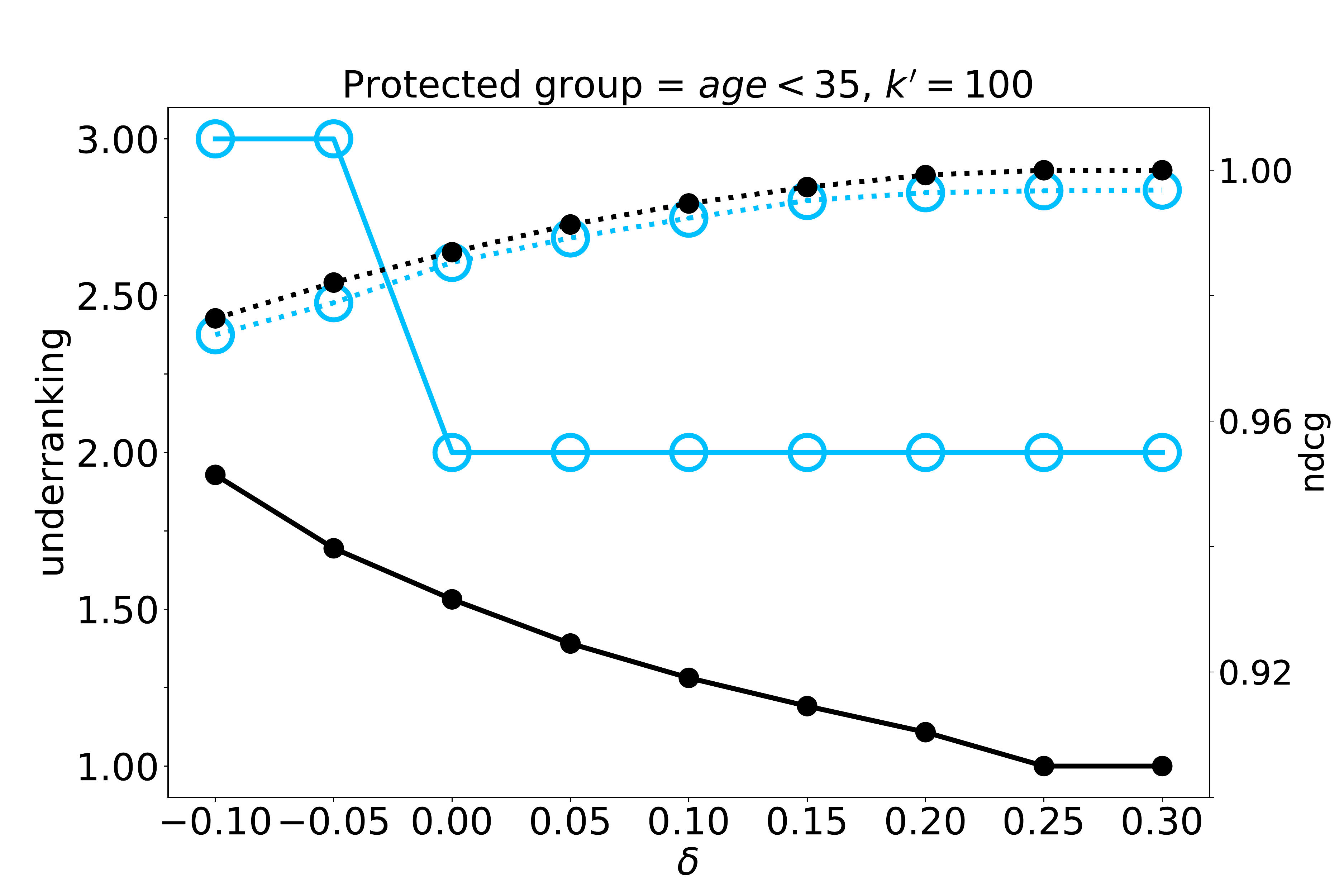} 
		\caption{Underranking, nDCG at top $100$ ranks.}
	\end{subfigure}
	\caption{Results on the German Credit Risk dataset with \textit{age}$<35$ as the protected group.}
	\label{fig:german_35_rev}
\end{figure}

\begin{figure}[H]
	\begin{subfigure}[b]{\linewidth}
		\centering
		\includegraphics[scale=0.2]{results/legend.pdf} 
	\end{subfigure}
	
	\begin{subfigure}[b]{0.33\linewidth}
		\centering
		\includegraphics[scale=0.149]{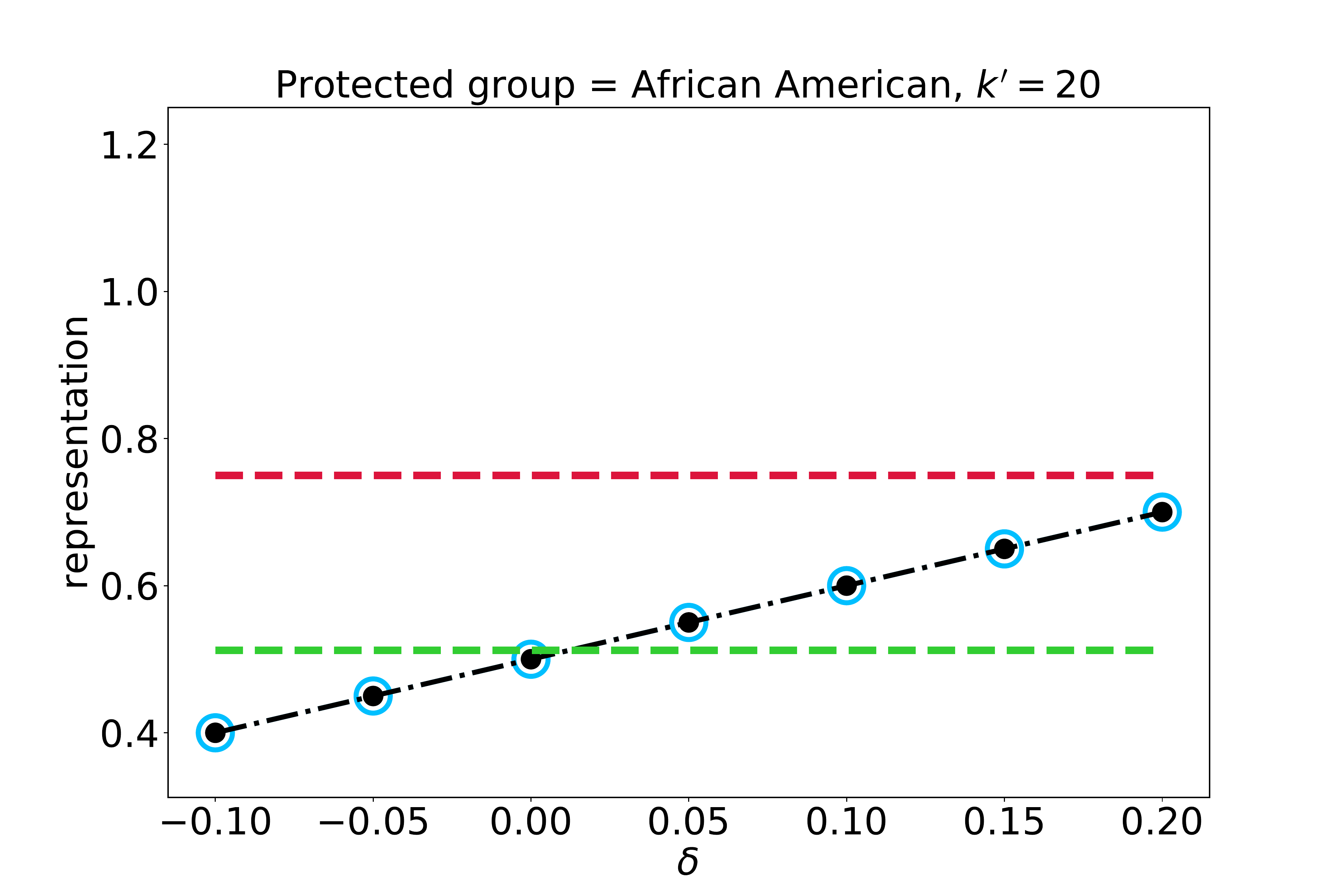} 
		\caption{Representation at top $20$ ranks.}
	\end{subfigure}
	\begin{subfigure}[b]{0.33\linewidth}
		\centering
		\includegraphics[scale=0.149]{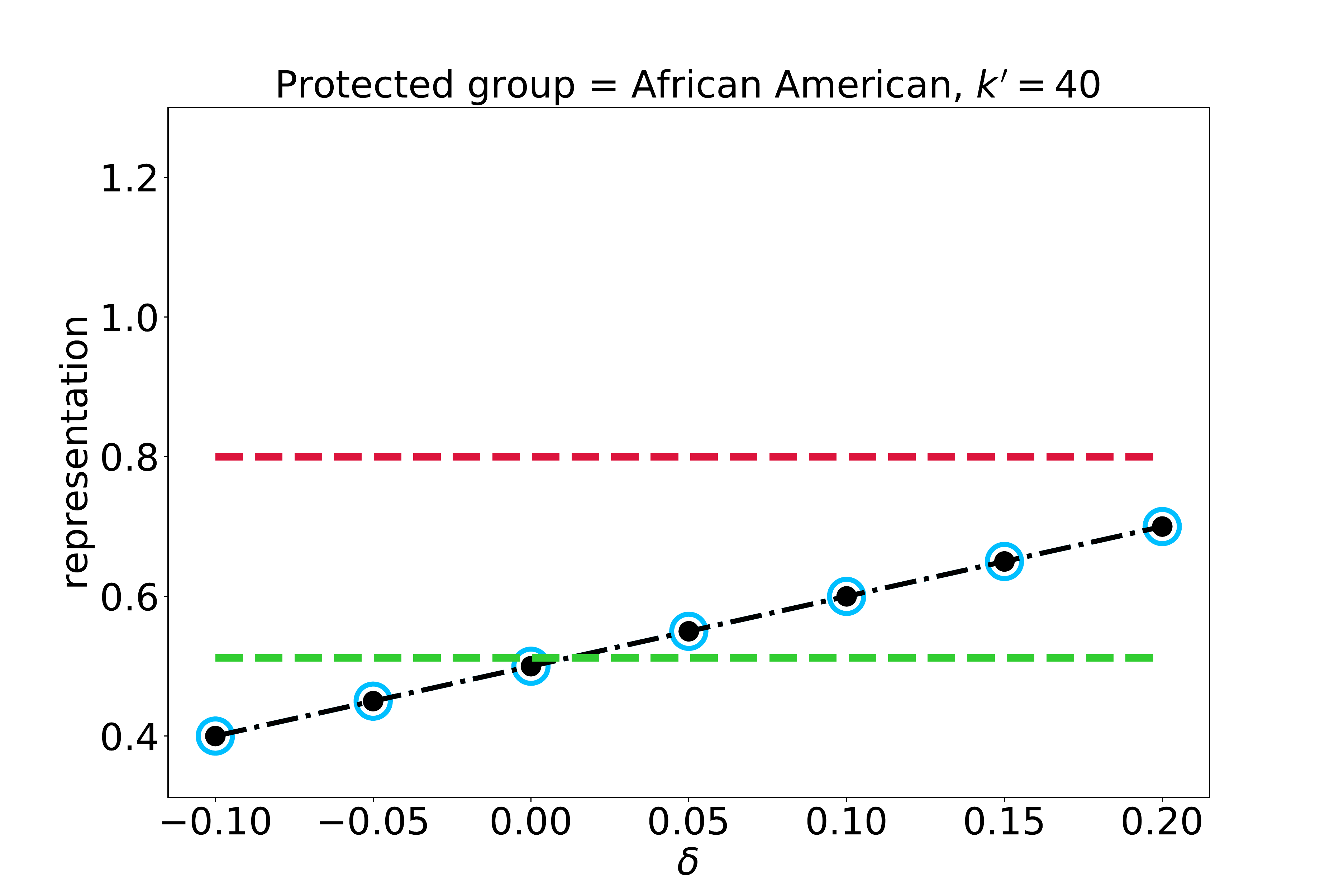} 
		\caption{Representation at top $40$ ranks.}
	\end{subfigure}
	\begin{subfigure}[b]{0.33\linewidth}
		\centering
		\includegraphics[scale=0.149]{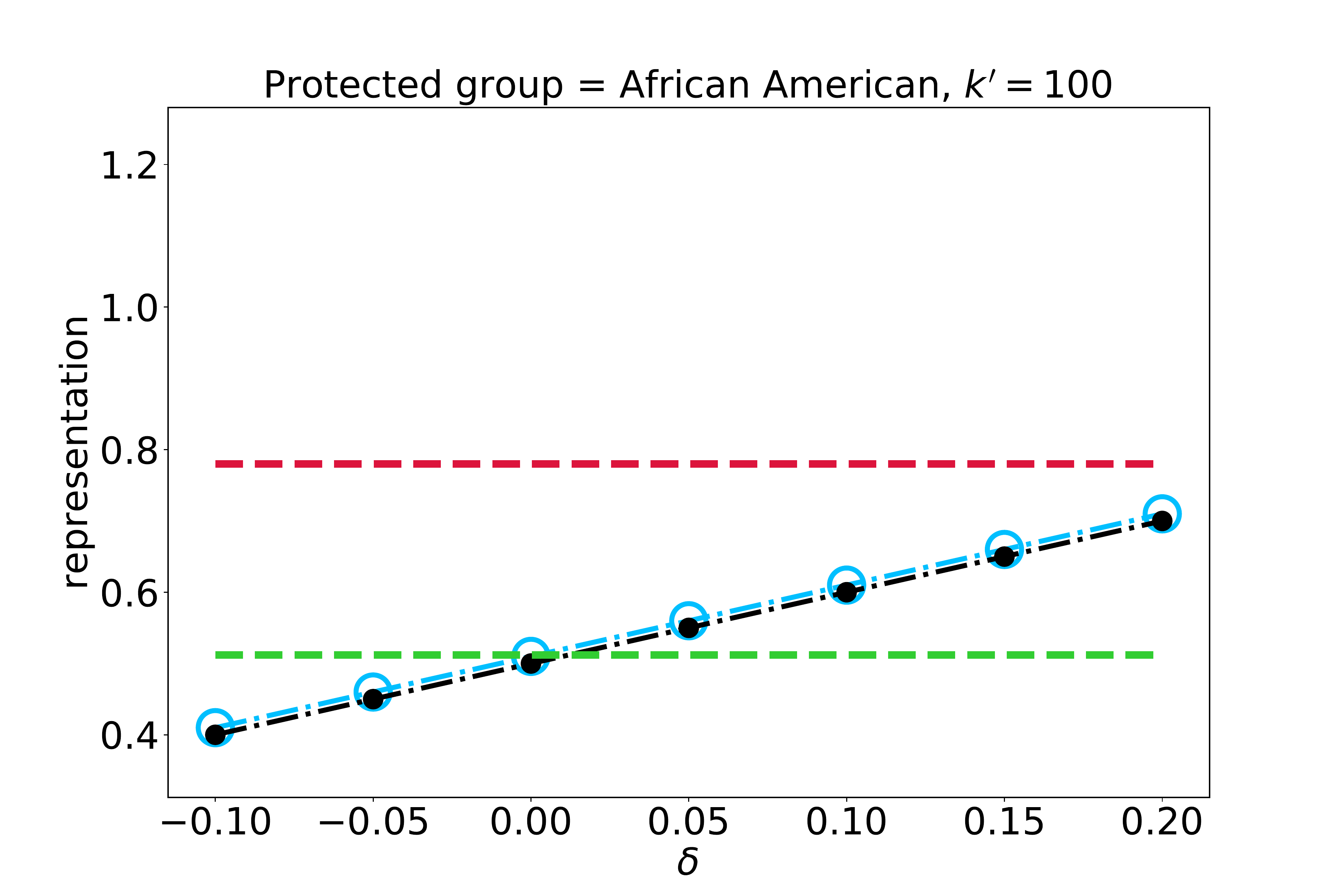} 
		\caption{Representation at top $100$ ranks.}
	\end{subfigure}
	
	\begin{subfigure}[b]{0.33\linewidth}
		\centering
		\includegraphics[scale=0.149]{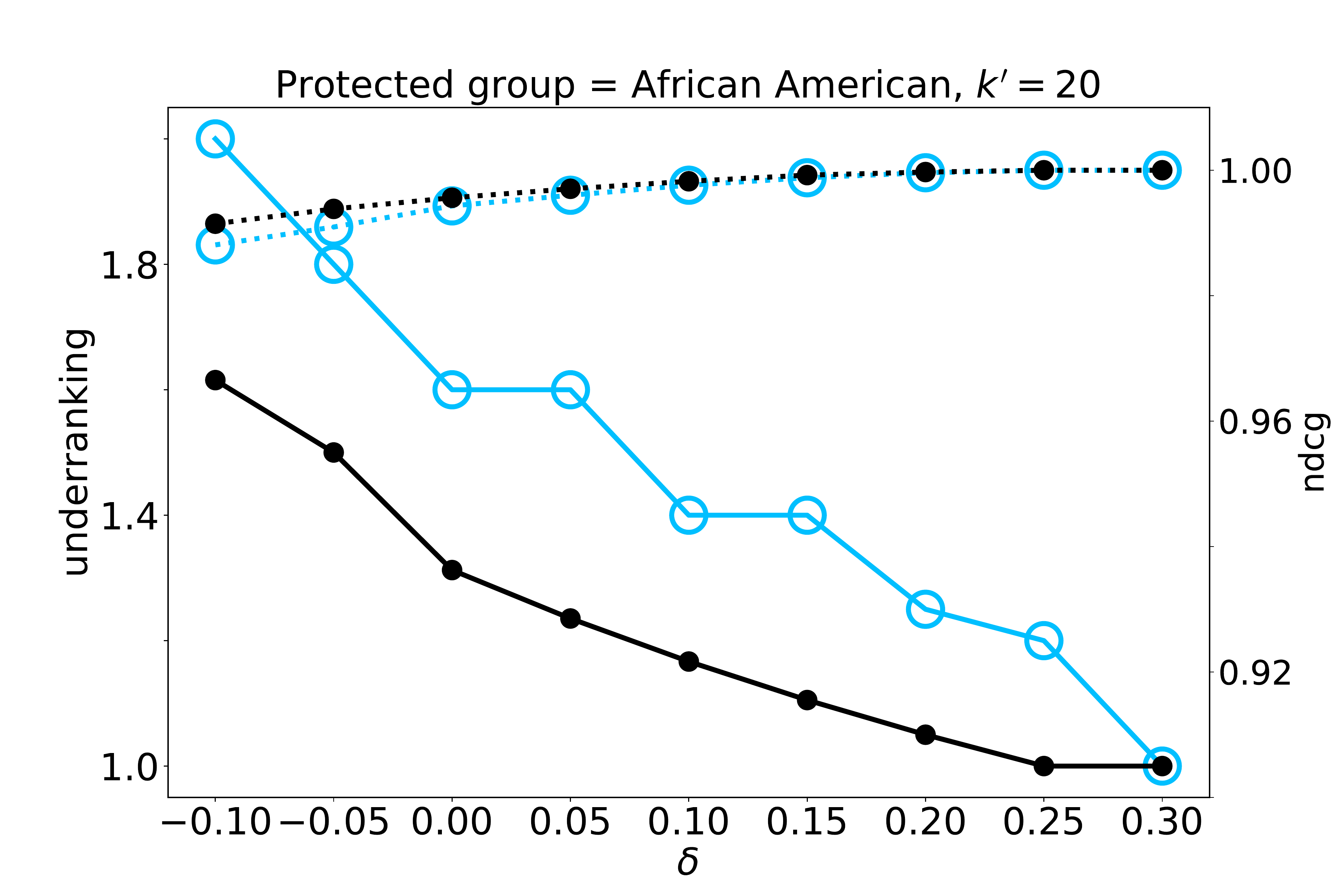} 
		\caption{Underranking, nDCG at top $20$ ranks.}
	\end{subfigure}
	\begin{subfigure}[b]{0.33\linewidth}
		\centering
		\includegraphics[scale=0.149]{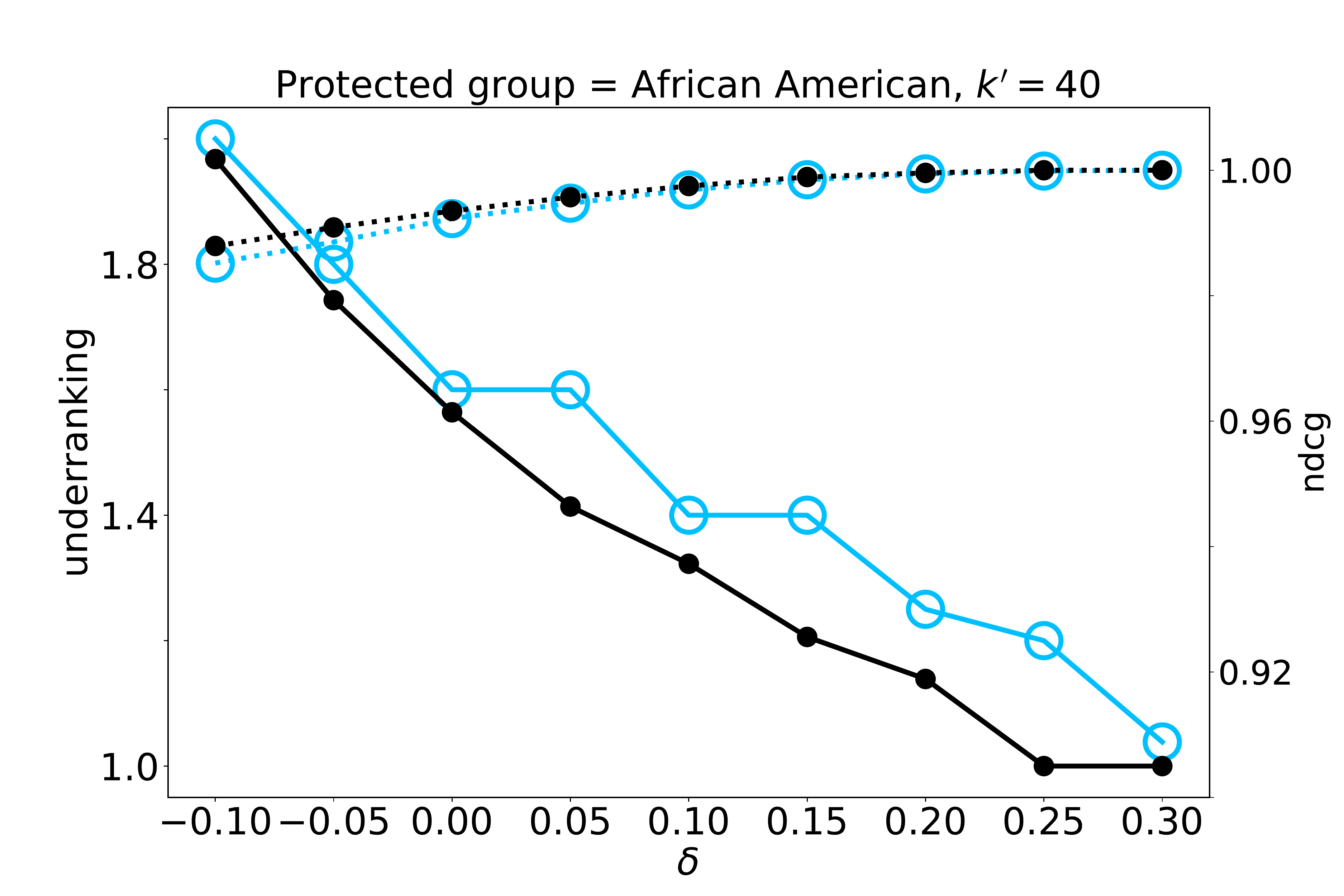} 
		\caption{Underranking, nDCG at top $40$ ranks.}
	\end{subfigure}
	\begin{subfigure}[b]{0.33\linewidth}
		\centering
		\includegraphics[scale=0.149]{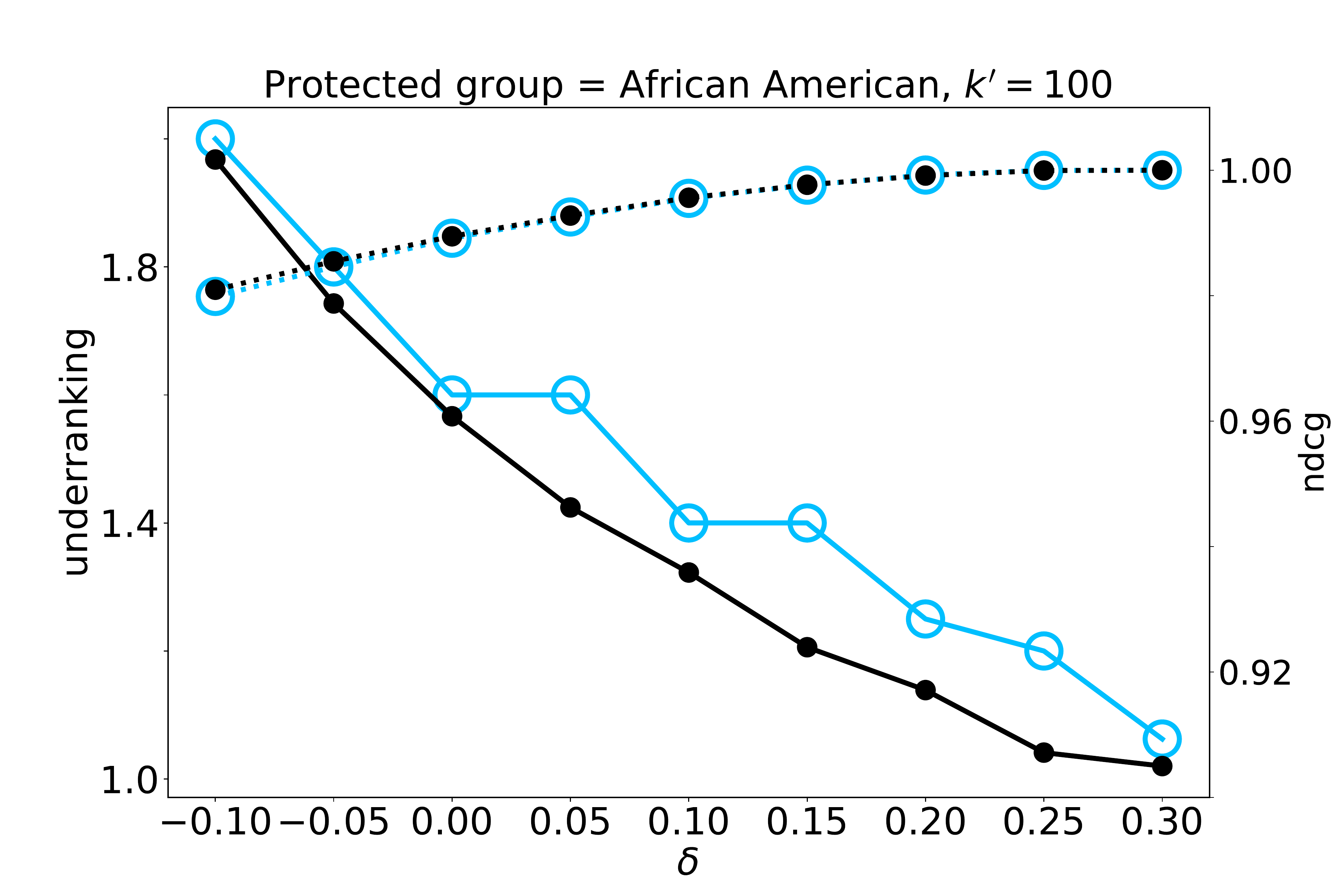} 
		\caption{Underranking, nDCG at top $100$ ranks.}
	\end{subfigure}
	\caption{Results on the COMPAS Recidivism dataset with \textit{African American} as the protected group.}
	\label{fig:compas_race_rev}
\end{figure}

\begin{figure}[H]
	\begin{subfigure}[b]{\linewidth}
		\centering
		\includegraphics[scale=0.2]{results/legend.pdf} 
	\end{subfigure}
	
	\begin{subfigure}[b]{0.33\linewidth}
		\centering
		\includegraphics[scale=0.149]{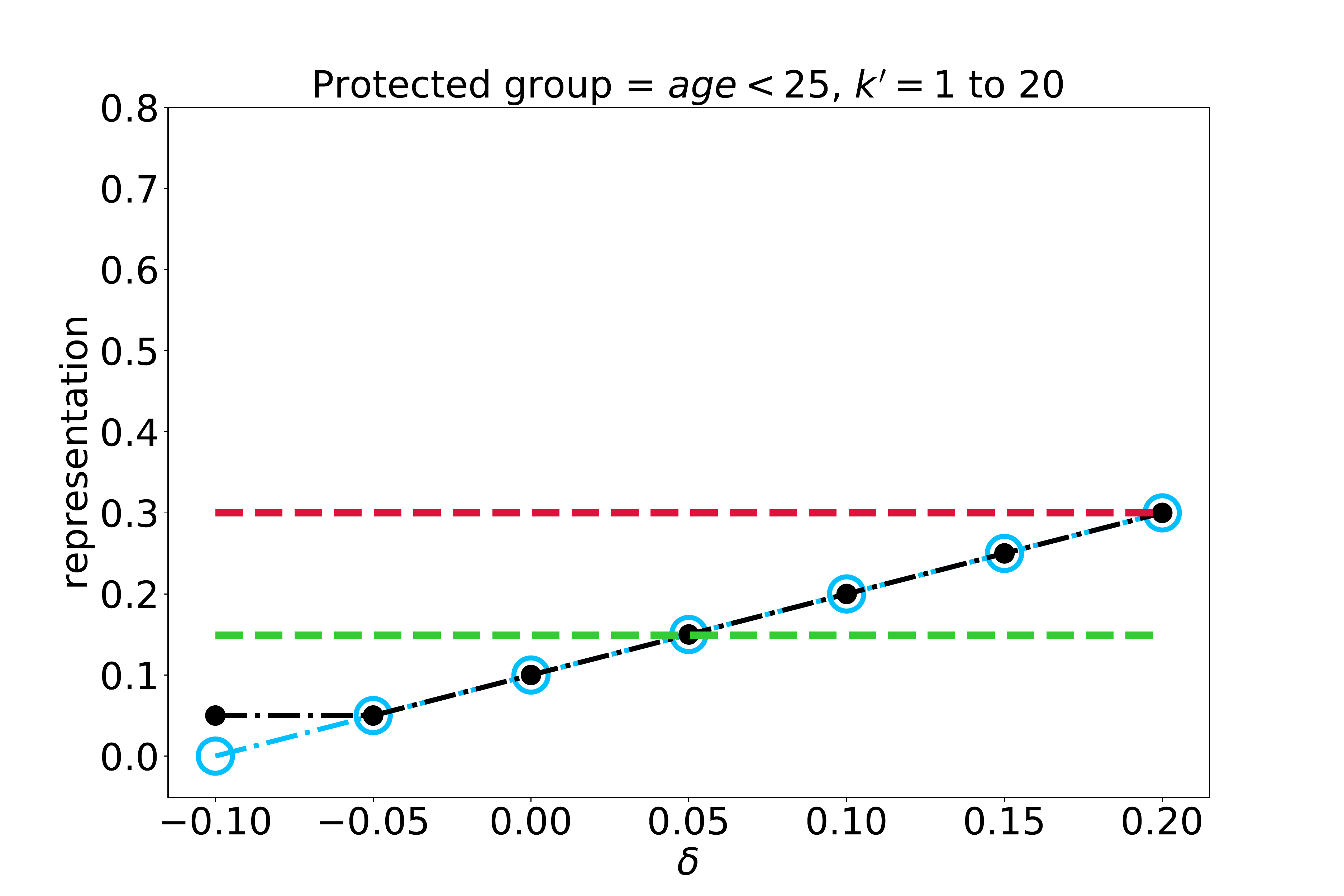} 
		\caption{Representation at ranks $1$ to $20$.}
	\end{subfigure}
	\begin{subfigure}[b]{0.33\linewidth}
		\centering
		\includegraphics[scale=0.149]{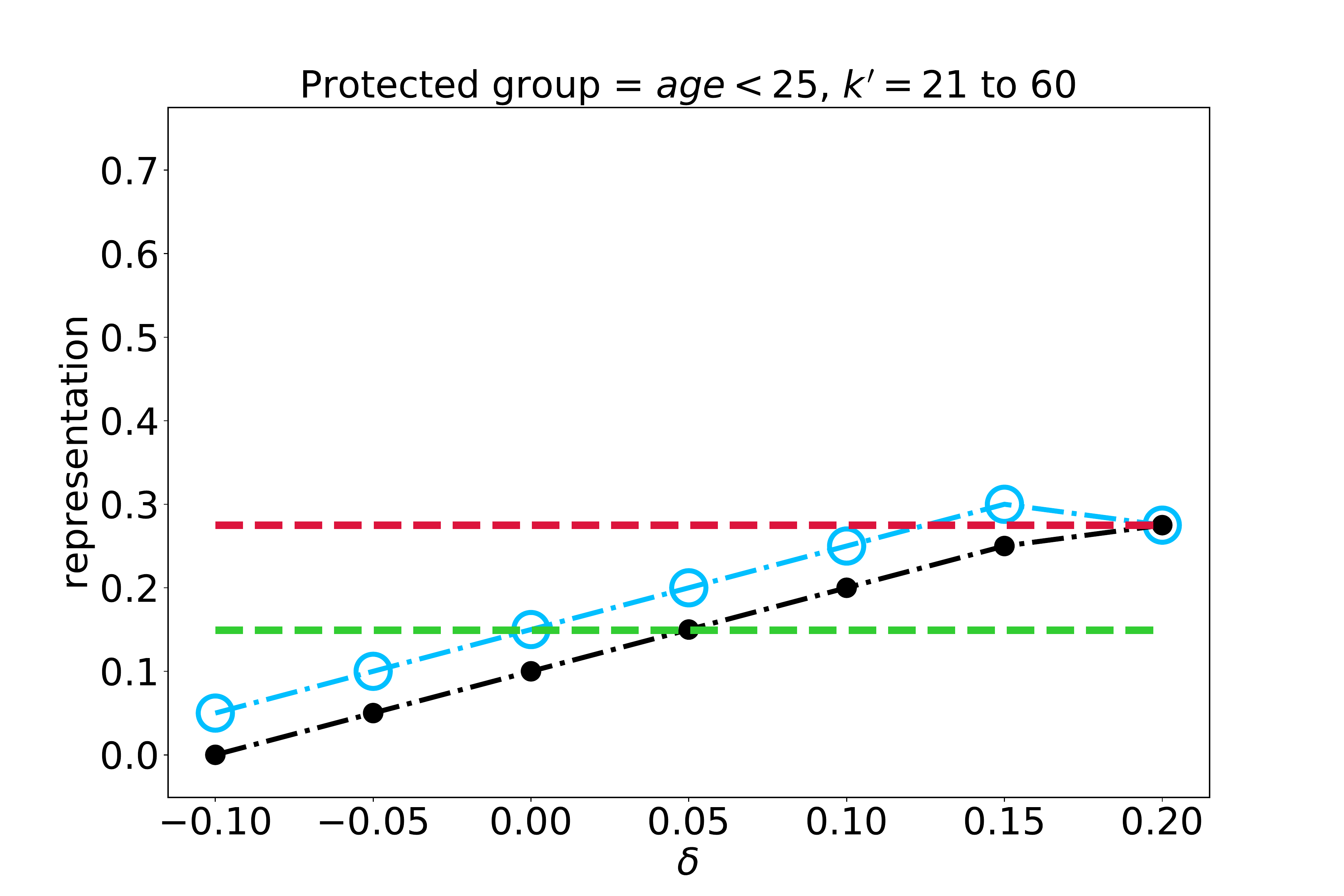} 
		\caption{Representation at ranks $21$ to $60$.}
	\end{subfigure}
	\begin{subfigure}[b]{0.33\linewidth}
		\centering
		\includegraphics[scale=0.149]{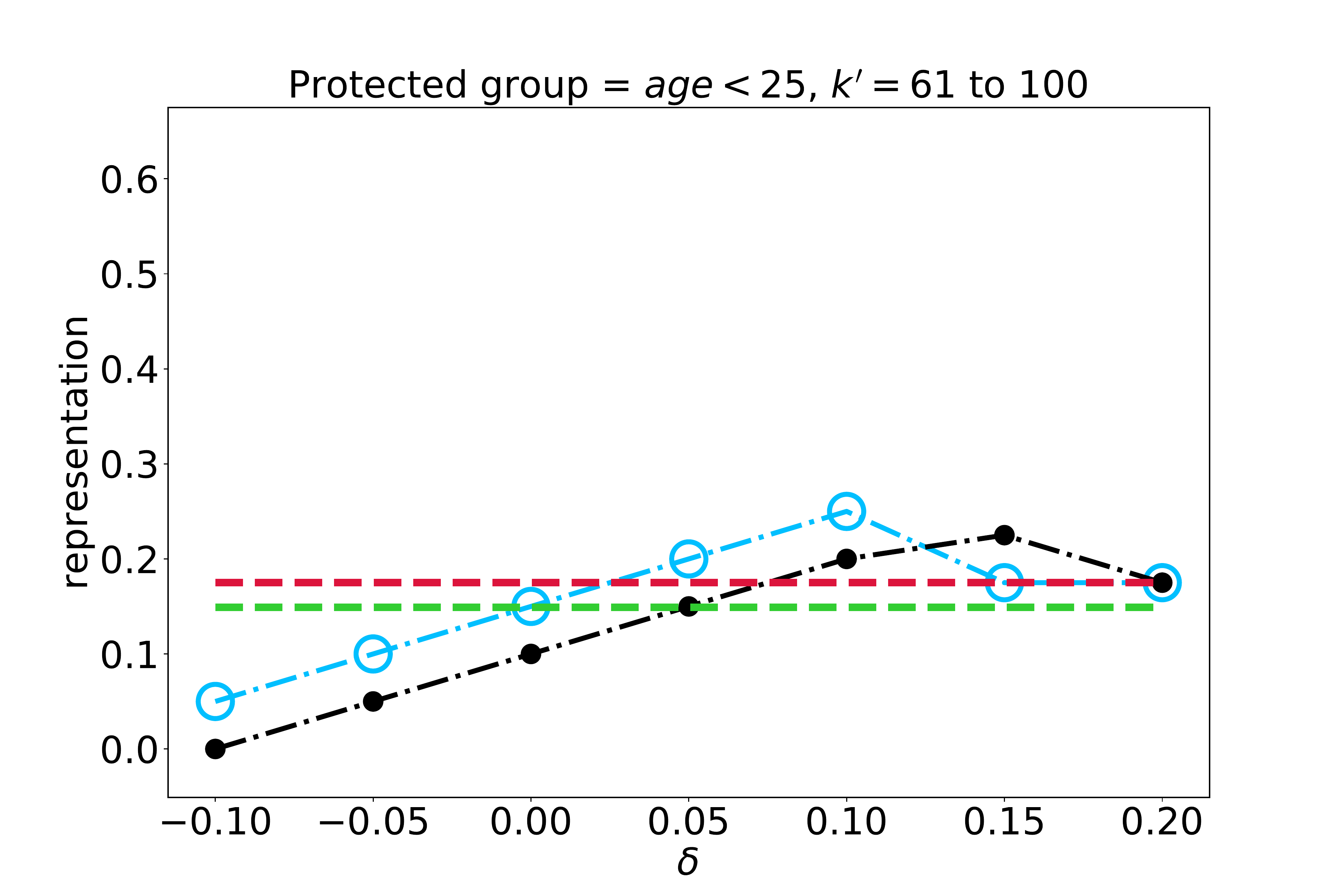} 
		\caption{Representation at ranks $61$ to $100$.}
	\end{subfigure}
	
	\begin{subfigure}[b]{0.33\linewidth}
		\centering
		\includegraphics[scale=0.149]{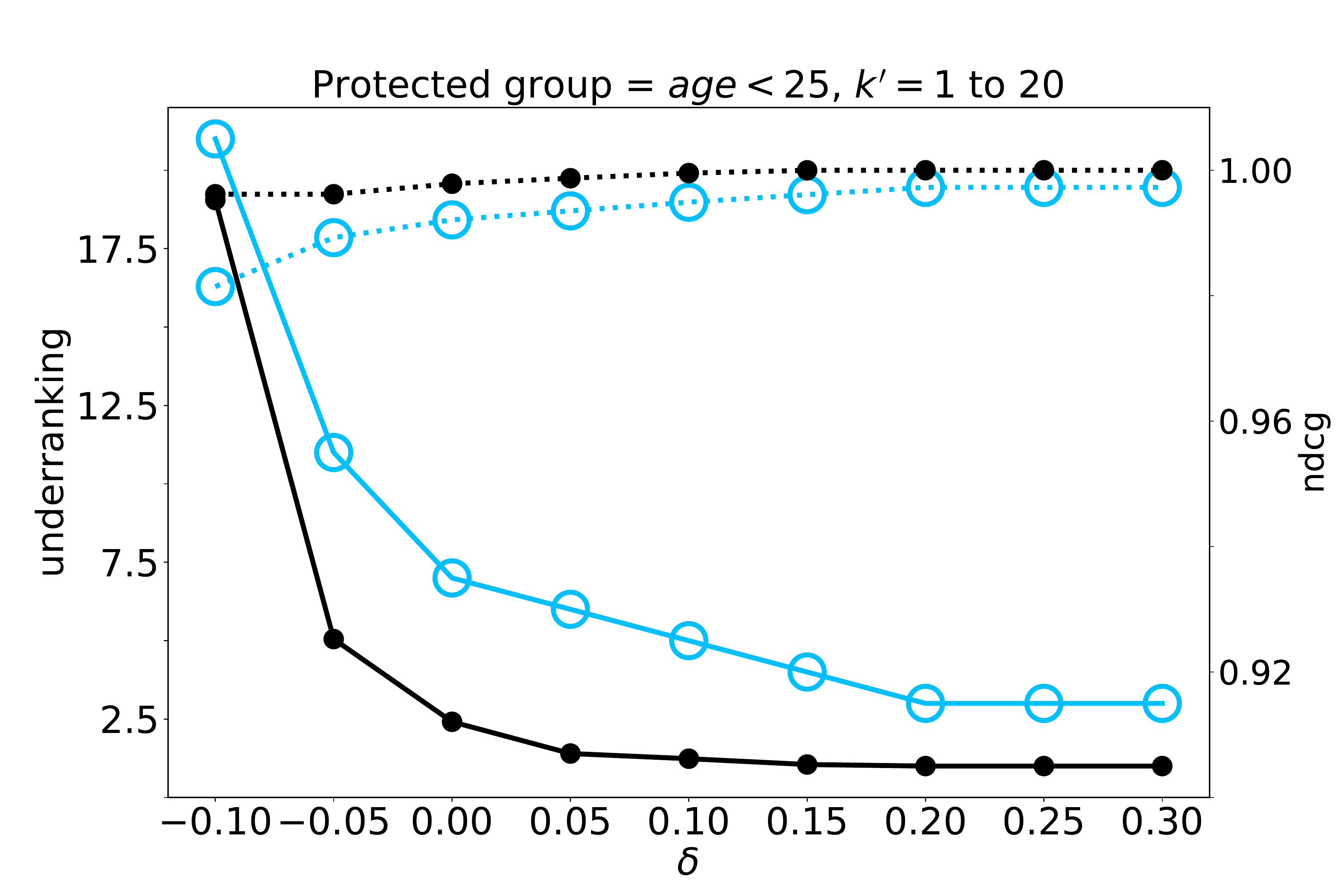} 
		\caption{Underranking, nDCG at top $20$ ranks.}
	\end{subfigure}
	\begin{subfigure}[b]{0.33\linewidth}
		\centering
		\includegraphics[scale=0.149]{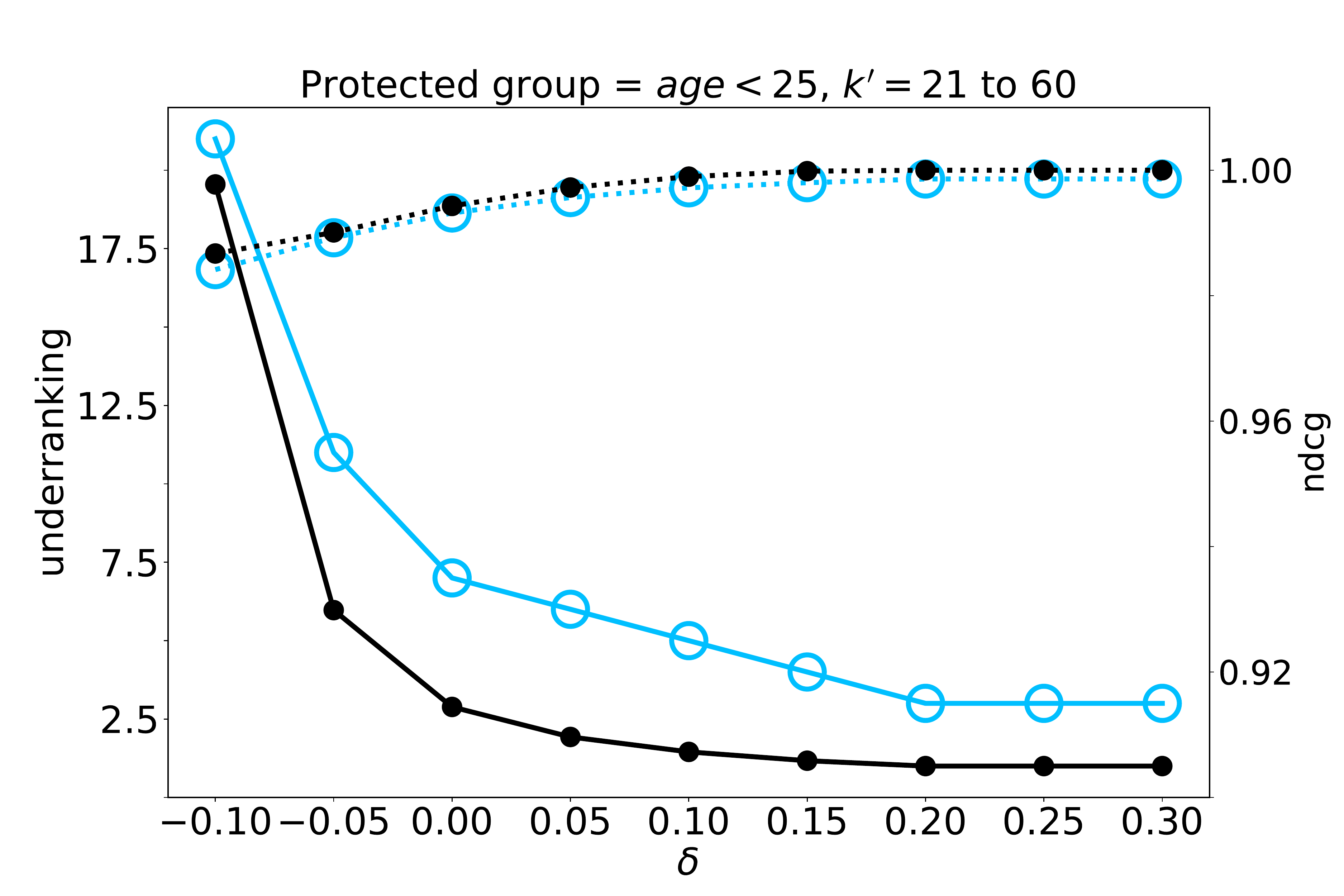} 
		\caption{Underranking, nDCG at top $60$ ranks.}
	\end{subfigure}
	\begin{subfigure}[b]{0.33\linewidth}
		\centering
		\includegraphics[scale=0.149]{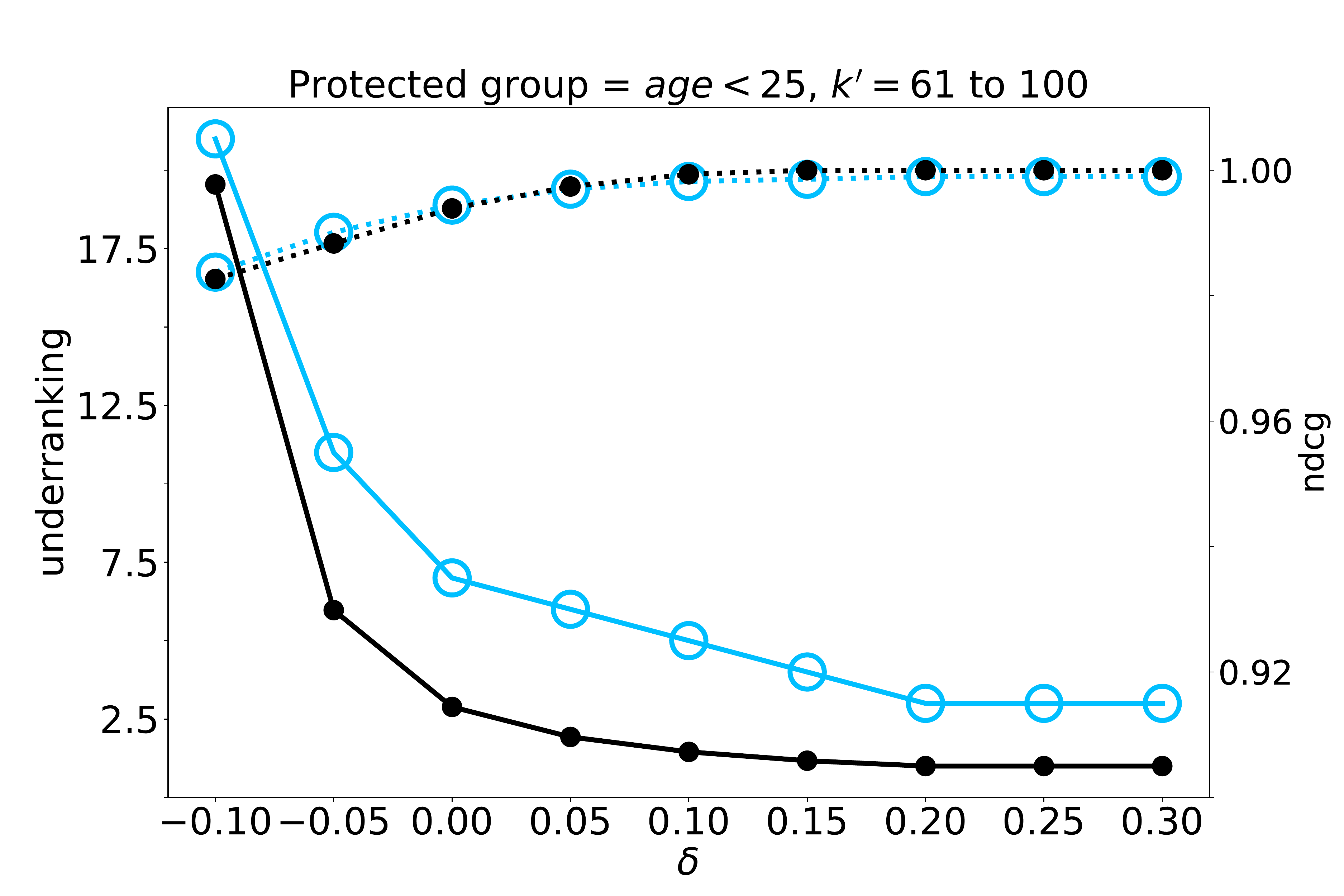} 
		\caption{Underranking, nDCG at top $100$ ranks.}
	\end{subfigure}
	\caption{Results on the German Credit Risk dataset with \textit{age}$<25$ as the protected group.}
	\label{fig:german_25_rev_block}
\end{figure}

\begin{figure}[H]
	\begin{subfigure}[b]{\linewidth}
		\centering
		\includegraphics[scale=0.2]{results/legend.pdf} 
	\end{subfigure}
	
	\begin{subfigure}[b]{0.33\linewidth}
		\centering
		\includegraphics[scale=0.149]{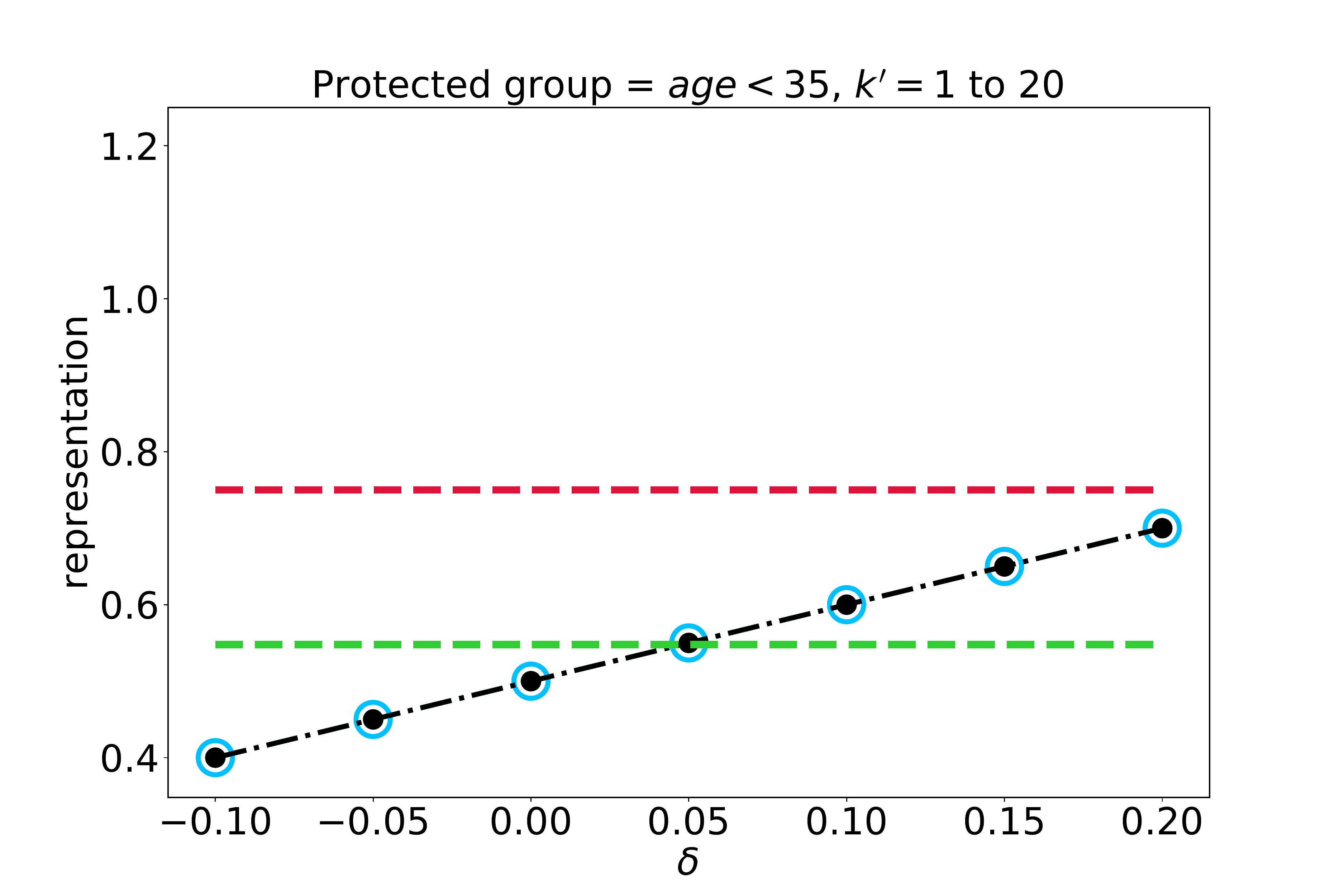} 
		\caption{Representation at ranks $1$ to $20$.}
	\end{subfigure}
	\begin{subfigure}[b]{0.33\linewidth}
		\centering
		\includegraphics[scale=0.149]{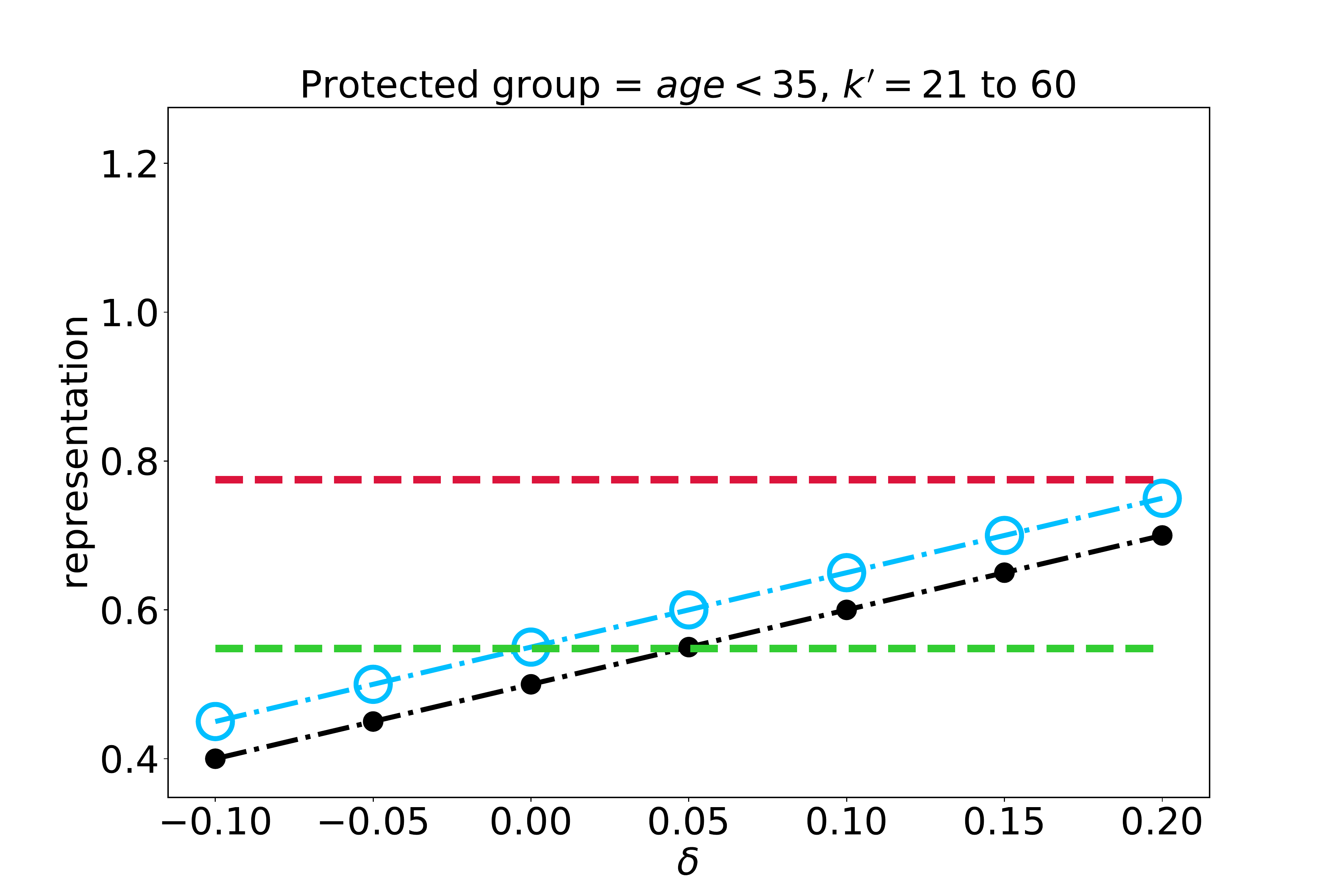} 
		\caption{Representation at ranks $21$ to $60$.}
	\end{subfigure}
	\begin{subfigure}[b]{0.33\linewidth}
		\centering
		\includegraphics[scale=0.149]{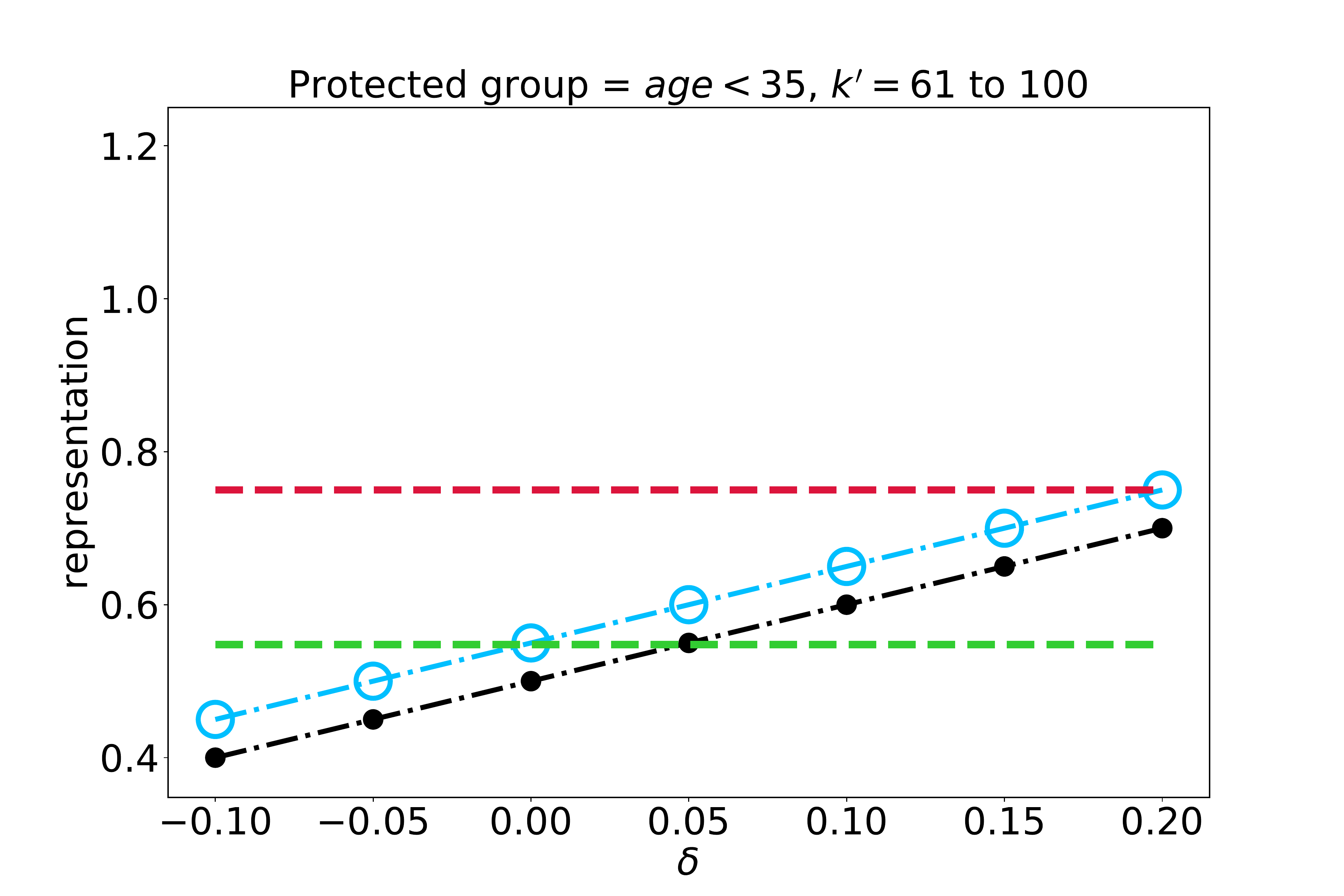} 
		\caption{Representation at ranks $61$ to $100$.}
	\end{subfigure}
	
	\begin{subfigure}[b]{0.33\linewidth}
		\centering
		\includegraphics[scale=0.149]{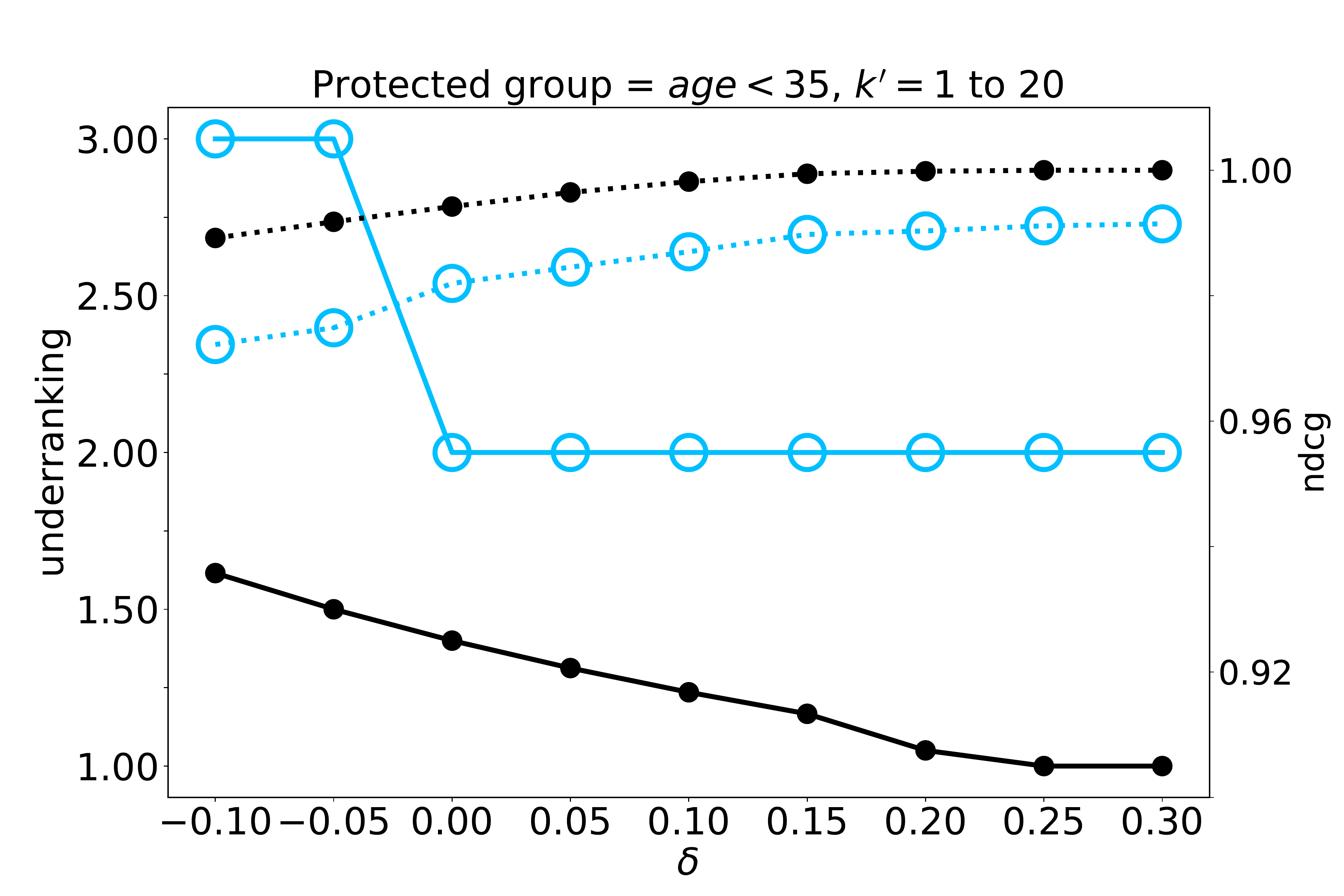} 
		\caption{Underranking, nDCG at top $20$ ranks.}
	\end{subfigure}
	\begin{subfigure}[b]{0.33\linewidth}
		\centering
		\includegraphics[scale=0.149]{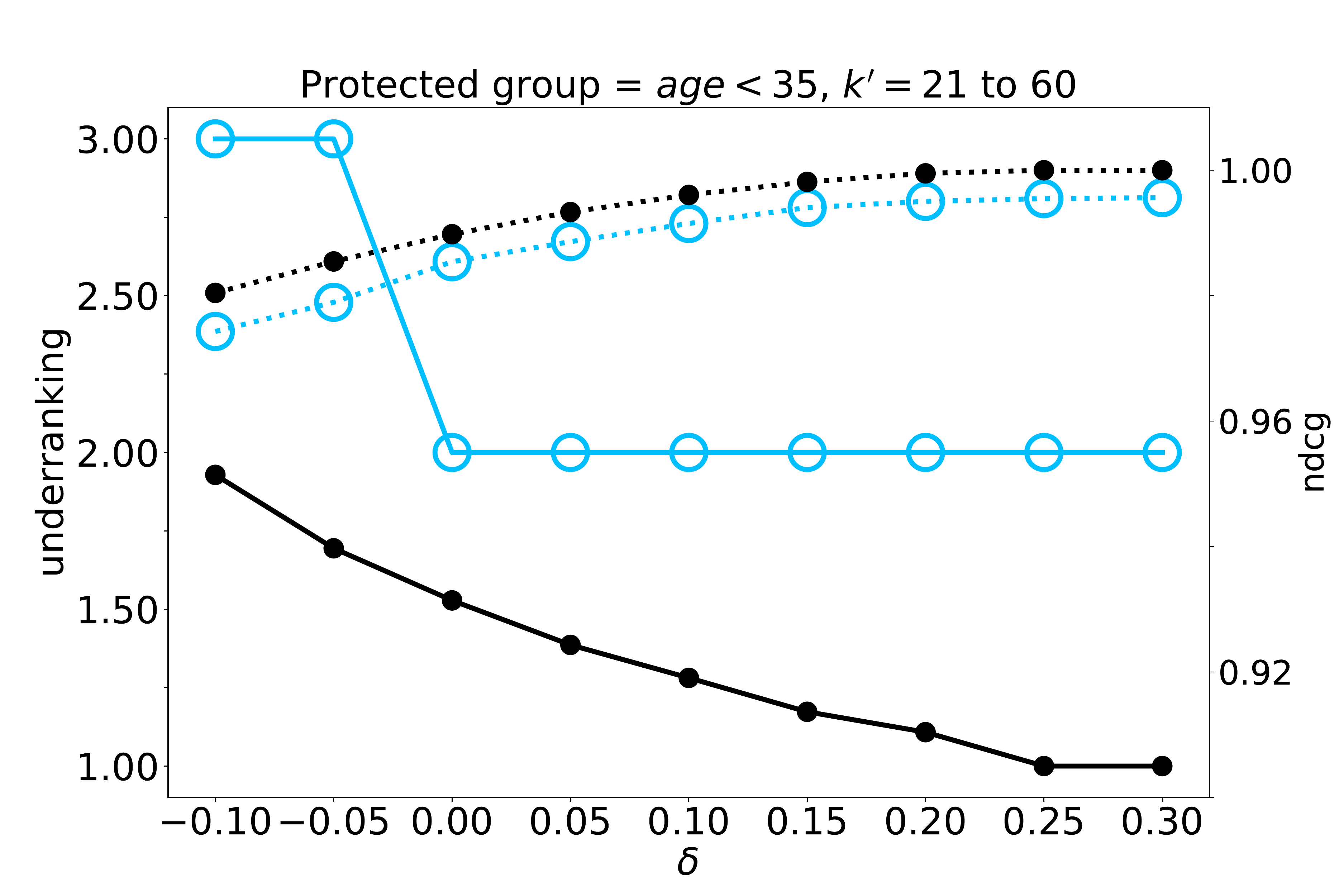} 
		\caption{Underranking, nDCG at top $60$ ranks.}
	\end{subfigure}
	\begin{subfigure}[b]{0.33\linewidth}
		\centering
		\includegraphics[scale=0.149]{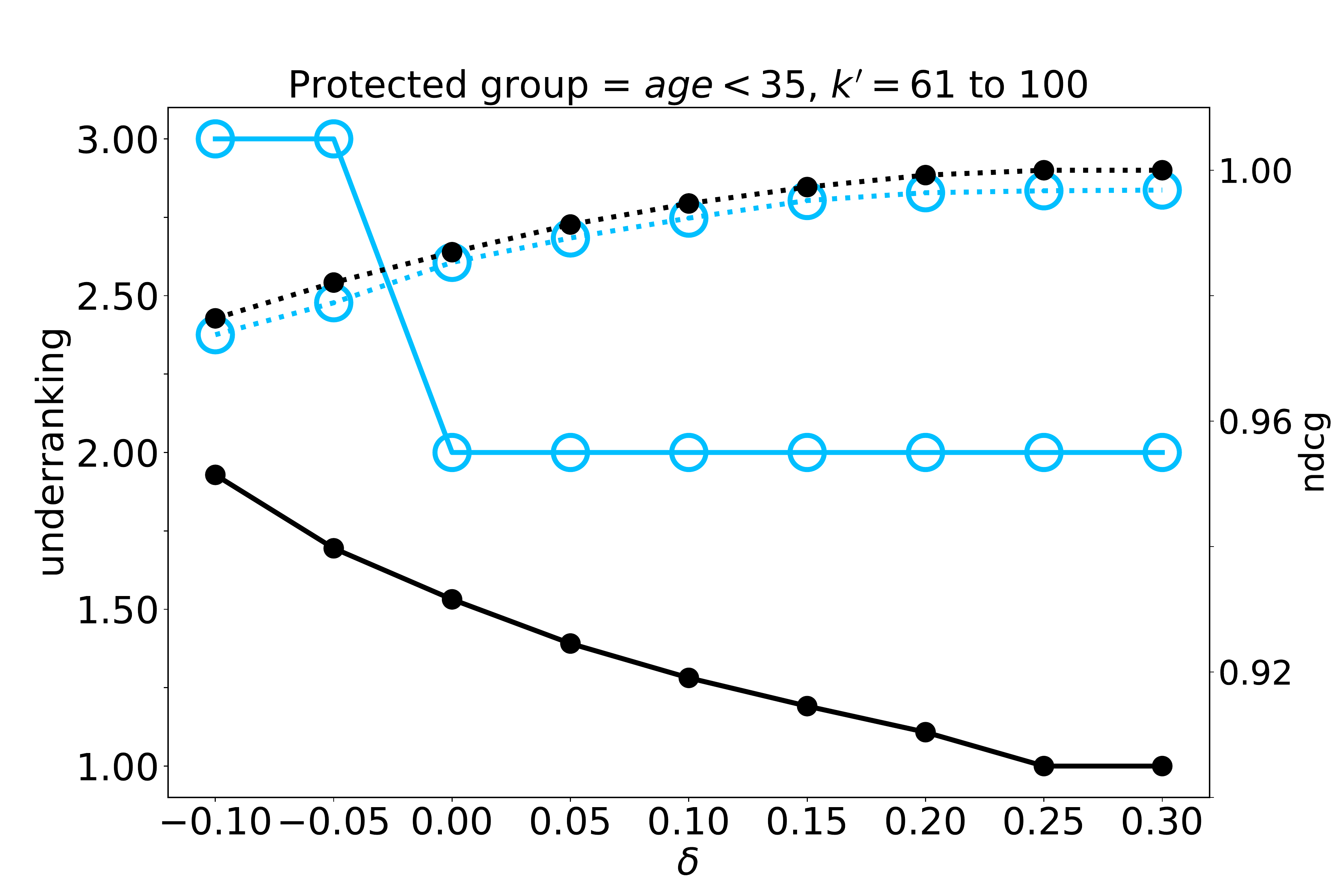} 
		\caption{Underranking, nDCG at top $100$ ranks.}
	\end{subfigure}
	\caption{Results on the German Credit Risk dataset with \textit{age}$<35$ as the protected group.}
	\label{fig:german_35_rev_block}
\end{figure}

\begin{figure}[H]
	\begin{subfigure}[b]{\linewidth}
		\centering
		\includegraphics[scale=0.2]{results/legend.pdf} 
	\end{subfigure}
	
	\begin{subfigure}[b]{0.33\linewidth}
		\centering
		\includegraphics[scale=0.149]{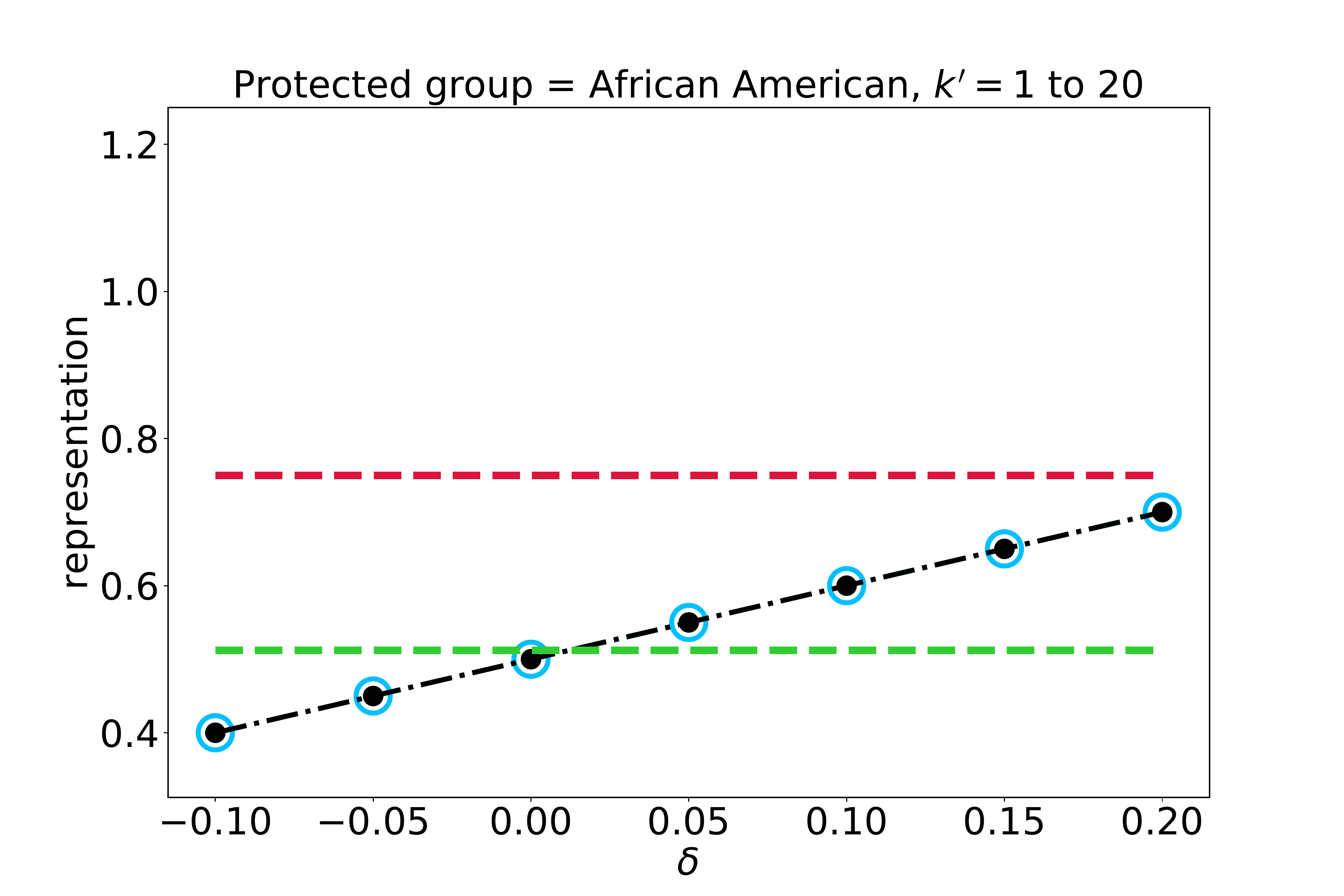} 
		\caption{Representation at ranks $1$ to $20$.}
	\end{subfigure}
	\begin{subfigure}[b]{0.33\linewidth}
		\centering
		\includegraphics[scale=0.149]{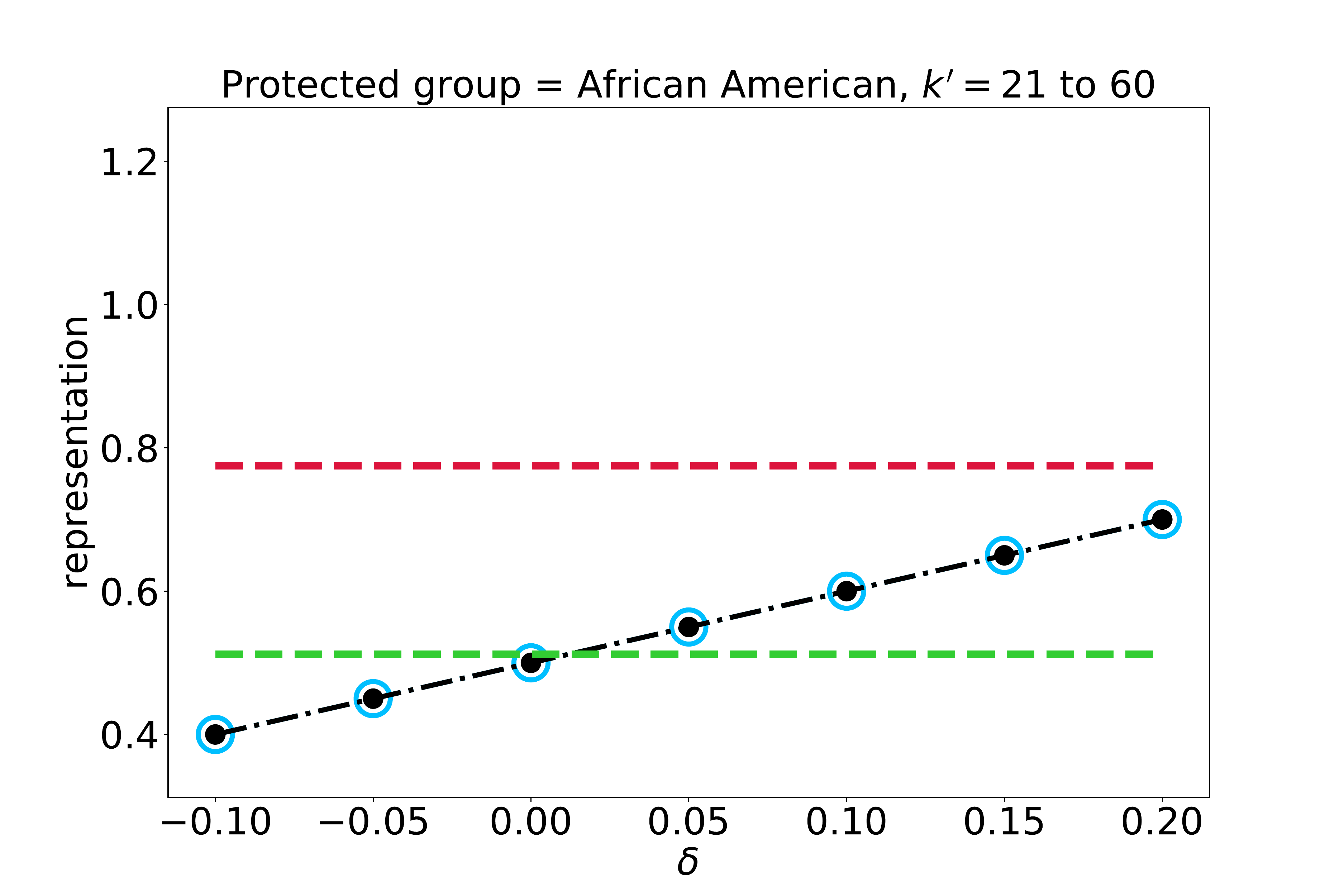} 
		\caption{Representation at ranks $21$ to $60$.}
	\end{subfigure}
	\begin{subfigure}[b]{0.33\linewidth}
		\centering
		\includegraphics[scale=0.149]{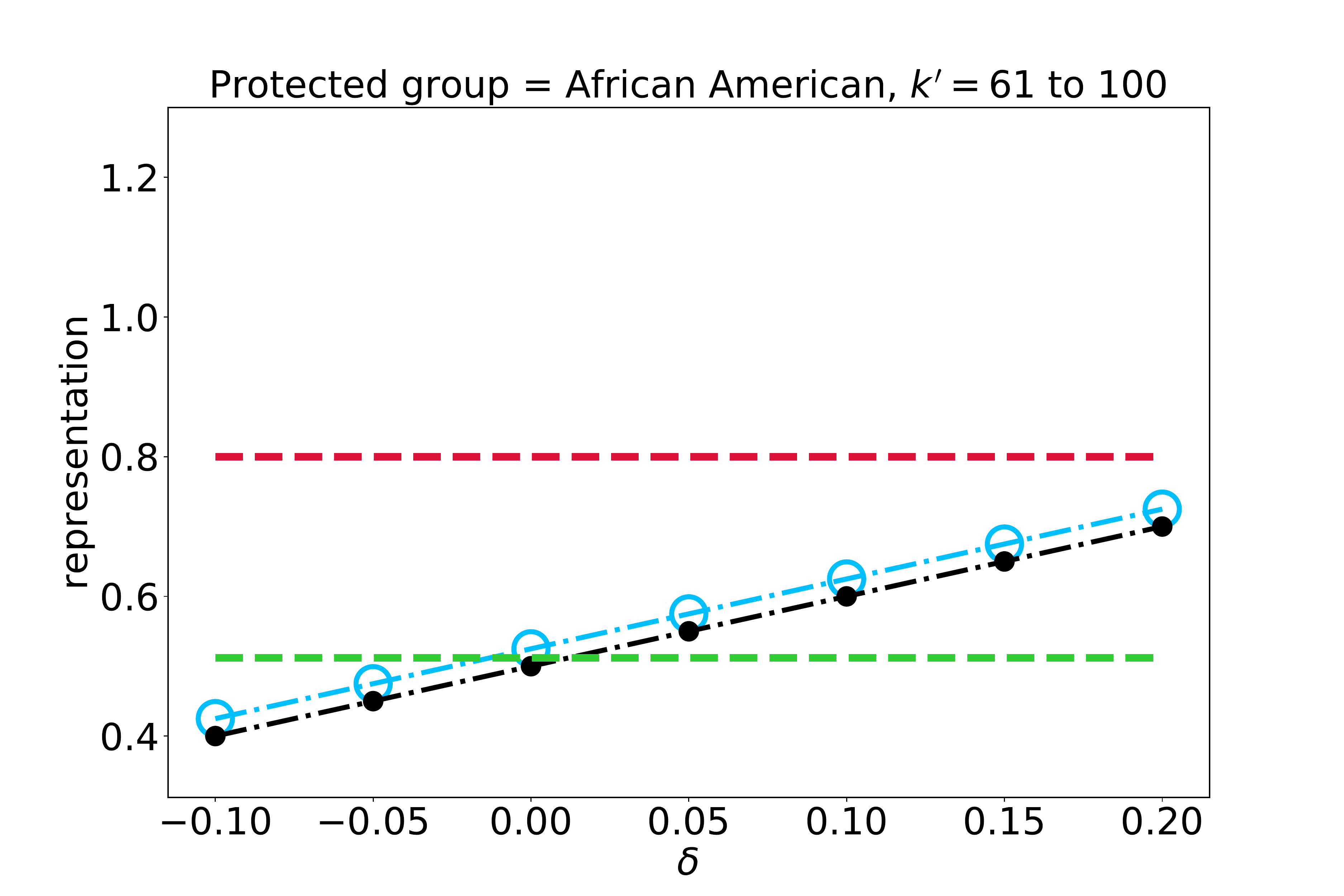} 
		\caption{Representation at ranks $61$ to $100$.}
	\end{subfigure}
	
	\begin{subfigure}[b]{0.33\linewidth}
		\centering
		\includegraphics[scale=0.149]{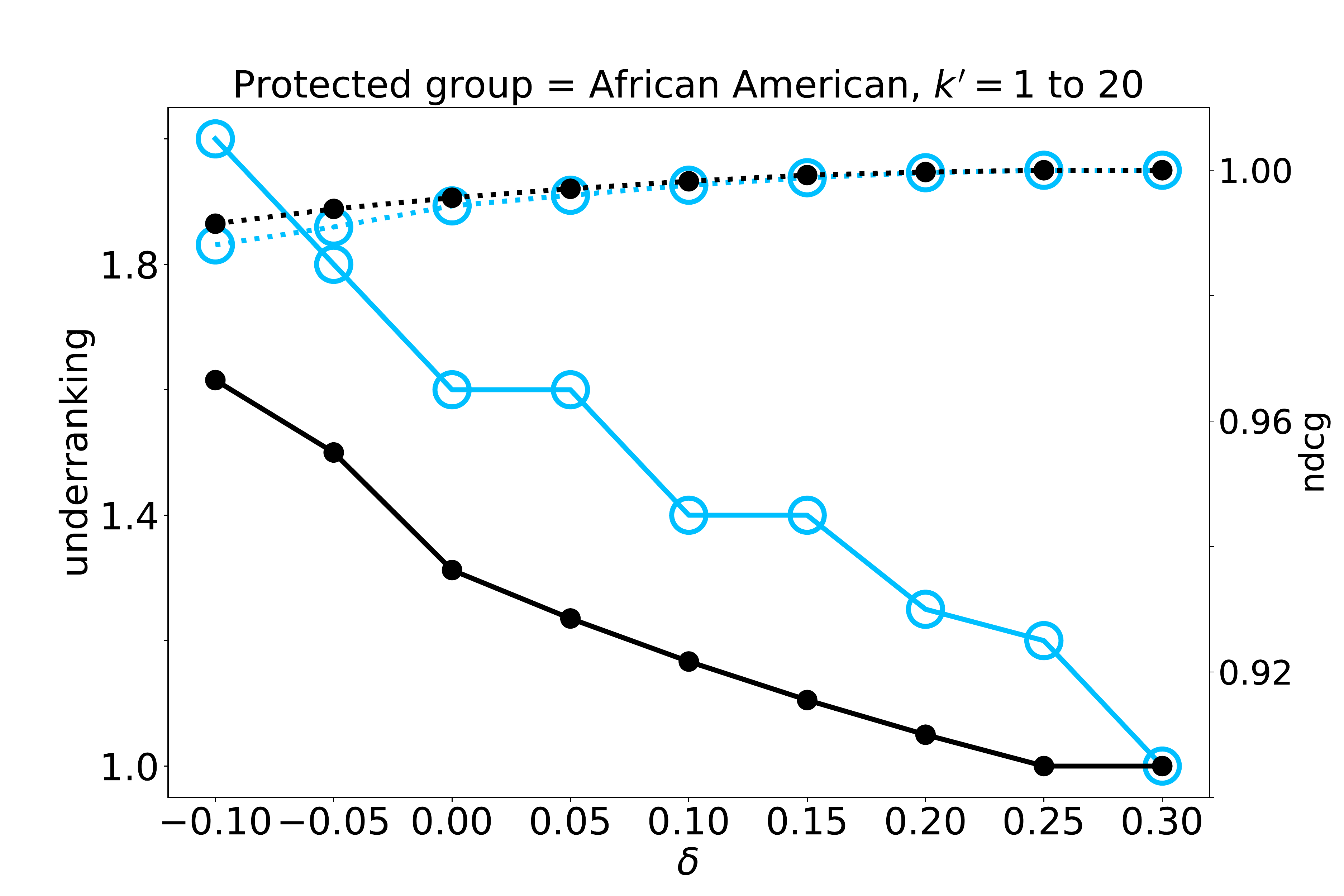} 
		\caption{Underranking, nDCG at top $20$ ranks.}
	\end{subfigure}
	\begin{subfigure}[b]{0.33\linewidth}
		\centering
		\includegraphics[scale=0.149]{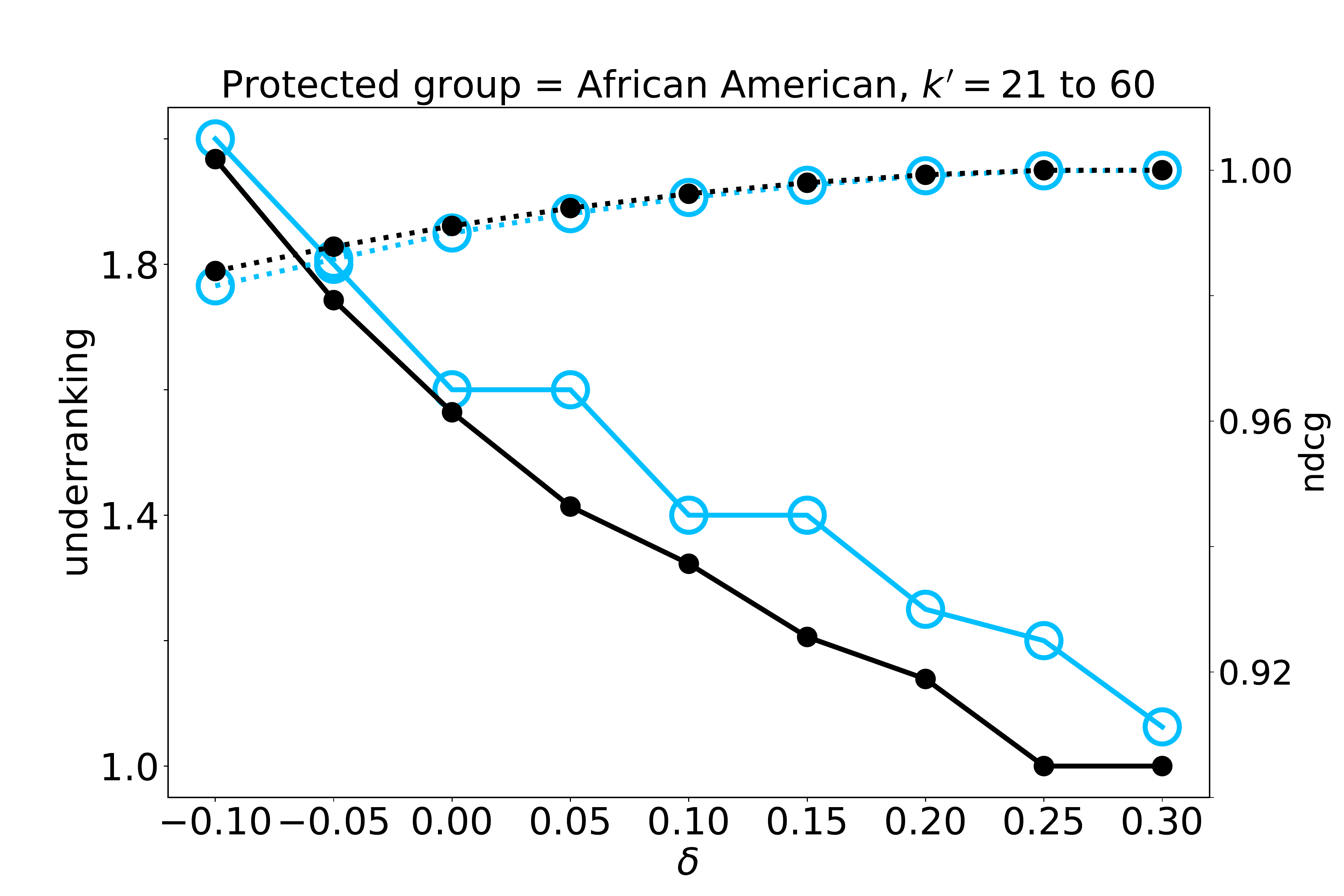} 
		\caption{Underranking, nDCG at top $60$ ranks.}
	\end{subfigure}
	\begin{subfigure}[b]{0.33\linewidth}
		\centering
		\includegraphics[scale=0.149]{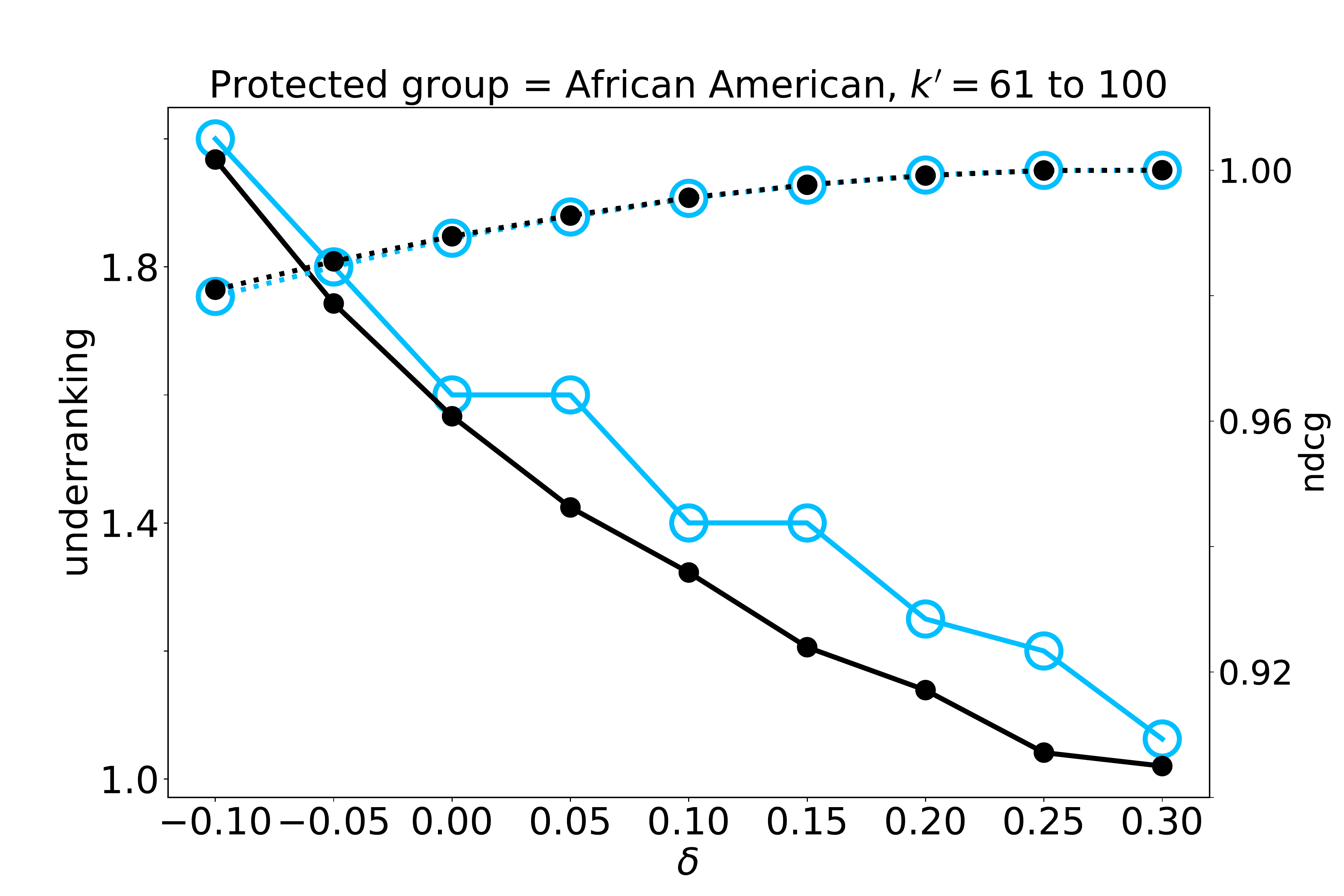} 
		\caption{Underranking, nDCG at top $100$ ranks.}
	\end{subfigure}
	\caption{Results on the COMPAS Recidivism dataset with \textit{African American} as the protected group.}
	\label{fig:compas_race_rev_block}
\end{figure}

\section{Conclusion}
Previous works involving group-fair ranking are mainly focused on the trade-off between its utility and group fairness.
We presented the first (to the best of our knowledge) algorithm that takes a true ranking and outputs another ranking with simultaneous group fairness and underranking guarantees.
Our algorithm achieves the best of both underranking and group fairness compared to the state-of-the-art group-fair ranking algorithms.
It also works in the case of more than two disjoint groups, and with different group fairness constraints for each of these groups.
One limitation of our work (and other re-ranking algorithms) is that it requires the true ranking as input. All our theoretical guarantees are with respect to this true ranking; in practice, a true merit-based ranking may be debatable or unavailable due to incomplete data, unobserved features, legal and ethical considerations behind the downstream application of these rankings, etc.

\paragraph{Acknowledgements.}
AL was supported in part by SERB Award ECR/2017/003296 and a Pratiksha Trust Young Investigator Award. AL is also grateful to Microsoft Research for supporting this collaboration.

\bibliographystyle{alpha}
\bibliography{references}

\newcommand{\etalchar}[1]{$^{#1}$}
\begin{thebibliography}{NCGW20}

\bibitem[ALMK16]{compas_propublica}
Julia Angwin, Jeff Larson, Surya Mattu, and Lauren Kirchner.
\newblock Machine bias, 2016.

\bibitem[AT05]{Adomavicius2005}
G.~{Adomavicius} and A.~{Tuzhilin}.
\newblock Toward the next generation of recommender systems: a survey of the
  state-of-the-art and possible extensions.
\newblock {\em IEEE Transactions on Knowledge and Data Engineering},
  17:734--749, 2005.

\bibitem[BCD{\etalchar{+}}19]{fairness_in_recommendation_ranking_through_pairwise_comparisons}
Alex Beutel, J.~Chen, T.~Doshi, H.~Qian, L.~Wei, Y.~Wu, L.~Heldt, Zhe Zhao,
  L.~Hong, Ed~Huai hsin Chi, and Cristos Goodrow.
\newblock Fairness in recommendation ranking through pairwise comparisons.
\newblock {\em Proceedings of the 25th ACM SIGKDD International Conference on
  Knowledge Discovery \& Data Mining}, 2019.

\bibitem[BHN19]{Barocas2019}
Solon Barocas, Moritz Hardt, and Arvind Narayanan.
\newblock {\em Fairness and Machine Learning}.
\newblock fairmlbook.org, 2019.

\bibitem[BP98]{BrinPage1998}
Sergey Brin and Lawrence Page.
\newblock The anatomy of a large-scale hypertextual web search engine.
\newblock In {\em Proceedings of the Seventh International Conference on World
  Wide Web 7}, page 107–117, NLD, 1998. Elsevier Science Publishers B. V.

\bibitem[BS16]{Barocas2016}
Solon Barocas and Andrew~D. Selbst.
\newblock Big data's disparate impact.
\newblock {\em California Law Review}, 104(3):671--732, 2016.

\bibitem[Cas19]{Castillo2019survey}
Carlos Castillo.
\newblock Fairness and transparency in ranking.
\newblock {\em Special Interest Group on Information Retrieval Forum},
  52(2):64–71, January 2019.

\bibitem[CQL{\etalchar{+}}07]{learning_to_rank_from_pairwise_approach_to_listwise_approach}
Zhe Cao, Tao Qin, Tie-Yan Liu, Ming-Feng Tsai, and Hang Li.
\newblock Learning to rank: From pairwise approach to listwise approach.
\newblock In {\em Proceedings of the 24th International Conference on Machine
  Learning}, ICML ’07, page 129–136, New York, NY, USA, 2007. Association
  for Computing Machinery.

\bibitem[Cro04]{Crosby2004}
F.J. Crosby.
\newblock {\em Affirmative Action is Dead: Long Live Affirmative Action}.
\newblock Current perspectives in psychology. Yale University Press, 2004.

\bibitem[CSV18]{ranking_with_fairness_constraints}
L.~Elisa Celis, Damian Straszak, and Nisheeth~K. Vishnoi.
\newblock Ranking with fairness constraints.
\newblock In {\em 45th International Colloquium on Automata, Languages, and
  Programming, {ICALP} 2018, July 9-13, 2018, Prague, Czech Republic}, volume
  107 of {\em LIPIcs}, pages 28:1--28:15. Schloss Dagstuhl - Leibniz-Zentrum
  f{\"{u}}r Informatik, 2018.

\bibitem[DG17]{german_credit}
Dheeru Dua and Casey Graff.
\newblock {UCI} machine learning repository, 2017.

\bibitem[GAK19]{Geyik2019}
Sahin~Cem Geyik, Stuart Ambler, and Krishnaram Kenthapadi.
\newblock Fairness-aware ranking in search \& recommendation systems with
  application to linkedin talent search.
\newblock In {\em Proceedings of the 25th ACM SIGKDD International Conference
  on Knowledge Discovery \& Data Mining}, KDD '19, page 2221–2231, New York,
  NY, USA, 2019. Association for Computing Machinery.

\bibitem[HPS16]{Hardt2016}
Moritz Hardt, Eric Price, and Nathan Srebro.
\newblock Equality of opportunity in supervised learning.
\newblock In {\em Proceedings of the 30th International Conference on Neural
  Information Processing Systems}, NIPS'16, page 3323–3331, Red Hook, NY,
  USA, 2016. Curran Associates Inc.

\bibitem[JK00]{ir_evaluation_methods}
Kalervo J\"{a}rvelin and Jaana Kek\"{a}l\"{a}inen.
\newblock Ir evaluation methods for retrieving highly relevant documents.
\newblock In {\em Proceedings of the 23rd Annual International ACM SIGIR
  Conference on Research and Development in Information Retrieval}, SIGIR '00,
  page 41–48, New York, NY, USA, 2000. Association for Computing Machinery.

\bibitem[KLH16]{Kofler2016}
Christoph Kofler, Martha Larson, and Alan Hanjalic.
\newblock User intent in multimedia search: A survey of the state of the art
  and future challenges.
\newblock {\em ACM Comput. Surv.}, 49(2), 2016.

\bibitem[MRS08]{introduction_to_information_retrieval}
Christopher~D. Manning, Prabhakar Raghavan, and Hinrich Sch\"{u}tze.
\newblock {\em Introduction to Information Retrieval}.
\newblock Cambridge University Press, USA, 2008.

\bibitem[NCGW20]{pairwise_fairness_for_ranking_and_regression}
Harikrishna Narasimhan, Andrew Cotter, Maya Gupta, and Serena Wang.
\newblock Pairwise fairness for ranking and regression.
\newblock {\em Proceedings of the AAAI Conference on Artificial Intelligence},
  34(04):5248--5255, Apr. 2020.

\bibitem[Nob18]{Noble2018}
Safiya~Umoja Noble.
\newblock {\em Algorithms of Oppression: How Search Engines Reinforce Racism}.
\newblock NYU Press, 2018.

\bibitem[O'N16]{ONeil2016}
Cathy O'Neil.
\newblock {\em Weapons of Math Destruction: How Big Data Increases Inequality
  and Threatens Democracy}.
\newblock Crown Publishing Group, USA, 2016.

\bibitem[Par11]{Pariser2011}
Eli Pariser.
\newblock {\em The Filter Bubble: What the Internet Is Hiding from You}.
\newblock Penguin Group , The, 2011.

\bibitem[PZZ{\etalchar{+}}19]{Pei2019}
Changhua Pei, Yi~Zhang, Yongfeng Zhang, Fei Sun, Xiao Lin, Hanxiao Sun, Jian
  Wu, Peng Jiang, Junfeng Ge, Wenwu Ou, and Dan Pei.
\newblock Personalized re-ranking for recommendation.
\newblock In {\em Proceedings of the 13th ACM Conference on Recommender
  Systems}, RecSys '19, page 3–11, New York, NY, USA, 2019. Association for
  Computing Machinery.

\bibitem[SJ19]{policy_learning_for_fairness_in_ranking}
Ashudeep Singh and Thorsten Joachims.
\newblock Policy learning for fairness in ranking.
\newblock In {\em Advances in Neural Information Processing Systems},
  volume~32, pages 5426--5436. Curran Associates, Inc., 2019.

\bibitem[Tav]{Tavani2020}
Herman Tavani.
\newblock {Search Engines and Ethics}.
\newblock In Edward~N. Zalta, editor, {\em The {Stanford} Encyclopedia of
  Philosophy}. Metaphysics Research Lab, Stanford University, fall 2020
  edition.

\bibitem[YLS20]{causal_intersectionality_for_fair_ranking}
Ke~Yang, Joshua~R. Loftus, and Julia Stoyanovich.
\newblock Causal intersectionality for fair ranking.
\newblock {\em ArXiv}, abs/2006.08688, 2020.

\bibitem[YS17]{measuring_fairness_in_ranked_outputs}
Ke~Yang and Julia Stoyanovich.
\newblock Measuring fairness in ranked outputs.
\newblock In {\em Proceedings of the 29th International Conference on
  Scientific and Statistical Database Management}, SSDBM ’17, New York, NY,
  USA, 2017. Association for Computing Machinery.

\bibitem[ZBC{\etalchar{+}}17]{fa_ir_a_fair_top_k_ranking_algorithm}
Meike Zehlike, Francesco Bonchi, Carlos Castillo, Sara Hajian, Mohamed Megahed,
  and Ricardo Baeza-Yates.
\newblock Fa*ir: A fair top-k ranking algorithm.
\newblock In {\em Proceedings of the 2017 ACM on Conference on Information and
  Knowledge Management}, CIKM ’17, page 1569–1578, New York, NY, USA, 2017.
  Association for Computing Machinery.

\bibitem[ZC20]{reducing_disparate_exposure_in_ranking}
Meike Zehlike and Carlos Castillo.
\newblock {\em Reducing Disparate Exposure in Ranking: A Learning To Rank
  Approach}, page 2849–2855.
\newblock Association for Computing Machinery, New York, NY, USA, 2020.

\bibitem[ZWS{\etalchar{+}}13]{learning_fair_representations}
Rich Zemel, Yu~Wu, Kevin Swersky, Toni Pitassi, and Cynthia Dwork.
\newblock Learning fair representations.
\newblock In Sanjoy Dasgupta and David McAllester, editors, {\em Proceedings of
  the 30th International Conference on Machine Learning}, volume~28 of {\em
  Proceedings of Machine Learning Research}, pages 325--333, Atlanta, Georgia,
  USA, 2013. PMLR.

\end{thebibliography}

\end{document}